\def\techreportversion{}
\newcommand*{\trimSpaces}[1]{\begingroup\edef\my@tmp@trimmed{\trim@spaces@noexp{#1}}\expandafter\endgroup\my@tmp@trimmed}
\newcommand{\moreless}[2]{\ifdefined\techreportversion\trimSpaces{#1}\else\trimSpaces{#2}\fi}
\newcommand{\techreport}[1]{\moreless{#1}{}}
\newcommand{\nontechreport}[1]{\moreless{}{#1}}
\lstdefinelanguage{SPL}{
  morekeywords={acc, struct,if,else,returns,procedure,requires,ensures,:=,var,
    new,old,free,implicit,modifies,call,locals,assume,assert,choose,havoc,ghost,
    predicate,function,invariant,while, return,atomic, split, type, field, result,
    define, datatype, axiom, val, for, restart},
  deletekeywords={union,int},
  numbers=left,
  xleftmargin=2em,
  escapeinside={@}{@},
  numberstyle=\tiny,
  basicstyle=\footnotesize\ttfamily,
  columns=flexible,
  morecomment=[s][\color{green!60!black}]{/*}{*/},
  morecomment=[l][\color{green!60!black}]{//},
  moredelim=**[is][\color{purple}]{|<}{>|},
  mathescape=true,
}
\lstdefinestyle{proof}{
  language=Java,
  basicstyle=\linespread{1.15}\footnotesize\ttfamily,
  commentstyle=\color{gray}\ttfamily,
  tabsize=2,
  numberbychapter=false,
  keepspaces=true,
  numbers=left,
  numberstyle=\scriptsize,
  numbersep=7pt,
  firstnumber=last,
  mathescape=true,
  keywords={},
  xleftmargin=3pt,
}
\lstdefinestyle{inline}{
  basicstyle=\relscale{.9}\ttfamily,
  keywords={},
}
\newcommand{\code}[2][]{\lstinline[style=inline,#1]!#2!}
\crefname{line}{Line}{Lines}
\Crefname{line}{Line}{Lines}
\tikzset{%
  array/.style={matrix of nodes,nodes={draw, minimum size=5mm, anchor=center},column sep=-\pgflinewidth, row sep=-\pgflinewidth, nodes in empty cells,anchor=center},
  ptr/.style={*->, shorten <=-(1.8pt+1.4\pgflinewidth)},
  edge/.style={->, thick},
  dedge/.style={<->, dashed},
  fedge/.style={->, dashed},
  unode/.style={circle, draw=black, thick, minimum size=8mm, inner sep=0},
  mnode/.style={circle, draw=black, thick, fill=gray!20, minimum size=8mm, inner sep=0, font=\scriptsize},
  stackVar/.style={circle, fill=none, inner sep=0pt, minimum size=8mm, font=\normalsize, outer sep=-4pt},
  gnode/.style={circle, draw=black, thick, minimum size=8mm},
  pnode/.style={circle, draw=black, thick, minimum size=8mm},
  rnode/.style={draw=black, thick, minimum size=8mm},
  lbl/.style={circle, fill=none, inner sep=0pt, minimum size=8mm},
  dnode/.style={circle, draw=black, thick, dotted, minimum size=8mm},
  inflow/.style={circle, fill=none, inner sep=0pt, minimum size=5mm, font=\normalsize},
  phantomNode/.style={circle, fill=none, inner sep=0pt, minimum
    size=0pt}
}
\definecolor{colorMyGreen}{RGB}{80,170,0}
\definecolor{colorMyRed}{RGB}{180,20,20}
\definecolor{colorMyBlue}{RGB}{40,40,220}
\definecolor{colorMyPink}{RGB}{220,40,220}
\definecolor{colorMyFutCol}{RGB}{20,20,180}
\definecolor{colorMyHistCol}{RGB}{180,20,20}
\newcommand{\makeTeal}[1]{\textcolor{teal}{#1}}
\newcommand{\makeFutCol}[1]{\textcolor{colorMyFutCol}{#1}}
\newcommand{\makeHistCol}[1]{\textcolor{colorMyHistCol}{#1}}
\newcommand{\setHistCol}[1]{\textcolor{colorMyHistCol}{#1}}
\newcommand{\rulelabel}[2]{%
   \protected@write \@auxout {}{\string \newlabel {#1}{{#2}{\thepage}{#2}{#1}{}} }%
   \hypertarget{#1}{}
}
\newcommand{\mkrulelabel}[1]{\rulelabel{rule:#1}{\textsc{(#1)}}}
\newcommand{\infrule}[3][]{\ifthenelse{\equal{#1}{}}{\inferrule{#2}{#3}}{\inferrule[\textsc{(#1)}\mkrulelabel{#1}]{#2}{#3}}}
\newcommand\xxrightarrow[1]{\raisebox{-.85pt}{\ensuremath{\smash{\mathrel{%
  \setbox2=\hbox{\stackon{\scriptstyle#1}{\scriptstyle#1}}%
  \stackon[-4.5pt]{%
    \xrightarrow{\makebox[\dimexpr\wd2\relax]{}}%
  }{%
   \scriptstyle#1\,%
  }%
}}}}}
\newcommand{\smartparagraph}[1]{\subsubsection*{#1}}
\newcommand{\mkmathop}[1]{\operatorname{#1}}
\newcommand{\mymathtt}[1]{\text{\relscale{.9}\ttfamily#1}}
\def\prallspacing{\mskip 2mu plus 2mu minus 3mu}
\newcommand{\prall}[1]{{\prallspacing#1\prallspacing}}
\newcommand{\bool}{\mathbb{B}}
\newcommand{\true}{\mathit{true}}
\newcommand{\false}{\mathit{false}}
\renewcommand{\emptyset}{\varnothing}
\newcommand{\setcompact}[1]{\{#1\}}
\newcommand{\set}[1]{\{\,#1\,\}}
\newcommand{\setnd}[1]{\{#1\}}
\newcommand{\setcond}[2]{\set{#1\;\mid\;#2}}
\newcommand{\cardof}[1]{|#1|}
\newcommand{\powerset}[1]{\mathbb{P}(#1)}
\newcommand{\nat}{\mathbb{N}}
\newcommand{\ZZ}{\mathbb{Z}}
\newcommand{\domof}[1]{\mkmathop{dom}(#1)}
\newcommand{\freeVariablesOf}[1]{\mkmathop{fv}(#1)}
\newcommand{\ite}[3]{#1\:?\;#2\::\:#3}
\newcommand{\bnf}{\;\mid\;}
\newcommand{\defeq}{\triangleq} 
\newcommand{\defebnf}{\Coloneqq}
\newcommand{\invNode}{\varphi}
\newcommand{\acom}{\mymathtt{com}} 
\newcommand{\astmt}{\mymathtt{st}} 
\newcommand{\afun}{\mymathtt{fun}} 
\newcommand{\cskip}{\mymathtt{skip}}
\newcommand{\seqof}[2]{#1;#2}
\newcommand{\choiceof}[2]{#1+#2}
\newcommand{\loopof}[1]{{#1}^{*}}
\newcommand{\setstmt}{\mathtt{ST}}
\newcommand{\setcom}{\mathtt{COM}}
\newcommand{\setconfig}{\mathsf{CF}}
\newcommand{\setstates}{\Sigma}
\newcommand{\localindex}{\mathsf{L}}
\newcommand{\sharedindex}{\mathsf{G}}
\newcommand{\setshared}{\Sigma_{\sharedindex}}
\newcommand{\setlocal}{\Sigma_{\localindex}}
\newcommand{\sharedemp}{\emp_{\sharedindex}}
\newcommand{\localemp}{\emp_{\localindex}}
\newcommand{\acfg}{\mathsf{cf}} 
\newcommand{\aconfig}{\acfg} 
\newcommand{\apc}{\mathsf{pc}}
\newcommand{\alocal}{\mathsf{l}}
\newcommand{\ashared}{\mathsf{g}}
\newcommand{\alocalseq}{\lambda}
\newcommand{\asharedseq}{\gamma}
\newcommand{\lastOf}[1]{\mathsf{last}(#1)}
\newcommand{\astate}{\mathsf{s}}
\newcommand{\astatep}{\mathsf{t}}
\newcommand{\semCom}[1]{\llbracket#1\rrbracket}
\newcommand{\semComOf}[2]{\semCom{#1}(#2)}
\newcommand{\pcStepOf}[3]{#1\,\xxrightarrow{\vphantom{pt}#2}\,#3}
\newcommand{\progStepRel}{\rightarrow}
\newcommand{\anop}{\oplus}
\newcommand{\setaddrs}{\mathit{Addr}}
\newcommand{\setsvars}{\mathit{Var}}
\newcommand{\setvalues}{\mathit{Val}}
\newcommand{\stackindex}{{\mathsf{S}}}
\newcommand{\setstacks}{\setstates_\stackindex}
\newcommand{\astack}{\iota}
\newcommand{\heapindex}{{\mathsf{H}}}
\newcommand{\setheaps}{\setstates_\heapindex}
\newcommand{\heapmult}{\mathop{{\mstar}_\heapindex}}
\newcommand{\heapmultdef}{\mathop{\#_\heapindex}}
\newcommand{\stackheapmult}{\mathop{{\mstar}_{\stackindex\heapindex}}}
\newcommand{\heapemp}{\emp_\heapindex}
\newcommand{\aheap}{h}
\newcommand{\discup}{\uplus}
\newcommand{\myemph}[1]{{\bfseries #1}}
\newcommand{\modelsp}{\models'}
\newcommand{\abort}{\textsf{abort}}
\newcommand{\stateunit}{1}
\newcommand{\statemult}{\mathop{*}}
\newcommand{\sharedstates}{\setstates_{\sharedindex}}
\newcommand{\sharedmult}{\mathop{{\mstar}_{\sharedindex}}}
\newcommand{\localmult}{\mathop{{\mstar}_{\localindex}}}
\newcommand{\localstates}{\setstates_{\localindex}}
\newcommand{\statemultdef}{\mathop{\#}}
\newcommand{\apred}{\mathit{p}}
\newcommand{\apredp}{\mathit{q}}
\newcommand{\apredpp}{\mathit{o}}
\newcommand{\apredppp}{\mathit{r}}
\newcommand{\apredpppp}{\mathit{n}}
\newcommand{\semOf}[1]{\llbracket#1\rrbracket}
\newcommand{\csemOf}[1]{\llbracket\mkern-7mu\semOf{#1}\mkern-7mu\rrbracket}
\newcommand{\anipred}{\mathit{i}}
\newcommand{\acpred}{\mathit{a}}
\newcommand{\acpredp}{\mathit{b}}
\newcommand{\acpredpp}{\mathit{c}}
\newcommand{\acpredppp}{\mathit{d}}
\newcommand{\acpredpppp}{\mathit{e}}
\newcommand{\PREDleft}{\langle}
\newcommand{\PREDright}{\rangle}
\newcommand{\OBLname}{\mathsf{Obl}}
\newcommand{\FULname}{\mathsf{Ful}}
\newcommand{\annotPRED}[1]{\left\PREDleft\,#1\,\right\PREDright}
\newcommand{\FUT}[3]{\annotPRED{#1}#2\annotPRED{#3}}
\newcommand{\FUTp}[3]{\left\PREDleft\!\!\begin{array}{l}#1\end{array}\!\!\middle\PREDright\:#2\:\middle\PREDleft\!\!\begin{array}{l}#3\end{array}\!\!\right\PREDright}
\newcommand{\OBL}[1]{\OBLname(\mymathtt{#1})}
\newcommand{\FUL}[2]{\FULname(\mymathtt{#1},#2)}
\newcommand{\mstar}{\mathop{*}}
\DeclareMathOperator*{\bigmstar}{\scalerel*{\ast}{\sum}}
\newcommand{\sepimp}{\mathrel{-\!\!*}}
\newcommand{\emp}{\mathsf{emp}}
\newcommand{\hoareOf}[3]{\set{#1}\:#2\:\set{#3}}
\newcommand{\subModels}{\models}
\newcommand{\semCalc}{\Vdash}
\newcommand{\past}{\Diamonddot}
\newcommand{\pastOf}[1]{\past#1}
\newcommand{\nowOf}[1]{\_{#1}}
\newcommand{\initset}[2]{\mathsf{Init}_{#1, #2}}
\newcommand{\acceptset}[1]{\mathsf{Acc}_{#1}}
\newcommand{\reachset}[1]{\mathsf{Reach}(#1)}
\newcommand{\wpre}{\mathit{wp}}
\newcommand{\wpreOf}[2]{\wpre(#1, #2)}
\newcommand{\commentl}[1]{\text{\small{(\;#1\;)\;\;\;}}}
\newcommand{\theInterference}{\mathbb{I}}
\newcommand{\thePredicates}{\mathbb{P}}
\newcommand{\locsafe}[6]{\mathsf{safe}^{#3}_{#1, #2}(#4, #5, #6)}
\newcommand{\locsafedef}[4]{\locsafe{\thePredicates}{\theInterference}{#1}{#2}{#3}{#4}}
\newcommand{\locsafek}[3]{\locsafedef{k}{#1}{#2}{#3}}
\newcommand{\cfsafedef}[3]{\mathsf{cfsafe^{#1}(#2, #3)}}
\newcommand{\cfsafek}[2]{\cfsafedef{k}{#1}{#2}}
\newcommand{\inter}[2]{\mathsf{inter}(#1, #2)}
\newcommand{\accept}[3]{\mathsf{acc}(#1, #2, #3)}
\newcommand{\effectful}[2]{\mathsf{eff}(#1, #2)}
\newcommand{\isInterferenceFreeOf}[2][\theInterference]{\boxast_{#1}\,#2}
\newcommand{\nextsel}[1][]{\mymathtt{next}_{#1}}
\newcommand{\keysel}{\mymathtt{key}}
\newcommand{\selof}[2]{#1\mkern+2mu{\rightarrow}\mkern+2mu#2}
\newcommand{\malloc}{\mymathtt{malloc}}
\newcommand{\assume}{\mymathtt{assume}}
\newcommand{\assignsign}{:=}
\newcommand{\assignof}[2]{#1\mkern-1mu\assignsign\mkern-1mu#2}
\newcommand{\mallocof}[1]{\assignof{#1}{\malloc}}
\newcommand{\assumeof}[1]{\assume\;#1}
\newcommand{\assertof}[1]{\mymathtt{assert}\;#1}
\newcommand{\astateseq}{\sigma}
\newcommand{\astateseqp}{\tau}
\newcommand{\bluetext}[1]{\textcolor{blue}{#1}}
\newcommand{\monfun}[1]{\mkmathop{Mon}(#1)}
\newcommand{\join}{\sqcup}
\newcommand{\contentsof}[1]{\mathsf{C}(#1)}
\newcommand{\keysetof}[1]{\mathsf{KS}(#1)}
\newcommand{\insetof}[1]{\mathsf{IS}(#1)}
\newcommand{\newinsetof}[1]{\mathsf{IS}'(#1)}
\newcommand{\annot}[2][]{\makeTeal{\ifthenelse{\equal{#1}{}}{}{\givename{#1}}\left\{\,\begin{aligned}#2\end{aligned}\,\right\}}}
\newcommand{\annotml}[2][]{\makeTeal{\ifthenelse{\equal{#1}{}}{}{\givename{#1}}\left\{\,\begin{aligned}#2\end{aligned}\,\right\}}}
\newcommand{\amonoid}{\mathbb{M}}
\newcommand{\monadd}{+}
\newcommand{\monunit}{0}
\newcommand{\amonval}{\mathit{m}}
\newcommand{\amonvalp}{\mathit{n}}
\newcommand{\amonvalpp}{\mathit{o}}
\newcommand{\setnodes}{\mathit{X}}
\newcommand{\anode}{\mathit{x}}
\newcommand{\anodep}{\mathit{y}}
\newcommand{\pfun}{\rightharpoondown}
\newcommand{\edges}{\mathit{E}}
\newcommand{\rhs}{\mathit{rhs}}
\newcommand{\rhsatof}[2]{\rhs_{#1}(#2)}
\newcommand{\aflowconstraint}{\mathit{c}}
\newcommand{\setflowconstraints}{\mathit{FC}}
\newcommand{\fval}{\mathit{flow}}
\newcommand{\fvalof}[1]{\fval(#1)}
\newcommand{\inflow}{\mathit{in}}
\newcommand{\inflowof}[1]{\inflow(#1)}
\newcommand{\outflow}{\mathit{out}}
\newcommand{\outflowof}[1]{\outflow(#1)}
\newcommand{\updfun}{\mathit{up}}
\newcommand{\applyupdfun}{[\updfun]}
\newcommand{\transformerof}[1]{\mathit{tf}(#1)}
\newcommand{\lfpof}[1]{\mathit{lfp}(#1)}
\newcommand{\pairingof}[2]{\langle #1, #2\rangle}
\newcommand{\myid}{\mathit{id}}
\newcommand{\myidof}[1]{\myid_{#1}}
\newcommand{\cval}{\mathit{cval}}
\newcommand{\setsel}{\mathit{Sel}}
\newcommand{\setdsel}{\mathit{DSel}}
\newcommand{\setpsel}{\mathit{PSel}}
\newcommand{\asel}{\mathsf{sel}}
\newcommand{\apsel}{\mathsf{psel}}
\newcommand{\aheapgraph}{\mathit{h}}
\newcommand{\aflowgraph}{\mathit{fg}}
\newcommand{\setheapgraphs}{\mathit{HG}}
\newcommand{\setflowgraphs}{\mathit{FG}}
\newcommand{\sval}{\mathit{sval}}
\newcommand{\dval}{\mathit{dval}}
\newcommand{\pval}{\mathit{pval}}
\newcommand{\svalof}[1]{\sval(#1)}
\newcommand{\dvalof}[1]{\dval(#1)}
\newcommand{\pvalof}[1]{\pval(#1)}
\newcommand{\genrhs}{\mathit{gen}}
\newcommand{\genrhsof}[1]{\genrhs(#1)}
\newcommand{\anaval}{\mathit{a}}
\newcommand{\anavalp}{\mathit{b}}
\newcommand{\initfgof}[1]{\mathit{initfg}(#1)}
\newcommand{\setavars}{\mathit{ALVar}}
\newcommand{\setfvars}{\mathit{FLVar}}
\newcommand{\anavar}{\mathit{ax}}
\newcommand{\anfvar}{\mathit{fx}}
\newcommand{\anfval}{\amonval}
\newcommand{\setgpvars}{\mathit{GPVar}}
\newcommand{\setlpvars}{\mathit{LPVar}}
\newcommand{\setpvars}{\mathit{PVar}}
\newcommand{\setvars}{\mathit{Y}}
\newcommand{\apvar}{\mathit{px}}
\newcommand{\apvarp}{\mathit{qx}}
\newcommand{\fsel}{\mathsf{flow}}
\newcommand{\insel}{\mathsf{in}}
\newcommand{\setaops}{\mathit{AOP}}
\newcommand{\anaop}{\mathsf{aop}}
\newcommand{\anaopof}[1]{\anaop(#1)}
\newcommand{\setfops}{\mathit{FOP}}
\newcommand{\anfop}{\mathsf{fop}}
\newcommand{\anfopof}[1]{\anfop(#1)}
\newcommand{\setafpreds}{\mathit{Pred}}
\newcommand{\anafpred}{\mathsf{pred}}
\newcommand{\anafpredof}[1]{\anafpred(#1)}
\newcommand{\setvalsof}[1]{\mathit{Val}(#1)}
\newcommand{\aval}{\mathit{val}}
\newcommand{\avalof}[1]{\aval(#1)}
\newcommand{\fterm}{\mathit{fterm}}
\newcommand{\aterm}{\mathit{aterm}}
\newcommand{\afterm}{\mathit{term}}
\newcommand{\setlogval}{\mathit{LogVal}}
\newcommand{\setproglogval}{\mathit{ProgLogVal}}
\newcommand{\anlvval}{\mathit{lv}}
\newcommand{\apvlvval}{\mathit{plv}}
\newcommand{\apvlvvalof}[1]{\apvlvval(#1)}
\newcommand{\atomicpred}{\mathit{apred}}
\newcommand{\statepred}{\mathit{A}}
\newcommand{\comppred}{\mathit{C}}
\newcommand{\pointsto}[1]{\mapsto_{#1}}
\newcommand{\spointsto}[2]{#1 \mapsto \langle #2 \rangle}
\newcommand{\splitfg}{\mathit{split}}
\newcommand{\splitOf}[1]{\splitfg(#1)}
\newcommand{\extfg}{\mathit{ext}}
\newcommand{\extOf}[2]{\extfg(#1, #2)}
\newcommand{\updOf}[2]{[#1\mapsto#2]}
\newcommand{\natmon}{\nat\discup\amonoid}
\newcommand{\adom}{\mathbb{D}}
\newcommand{\condval}[2]{#1\triangleright#2}
\newcommand{\wellsplitof}[1]{\mathit{wellsplit}(#1)}
\newcommand{\assignfromop}{\mathit{assignfromop}}
\newcommand{\assignfromsel}{\mathit{assignfromsel}}
\newcommand{\assigntosel}{\mathit{assigntosel}}
\newcommand{\nextfld}{\mymathtt{next}}
\newcommand{\keyfld}{\mymathtt{key}}
\newcommand{\markfld}{\mymathtt{mark}}
\newcommand{\flowfld}{\mymathtt{flow}}
\newcommand{\inflowfld}{\mymathtt{in}}
\newcommand{\head}{\mymathtt{head}}
\newcommand{\vres}{\mathit{res}}
\newcommand{\assign}{\mathop{\mathtt{:=}}}
\newcommand{\keyvarof}[1]{\mathit{key}(#1)}
\newcommand{\flowvarof}[1]{\mathit{flow}(#1)}
\newcommand{\inflowvarof}[1]{\inflow(#1)}
\newcommand{\markvarof}[1]{\mathit{mark}(#1)}
\newcommand{\nextvarof}[1]{\mathit{next}(#1)}
\newcommand{\lnext}{\mathit{ln}}
\newcommand{\lmark}{\mathit{lmark}}
\newcommand{\tnext}{\mathit{tn}}
\newcommand{\hnext}{\mathit{hn}}
\newcommand{\hmark}{\mathit{hmark}}
\newcommand{\inv}{\mathsf{Inv}}
\newcommand{\tinv}{\mathsf{TInv}}
\newcommand{\futharris}{\mathsf{Fut}}
\newcommand{\futharrispre}{\mathsf{P}}
\newcommand{\futharrisprep}{\hat{\mathsf{P}}}
\newcommand{\futharrispost}{\mathsf{Q}}
\newcommand{\futharrispostp}{\skew{-1.5}\hat{\mathsf{Q}}}
\newcommand{\old}[1]{{}^\backprime\mkern-1mu#1}
\newcommand{\nodeof}[1]{\mathsf{Node}(#1)}
\newcommand{\htinv}{\mathsf{HD}}
\newcommand{\allinv}{\inv}
\newcommand{\allinvof}[1]{\allinv(#1)}
\newcommand{\somenodes}{\mathit{N}}
\newcommand{\somenodesp}{\mathit{N}'}
\newcommand{\atoolname}[1]{\code{#1}\xspace}
\newcommand{\cpp}{\atoolname{C++}}
\newcommand{\plankton}{\atoolname{plankton}}
\newcommand{\rawsymbolYes}{{\color[RGB]{0,155,85}\smash{\ding{51}}}}
\newcommand{\symbolYes}{\rawsymbolYes\xspace}
\newcommand{\anobl}[1]{\OBL{\ensuremath{#1}}}
\newcommand{\aful}[2]{\FUL{\ensuremath{#1}}{#2}}
\newcommand{\asspec}{\Psi}
\newcommand{\acss}{\mathsf{CSS}}
\newcommand{\acssup}{\mathsf{UP}}
\newcommand{\abscontent}{\mathcal{C}}
\newcommand{\abscontentp}{\mathcal{C}'}
\newcommand{\absop}{\mathit{op}}
\newcommand{\semcalclin}{\semCalc_\mathit{lin}}
\newcommand{\varr}{\mathit{arr}}
\newcommand{\vlen}{\mathit{S}}
\newcommand{\cc}{\cdot}
\newcommand{\ccc}{\cdots}
    \let\@authorsaddresses\@empty
\title{A Concurrent Program Logic with a Future and History}
\author{Roland Meyer}
\affiliation{%
  \institution{TU Braunschweig}
  \country{Germany}
}
\email{roland.meyer@tu-bs.de}
\author{Thomas Wies}
\affiliation{%
  \institution{New York University}
  \country{USA}
}
\email{wies@cs.nyu.edu}
\author{Sebastian Wolff}
\affiliation{%
  \institution{New York University}
  \country{USA}
}
\email{sebastian.wolff@cs.nyu.edu}
    \keywords{Linearizability, Non-blocking Data Structures, Harris Set} 
\begin{document}

\begin{abstract}
Verifying fine-grained optimistic concurrent programs remains an open problem. Modern program logics provide abstraction mechanisms and compositional reasoning principles to deal with the inherent complexity. However, their use is mostly confined to pencil-and-paper or mechanized proofs. We devise a new separation logic geared towards the lacking automation. While local reasoning is known to be crucial for automation, we are the first to show how to retain this locality for (i) reasoning about inductive properties without the need for ghost code, and (ii) reasoning about computation histories in hindsight. We implemented our new logic in a tool and used it to automatically verify challenging concurrent search structures that require inductive properties and hindsight reasoning, such as the Harris set.
\end{abstract}

	\maketitle
	


\section{Introduction}
\label{sec:motivation}

Concurrency comes at a cost, at least, in terms of increased effort when verifying program correctness. 
There has been a proliferation of concurrent program logics that provide an  arsenal of reasoning techniques to address this challenge~\cite{DBLP:conf/concur/FuLFSZ10,DBLP:conf/esop/GotsmanRY13,DBLP:conf/esop/SergeyNB15,DBLP:conf/ecoop/DelbiancoSNB17,DBLP:conf/sas/BellAW10,DBLP:conf/popl/ParkinsonBO07,DBLP:conf/wdag/HemedRV15,DBLP:conf/pldi/LiangF13,DBLP:books/daglib/0080029,DBLP:conf/concur/VafeiadisP07,DBLP:journals/jfp/JungKJBBD18,DBLP:conf/pldi/GuSKWKS0CR18,DBLP:conf/tacas/ElmasQSST10}.
In addition, a number of general approaches have been developed to help structure the high-level proof argument~\cite{DBLP:conf/podc/OHearnRVYY10,DBLP:conf/wdag/FeldmanE0RS18,DBLP:journals/pacmpl/FeldmanKE0NRS20,DBLP:journals/tods/ShashaG88,DBLP:conf/cav/KraglQH20}.
However, the use of these techniques has been mostly confined to manual proofs done on paper, or mechanized proofs constructed in interactive proof assistants.
We distill from these works a concurrent separation logic suitable for automating the construction of local correctness proofs for highly concurrent data structures.
We focus on concurrent search structures (sets and maps indexed by keys), but the developed techniques apply more broadly.
Our guiding principle is to perform all inductive reasoning, both in time and space, in lock-step with the program execution. 
The reasoning about inductive properties of graph structures and computation histories is relegated to the meta-theory of the logic by choosing appropriate semantic models.

\smartparagraph{Running Example}
We motivate our work using the Harris non-blocking set data structure~\cite{DBLP:conf/wdag/Harris01}, which we will also use as a running example throughout the paper. 

We assume a garbage-collected programming language, supporting (first-order) recursive functions, product types, and mutable heap-allocated structs. 
The language further provides a \emph{compare-and-set} operation, \code{CAS($x$.f,$o$,$n$)}, that atomically sets field \code{f} of $x$ to $n$ and returns \code{true} if \code{f}'s current value is $o$, or otherwise  returns \code{false} leaving \code{$x$.f} unchanged.

\begin{figure}[t]
\begin{minipage}[t]{.49\linewidth}
\begin{lstlisting}[language=SPL,mathescape=true,escapechar=@,belowskip=3.5pt]
struct N = { val key: K; var next: N }
\end{lstlisting}
\begin{lstlisting}[language=SPL,mathescape=true,escapechar=@,belowskip=3.5pt]
val tail = new N { key = $\infty$; next = tail }
val head = new N { key = $-\infty$; next = tail }
\end{lstlisting}
\begin{lstlisting}[language=SPL,mathescape=true,escapechar=@,belowskip=0pt]
procedure traverse($k$:$\,$K,$\,l$:$\,$N,$\,\lnext$:$\,$N,$\,t$:$\,$N):$\,$(N$\,$*$\,$N$\,$*$\,$N)$\,${
  val $\tnext$ = $t$.next  @\label[line]{line:traverse-read-next}@  
  val $\tmark$ = is_marked($\tnext$) @\label[line]{line:traverse-read-mark}@
  if ($\tmark$) return traverse($k$, $l$, $\lnext$, $\tnext$) @\label[line]{line:traverse-check-mark}@
  else if ($t$.key < $k$) return traverse($k$,$\,t$,$\,\tnext$,$\,\tnext$)  @\label[line]{line:traverse-check-key}@
  else return ($l$, $\lnext$, $t$)
}
\end{lstlisting}
\end{minipage}\hfill
\begin{minipage}[t]{.45\linewidth}
\begin{lstlisting}[language=SPL,mathescape=true,escapechar=@,belowskip=0pt]
procedure find($k$: K): N * N {
  val $\hnext$ = head.next @\label[line]{line:traverse-read-head-next}@
  val $l$, $\lnext$, $r$ := traverse($k$,$\,$head,$\,\hnext$,$\,\hnext$)
  if (($\lnext$ == $r$ || CAS($l$.next, $\lnext$, $r$)) @\label[line]{line:search-cas}@
      && !is_marked($r$.next)) return ($l$,$\,r$) @\label[line]{line:search-mark-check}@
  else return find($k$)
}
\end{lstlisting}
\begin{lstlisting}[language=SPL,mathescape=true,escapechar=@,belowskip=0pt]
procedure search($k$: K) : Bool {
  val _, $r$ = find($k$) @\label[line]{line:find-search}@
  return $r$.key == $k$ @\label[line]{line:find-return}@
}
\end{lstlisting}
\end{minipage}
\centering
  \begin{tikzpicture}[>=stealth, font=\footnotesize, scale=0.8, every node/.style={scale=0.8}]
    \def\xsep{6}
    \def\ysep{-1.6}

    \node[unode] (inf) {$-\infty$};
    \node[unode, right=\xsep mm of inf] (n3) {$3$};
    \node[mnode, right=\xsep mm of n3] (n4) {$4$};
    \node[mnode, right=\xsep mm of n4] (n6) {$6$};
    \node[mnode, right=\xsep mm of n6] (n7) {$7$};
    \node[mnode] (n11) at ($(n6) + (0, \ysep/1.5)$) {$1$};
    \node[mnode] (n5) at ($(n7) + (0, \ysep/1.5)$) {$5$};
    \node[unode, right=\xsep mm of n7] (n9) {$9$};
    \node[unode, right=\xsep mm of n9] (Inf) {$\infty$};

    \node[stackVar, below=.4cm of inf] (hd) {$\mathtt{head}$};
    \node[stackVar, below=.4cm of n3] (l) {$l$};
    \node[stackVar, below=.4cm of n4] (ln) {$\lnext$};
    \node[stackVar, below=.4cm of n9] (r) {$r$};

    \draw[edge] (hd) to (inf);
    \draw[edge] (inf) to (n3);
    \draw[edge,blue] (n3) to (n4);
    \draw[edge] (n4) to (n6);
    \draw[edge] (n6) to (n7);
    \draw[edge] (n7) to (n9);
    \draw[edge] (n9) to (Inf);
    \draw[edge] (n11) to (n5);
    \draw[edge, bend right=30] (n5) to (n9);

    \draw[edge, red] (n3) to[bend left=30] (n9);
    \draw[fedge,red!40] (n3) to[bend left=30] (n6);
    \draw[fedge,red!80] (n3) to[bend left=30] (n7);
    \draw[edge] (l) to (n3);
    \draw[edge] (ln) to (n4);
    \draw[edge] (r) to (n9);
  \end{tikzpicture}
  \caption{%
    The \citet{DBLP:conf/wdag/Harris01}\label{fig:harris-list-algo} set algorithm. The lower half shows a state of the Harris set containing keys $\set{3,9}$. 
    Nodes are labeled with the value of their \code{key} field. Edges indicate \code{next} pointers. Marked nodes are shaded gray. 
    The blue edge between the nodes $l$ and $\lnext$ represents the state of $l$ before the \code{CAS} and the red edge between $l$ and $r$ the state after the \code{CAS} from \cref{line:search-cas}.
    Dashed edges represent the hypothetical update chunks that inductively capture the effect of the \code{CAS}.%
    \label{fig:harris-list-state}%
  }
\end{figure}

Harris' algorithm implements a set data structure that takes elements from a totally ordered type \code{K} of keys and provides operations for concurrently searching, inserting, and deleting a given \emph{operation key}~\code{k}. 
We focus on the \code{search} operation shown in \Cref{fig:harris-list-algo}. 
The data structure is represented as a linked list consisting of nodes implemented by the struct type~\code{N}. 
Each node stores a \code{key} and a \code{next} pointer to the successor node in the list. 
A potential state of the data structure is illustrated in \Cref{fig:harris-list-state}. 
The algorithm maintains several important invariants. 
First, the list is strictly sorted by the keys in increasing order and has a sentinel head node, pointed to by the immutable shared pointer \code{head}. 
The key of the head node is $-\infty$. 
Likewise, there is a sentinel tail node with key~$\infty$.
We assume $-\infty < k < \infty$ for all operation keys~$k$. 
To support concurrent insertions and deletions without lock-based synchronization, a node that is to be removed from the list is first \emph{marked} to indicate that it has been logically deleted before it is physically unlinked. 
Node marking is implemented by \emph{bit-stealing} on the \code{next} pointers. 
We abstract from the involved low-level bit-masking using the function \code{is_marked}. 
A call \code{is_marked($p$)} returns \code{true} iff the mark bit of pointer \code{$p$} is set. 
We say that a node $x$ is marked if \code{is_marked($x$.next)} returns \code{true}. 
The sentinel nodes are never marked.

The workhorse of the algorithm is the function \code{find}. 
It takes an operation key $k$ and returns a pair of nodes \code{($l$, $r$)} such that $l.\keyfld < k \leq r.\keyfld$, and the following held true at a single point in time during \code{find}'s execution:
$r$ was unmarked and $l$'s direct successor, and $l$ was reachable from \code{head}. All client-facing functions such as \code{search} then use \code{find}.

\smartparagraph{Contributions}
Recall that in order to prove linearizability of a concurrent data structure, one has to show that each of the data structure's operations takes effect instantaneously at some time point between its invocation and return, the \emph{linearization point}, and behaves according to its sequential specification~\cite{DBLP:journals/toplas/HerlihyW90}.
The Harris set exhibits two key challenges in automating the linearizability proof that are common to non-blocking data structures and that we aim to address in our work.

The first challenge is that linearization points may not be statically fixed, but instead depend on the interference of concurrent operations performed by other threads.
We discuss this issue using \code{search}, whose sequential specification says that the return value is \code{true} iff $k$ is present in the data structure.
Consider a thread $T$ executing a call \code{search(8)}.
\Cref{fig:harris-list-state} shows a possible intermediate state of the list observed right after the successful execution of the \code{CAS} on \cref{line:search-cas} during $T$'s execution of \code{search}.
This is also the linearization point of \code{search} for $T$, because the \code{CAS} guarantees that if value $8$ is present in the data structure, it must be the key of node $r$. Hence, if the check on $r$'s mark bit on \cref{line:search-mark-check} still succeeds, then the test on \cref{line:find-return} will produce the correct return value for the \code{search} relative to the abstract state of the data structure at the linearization point.
However, in an alternative execution, a concurrently executing \code{delete(9)} may mark $r$ before $T$ executes \cref{line:search-mark-check}, but after $T$ executes the \code{CAS}.
In this situation, $T$ will abort and restart causing its linearization point to be delayed.
Thus, the correct linearization point is only known in hindsight~\cite{DBLP:conf/podc/OHearnRVYY10} at a later point in the execution when all interferences have been observed.
As a consequence, the proof must track information about earlier states in the execution history to enable reasoning about linearization points that already happened in the past.

Our first contribution is a lightweight embedding of computation histories into separation logic~\cite{DBLP:conf/csl/OHearnRY01} that supports local proofs using hindsight arguments, but without having to perform explicit induction over computation histories.

The second challenge is that the proof needs to reason about maintenance operations that affect an unbounded heap region.
The procedure \code{traverse} guarantees that all nodes between $l$ and $r$ are marked.
The \code{CAS} on \cref{line:search-cas} then unlinks the segment of marked nodes between $l$ and $r$ from the structure making $r$ the direct successor of $l$.
This step is depicted in \Cref{fig:harris-list-state}.
The blue edge between $l$ and $\lnext$ refers to the pre state of the \code{CAS} and the red edge between $l$ and $r$ to the post state.
A traditional automated analysis needs to infer the precise inductive shape invariant of the traversed region (e.g. a recursive predicate stating that it is a list segment of marked nodes).
Then, at the point where the segment is unlinked, it has to infer that the global data structure invariant is maintained.
This involves an inductive proof argument, and the analysis needs to rediscover how this induction relates to the invariant of the traversal.

Our second contribution is a mechanism for reasoning about such updates with non-local effects.
The idea is to compose these updates out of \emph{ghost update chunks}.
This is illustrated in \Cref{fig:harris-list-state}, where the effect of the \code{CAS} is composed out of simpler updates that move the edge from $l$ towards $r$ one node at a time (indicated by the dashed red edges).
One only needs to reason about four nodes to prove that the edge can be moved forward by one node.
We refer to a correctness statement of such a ghost update chunk as a \emph{future}.
The crux is to construct these futures during the traversal of the marked segment, i.e., in lock-step with the program execution.
This avoids the need for explicit inductive reasoning at the point where the \code{CAS} takes effect.
When proving the future for unlinking a single traversed node $t$, we directly apply interference-free facts learned during the traversal, e.g., that $t$ must be marked and can therefore be unlinked.
We call this mechanism accounting.
The final future can then be invoked on \cref{line:search-cas} to prove the correctness of the \code{CAS}.

The focus of the paper is on the development of the new program logic rather than algorithmic details on efficient automatic proof search.
However, we have implemented a prototype tool called \plankton \cite{plankton} that uses the logic to verify linearizability of highly concurrent search structures automatically, provided an appropriate structural invariant is given by the user.
The tool's implementation follows a standard abstract interpretation approach~\cite{DBLP:conf/popl/CousotC77}.
We have successfully applied the tool to verify several fine-grained non-blocking and lock-based concurrent set implementations, including: Harris set~\cite{DBLP:conf/wdag/Harris01}, Michael set~\mbox{\cite{DBLP:conf/spaa/Michael02}}, Vechev and Yahav CAS set~\cite[Figure~2]{DBLP:conf/pldi/VechevY08}, ORVYY set~\cite{DBLP:conf/podc/OHearnRVYY10}, and the Lazy set~\cite{DBLP:conf/opodis/HellerHLMSS05}.
All these benchmarks require hindsight reasoning.
To our knowledge, \plankton is the first tool that automates hindsight reasoning for such a variety of benchmarks.
The Harris set additionally requires futures that are also automated in plankton.
With this, \plankton is the first tool that can automatically verify the Harris set algorithm.

\moreless{%
}{
  A companion technical report containing additional details is available as \cite{DBLP:journals/corr/abs-2207-02355}.
}


\section{Overview}
\label{sec:overview}

We aim for a proof strategy that is compatible with local reasoning principles and agnostic to the detailed invariants of the specific data structure under consideration.
In particular, we want to avoid proof arguments that devolve into explicit reasoning about heap reachability or other inductive heap properties, which tend to be difficult to automate.

Our strategy builds on the \emph{keyset framework}~\cite{DBLP:journals/tods/ShashaG88,DBLP:conf/pldi/KrishnaPSW20} for designing and verifying concurrent search structures. In this framework, the data structure's heap graph is abstracted by a mathematical graph $(N,E)$ where each node $\anode \in N$ is labeled by its local contents, a set of keys $\contentsof{\anode}$.
The abstract state of the data structure $\contentsof{N}$ is then given by the union of all node-local contents.
Moreover, each node $\anode$ has an associated set of keys $\keysetof{\anode}$ called its \emph{keyset}.
The keysets are defined inductively over the graph such that the following \emph{keyset invariants} are maintained:
\begin{inparaenum}[(1)]
  \item the keysets of all nodes partition the set of all keys, and
  \item for all nodes $\anode$, $\contentsof{\anode} \subseteq \keysetof{\anode}$.
\end{inparaenum}
For the Harris set, we define $\contentsof{\anode}\defeq \ite{{\markvarof{\anode}}}{{\emptyset}\!}{\!{\set{\keyvarof{\anode}}}}$ and let $\keysetof{\anode}$ be the empty set if $\anode$ is not reachable from $\head$ and otherwise the interval $(\keyvarof{\anodep},\keyvarof{\anode}]$ where $\anodep$ is the predecessor of $\anode$ in the list (cf. \Cref{fig:harris-list-hindsight}).
Here, we denote by $\keyvarof{\anode}$ the value of field $\keysel$ in a given state, and similarly for $\markvarof{\anode}$.
Throughout the rest of the paper, we will follow this convention of naming variables in a way that reflects the field dereferencing mechanism.

The keyset invariants ensure that for any node $\anode \in N$ and key $k$ we have
\[
  k \in \keysetof{\anode} \;\Rightarrow\; \bigl(k \in \contentsof{N} \Leftrightarrow k \in \contentsof{\anode}\bigr)
  \enspace.
\]
This property allows us to reduce the correctness of an insertion, deletion, and search for $k$ from the global abstract state $\contentsof{N}$ to $\anode$'s local contents $\contentsof{\anode}$, provided we can show $k \in \keysetof{\anode}$.
For example, suppose that a concurrent invocation of \code{search($k$)} returns \code{false}.
To prove that this invocation is linearizable, it suffices to show that there exists a node $\anode$ such that both $k \in \keysetof{\anode}$ and $k \notin \contentsof{\anode}$ were true at the same point in time during \code{search($k$)}'s execution.
We refer to $\anode$ as the decisive node of the operation.
The point in time where the two facts about $\anode$ hold is the linearization point. 

The ingredients for the linearizability proofs are thus
\begin{inparaenum}[(i)]
  \item defining the keysets for the data structure at hand,
  \item proving that the keyset invariants are maintained by the data structure's operations, and
  \item identifying the linearization point and decisive node $\anode$ for an operation on key $k$ by establishing the relevant facts about $k$'s membership in the keyset and contents of $\anode$. 
\end{inparaenum}

Our contributions focus on the automation of (ii) and (iii).
While we do not automate (i), the keyset definitions follow general principles and can be reused across many data structures~\cite{DBLP:journals/tods/ShashaG88} (e.g., we use the same definition for all the list-based set implementations considered in our evaluation, cf.~\Cref{Section:Evaluation}).
In the remainder of this section, we provide a high-level overview of the reasoning principles that underlie our new program logic and enable proof automation.

\smartparagraph{Automating History Reasoning}
We start with the linearizability argument.
Consider a concurrent execution of \code{search($k$)} that returns value $b$.
The decisive node of \code{search} is always the node $r$ returned by the call to \code{find}.
The proof thus needs to establish that at some point during the execution, $k \in \keysetof{r}$ and $b \Leftrightarrow (k \in \contentsof{r})$ were true.
The issue is that when we reach the corresponding point in the execution during proof construction, we may not be able to linearize the operation right away because the decisive node and linearization point depend on the modifications that may still be done by other threads before \code{search} returns.
We can thus only linearize the execution in hindsight, once all the relevant interferences have been observed.

\begin{figure}[t]
  \centering
  \begin{tikzpicture}[>=stealth, font=\footnotesize, scale=0.8, every node/.style={scale=0.8}]
    \def\xsep{1.4}
    \def\ysep{-1.8}

    \begin{scope}[local bounding box=g1] 
    \node[unode,label={$\{\bluetext{-\infty}\}$}] (n1) {$n_1$};
    \node[unode, right=6mm of n1,label={$(-\infty,\bluetext{2}]$}] (n2) {$n_2$};
    \node[unode, right=6mm of n2,label={$(2,\bluetext{5}]$}] (n3) {$n_3$};
    \node[unode, right=6mm of n3,label={$(5,\bluetext{\infty}]$}] (n4) {$n_4$};

    \node[stackVar, below=.3cm of n1,label={[label distance=-15pt]\normalsize$\mathtt{head},l$}] (hd) {};
    \node[stackVar, below=.3cm of n2] (l) {$\lnext,t$};
    \node[stackVar, below=.3cm of n3] (ln) {$\tnext$};

    \draw[edge] (n1) to (n2);
    \draw[edge] (n2) to (n3);
    \draw[edge] (n3) to (n4);
 
    \draw[edge] (hd) to (n1);
    \draw[edge] (l) to (n2);
    \draw[edge] (ln) to (n3);

    \draw ($(n1.north west)+(-.5,.7)$) rectangle ($(n4.south east)+(.5,-1.2)$);
    \end{scope}
    \node[stackVar, below=-.3cm of g1.south] (s1) {\large$s_1$ (\cref{line:traverse-read-next})};
    \begin{scope}[local bounding box=g2, shift={($(g1.north west)+(4,2.2)$)}] 

    \node[unode,label={$\{\bluetext{-\infty}\}$}] (n1) {$n_1$};
    \node[unode, right=6mm of n1,label={$(-\infty,\bluetext{2}]$}] (n2) {$n_2$};
    \node[mnode, right=6mm of n2,label={$(2,\bluetext{5}]$}] (n3) {$n_3$};
    \node[unode, right=6mm of n3,label={$(5,\bluetext{\infty}]$}] (n4) {$n_4$};

    \node[stackVar, below=.3cm of n1] (hd) {$\mathtt{head}$};
    \node[stackVar, below=.3cm of n2] (l) {$l$};
    \node[stackVar, below=.3cm of n3] (ln) {$\lnext,r$};

    \draw[edge] (n1) to (n2);
    \draw[edge] (n2) to (n3);
    \draw[edge] (n3) to (n4);
 
    \draw[edge] (hd) to (n1);
    \draw[edge] (l) to (n2);
    \draw[edge] (ln) to (n3);

    \draw ($(n1.north west)+(-.5,.7)$) rectangle ($(n4.south east)+(.5,-1.2)$);
    \end{scope}
    \node[stackVar, below=-.35cm of g2.south] (s2p) {\large$s_2'$ (\cref{line:search-mark-check})};
    
    \draw[edge,dotted] ($(g1.north)+(-.5,0)$) to[bend left=20] node[left] {\large\code{delete}($5$) $\rightsquigarrow$} ($(g2.west)+(0,0)$);

    \begin{scope}[local bounding box=g3, shift={($(g2.north east)+(2.3,-1.0)$)}] 
    \node[unode,label={$\{\bluetext{-\infty}\}$}] (n1) {$n_1$};
    \node[unode, right=6mm of n1,label={$(-\infty,\bluetext{2}]$}] (n2) {$n_2$};
    \node[mnode, right=6mm of n2,label={$\emptyset$}] (n3) {$n_3$};
    \node[unode, right=6mm of n3,label={$(2,\bluetext{\infty}]$}] (n4) {$n_4$};

    \node[stackVar, below=.3cm of n1] (hd) {$\mathtt{head}$};
    \node[stackVar, below=.3cm of n2] (l) {$l$};
    \node[stackVar, below=.3cm of n3] (ln) {$\lnext$};
    \node[stackVar, below=.3cm of n4] (r) {$r$};

    \draw[edge] (n1) to (n2);
    \draw[edge] (n2) to[bend left=45] (n4);
    \draw[edge] (n3) to (n4);
 
    \draw[edge] (hd) to (n1);
    \draw[edge] (l) to (n2);
    \draw[edge] (ln) to (n3);
    \draw[edge] (r) to (n4);

    \draw ($(n1.north west)+(-.5,.7)$) rectangle ($(n4.south east)+(.5,-1.2)$);
    \end{scope}
    \node[stackVar, below=-.35cm of g3.south] (s3p) {\large$s_3'$ (\cref{line:search-mark-check})};

    \draw[edge,dotted] ($(g2.east)+(0,0)$) to ($(g3.west)+(0,0)$);

    \begin{scope}[local bounding box=g4, shift={($(g3.south west)+(.8,-1.7)$)}] 
    \node[unode,label={$\{\bluetext{-\infty}\}$}] (n1) {$n_1$};
    \node[unode, right=6mm of n1,label={$(-\infty,\bluetext{2}]$}] (n2) {$n_2$};
    \node[unode, right=6mm of n2,label={$(2,\bluetext{5}]$}] (n3) {$n_3$};
    \node[unode, right=6mm of n3,label={$(5,\bluetext{\infty}]$}] (n4) {$n_4$};

    \node[stackVar, below=.3cm of n1] (hd) {$\mathtt{head}$};
    \node[stackVar, below=.3cm of n2] (l) {$l$};
    \node[stackVar, below=.3cm of n3] (ln) {$\lnext,r$};

    \draw[edge] (n1) to (n2);
    \draw[edge] (n2) to (n3);
    \draw[edge] (n3) to (n4);
 
    \draw[edge] (hd) to (n1);
    \draw[edge] (l) to (n2);
    \draw[edge] (ln) to (n3);

    \draw ($(n1.north west)+(-.5,.7)$) rectangle ($(n4.south east)+(.5,-1.2)$);
    \end{scope}
    \node[stackVar, below=-.35cm of g4.south] (s2) {\large$s_2$ (\cref{line:search-mark-check})};

    \draw[edge,dotted] ($(g1.east)+(0,0)$) to ($(g4.west)+(0,0)$);
  \end{tikzpicture}
  \vspace*{-1cm}
  \caption{%
    Some states observed during two executions of \code{search($5$)} on a Harris set that initially contains the keys $2$ and $5$ and no marked nodes.
    Each node in a state is labeled above by its keyset.
    If the node is reachable from $\head$, the right bound of the keyset interval indicates the key stored in the node (highlighted in blue).
    \label{fig:harris-list-hindsight}
  }
\end{figure}

To illustrate this point, consider the scenario depicted in \Cref{fig:harris-list-hindsight}.
It shows intermediate states of two executions of \code{search($5$)} on a Harris set that initially contains the keys $\{2,5\}$.
The two executions agree up to the point when state $s_1$ is reached after execution of \cref{line:traverse-read-next} in the first call to \code{traverse} (i.e., after $n_2$'s $\nextsel$ pointer and mark bit have been read).
The execution depicted on the bottom proceeds without interference to state $s_2$ at the beginning of \cref{line:search-mark-check} and will return $b=\true$.
Here, the decisive node is $n_3$ and the linearization point is $s_1$. Note that both $5 \in \keysetof{n_3}$ and $5 \in \contentsof{n_3}$ hold in $s_1$.
On the other hand, the execution depicted on the top of \Cref{fig:harris-list-hindsight} is interleaved with a concurrently executing \code{delete}($5$).
The \code{delete} thread marks $n_3$ before the \code{search} thread reaches the beginning of \cref{line:search-mark-check}, yielding state $s_2'$ at this point.
The test whether $r$ is marked will now fail, causing the \code{search} thread to restart.
After traversing the list again, the \code{search} thread will unlink $n_3$ from the list with the \code{CAS} on \cref{line:search-cas}.
This yields state $s_3'$ when the \code{search} reaches \cref{line:search-mark-check} again.
The \code{search} thread will then proceed to compute the return value $b=\false$.
For this execution, the decisive node is $n_4$ and the linearization point is $s_3'$.

Our program logic provides two ingredients for dealing with the resulting complexity in the linearizability proof. We discuss these formally in \Cref{Section:History}. The first ingredient is the \emph{past predicate} $\past \apred$, which asserts that the current thread owned the resource $\apred$ at some point in the past. The second ingredient is a set of proof rules for introducing and manipulating past predicates. In particular, we will use the following three rules in our proof:
\begin{mathpar}
  \inferHtop{h-intro}{}{\vdash \hoareOf{p}{\cskip}{p \mstar \past p}}
  \and
  \inferH{h-hindsight}{p \text{ pure}}{
    p \mstar \past q \vdash \past (p \mstar q)
  }
  \and
  \inferH{h-infer}{p \vdash q}{
    \past p \vdash \past q
  }
  \enspace .
\end{mathpar}
Rule \ruleref{h-intro} states the validity of the Hoare triple $\hoareOf{p}{\cskip}{p \mstar \past p}$ which introduces a past predicate $\past p$ using a \emph{stuttering step}.
Here, $\mstar$ is separating conjunction.
The rule expresses that if the thread owns $p$ now, it trivially owned $p$ at some past point up until now. 
Rule \ruleref{h-hindsight} captures the essence of hindsight reasoning:
ownership of $p$ can be transferred from the now into the past, if $p$ is a purely logical fact that is independent of the program state.
Rule \ruleref{h-infer} states the monotonicity of the past operator with respect to logical weakening.

We demonstrate the use of past predicates and their associated rules by sketching the linearizability proof for executions of \code{search($k$)} that follow the same code path as the one in the bottom half of \Cref{fig:harris-list-hindsight}.
The proof relies on a predicate $\nodeof{\anode}$, which expresses the existence of the physical representation of $\anode$ in the heap and binds the \emph{logical variables} $\keyvarof{\anode}$ and $\markvarof{\anode}$ to the values stored in the relevant fields of $\anode$---it is a \emph{points-to} predicate ${\anode \mapsto \keyvarof{\anode},\markvarof{\anode},\dots}$ for all fields of $\anode$.
The predicate also expresses important properties needed for maintaining the keyset invariants.
\Cref{Section:Keysets} discusses the definition of predicate $\nodeof{\anode}$ in detail.

The proof proceeds by symbolic execution of the considered path. The goal is to infer
\[
  \past \bigl(
    \old{\nodeof{r}} \;\mstar\; k \in \old{\keysetof{r}} \land (b \Leftrightarrow k \in \old{\contentsof{r}}
  \bigr)
\]
as postcondition where $b$ is the return value of \code{search}($k$).
This implies the existence of a linearization point as discussed earlier.
Here, we write $\old{e}$ for the expression obtained from expression $e$ by replacing all logical variables like $\markvarof{r}$ by fresh variables $\old{\markvarof{r}}$.
That is, $\old{e}$ can be thought of as the expression $e$ evaluated with respect to the old state of $r$ captured by $\old{\nodeof{r}}$ inside the past predicate, rather than the current state.\footnote{%
  Observe that $\nodeof{r}\mstar\past\nodeof{r}$ would implicitly state that the fields of $r$ are the same in the past and the present state.
  Renaming the past state, $\nodeof{r}\mstar\past\old{\nodeof{r}}$, allows the fields of $r$ to have changed between the past and present.
}

The symbolic execution starts from a global invariant that maintains $\nodeof{\anode}$ for all nodes $\anode$ in the heap.
When it reaches \cref{line:traverse-read-next} in the proof, we can establish $\nodeof{t}$ and $\nodeof{\tnext}$ using the derived invariant of the traversal.
The two predicates imply that if $t$ is unmarked, then its keyset must be non-empty.
In turn, this implies $\keysetof{\tnext} = (\keyvarof{t},\keyvarof{\tnext}]$, because $\tnext$ is the successor of $t$.
Together, we deduce (a) $\neg \markvarof{t} \land \keyvarof{t} < k \land k \leq \keyvarof{\tnext} \Rightarrow k {\:\in\:} \keysetof{\tnext}$.
Moreover, the definition of $\contentsof{\tnext}$ gives us (b) $k {\:\in\:} \contentsof{\tnext} \Leftrightarrow \neg \markvarof{\tnext} \land \keyvarof{\tnext} = k$.
We let $H(\tnext,t)$ be the conjunction of (a), (b), and $\nodeof{\tnext}$.

Next, we use rule~\ruleref{h-intro} to transfer $H(\tnext,t)$ into a past predicate, yielding $H(\tnext,t) \mstar \past(H(\tnext,t))$.
For our proof to be valid, we need to make sure that all intermediate assertions are stable under interferences by other threads.
Unfortunately, this is not the case for the assertion $H(\tnext,t) \mstar \past(H(\tnext,t))$.
Notably, this assertion implies that the current value of $\tnext$'s mark bit is the same as the value of its mark bit in the past state referred to by the past predicate.

To make the assertion stable under interference, we first introduce fresh logical variables $\old{\keyvarof{\tnext}}$ and $\old{\markvarof{t}}$ which we substitute for $\keyvarof{\tnext}$ and $\markvarof{\tnext}$ under the past operator.
This yields the equivalent intermediate assertion:
\[
  H(\tnext,t) \;\mstar\; \past(\old{H(\tnext,t)}) \;\mstar\; (\old{\markvarof{\tnext}} = \markvarof{\tnext} \land \old{\keyvarof}{\tnext} = \keyvarof{\tnext})
  \enspace.
\]
Next, we observe that other threads executing \code{search}, \code{insert}, and \code{delete} operations can only interfere by marking node $\tnext$ in case it is not yet marked.
Such interference invalidates the equality $\markvarof{\tnext} = \old{\markvarof{\tnext}}$.
So we weaken it to $\old{\markvarof{\tnext}} \Rightarrow \markvarof{\tnext}$.
In \Cref{Section:OG} we introduce a general Owicki-Gries-style separation logic framework that formalizes this form of interference reasoning.
In addition, we only keep $\nodeof{\tnext}$ from $H(\tnext,t)$, leaving us with the interference-free assertion
\[
  P(\tnext,t) \defeq \nodeof{\tnext} \;\mstar\; \past(\old{H(\tnext,t)}) \;\mstar\; (\old{\markvarof{\tnext}} \Rightarrow \markvarof{\tnext}) \land \old{\keyvarof}{\tnext} = \keyvarof{\tnext}
  \enspace.
\]
We then propagate this assertion forward along the considered execution path of \code{search}($k$), obtaining $P(\lnext,t)$ when \cref{line:search-mark-check} is reached in the proof.
During the propagation, we accumulate the facts $\neg \markvarof{t}$, $\keyvarof{t} {\,<\,} k$, $k {\:\leq\:} \keyvarof{r}$, and $\lnext=r$ according to the branches of the conditional expressions taken along the path.
As these facts are all pure, we use rule \ruleref{h-hindsight} to transfer them, together with the equality $\old{\keyvarof{\lnext}} = \keyvarof{\lnext}$, into the past predicate $\past(\old{H(\lnext,t)})$.

Using rule \ruleref{h-infer} we can then simplify the resulting past predicate as follows:
\[
  \past \bigl(
    \old{\nodeof{r}} \; \mstar \; k \in \old{\keysetof{r}} \;\land\; (k \in \old{\contentsof{r}} \Leftrightarrow \neg \old{\markvarof{r}} \land \old{\keyvarof{r}} = k)
  \bigr)
  \enspace.
\]
As we propagate the overall assertion further to the return point of \code{search}($k$), we accumulate the additional pure facts $\neg \markvarof{r}$ and $b \Leftrightarrow \keyvarof{r} = k$.
We again invoke rule \ruleref{h-hindsight} to transfer these into the past predicate, together with $\old{\markvarof{r}} \Rightarrow \markvarof{r}$ and $\old{\keyvarof{r}} = \keyvarof{r}$.
The resulting past predicate can then be simplified with \ruleref{h-infer} to finally obtain the desired:
\[
  \past \bigl(
    \old{\nodeof{r}} \;\mstar\; k \in \old{\keysetof{r}} \;\land\; (b \Leftrightarrow k \in \old{\contentsof{r}}
  \bigr)
  \enspace.
\]

While this proof is non-trivial, it is easy to automate.
The analysis performs symbolic execution of the code.
After each atomic step, it applies the rules \ruleref{h-intro} and \ruleref{h-hindsight} eagerly.
The resulting past predicates are then simplified and weakened with respect to interferences by other threads.
This way, the analysis maintains the strongest interference-free information about the history of the computation.
These steps are integrated into a classical fixpoint computation to infer loop invariants, applying standard widening techniques to enforce convergence~\mbox{\cite{DBLP:conf/popl/CousotC79}}.

It is worth pointing out that the above reasoning could also be done with prophecies~\cite{DBLP:conf/pldi/LiangF13,DBLP:journals/pacmpl/JungLPRTDJ20}.
However, prophecies are not amenable to automation in the same way as past predicates.
The main reason for this is the hindsight rule: it works relative to facts that have already been discovered during symbolic execution.
Prophecies would require to \emph{guess} the same facts prior to being discovered.
This guessing step is notoriously hard to automate~\mbox{\cite{DBLP:conf/cav/BouajjaniEEM17}}.

\smartparagraph{Automating Future Reasoning}
Our second major contribution is the idea of futures and their governing reasoning principles.
We motivate futures with the problem of automatically proving that the Harris set maintains the keyset invariants.
As noted earlier, this is challenging because the \code{CAS} performed by \code{find} may unlink unboundedly many marked nodes.
Unlinking a node changes its keyset to the empty set.
Hence, the \code{CAS} may affect an unbounded heap region.
Showing that the \code{CAS} maintains the keyset invariants therefore inevitably involves an inductive proof argument.

\begin{figure}[t]
  \centering
  \begin{tikzpicture}[>=stealth, font=\footnotesize, scale=0.8, every node/.style={scale=0.8}]
    \def\xsep{1.2}
    \def\ysep{-1.8}

    \begin{scope}[local bounding box=g1]
    \node[unode] (n1) {$\{-\infty\}$};
    \node[unode, ellipse, right=4mm of n1,inner sep=-2pt] (n2) {$(\mathtt{-}\infty,\bluetext{2}]$};
    \node[mnode, right=4mm of n2] (n3) {$(2,\bluetext{5}]$};
    \node[mnode, right=4mm of n3] (n4) {$(5,\bluetext{6}]$};
    \node[unode, right=4mm of n4] (n5) {$(6,\bluetext{\infty}]$};

    \node[stackVar, below=.3cm of n1,label={[label distance=-14pt]\normalsize$\mathtt{head}$}] (hd) {};
    \node[stackVar, below=.3cm of n2, outer sep=-4pt] (l) {$l$};
    \node[stackVar, below=.3cm of n3] (ln) {$\lnext$};
    \node[stackVar, below=.3cm of n4] (t) {$t$};
    \node[stackVar, below=.3cm of n5] (tn) {$\tnext$};

    \draw[edge] (n1) to (n2);
    \draw[edge] (n2) to (n3);
    \draw[edge] (n3) to (n4);
    \draw[edge] (n4) to (n5);
 
    \draw[edge] (hd) to (n1);
    \draw[edge] (l) to (n2);
    \draw[edge] (ln) to (n3);
    \draw[edge] (t) to (n4);
    \draw[edge] (tn) to (n5);

    \draw ($(n1.north west)+(-.4,.5)$) rectangle ($(n5.south east)+(.4,-1.2)$);
    \end{scope}

    \begin{scope}[local bounding box=g2, shift={($(g1.east)+(4,.4)$)}]
    \node[unode] (n1) {$\{-\infty\}$};
    \node[unode, ellipse, right=4mm of n1,inner sep=-2pt] (n2) {$(\mathtt{-}\infty,\bluetext{2}]$};
    \node[mnode, right=4mm of n2] (n3) {$\emptyset$};
    \node[mnode, right=4mm of n3] (n4) {$\emptyset$};
    \node[unode, right=4mm of n4] (n5) {$(2,\bluetext{\infty}]$};

    \node[stackVar, below=.3cm of n1,label={[label distance=-14pt]\normalsize$\mathtt{head}$}] (hd) {};
    \node[stackVar, below=.3cm of n2, outer sep=-4pt] (l) {$l$};
    \node[stackVar, below=.3cm of n3] (ln) {$\lnext$};
    \node[stackVar, below=.3cm of n4] (t) {$t$};
    \node[stackVar, below=.3cm of n5] (tn) {$\tnext$};

    \draw[edge] (n1) to (n2);
    \draw[edge] (n2) to[bend left=25] (n5);
    \draw[edge] (n3) to (n4);
    \draw[edge] (n4) to (n5);
 
    \draw[edge] (hd) to (n1);
    \draw[edge] (l) to (n2);
    \draw[edge] (ln) to (n3);
    \draw[edge] (t) to (n4);
    \draw[edge] (tn) to (n5);

    \draw ($(n1.north west)+(-.4,.5)$) rectangle ($(n5.south east)+(.4,-1.2)$);
    \end{scope}

    \begin{scope}[local bounding box=g3, shift={($(g1.south)+(2,-1)$)}]
    \node[unode] (n1) {$\{-\infty\}$};
    \node[unode, ellipse, right=4mm of n1,inner sep=-2pt] (n2) {$(\mathtt{-}\infty,\bluetext{2}]$};
    \node[mnode, right=4mm of n2] (n3) {$\emptyset$};
    \node[mnode, right=4mm of n3] (n4) {$(2,\bluetext{6}]$};
    \node[unode, right=4mm of n4] (n5) {$(6,\bluetext{\infty}]$};

    \node[stackVar, below=.3cm of n1,label={[label distance=-14pt]\normalsize$\mathtt{head}$}] (hd) {};
    \node[stackVar, below=.3cm of n2, outer sep=-4pt] (l) {$l$};
    \node[stackVar, below=.3cm of n3] (ln) {$\lnext$};
    \node[stackVar, below=.3cm of n4] (t) {$t$};
    \node[stackVar, below=.3cm of n5] (tn) {$\tnext$};

    \draw[edge] (n1) to (n2);
    \draw[edge] (n2) to[bend left=30] (n4);
    \draw[edge] (n3) to (n4);
    \draw[edge] (n4) to (n5);
 
    \draw[edge] (hd) to (n1);
    \draw[edge] (l) to (n2);
    \draw[edge] (ln) to (n3);
    \draw[edge] (t) to (n4);
    \draw[edge] (tn) to (n5);

    \draw ($(n1.north west)+(-.4,.5)$) rectangle ($(n5.south east)+(.4,-1.2)$);
    \end{scope}

    \draw[edge,dotted] ($(g1.east)+(-0,0)$) to node[above] {\Large\code{$l$.next $\assign \tnext$}} ($(g2.west)+(0,0)$);
    \draw[edge,dotted] ($(g1.south)+(0,0)$) to[bend right=45] node[left] {\Large\code{$l$.next $\assign t$}} ($(g3.west)+(0,0)$);
    \draw[edge,dotted] ($(g3.east)+(0,0)$) to[bend right=45] node[right] {\Large\code{$\;l$.next $\assign \tnext$}} ($(g2.south)+(0,0)$);
  \end{tikzpicture}
  \caption{%
    A compound update (top) and its decomposition into update chunks (bottom).
    Compound updates may affect an unbounded number of nodes, e.g., changing their keysets.
    Update chunks localize this effect of an update to a bounded (and small) number of nodes.
    \label{fig:harris-list-futures}
  }
\end{figure}

Our observation is that one can reason about the effect of the \code{CAS} that unlinks the marked segment by composing it out of a sequence of \emph{update chunks} that unlink the nodes in the segment one by one as indicated in \Cref{fig:harris-list-state}.
This yields an inductive proof argument where we reason about simpler updates that only affect a bounded number of nodes at a time.
The correctness of an update chunk $\acom$ is represented by a future $\FUT{P}{\acom}{Q}$.
A future can be thought of as a Hoare triple with precondition $P$ and postcondition $Q$.
However, futures inhabit the assertion level of the logic, rather than the meta level.
The crux of our program logic is that it allows one to derive a future \emph{on the side} in a subproof, while proving the correctness of the thread's traversal of the marked segment.
The advantage of this approach is that one can reuse the loop invariant for the proof of \code{traverse} towards proving the correctness of $\FUT{P}{\acom}{Q}$. This aids proof automation: the analysis no longer needs to synthesize an induction hypothesis for reasoning about the affected heap region out of thin air at the point where the \code{CAS} is executed.
In a way, \code{traverse} moonlights as ghost code that aids the correctness proof of the \code{CAS}.

At the point where the \code{CAS} is executed, the proof then invokes the constructed future.
A future can thus be thought of as a subproof that is saved up to be applied at some future point.

This idea is illustrated in \Cref{fig:harris-list-futures}.
The top left of the figure shows a state that is reached during the traversal of the marked segment from $l$ to $\tnext$.
The transition at the top depicts the effect of the update $\assignof{l.\nextsel}{\tnext}$ applied to this state.
The update unlinks the marked nodes between $l$ and $\tnext$ in one step.
The correctness of this update is expressed by the future
\[
  F \defeq \FUT{P \mstar \nextvarof{l}=\lnext \mstar \markvarof{\lnext}}{~l.\nextfld \assign \tnext~}{P \mstar \nextvarof{l}=\tnext}
\]
where $P$ is an appropriate invariant holding the relevant physical resources of the involved nodes.
We derive this future by composing two futures for simpler update chunks as depicted at the bottom half of \Cref{fig:harris-list-futures}.
The left resp. right update chunk is described by the future $F_1$ resp. $F_2$ as follows:
\begin{align*}
  F_1 &\defeq \FUT{P \mstar \nextvarof{l}=\lnext \mstar \markvarof{\lnext}}{~l.\nextfld \assign t~}{P \mstar \nextvarof{l}=t}
  \\
  F_2 &\defeq \FUT{P \mstar \nextvarof{l}=t \mstar \markvarof{t} \mstar \nextvarof{t}=\tnext}{~l.\nextfld \assign \tnext~}{P \mstar \nextvarof{l}=\tnext}
  \enspace.
\end{align*}
The future $F_1$ is derived inductively during the traversal of the marked segment from $l$ to $t$ using the same process that we are about to describe for deriving $F$.
The future $F_2$ can be easily proved in isolation.
In particular, the precondition $\markvarof{t}$ implies $\contentsof{t}=\emptyset$.
Hence, the keyset invariant $\contentsof{t} \subseteq \keysetof{t}$ is maintained.
The condition $\nextvarof{t}=\tnext$ guarantees that the update does not affect keysets of other nodes beyond $t$ and $\tnext$.

We would now like to compose $F_1$ and $F_2$ using the standard sequential composition rule of Hoare logic to derive the future
\[
  F' \defeq \FUT{P \mstar \nextvarof{l}=\lnext \mstar \markvarof{\lnext}}{~l.\nextfld \assign t; l.\nextfld \assign \tnext~}{P \mstar \nextvarof{l}=\tnext}
  \enspace.
\]
Once we have $F'$, we get $F$ by replacing the command $l.\nextfld \assign t; l.\nextfld \assign \tnext $ in $F'$ with $l.\nextfld \assign \tnext $ using a simple subsumption argument.

However, sequential composition requires that the postcondition of $F_1$ implies the precondition of $F_2$.
Unfortunately, the precondition of $F_2$ makes the additional assumptions $\markvarof{t}$ and $\nextvarof{t}=\tnext$ that are not guaranteed by $F_1$.
Now, observe that both facts are readily available in the outer proof context of the traversal: $\nextvarof{t}=\tnext$ follows from \cref{line:traverse-read-next} of \code{traverse} and $\markvarof{t}$ is obtained from the condition on \cref{line:traverse-check-mark}.
We transfer the facts from the outer proof context into $F_2$.
This eliminates them from the precondition and enables the sequential composition to obtain $F'$.
We refer to this transfer of facts as \emph{accounting}.
In a concurrent setting, accounting is sound provided the accounted facts are interference-free.
This is the case here since the \code{next} fields of marked nodes are never changed and marked nodes are never unmarked.
We explain this reasoning in more detail in \Cref{Section:Futures}.

The idea of futures applies more broadly.
They are useful whenever complex updates are prepared in advance by a traversal (as e.g. in Bw trees~\cite{DBLP:conf/icde/LevandoskiLS13a} and skip lists~\cite{DBLP:phd/ethos/Fraser04}).


\section{Programming Model}
\label{Section:Preliminaries}

We develop our reasoning principles in the context of concurrency libraries, which offer code to client applications that may be executed by an arbitrary number of threads. 
With this definition, not only the previously discussed search structures but also storage structures like stacks or lists and mutual exclusion mechanisms form concurrency libraries, and our techniques will apply to them as well. 
In this section, we formalize concurrency libraries.
Our development is parametric in the set of states and the set of commands, following the approach of abstract separation logic~\mbox{\cite{DBLP:conf/lics/CalcagnoOY07,DBLP:conf/popl/Dinsdale-YoungBGPY13,DBLP:journals/jfp/JungKJBBD18}}.

\smartparagraph{States}\label{Subsection:States}
We assume that states form a \emph{separation algebra}, a partial commutative monoid $(\setstates, \statemult, \emp)$.
For the set of units $\emp \subseteq \setstates$ we require that
\begin{inparaenum}[(i)]
	\item for all $\astate \in \setstates$, there exists some $\stateunit_\astate \in \emp$ with $\astate \statemult \stateunit_\astate = \astate$ and
	\item for all distinct $\stateunit,\stateunit' \in \emp$, $\stateunit \statemult \stateunit'$ is undefined.
\end{inparaenum}
We use $\astate_1\statemultdef\astate_2$ to indicate definedness.

\emph{Predicates} are sets of states $\apred\in \powerset{\setstates}$. 
\moreless{
	In the appendix, we introduce an assertion language to denote predicates but to simplify the exposition we stay on the semantic level.
}{
	To simplify the exposition, we stay on the semantic level and do not introduce an assertion language.
}
Predicates form a Boolean algebra $(\powerset{\setstates}, \cup, \cap, \subseteq, \overline{\phantom{\bullet}}, \emptyset, \setstates)$.
\emph{Separating conjunction} $\apred\mstar\apredp$ lifts the composition from states to predicates.
This yields a commutative monoid with unit $\emp$.
\emph{Separating implication} $\apred\sepimp\apredp$ gives residuals:
\[
	\apred\statemult\apredp\;\defeq\;\setcond{\astate_1\statemult\astate_2}{\astate_1\in\apred\;\wedge\; \astate_2\in\apredp\;\wedge\; \astate_1 \statemultdef \astate_2}
	\quad~~\text{and}\quad~~
	\apred\sepimp\apredp\;\defeq\;\setcond{\astate}{\setcompact{\astate}\mstar \apred\; \subseteq\; \apredp}
	\enspace.
\]

In our development, $(\setstates, \statemult, \emp)$ is a product of two separation algebras $(\setshared, \sharedmult, \sharedemp)$ and $(\setlocal, \localmult, \localemp)$.
We require $\emp\defeq \sharedemp \times \localemp \subseteq \setstates \subseteq \setshared \times \setlocal$.
In addition, $\setstates$ must be closed under decomposition: if $(\ashared_1\sharedmult\ashared_2, \alocal_1\localmult\alocal_2) \in \setstates$ then $(\ashared_1, \alocal_1) \in \setstates$, for all $\ashared_1,\ashared_2 \in \setshared$ and $\alocal_1,\alocal_2 \in \setlocal$.

For $(\ashared, \alocal) \in \setstates$ we call $\ashared$ the \emph{global state} and $\alocal$ the \emph{local state}.
States are composed component-wise, $(\ashared_1, \alocal_1)\statemult(\ashared_2, \alocal_2)\defeq(\ashared_1\sharedmult\ashared_2, \alocal_1\localmult\alocal_2)$, and this composition is defined only if the product is again in~$\setstates$.

\begin{lemma}
  \label{lem-composed-state-algebra}
  $(\setstates, \statemult, \emp)$ is a separation algebra.
\end{lemma}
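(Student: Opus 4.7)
The plan is to verify the three ingredients of a separation algebra for $(\setstates, \statemult, \emp)$ by lifting them componentwise from the given algebras $(\setshared, \sharedmult, \sharedemp)$ and $(\setlocal, \localmult, \localemp)$, using the assumption that $\setstates$ is closed under decomposition to handle definedness issues. I take it as implicit from the stated construction that $\statemult$ is commutative: once one side is defined, both componentwise products $\ashared_1\sharedmult\ashared_2$ and $\alocal_1\localmult\alocal_2$ exist and commute, and their pair lies in $\setstates$; this immediately gives $\astate_1\statemult\astate_2=\astate_2\statemult\astate_1$.

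The main step is associativity. Suppose $(\astate_1\statemult\astate_2)\statemult\astate_3$ is defined. Then $\astate_1\statemult\astate_2\in\setstates$ and $(\astate_1\statemult\astate_2)\statemult\astate_3\in\setstates$. Using associativity of $\sharedmult$ and $\localmult$ on the components, the resulting element of $\setshared\times\setlocal$ equals $(\ashared_1\sharedmult(\ashared_2\sharedmult\ashared_3),\,\alocal_1\localmult(\alocal_2\localmult\alocal_3))$. I then invoke closure under decomposition on this element, split as $(\ashared_1,\alocal_1)\statemult(\ashared_2\sharedmult\ashared_3,\alocal_2\localmult\alocal_3)$, to conclude that the right-hand factor lies in $\setstates$. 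This gives $\astate_2\statemult\astate_3\in\setstates$, hence $\astate_1\statemult(\astate_2\statemult\astate_3)$ is defined and equal to $(\astate_1\statemult\astate_2)\statemult\astate_3$. The symmetric direction is handled the same way via commutativity. This is the step I expect to be the main obstacle, precisely because $\setstates$ need not be a full product and nothing else ensures that the intermediate composite $\astate_2\statemult\astate_3$ is an element of $\setstates$.

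For the unit set $\emp = \sharedemp\times\localemp\subseteq\setstates$, given any $\astate=(\ashared,\alocal)\in\setstates$, I pick units $\stateunit_\ashared\in\sharedemp$ and $\stateunit_\alocal\in\localemp$ with $\ashared\sharedmult\stateunit_\ashared=\ashared$ and $\alocal\localmult\stateunit_\alocal=\alocal$ from the component algebras, and take $\stateunit_\astate\defeq(\stateunit_\ashared,\stateunit_\alocal)\in\emp$. Componentwise, $\astate\statemult\stateunit_\astate$ computes to $\astate$, and since $\astate\in\setstates$, the composition is defined and yields $\astate$. For disjointness of distinct units, if $\stateunit=(\stateunit_\ashared,\stateunit_\alocal)$ and $\stateunit'=(\stateunit'_\ashared,\stateunit'_\alocal)$ both lie in $\emp$ and $\stateunit\statemult\stateunit'$ is defined, definedness of each componentwise product together with the analogous disjointness property in $\setshared$ and $\setlocal$ forces $\stateunit_\ashared=\stateunit'_\ashared$ and $\stateunit_\alocal=\stateunit'_\alocal$, hence $\stateunit=\stateunit'$. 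Combining these facts yields that $(\setstates,\statemult,\emp)$ is a separation algebra.
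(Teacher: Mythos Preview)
Your proof is correct and follows essentially the same approach as the paper: commutativity and the unit axioms are lifted componentwise, and associativity is established by using associativity in $\setshared$ and $\setlocal$ together with closure under decomposition to ensure the intermediate composite lies in $\setstates$. The paper argues the symmetric direction of associativity first (assuming $\astate_2\statemultdef\astate_3$ and $\astate_1\statemultdef(\astate_2\statemult\astate_3)$) and is terser about the unit conditions, but the substance is identical.
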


\smartparagraph{Commands}
The second parameter to our development is the set of commands $(\setcom, \semCom{-})$ that may be used to modify states.
The set may be infinite, which allows us to treat atomic blocks as single commands.
The effect of commands on states is defined by an interpretation.
It assigns to each command a non-deterministic state transformer $\semCom{\acom}$ that takes a state and returns the set of possible successor states, $\semCom{\acom}: \setstates \to \powerset{\setstates}$.
We lift the state transformer to predicates~\cite{DBLP:books/ph/Dijkstra76} in the expected way, $\semComOf{\acom}{\apred}\defeq\bigcup_{\astate\in\apred}\semComOf{\acom}{\astate}$. 
We assume to have a command $\cskip$ that is interpreted as the identity.
For the frame rule to be sound, we expect the following monotonicity to hold~\cite{DBLP:conf/lics/CalcagnoOY07,DBLP:conf/popl/Dinsdale-YoungBGPY13}, for all $\apred, \apredp, \apredpp$:
\begin{align*}
	\label{cond-loccom}
	\semComOf{\acom}{\apred}\subseteq\apredp\quad\text{implies}\quad\semComOf{\acom}{\apred\mstar\apredpp}\subseteq\apredp\mstar\apredpp
	\enspace.
	\tag{LocCom}
\end{align*}
We handle commands that may fail as in~\cite{DBLP:conf/lics/CalcagnoOY07} by letting them return $\abort$ which is added as a new top element to the powerset lattice $\powerset{\Sigma}$, so that \eqref{cond-loccom} is trivially satisfied.

\smartparagraph{Concurrency Libraries}
Having fixed the set of states and the set of commands, a concurrency library is defined by a single program that is executed in every thread. 
The assumption of a single program can be made without loss of generality. 
The program code is drawn from the standard while-language $\setstmt$ defined by:
\begin{align*}
	\astmt\;\defebnf\;
		\acom
		\bnf
		\choiceof{\astmt}{\astmt}
		\bnf
		\seqof{\astmt}{\astmt}
		\bnf
		\loopof{\astmt}
		\ . 
\end{align*}
The semantics of the library is defined in terms of unlabeled transitions among configurations.
A \emph{configuration} is a pair  $\aconfig=(\ashared, \apc)$ consisting of a global state $\ashared\in \setshared$ and a program counter $\apc:\nat\to\setlocal\times\setstmt$.
The program counter assigns to every thread, modeled as a natural number,  the current local state and the statement to be executed next.
We use $\setconfig$ to denote the set of all configurations.
A configuration $(\ashared, \apc)$ is \emph{initial} for predicate~$\apred$ and library code~$\astmt$, if the program counter of every thread yields a local state $(\alocal, \astmt)$ where the code is the given one and the state satisfies $(\ashared, \alocal)\in \apred$.
The configuration is \emph{accepting} for predicate $\apredp$, if every terminated thread  $(\alocal, \cskip)$ satisfies the predicate, $(\ashared, \alocal)\in\apredp$.
We write these configuration predicates as the following sets
\begin{align*}
	\initset{\apred}{\astmt}
		\mkern-1mu&\defeq\mkern-1mu
		\{(\ashared, \apc) \,{\mid}\,
			\forall i,\alocal,\mkern-1mu\widehat\astmt.\mkern+2mu
				\apc(i)\prall{=}(\alocal,\mkern-1mu \widehat\astmt) \prall{\Rightarrow}
				(\ashared,\mkern-1mu \alocal)\prall{\in}\apred \prall{\wedge} \widehat\astmt\prall{=}\astmt
			\}
	\\
	\acceptset{\apredp}
		\mkern-1mu&\defeq\mkern-1mu
		\{(\ashared, \apc) \,{\mid}\,
			\forall i,\alocal.~
			\apc(i)=(\alocal, \cskip)\Rightarrow (\ashared, \alocal)\in\apredp \,
		\}
	\ .
\end{align*}

The unlabeled transition relation among configurations is defined in \Cref{Figure:Relations}.
It relies on a labeled transition relation capturing the flow of control.
A command may change the global state and the local state of the executing thread.
It will not change the local state of other threads.
A computation of the library is a finite sequence of consecutive transitions.
A configuration is reachable if there is a computation that leads to it.
We write $\reachset{\aconfig}$ for the set of all configurations reachable from $\aconfig$ and lift the notation to sets where needed.

\begin{figure}
	\begin{mathpar}
		\infrule{}{
			\pcStepOf{\acom}{\acom}{\cskip}
		}
		\and
		\infrule{}{
			\pcStepOf{\seqof{\cskip}{\astmt}}{\cskip}{\astmt}
		}
		\and
		\infrule{}{
	          \pcStepOf{\loopof{\astmt}}{\cskip}{\choiceof{\cskip}{\seqof{\astmt}{\loopof{\astmt}}}}
		}
	        \\
		\infrule{i \in \set{1,2}}{
			\pcStepOf{\choiceof{\astmt_1}{\astmt_2}}{\cskip}{\astmt_i}
		}
	        \and
		\infrule{
			\pcStepOf{\astmt_1}{\acom}{\astmt_1'}
		}{
			\pcStepOf{\seqof{\astmt_1}{\astmt_2}}{\acom}{\seqof{\astmt'_1}{\astmt_2}}
		}\and
		\infrule{
			\pcStepOf{\astmt_1}{\acom}{\astmt_2}
			\\
			(\ashared_2, \alocal_2)\in\semComOf{\acom}{\ashared_1, \alocal_1}
		}{
			(\ashared_1, \apc[i\mapsto(\alocal_1, \astmt_1)])
			\rightarrow	(\ashared_2, \apc[i\mapsto(\alocal_2, \astmt_2)])
		}
	\end{mathpar}
	\caption{%
		Transition relation $\progStepRel\;\subseteq\setconfig\times\setconfig$ based on the control-flow relation $\rightarrow{} \subseteq \setstmt\times\setcom\times\setstmt$.
		\label{Figure:Relations}
	}
\end{figure}


\section{Owicki-Gries for Concurrency Libraries}\label{Section:OG}
We formulate the correctness of concurrency libraries as the validity of Hoare triples $\hoareOf{\apred}{\astmt}{\apredp}$. 
A Hoare triple is valid if for every configuration $\aconfig$ that is initial wrt. $\apred$ and $\astmt$, every reachable configuration $\aconfig'$ is accepting wrt. $\apredp$. 
The definition refers to all threads executing the library code.

\begin{definition}
	$\subModels\mkern-2mu\hoareOf{\apred}{\astmt}{\apredp} \defeq \mkern+2mu\reachset{\initset{\apred}{\astmt}}\prall{\subseteq}\acceptset{\apredp}$.
\end{definition}

To establish this validity, we develop a thread-modular reasoning principle~\cite{DBLP:journals/acta/OwickiG76} that proceeds in two steps. 
First, we verify the library code as if it was run by an isolated thread using judgments of the form $\thePredicates, \theInterference\semCalc\hoareOf{\apred}{\astmt}{\apredp}$.

The Hoare triple of interest is augmented by two pieces of information.
The set $\thePredicates$ contains the intermediary predicates encountered during the proof of the isolated thread.
The set $\theInterference$ contains the interferences, the changes the isolated thread may perform on the shared state.
The notion of interference will be made precise in a moment.
Recording both sets during the proof is an idea we have taken from~\cite[Section 7.3]{DBLP:conf/popl/Dinsdale-YoungBGPY13}.

The second phase of the thread-modular reasoning is to check that the local proof still holds in the presence of other threads.
This is the famous interference-freedom check.
It takes as input the computed sets $\thePredicates$ and $\theInterference$ and verifies that no interference can invalidate a predicate,  denoted by $\isInterferenceFreeOf[\theInterference]{\thePredicates}$.

\smartparagraph{Interference} 
An \emph{interference} is a pair $(\apredpp, \acom)$ consisting of a predicate and a command.
It represents the fact that from states in $\apredpp$ environment threads may execute command $\acom$.
A state $(\ashared, \alocal)$ held by the isolated thread of interest will change under the interference to a state in
\[
	\semOf{(\apredpp, \acom)}(\ashared, \alocal)
	~\defeq~
	\setcond{(\ashared', \alocal)}{
		\exists \alocal_1, \alocal_2.\;\; (\ashared, \alocal_1)\in \apredpp ~\wedge~ (\ashared', \alocal_2)\in \semComOf{\acom}{\ashared, \alocal_1}}
	\enspace.
\]
We consider every state $(\ashared, \alocal_1)\in \apredpp$ that agrees with $(\ashared, \alocal)$ on the global component, compute the post, and combine the resulting global component with the local component $\alocal$.
The agreement of different threads on the global state is precisely what is used in program logics like RGSep~\cite{DBLP:conf/concur/VafeiadisP07,DBLP:phd/ethos/Vafeiadis08}.
We lift $\semOf{(\apredpp, \acom)}$ to predicates in the expected way.

We only record interferences that have an effect on the global state.
An interference $(\apredpp, \acom)$ is \emph{effectful}, denoted by $\effectful{\apredpp}{\acom}$, if it changes the shared state of an element in $\apredpp$:
\[
	\effectful{\apredpp}{\acom} \,\defeq\, \exists(\ashared_1,\alocal_1)\in \apredpp.\exists(\ashared_2, \alocal_2)\in\semOf{\acom}(\ashared_1, \alocal_1).\ashared_2\neq \ashared_1
	\enspace. 
\]

The thread-local proof computes a set of interferences.
For a predicate $\apredpp$ and a command $\acom$, the interference set is $\inter{\apredpp}{\acom}\defeq \setcompact{(\apredpp, \acom)}$ if $\effectful{\apredpp}{\acom}$ and $\inter{\apredpp}{\acom}\defeq \emptyset$ otherwise.
We consider interference sets up to the operation of joining predicates for the same command, $\setcompact{(\apred, \acom)}\cup \setcompact{(\apredp, \acom)}=\setcompact{(\apred\cup\apredp, \acom)}$.
Then $\inter{\apredpp}{\acom}\subseteq \theInterference$ means there is no interference to capture or there is an interference $(\apredppp, \acom)\in\theInterference$ with $\apredpp\subseteq \apredppp$.
We write $\theInterference\mstar\apredppp$ for the set of interferences $(\apredpp\mstar\apredppp, \acom)$ with $(\apredpp, \acom)\in\theInterference$.
Similarly, we write $\thePredicates\mstar\apredppp$ for the set of predicates $\apred\mstar\apredppp$ with $\apred\in\thePredicates$.

The \emph{interference-freedom check} takes as input a set of interferences $\theInterference$ and a set of predicates $\thePredicates$.
It checks that no interference can invalidate a predicate, $\semOf{(\apredpp, \acom)}(\apred)\subseteq \apred$ for all $(\apredpp, \acom)\in\theInterference$ and all $\apred\in\thePredicates$.
If this is the case, we write $\isInterferenceFreeOf[\theInterference]{\thePredicates}$ and say that the set of predicates $\thePredicates$ is interference-free wrt. $\theInterference$.

The interference-freedom check is non-compositional, and in manual/mechanized program verification this has been the reason to prefer rely-guarantee methods~\cite{DBLP:phd/ethos/Vafeiadis08,DBLP:conf/concur/VafeiadisP07,DBLP:conf/popl/Feng09}.
From the point of view of automated verification, the difference does not matter.
After all, there is no compositional way of computing the relies and guarantees.

\smartparagraph{Program Logic}
To verify library code as if it was run by an isolated thread, we derive (augmented) Hoare triples using the proof rules in \Cref{Figure:ProgramLogic}.
The rules are standard except that they work on the semantic level.
This is best seen in rule~\ruleref{com-sem}, which explicitly checks the postcondition for over-approximating the postimage.
The rule only adds the postcondition to the set of predicates to be checked for interference freedom. 
Similarly, the consequence rule \ruleref{infer-sem} neither adds the strengthened precondition nor the weakened postcondition.  
We can freely manipulate predicates as long as there is an interference-free predicate between every pair of consecutive statements, rule~\ruleref{seq}.  
To ensure this for loops which may be left without execution, rule~\ruleref{loop} adds $\apred$ to the set of predicates. 
The initial predicate of the overall Hoare triple is added to the set of predicates by the assumption of \Cref{Theorem:Soundness}.

\begin{figure*}[t]
\begin{mathpar}
	\inferH{com-sem}{
		\semComOf{\acom}{\apred}\subseteq \apredp
	}{
		\setcompact{\apredp}, \inter{\apred}{\acom}\semCalc\hoareOf{\apred}{\acom}{\apredp}
	}
	\and
	\inferH{infer-sem}{
		\apred\subseteq \apred'
		\quad
		\thePredicates_1, \theInterference_1 \semCalc\hoareOf{\apred'}{\astmt}{\apredp'}
		\quad
		\apredp'\subseteq \apredp
	}{
		\thePredicates_1\cup\thePredicates_2, \theInterference_1\cup\theInterference_2\semCalc\hoareOf{\apred}{\astmt}{\apredp}
	}
	\and
	\inferH{frame}{
		\thePredicates, \theInterference\semCalc\hoareOf{\apred}{\astmt}{\apredp}
	}{
		\thePredicates\mstar\apredpp, \theInterference\mstar\apredpp\semCalc\hoareOf{\apred\mstar \apredpp}{\astmt}{\apredp\mstar\apredpp}
	}\and
	\inferH{seq}{
		\thePredicates_1, \theInterference_1\semCalc\hoareOf{\apred}{\astmt_1}{\apredp}\\
		\thePredicates_2, \theInterference_2\semCalc\hoareOf{\apredp}{\astmt_2}{\apredpp}
	}{
		\setcompact{\apredp}\cup\thePredicates_1\cup\thePredicates_2, \theInterference_1\cup\theInterference_2\semCalc\hoareOf{\apred}{\astmt_1;\astmt_2}{\apredpp}
	}\and
	\inferH{loop}{
		\thePredicates, \theInterference\semCalc\hoareOf{\apred}{\astmt}{\apred}
	}{
		\setcompact{\apred}\cup\thePredicates, \theInterference\semCalc\hoareOf{\apred}{\loopof{\astmt}}{\apred}
	}\and
	\inferH{choice}{
		\thePredicates_1, \theInterference_1\semCalc\hoareOf{\apred}{\astmt_1}{\apredp}\\
		\thePredicates_2, \theInterference_2\semCalc\hoareOf{\apred}{\astmt_2}{\apredp}
	}{
		\thePredicates_1\cup\thePredicates_2, \theInterference_1\cup\theInterference_2\semCalc\hoareOf{\apred}{\choiceof{\astmt_1}{\astmt_2}}{\apredp}
	}
\end{mathpar}
\caption{%
	Program logic.
	\label{Figure:ProgramLogic}}
\end{figure*}

\begin{theorem}[Soundness]
	\label{Theorem:Soundness}
	$\thePredicates, \theInterference\semCalc\hoareOf{\apred}{\astmt}{\apredp}$\; and \;$\isInterferenceFreeOf[\theInterference]{\thePredicates}$\: and\; $\apred\in\thePredicates$\; imply \;$\subModels\hoareOf{\apred}{\astmt}{\apredp}$.
\end{theorem}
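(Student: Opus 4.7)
The plan is to prove soundness by a small-step induction, exhibiting a global invariant on configurations that is preserved by every transition in $\progStepRel$ and that implies the accepting condition at $\cskip$. The invariant will label every thread's current residual statement with a predicate from $\thePredicates$ that constrains the thread's local state together with the global state.

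First I would extract from the derivation of $\thePredicates, \theInterference \semCalc \hoareOf{\apred}{\astmt}{\apredp}$ a labeling $L$ that assigns to every residual statement $\widehat\astmt$ reachable from $\astmt$ under the control-flow relation a predicate in $\thePredicates$, with $L(\astmt) = \apred$ and $L(\cskip) \subseteq \apredp$, and satisfying local progress: whenever $\pcStepOf{\widehat\astmt_1}{\acom}{\widehat\astmt_2}$ we have $\semComOf{\acom}{L(\widehat\astmt_1)} \subseteq L(\widehat\astmt_2)$ and $\inter{L(\widehat\astmt_1)}{\acom} \subseteq \theInterference$. The labeling is built by induction on the proof derivation: \ruleref{com-sem} yields both inclusions directly; \ruleref{seq} and \ruleref{loop} explicitly add the connecting predicate to $\thePredicates$ so the labels compose at junction points; \ruleref{frame} transports labels by separating conjunction with the frame, with local progress following from \eqref{cond-loccom}; and \ruleref{infer-sem} absorbs strengthening and weakening at the boundary via its premises $\apred \subseteq \apred'$ and $\apredp' \subseteq \apredp$.

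With $L$ fixed, I would define the invariant $\Phi(\ashared, \apc) \defiff \forall i, \alocal_i, \widehat\astmt_i.\ \apc(i) = (\alocal_i, \widehat\astmt_i) \Rightarrow (\ashared, \alocal_i) \in L(\widehat\astmt_i)$. Every initial configuration in $\initset{\apred}{\astmt}$ satisfies $\Phi$ because $L(\astmt) = \apred$. For the inductive step, consider a single transition of thread $i$ with $\pcStepOf{\widehat\astmt_i}{\acom}{\widehat\astmt_i'}$ driving the state update $(\ashared, \alocal_i) \mapsto (\ashared', \alocal_i')$. Local progress of $L$ yields $(\ashared', \alocal_i') \in L(\widehat\astmt_i')$ for thread $i$ itself. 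For every other thread $j$, the local state $\alocal_j$ is unchanged, while the global component may become $\ashared'$; from $j$'s perspective, the step coincides with an application of $\semOf{(L(\widehat\astmt_i), \acom)}$. Since $\inter{L(\widehat\astmt_i)}{\acom} \subseteq \theInterference$ (non-effectful interferences force $\ashared' = \ashared$), the hypothesis $\isInterferenceFreeOf[\theInterference]{\thePredicates}$ together with $L(\widehat\astmt_j) \in \thePredicates$ preserves $(\ashared', \alocal_j) \in L(\widehat\astmt_j)$. Propagating $\Phi$ through every computation and evaluating it at a terminated thread with $\widehat\astmt_i = \cskip$ gives $(\ashared, \alocal_i) \in L(\cskip) \subseteq \apredp$, so $\reachset{\initset{\apred}{\astmt}} \subseteq \acceptset{\apredp}$.

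The main obstacle I expect is the coherent construction of $L$ under the \ruleref{frame} and \ruleref{infer-sem} rules. Under \ruleref{frame}, one must verify that \eqref{cond-loccom} suffices to lift local progress to the framed labels and that the inflated interference set $\theInterference \mstar \apredpp$ correctly over-approximates the effects of framed commands on framed predicates. Under \ruleref{infer-sem}, the spurious injection of $\thePredicates_2$ and $\theInterference_2$ must be shown to be harmless: $L$ only draws from predicates actually traversed by control flow, while the unused additions merely contribute to the interference-freedom premise, which is a global hypothesis of the theorem. Once this bookkeeping is settled, the propagation of $\Phi$ along $\progStepRel$ is routine.
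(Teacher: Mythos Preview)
Your overall strategy—establish a per-thread invariant stable under both the thread's own steps and other threads' interferences, then propagate it along $\progStepRel$—is the classical Owicki-Gries soundness argument and matches the paper's high-level plan. However, the specific device of a labeling \emph{function} $L$ from residual statements to predicates does not work as stated: the same residual statement can be reached along different derivation paths carrying different predicates, and the configuration records only the residual statement, so $\Phi$ cannot recover the missing path information. A concrete instance is $\astmt = (\acom_1;\acom_3)+(\acom_2;\acom_3)$: the two branches may be proved via \ruleref{seq} with distinct intermediate predicates $\apredp_1,\apredp_2\in\thePredicates$, yet both reach the identical residual $\cskip;\acom_3$ (and then $\acom_3$), so $L(\acom_3)$ would have to equal both. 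A similar clash arises for $\acom;\loopof{\acom}$, which is both the initial statement (labeled $\apred$) and its own residual after one loop unfolding (labeled by the loop invariant). Your inductive construction of $L$ on the derivation is therefore ill-defined at these confluence points.

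The paper avoids this by replacing the function $L$ with an existential. It defines a step-indexed predicate $\locsafedef{k}{\astmt}{\apredppp}{\apredp}$ meaning that from $\apredppp$ the thread is locally correct for $k$ steps, where after each control-flow step one \emph{existentially} selects some $\apredpppp\in\thePredicates$ covering the post-image and from which safety recurs. Rule induction on the derivation shows $\locsafedef{k}{\astmt}{\apred}{\apredp}$ for all $k$ (this is where your case analysis of \ruleref{com-sem}, \ruleref{seq}, \ruleref{loop}, \ruleref{frame}, \ruleref{infer-sem} lives, essentially unchanged). The global configuration invariant is then ``for each thread there \emph{exists} $\apredppp\in\thePredicates$ with $(\ashared,\alocal)\in\apredppp$ and $\locsafedef{k}{\widehat\astmt}{\apredppp}{\apredp}$''; the existential lets distinct occurrences of the same residual statement pick distinct predicates. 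Your argument becomes correct once you make this change—equivalently, turn $L$ into a relation and quantify existentially inside $\Phi$—but as written the first step of your plan does not go through.
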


\techreport{See Appendix~\ref{Section:SoundnessProgramLogic} for the detailed proof of Theorem~\ref{Theorem:Soundness}.}

\smartparagraph{Linearizability}
\looseness=-1
We extend the above program logic in order to utilize it for linearizability proofs.
Our extension draws on ideas from atomic triples~\cite{DBLP:conf/ecoop/PintoDG14}.
That is, we enforce that every operation linearizes exactly once and satisfies a given sequential specification when doing so.
In the context of concurrent search structures (CSS), sequential specifications $\asspec$ take the form: \[
	\asspec~=~\hoareOf{ \abscontent.~ \acss(\abscontent) }{ \absop(k) }{ v.~ \exists \abscontentp.~ \acss(\abscontentp) \mstar \acssup(\abscontent,\abscontentp,k,v) }
	\ .
\]
The sets $\abscontent$ and $\abscontentp$ are the logical contents of the structure before and after operation $\absop(k)$.
The predicate $\acss(\abscontent)$ connects the structure's physical state with the logical contents $\abscontent$.
The relation $\acssup(\abscontent,\abscontentp,k,v)$ encodes the admissible updates and the expected return value $v$ of $\absop(k)$.

To enforce exactly one linearization point, we use update tokens~\cite{DBLP:phd/ethos/Vafeiadis08} of the form $\anobl{\asspec}$ and $\aful{\asspec}{v}$.
The former states that an operation has not yet encountered its linearization point, it is still obliged to linearize.
The latter states that the linearization point has been encountered and that value $v$ must be returned to comply with $\asspec$.
Technically, the update tokes are thread-local ghost resources.
We assume the program semantics to simply ignore this ghost component.
Note that having thread-local update tokens does not allow for proving impure future-dependent linearization points as they require intricate helping protocols where threads exchange their update tokes through the global state so that other threads can resolve them.
A generalization is straight-forward, but we prefer the simpler setting to not distract from our contributions.


\begin{figure}
	\begin{mathpar}
		\inferH{lin-none}{
			\thePredicates, \theInterference
			\semCalc
			\hoareOf{\acpred}{\acom}{\acpredp}
			\\\\
			\acpred\subseteq\acss(\abscontent)
			\\
			\acpredp\subseteq\acss(\abscontent)
		}{
			\thePredicates, \theInterference
			\semcalclin
			\hoareOf{\acpred}{\acom}{\acpredp}
		}
		\and
		\setHistCol{
		\inferH{lin-past}{
			\acpred\subseteq\pastOf{\bigl(\acss(\abscontent)\cap\acssup(\abscontent,\abscontent,k,v)\bigr)}
		}{
			\thePredicates, \theInterference
			\semcalclin
			\hoareOf{\anobl{\asspec}\mstar\acpred}{\cskip}{\aful{\asspec}{v}\mstar\acpred}
		}}
		\and
		\inferH{lin-now}{
			\acpred\subseteq\acss(\abscontent)
			\\
			\thePredicates, \theInterference
			\semCalc
			\hoareOf{\acpred}{\acom}{\acpredp}
			\\
			\acpredp\subseteq\acss(\abscontentp)\cap\acssup(\abscontent,\abscontentp,k,v)
		}{
			\thePredicates, \theInterference
			\semcalclin
			\hoareOf{\anobl{\asspec}\mstar\acpred}{\acom}{\aful{\asspec}{v}\mstar\acpredp}
		}
	\end{mathpar}
	\caption{%
		Proof rules handling linearizability tokens for commands. \setHistCol{Rule~\ruleref{lin-past} is detailed in \Cref{Section:History}.}
		\label{fig:linearizability-rules}
    }
\end{figure}

To prove linearizability, we introduce a new proof system $\semcalclin$ that coincides with $\semCalc$ from \Cref{Figure:ProgramLogic} except that it replaces rule~\ruleref{com-sem} with the rules from \Cref{fig:linearizability-rules} (ignore rule~\ruleref{lin-past} for now, we explain it in \Cref{Section:History}).
The new rules handle the update tokens, leaving the task of establishing the validity of the actual Hoare triple to the base proof system $\semCalc$.
To do this, Rule~\ruleref{lin-none} ensures that the command does not change the logical contents of the structure, meaning that the update tokens are unaffected.
Rule~\ruleref{lin-now} converts the update token $\anobl{\asspec}$ into $\aful{\asspec}{v}$ if the command is a linearization point satisfying the sequential specification $\asspec$.
This conversion is applicable only once (because $\anobl{\asspec}$ is not duplicable), which makes sure there can be at most one linearization point.
Note that this rule handles both pure and impure linearization points.
Similar to atomic triples~\cite{DBLP:conf/ecoop/PintoDG14}, the rule requires threads to observe the very moment the linearization point occurs.
To ensure there is at least one linearization point, we require the operation's postcondition to contain the appropriate fulfillment token $\aful{\asspec}{v}$.
Formally, we seek to establish Hoare triples of the following form: \[
	\semcalclin\hoareOf{ \abscontent.~ \acss(\abscontent) \mstar \anobl{\asspec} }{ \absop(k) }{ v.~ \exists \abscontentp.~ \acss(\abscontentp) \mstar \aful{\asspec}{v} }
	\ .
\]
We take the derivability of such a Hoare triple as the ground truth for linearizability, trusting that a semantic result connecting the derivability to a statement about computation histories would be routine to derive~\cite{DBLP:conf/pldi/LiangF13}.


\section{Reasoning about Keysets using Flows}
\label{Section:Keysets}

Recall from \Cref{sec:overview} that we localize the reasoning about the abstract state $\contentsof{N}$ of the data structure to the contents $\contentsof{\anode}$ of a single node $\anode$ using its keyset $\keysetof{x}$.
In this section, we define the keysets as a derived quantity that we can reason about locally in a separation logic.
Then, we use this formalism to define the node-local invariant $\nodeof{\anode}$ of our running example.
This node-local invariant is used by our tool to fully automatically generate a proof of the Harris set (cf. \Cref{Section:Evaluation}).

\techreport{\smartparagraph{Flow Framework}}
We derive the keyset of a node $\anode$ from another quantity $\insetof{\anode}$, the node's \emph{inset}.
Intuitively, $k \in \insetof{\anode}$ if a thread searching for $k$ will traverse node $\anode$.
For the Harris set, we define $\insetof{\head}\prall{=}[-\infty,\infty]$ and for every other node we obtain $\insetof{\anode}$ as the solution of the following fixpoint equation:
\[
    \insetof{\anode}\:=\:{\textstyle \bigcup_{ (\anodep,\anode) \in E}}~~ \insetof{\anodep} \cap (\anodep.\keysel,\infty]
    \enspace.
\]
Here, the set of edges $E$ is induced by the next pointers in the heap.
If we remove those keys $k$ from $\insetof{\anode}$ for which a search leaves $\anode$ (i.e., if $k > x.\keysel$ in the Harris set), we obtain $\keysetof{\anode}$.
These definitions ensure for free that the keysets are disjoint, the first of our keyset invariants.
They also generalize to any search structure~\cite{DBLP:journals/tods/ShashaG88}.

To express keysets in separation logic, we use the flow framework~\cite{DBLP:conf/esop/KrishnaSW20,DBLP:journals/pacmpl/KrishnaSW18}.
In this framework, the heap is augmented by associating every node $\anode$ with a quantity $\flowvarof{\anode}$ that is defined as a solution of a fixpoint equation over the heap like the one defining the inset above.
Assertions describe disjoint fragments of the augmented global heap, similar to classical separation logic.
The augmented heap fragments are called \emph{flow graphs}.
In addition to tracking the flow of each node, a flow graph also has an associated \emph{interface} consisting of an \emph{inflow} and an \emph{outflow}.
The inflow $\inflowof{\anodep,\anode}$ captures the contribution to the flow of $\anode$ inside the heap fragment via an edge from a heap node $\anodep$ outside the fragment, and conversely for the outflow $\outflowof{\anode,\anodep}$.
Flow graphs $\aflowgraph$ and $\aflowgraph'$ compose if they are disjoint and their interfaces are compatible (i.e., the composed flow graph $\aflowgraph \mstar \aflowgraph'$ has the same flow as the components).
The framework then enables local reasoning about the effects of heap updates on flow graphs.
In essence, if a local update inside a region $\aflowgraph$ of a larger flow graph $\aflowgraph \mstar \aflowgraph'$ maintains $\aflowgraph$'s interface, then the flow in $\aflowgraph'$ does not change.
Hence, any property about $\aflowgraph'$, such as that each of its nodes $\anode$ satisfies the keyset invariant $\contentsof{\anode} \subseteq \keysetof{\anode}$, can be framed across the update.
Note that the flow and interfaces associated with the physical heap constitute ghost state.
Intuitively, they are recomputed after each update.

\techreport{Appendix~\ref{Section:Flows} provides the technical details of the flow framework adapted to the semantic setting of our program logic.}
For the remainder of the paper, it suffices to know that we instantiate the framework such that a node's inset can be obtained from its flow\techreport{{} as described above}.
To apply the framework to concurrent search structures and to reason locally about their sequential specifications, we define the $\acss$ predicate from \Cref{Section:OG} \moreless{as follows}{by}: \[
    \acss(\abscontent) ~\defeq~ \exists \somenodes.~~ \inv(\abscontent,\somenodes,\somenodes)
    \ ,
\]
where $\somenodes$ is the set of nodes the search structure is composed of and $\inv(\abscontent,\somenodesp,\somenodes)$ is a search structure specific invariant.
The invariant is parameterized in $\abscontent$ and $\somenodes$ as well as a subset $\somenodesp\subseteq\somenodes$ for which $\inv(\abscontent,\somenodesp,\somenodes)$ holds actual resources.
It must ensure that $\abscontent$ is the contents of the subregion $\somenodesp$, i.e. $\abscontent=\contentsof{\somenodesp}$.
It must also ensure the keyset invariant $\contentsof{\anode} \subseteq \keysetof{\anode}$ for all $\anode \in \somenodesp$.
\citet{DBLP:journals/pacmpl/KrishnaSW18,DBLP:conf/esop/KrishnaSW20} show how the two constraints allow us to split and merge the invariant for disjoint subregions of the structure for the purpose of framing.
It is this splitting/merging that localizes the reasoning.
The following definition makes the desired property formally precise.
\begin{definition}
    \label{def:decomposable-invariant}
    An invariant $\inv$ is \emph{decomposable} if it satisfies: \[
        \inv(\abscontent, \somenodesp_1 \uplus \somenodesp_2,\somenodes)
        ~\Longleftrightarrow~
        \exists\, \abscontent_1\, \abscontent_2.~\, \abscontent = \abscontent_1 \uplus \abscontent_2 \,\land\, \inv(\abscontent_1,\somenodesp_1,\somenodes) \,\mstar\, \inv(\abscontent_2,\somenodesp_2,\somenodes)
        \ .
    \]
\end{definition}

For linearizability, it thus suffices to identify a small subregion that contains the decisive node of the operation.
A search for key $k$, for instance, only requires the region $\inv(\abscontent_\anode,\{\anode\},\somenodes)$ with $k\in\keysetof{\anode}$ in order to linearize its return value $v$ because we obtain
\[
    \inv(\abscontent', \somenodes \setminus \{\anode\},\somenodes) \mstar \inv(\abscontent_\anode,\{\anode\},\somenodes) \land k \in \keysetof{\anode} \land v = (k \in \abscontent_\anode) \,\vdash\, \exists \abscontent. \; \acss(\abscontent) \land v = (k \in \abscontent)
    \enspace.
\]

As shown in \cite{DBLP:conf/pldi/KrishnaPSW20}, there is a generic construction for a decomposable $\inv$ that works for all search structures.
To avoid additional technical machinery, we next present a simplified version of this construction that is specific to the Harris set and similar list-based search structures.

\smartparagraph{The Harris Set Invariant}
We represent predicates $\apred \prall\subseteq \setstates$ syntactically using separation logic assertions that are for the most part standard.
In particular, we use boxed assertions \fbox{$\statepred$} that are inspired by RGSep~\cite{DBLP:conf/concur/VafeiadisP07,DBLP:phd/ethos/Vafeiadis08} to mean that $\statepred$ is interpreted in the global state.
Unboxed assertions are interpreted in the local state.
A \emph{points-to} predicate takes the form $\spointsto{x}{\overline{\asel_i:t_i},\flowfld:t_\fval,\inflowfld:t_\inflow}$ and describes a flow graph consisting of a single node $x$.
Here, each $\asel_i$ is a field selector and $t_i$ is a term denoting the field's associated value.
The \emph{ghost field} $\flowfld$ stores $x$'s flow and $\inflowfld$ stores its inflow.
The semantics of assertions is $\semOf{\statepred} \subseteq \setstates$.
\techreport{We defer the technical details to Appendix~\ref{Section:Instantiation} as they are standard.}

We next define the resources associated with a node $x$, its inset, and keyset.
In proofs, we will assume that assertions are existentially closed, and will omit the corresponding outer quantifiers.
Formulas like $\nodeof{x}$ defined in the following introduce logical variables like $\markvarof{x}$ that are visible beyond $\nodeof{x}$, for example, to define the keyset term $\keysetof{x}$.
We define:
\begin{gather*}
    \mkern+6mu\nodeof{x} {} \defeq
        \boxed{\,
            \spointsto{x}{
                \markfld\!:\markvarof{x},
                \nextfld\!:\nextvarof{x},
                \keyfld\!: \keyvarof{x},
                \flowfld\!:\flowvarof{x},
                \inflowfld\!:\inflowvarof{x}
            }
        \,}
    \\
    \begin{aligned}
    \insetof{x} &\defeq
        \ite{x = \head}{[-\infty,\infty]}{\flowvarof{x}}
    \qquad&\qquad
    \keysetof{x} &\defeq
        \insetof{x} \setminus (\keyvarof{x},\infty]
    \\
    \contentsof{\anode} &\defeq
        \ite{\markvarof{\anode}}{\emptyset}{\set{ \keyvarof{\anode} }}
    \qquad&\qquad
    \contentsof{\somenodes} &\defeq
        {\textstyle \bigcup_{\anode\in\somenodes}}\; \contentsof{\anode}
    \enspace.
    \end{aligned}
\end{gather*}
With these definitions in place, we define the invariant $\allinvof{\abscontent,\somenodesp,\somenodes}$ that is maintained by each subregion $\somenodesp \subseteq \somenodes$ of a Harris set structure consisting of nodes $\somenodes$:
\begin{align*}
    \allinvof{\abscontent,\somenodesp,\somenodes} & {} \defeq
        \abscontent=\contentsof{\somenodesp}
        \mstar
        \htinv(\somenodes)
        \mstar 
        \bigmstar x\in\somenodesp.\;
            \nodeof{x} \mstar
            \invNode^1(x) \mstar
            \invNode^2(x) \mstar
            \invNode^3(x, \somenodes) \mstar
            \invNode^4(x)
    \\
    \htinv(\somenodes) & {} \defeq
        \head\in\somenodes \,\mstar\,
        \keyvarof{\head} \prall{=} -\infty \,\mstar\,
        \neg\markvarof{\head}
    \\
    \invNode^1(x) & \defeq {}
        \neg\markvarof{x} \:\Rightarrow\: \insetof{x} \neq \emptyset
    \\
    \invNode^2(x) & \defeq {}
        \insetof{x} \neq \emptyset \:\Rightarrow\: [\keyvarof{x},\infty)\subseteq\insetof{x}
    \\
    \invNode^3(x, \somenodes) & {} \defeq
        \setcompact{x, \nextvarof{x}} \prall{\in} \somenodes \land (\keyvarof{x} \prall{=} \infty \:\Rightarrow\: \neg\markvarof{x})
    \\
    \invNode^4(x) & {} \defeq
        \forall y, z.\;
        \inflowof{x}(y,x) \prall{\neq} \emptyset \mstar \inflowof{x}(z,x) \prall{\neq} \emptyset  \:\Rightarrow\: y \prall{=} z
    \enspace .
\end{align*}
Formula $\invNode^1(x)$ captures that all unmarked nodes are reachable from $\head$.
Formula $\invNode^2(x)$ implies the second keyset invariant $C(x) \subseteq \keysetof{x}$ and will also allow us to establish $k \in \keysetof{x}$ at the appropriate points in the proof.
Formula $\invNode^3(x)$ ensures $\somenodesp \subseteq \somenodes$ and that $\somenodes$ is closed under traversal of next pointers.
It also implies that the tail node is unmarked.
Finally, $\invNode^4(x)$ implies that there exists at most one path from $\head$ to each $x$ that a traversal would actually follow.
This is needed to prove that unlinking marked nodes from the structure preserves the invariant.
It is worth noting that the invariant does not put many constraints on the data structure shape.
The sole purpose is to provide enough information to reason about the keysets.
If $\somenodes$ is clear, we abbreviate $\inv(\abscontent,\somenodesp,\somenodes)$ to $\inv(\abscontent,\somenodesp)$.
The invariant of the Harris set is decomposable as per \Cref{def:decomposable-invariant}.
\begin{lemma}
    \label{thm:invariant-locality}
    The Harris set invariant $\inv(\abscontent,\somenodesp,\somenodes)$ is decomposable.
\end{lemma}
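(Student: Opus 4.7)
The plan is to unfold $\inv$ on both sides of the biconditional and observe that it has the shape $P_{\somenodes} \mstar \bigmstar_{x \in \somenodesp} \Phi(x,\somenodes)$, where $P_{\somenodes} \defeq \htinv(\somenodes) \mstar [\abscontent = \contentsof{\somenodesp}]$ collects the global pure material, and $\Phi(x,\somenodes) \defeq \nodeof{x} \mstar \invNode^1(x) \mstar \invNode^2(x) \mstar \invNode^3(x,\somenodes) \mstar \invNode^4(x)$ collects everything mentioning only node $x$. The big separating conjunction distributes over the disjoint index set $\somenodesp_1 \uplus \somenodesp_2$, and $\htinv(\somenodes)$ is a pure assertion about $\somenodes$ and the logical variables of $\head$, hence duplicable. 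The only non-trivial work lies in splitting the content equation.

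For the forward direction, I would pick $\abscontent_i \defeq \contentsof{\somenodesp_i}$. The hypothesis $\abscontent = \contentsof{\somenodesp_1 \uplus \somenodesp_2} = \contentsof{\somenodesp_1} \cup \contentsof{\somenodesp_2}$ is immediate from the definition of $\contentsof{-}$; strengthening the union to a \emph{disjoint} union requires showing $\contentsof{\somenodesp_1} \cap \contentsof{\somenodesp_2} = \emptyset$, which reduces to pairwise disjointness of per-node contents for distinct nodes. For marked $x$, $\contentsof{x} = \emptyset$ is trivial. For unmarked $x$, chaining $\invNode^1(x)$ (nonempty inset) with $\invNode^2(x)$ ($[\keyvarof{x},\infty) \subseteq \insetof{x}$) yields $\keyvarof{x} \in \keysetof{x}$ and hence $\contentsof{x} \subseteq \keysetof{x}$; pairwise disjointness of keysets, a structural consequence of the flow framework in the presence of $\invNode^4$ (unique inflow source per node), then gives pairwise disjointness of contents. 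The backward direction is easier: from $\abscontent = \abscontent_1 \uplus \abscontent_2$ with $\abscontent_i = \contentsof{\somenodesp_i}$, take the union to recover the combined content equation; the per-node $\mstar$ parts recombine via the same distributivity, and the two copies of $\htinv(\somenodes)$ merge back into one by purity.

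The main obstacle is the keyset-disjointness step in the forward direction. The $\invNode^i$ clauses do not assert keyset disjointness directly; rather, $\invNode^4$ is placed exactly to serve as the hook that lets us invoke the generic flow-framework result relating inflow uniqueness to inset disjointness. Proving keyset disjointness from scratch would force us to reason about heap reachability and path uniqueness, which is precisely the kind of global inductive reasoning the paper's design of $\inv$ is engineered to outsource to the meta-theory of flows. Once that black-box theorem is applied, the remainder of the proof is a routine rearrangement of separating conjunctions and pure facts.
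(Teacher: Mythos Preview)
The paper states this lemma without proof, referring to the generic construction of Krishna et al.\ for decomposable flow-based invariants. Your sketch is correct and fills in what the paper omits: the iterated separating conjunction distributes over $\somenodesp_1 \uplus \somenodesp_2$, the predicate $\htinv(\somenodes)$ is pure and hence duplicable, and the only real work is the disjointness of $\contentsof{\somenodesp_1}$ and $\contentsof{\somenodesp_2}$, which you correctly reduce to $\contentsof{x} \subseteq \keysetof{x}$ (via $\invNode^1$ and $\invNode^2$) together with keyset disjointness.

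One small misattribution: you name $\invNode^4$ as the hook for keyset disjointness. The paper, however, says keyset disjointness comes ``for free'' from the inset-based definition $\keysetof{x} = \insetof{x} \setminus (\keyvarof{x},\infty]$ and the edge functions---what a node forwards to its successor is exactly what it excludes from its own keyset, so a key lands in at most one keyset along the search path from $\head$. The paper's stated purpose for $\invNode^4$ is instead to show that \emph{unlinking} marked nodes preserves the invariant. This does not break your argument---the disjointness you need is available either way---but you are pointing at the wrong clause.
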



\section{Futures}
\label{Section:Futures}

We now make our program logic future-proof.
We refer to the set of nodes affected by an update as the update's \emph{footprint}.
An update affects a node $\anode$ if it changes a field value of $\anode$ or the flow at $\anode$.
The footprint of an update on a flow graph is in general larger than the footprint of the same update on the underlying heap graph alone.
For instance, $\assignof{l.\nextsel}{r}$ does not abort as long as the location at $l$ is in the heap graph.
However, if the same command is executed on a flow graph, it will typically require other nodes such as $r$ and $l.\nextsel$ to be present in order for the command not to abort.
This is because the command may change the flow of these other nodes.
For instance, the footprint of the \code{CAS($l$.next,$\lnext$,$r$)} on \cref{line:search-cas} of \Cref{fig:harris-list-algo} comprises the entire marked segment between $l$ and $r$, because it changes the flow and hence the keysets of all nodes in the segment.
We introduce futures to reason about updates with such unbounded footprints.

As futures admit general reasoning principles, we study them in the abstract semantic setting of \cref{Section:OG} and then apply the developed principles to our concrete running example.

\smartparagraph{Reasoning about Futures}
Futures are expressed in terms of weakest preconditions.
We define the weakest precondition $\wpreOf{\acom}{\apredp}$ of a command $\acom$ and predicate $\apredp$ in the expected way: $\wpreOf{\acom}{\apredp}\defeq\setcond{\astate \in \setstates}{\semOf{\acom}(\astate)\subseteq\apredp}$.
The weakest precondition of the sequential composition $\seqof{\acom_1}{\acom_2}$ is also defined as usual: $\wpreOf{\acom_1;\acom_2}{\apredp}\defeq\wpreOf{\acom_1}{\wpreOf{\acom_2}{\apredp}}$.

\begin{definition}
	\label{Definition:Future}
  Futures are $\FUT{\apred}{\acom}{\apredp} \defeq \apred \sepimp \wpreOf{\acom}{\apredp}$.
\end{definition}

Readers familiar with Iris~\cite{DBLP:journals/jfp/JungKJBBD18} will note that our definition of futures resembles Iris' notion of Hoare triples.
Technically, Hoare triples in Iris are duplicable resources (they are guarded by a persistence modality) while our futures are not (they may carry resources and are therefore subject to interference).
However, one can directly encode our definition of futures in Iris via $\wpre$.
The key novelty of our approach is the way futures are used, in particular the reasoning technique of accounting and composition as showcased in \Cref{sec:overview}.

\begin{figure*}
	\newcommand{\IMP}{\;\,\subseteq\,\;}
	\begin{mathpar}
		\inferH{f-intro}{
			\apred\subseteq\wpreOf{\acom}{\apredp}
		}{
			\emp \IMP \FUT{\apred}{\acom}{\apredp}
		}
		\and
		\inferH{f-seq}{
			\wpreOf{\acom_1;\acom_2}{\apredpp}\subseteq\wpreOf{\acom}{\apredpp}
		}{
			\FUT{\apred}{\acom_1}{\apredp}\statemult\FUT{\apredp}{\acom_2}{\apredpp} \IMP \FUT{\apred}{\acom}{\apredpp}
		}
		\and 
		\inferH{f-infer}{
			\apred_2\subseteq\apred_1\\
			\apredp_1\subseteq\apredp_2}{
			\FUT{\apred_1}{\acom}{\apredp_1} \IMP \FUT{\apred_2}{\acom}{\apredp_2}
		}
		\and
		\inferH{f-frame}{}{
			\FUT{\apred}{\acom}{\apredp} \IMP \FUT{\apred\mstar\apredpp}{\acom}{\apredp\mstar\apredpp}
		}
		\and
		\inferH{f-account}{}
			{
			\apred\statemult\FUT{\apred\mstar\apredp}{\acom}{\apredpp} \IMP \FUT{\apredp}{\acom}{\apredpp}
		}
		\and
		\inferH{f-invoke}{}{
			\apred\statemult\FUT{\apred}{\acom}{\apredp} \IMP \wpreOf{\acom}{\apredp}
		}
	\end{mathpar}
	\caption{%
		Implications among futures.
		\label{Figure:FutureImplications}
	}
\end{figure*}

\Cref{Figure:FutureImplications} gives the implications we use for reasoning about futures.
Rule \ruleref{f-intro} turns an ordinary Hoare triple that proves the correctness of a command $\acom$ into a future.
Rules \ruleref{f-infer} and \ruleref{f-frame} correspond to rules \ruleref{infer-sem} and \ruleref{frame} for Hoare triples.
Rule \ruleref{f-invoke} allows us to invoke a future $\FUT{\apred}{\acom}{\apredp}$ at the point in the proof where the update chunk $\acom$ is actually executed.
That is, we can use this rule to discharge the premise of rule~\ruleref{com-sem}.

The composition of ghost update chunks is implemented by rule~\ruleref{f-seq}.
It is similar to the rule for sequential composition in Hoare logic with two important differences. 
First, it requires a separating conjunction of the futures for the update chunks $\acom_1$ and $\acom_2$.
The reason is that futures may carry resources.
Second, it replaces the composition of $\acom_1$ and $\acom_2$ by a new update chunk $\acom$ that is equivalent.
Unlike rule~\ruleref{seq}, rule~\ruleref{f-seq} does not take into account interferences on the intermediate assertion $\apred$.
This is correct, since update chunks represent ghost computation that takes effect instantaneously, meaning $\acom_1$ and $\acom_2$ are executed uninterruptedly at the moment when the new update chunk $\acom$ is invoked.

Finally, rule \ruleref{f-account} enables the partial invocation of a future $\FUT{\apred\mstar\apredp}{\acom}{\apredpp}$ by eliminating the premise $\apred$ if it is present in the current proof context.
We refer to this rule as \emph{accounting}.
We have already seen in \Cref{sec:overview} that accounting is useful to enable the composition of two futures using \ruleref{f-seq}.
Note that if $\apred$ is subject to inference, so is the future $\FUT{\apredp}{\acom}{\apredpp}$ obtained from rule~\ruleref{f-account}. 

%
\begin{lemma}
	\label{Lemma:FutureSoundness}
	Rules \ruleref{f-intro}, \ruleref{f-seq}, \ruleref{f-infer}, \ruleref{f-frame}, \ruleref{f-account}, \ruleref{f-invoke} are sound (valid implications).
\end{lemma}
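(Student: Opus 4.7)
\textbf{Proof proposal for \Cref{Lemma:FutureSoundness}.} The plan is to verify each of the six implications directly at the semantic level, after unfolding $\FUT{\apred}{\acom}{\apredp} = \apred \sepimp \wpreOf{\acom}{\apredp}$ together with the definitions of $\sepimp$, $\mstar$, and $\wpre$. Throughout, I will rely on the separation-algebra structure of $\setstates$ (commutativity, associativity, existence of units) and on the locality axiom \eqref{cond-loccom}, which is the only nontrivial ingredient.

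The three easy rules are \ruleref{f-intro}, \ruleref{f-infer}, and \ruleref{f-invoke}. For \ruleref{f-intro}, given a unit $\stateunit$ and $\astatep\in\apred$ composable with $\stateunit$, we have $\stateunit\statemult\astatep=\astatep\in\apred\subseteq\wpreOf{\acom}{\apredp}$, so $\stateunit\in\FUT{\apred}{\acom}{\apredp}$. \ruleref{f-infer} follows from the monotonicity of $\sepimp$ in the second argument and antitonicity in the first, combined with the monotonicity of $\wpre$. \ruleref{f-invoke} is a one-line calculation: given $\astate_0\in\apred$ and $\astate_1\in\apred\sepimp\wpreOf{\acom}{\apredp}$ composable, the defining property of the residual immediately yields $\astate_0\statemult\astate_1\in\wpreOf{\acom}{\apredp}$.

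The rules \ruleref{f-frame} and \ruleref{f-account} both reduce to rearranging a separating conjunction and then applying \eqref{cond-loccom}. For \ruleref{f-frame}, if $\astate\in\FUT{\apred}{\acom}{\apredp}$ and $\astatep=\astate_1\statemult\astate_2$ lies in $\apred\mstar\apredpp$ composable with $\astate$, then $\astate\statemult\astate_1\in\wpreOf{\acom}{\apredp}$ by the residual, hence $\semOf{\acom}(\astate\statemult\astate_1)\subseteq\apredp$; applying \eqref{cond-loccom} with frame $\setcompact{\astate_2}$ gives $\semOf{\acom}(\astate\statemult\astatep)\subseteq\apredp\mstar\apredpp$, as needed. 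For \ruleref{f-account}, given $\astate_0\in\apred$ and $\astate_1\in\FUT{\apred\mstar\apredp}{\acom}{\apredpp}$ composable, and $\astatep\in\apredp$ composable with $\astate_0\statemult\astate_1$, observe that $\astate_0\statemult\astatep\in\apred\mstar\apredp$ and is composable with $\astate_1$; the residual property then yields $\astate_0\statemult\astate_1\statemult\astatep\in\wpreOf{\acom}{\apredpp}$.

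The main obstacle is \ruleref{f-seq}, which is the one rule that truly mixes both futures together with locality, and where the bookkeeping on $\statemult$ is most delicate. I would proceed as follows. Take $\astate=\astate_1\statemult\astate_2$ with $\astate_1\in\FUT{\apred}{\acom_1}{\apredp}$ and $\astate_2\in\FUT{\apredp}{\acom_2}{\apredpp}$, and pick $\astatep\in\apred$ composable with $\astate$. Using the $\astate_1$-residual, $\astate_1\statemult\astatep\in\wpreOf{\acom_1}{\apredp}$, so $\semOf{\acom_1}(\astate_1\statemult\astatep)\subseteq\apredp$; applying \eqref{cond-loccom} with frame $\setcompact{\astate_2}$ gives
\[
  \semOf{\acom_1}(\astate_1\statemult\astate_2\statemult\astatep)\subseteq\apredp\mstar\setcompact{\astate_2}.
\]
Every element of $\apredp\mstar\setcompact{\astate_2}$ has the form $\astate'\statemult\astate_2$ with $\astate'\in\apredp$, and by the $\astate_2$-residual each such element lies in $\wpreOf{\acom_2}{\apredpp}$. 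Composing the two steps shows $\astate\statemult\astatep\in\wpreOf{\acom_1;\acom_2}{\apredpp}$, which by the hypothesis of the rule is contained in $\wpreOf{\acom}{\apredpp}$. The bookkeeping point to watch is that \eqref{cond-loccom} is what justifies carrying $\astate_2$ as an untouched frame across $\acom_1$; once that move is made, everything else is routine. I expect the only subtlety, if any, to be handling the $\abort$ element when one of the intermediate compositions is undefined, which is dispatched by the convention stated after \eqref{cond-loccom}.
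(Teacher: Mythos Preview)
Your proposal is correct and follows essentially the same route as the paper's proof: both unfold the definition $\FUT{\apred}{\acom}{\apredp}=\apred\sepimp\wpreOf{\acom}{\apredp}$ and verify each rule directly, with \eqref{cond-loccom} used to carry $\astate_2$ as a frame across $\acom_1$ in \ruleref{f-seq} and to push the frame through in \ruleref{f-frame}. Two minor remarks: your sketch for \ruleref{f-account} (correctly) never invokes \eqref{cond-loccom}, so the grouping sentence claiming both \ruleref{f-frame} and \ruleref{f-account} need it is slightly off; and the paper derives \ruleref{f-invoke} as a corollary of \ruleref{f-account} (specializing $\apredp$ to $\emp$) rather than proving it directly, but your direct argument is equally valid.
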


\smartparagraph{Proving the Harris Set Invariant}\label{Section:HarrisFuture}
We demonstrate the versatility of futures by using them to prove that the \code{CAS} on \cref{line:search-cas} of the Harris set preserves the invariant of the data structure.
\Cref{fig-harris-future-reasoning} shows the proof outline.
The code is equivalent to the one in \Cref{fig:harris-list-algo}, except that the mark bits have been made explicit.
We discuss the key aspects of the proof in more detail.

The precondition of \code{traverse} contains the predicate $\tinv(\abscontent,N,M,l,\lnext,\lmark,t)$.
It is the invariant of \code{traverse} and states that the data structure's invariant $\inv(\abscontent,N)$ is maintained.
Additionally, the precondition contains the future $\futharris(M,l,\lnext,t)$.
It captures the fact that the segment from $l$ to $t$ consisting of the nodes $M \setminus \set{l,t}$ can be safely unlinked via the update $\assignof{l.\nextfld}{t}$, provided ${l.\nextfld = \lnext}$ and $\neg \markvarof{l}$.
After the update, the future guarantees that we have ${(\keyvarof{l},\infty] \prall{\subseteq} \flowvarof{t}}$, a crucial fact we will later use in the linearizability proof~(cf. \Cref{Section:History}).
It also guarantees that the contents of the modified segment is not changed by the update.

To satisfy the precondition when invoking \code{traverse} from \code{find}, observe that the invocation is of the form \code{traverse($k,{}$head$,\hnext$, $\hnext$)} where $\hnext$ stems from $\hnext = \head.\nextsel$.
The invariant of \code{traverse}, $\tinv(\abscontent,N,\{\head,\hnext\},\head,\hnext,\hmark,\hnext)$, follows from the data structure invariant $\inv(\abscontent,N)$.
The future $\futharris(M,\head,\hnext,\hnext)$ can be obtained trivially via \ruleref{f-intro} because the update $\assignof{\head.\nextfld}{\hnext}$ has no effect if $\nextvarof{\head} = \hnext$.

The postcondition of \code{traverse} contains the invariant $\tinv(\abscontent,N,M,l,\lnext,\lmark,r)$ and the future $\futharris(M,l,\lnext,r)$.
By applying rule~\ruleref{f-invoke}, we can use the future to prove the correctness of the case where the \code{CAS($l$.next, $\lnext$, $r$)} at \cref{line:search-cas} (\Cref{fig:harris-list-algo}) succeeds.
The remaining facts of the postcondition state that \code{traverse} has found the part of the data structure that contains the search key $k$ if present.


\begin{figure}
  \newcommand\MSTAR{\,\mstar\,}
  \newcommand{\leFuture}[1]{\makeFutCol{\ensuremath{#1}}}
\begin{lstlisting}[language=SPL,escapechar=@, belowskip=.1em]
$\makeTeal{\tinv(\abscontent,N,M,l,\lnext,\lmark,t) \defeq \inv(\abscontent,N) \MSTAR \contentsof{M}\subseteq\contentsof{\set{l,t}} \MSTAR \neg \lmark \MSTAR \keyvarof{l} \prall{<} k \prall{<} \infty
 \MSTAR \set{l,\lnext,t} \prall{\subseteq} M \prall{\subseteq} N}$
$\makeTeal{\leFuture{\futharris(M, l, \lnext, t)} \defeq \FUT{\futharrispre(M,l,\lnext,t,\lnext)}{l.\nextfld \assign t}{\futharrispost(M,l, \lnext,t,t)}}$
$\makeTeal{\futharrispre(M, l, \lnext, t, u) \defeq \inv(\contentsof{\set{l,t}},M) \MSTAR \set{l,t} \subseteq M \MSTAR \nextvarof{l} = u \MSTAR \neg \markvarof{l}}$
$\makeTeal{\futharrispost(M, l, \lnext, t, u) \defeq \inv(\contentsof{\set{l,t}},M) \MSTAR \set{l,t} \subseteq M \MSTAR \nextvarof{l} = u \MSTAR \neg \markvarof{l} \MSTAR (\keyvarof{l},\infty] \subseteq \flowvarof{t}}$
\end{lstlisting}
\begin{lstlisting}[language=SPL,escapechar=@, belowskip=0pt]
$\annot{
  \exists N\,M\,\abscontent.\;
  \tinv(\abscontent,N,M,l,\lnext,\lmark,t) \MSTAR \leFuture{\futharris(M,l,\lnext,t)}
}$
procedure traverse($k$: K, $l$: N, $\lnext$: N, $\lmark$: Bool, $t$: N) {
  val $\tnext$, $\tmark$ = atomic {$t$.next, $t$.mark}
  $\annot{
    \tinv(\abscontent,N,M,l,\lnext,\lmark,t) \MSTAR \leFuture{\futharris(M,l,\lnext,t)} \MSTAR \tnext \in N \MSTAR (\tmark \Rightarrow \markvarof{t} = \tmark \MSTAR \nextvarof{t} = \tnext)
  }$
  if ($\tmark$) {
    $\annot{
      \tinv(\abscontent,N,M,l,\lnext,\lmark,t) \MSTAR \leFuture{\futharris(M,l,\lnext,t)} \MSTAR \tnext \in N \MSTAR \markvarof{t} \MSTAR {}
      \markvarof{t} = \tmark \MSTAR \nextvarof{t} = \tnext
    }$ @\label[line]{line-harris-pre-f-seq}@
    $\annot{
      \tinv(\abscontent,N,M,l,\lnext,\lmark,\tnext) \MSTAR \leFuture{\futharris(M,l,\lnext,\tnext)}
    }$ @\label[line]{line-harris-post-f-seq}@
    return traverse($k$, $l$, $\lnext$, $\tnext$)
  } else if ($t$.key < $k$) {
    $\annot{
      \tinv(\abscontent,N,M,l,\lnext,\lmark,\tnext) \MSTAR \leFuture{\futharris(M,l,\lnext,\tnext)} \MSTAR \keyvarof{t} < k
    }$
    return traverse($k$, $t$, $\tnext$, $\tmark$, $\tnext$)
  } else {
    $\annot{
      \tinv(\abscontent,N,M,l,\lnext,\lmark,t) \MSTAR \leFuture{\futharris(M,l,\lnext,t)} \MSTAR t\!\neq\!\head \MSTAR t\!\neq\!l \MSTAR \keyvarof{l}\!<\!k\!\leq\!\keyvarof{t}
    }$ @\label[line]{line-harris-traverse-return}@
    return ($l$, $\lnext$, $\lmark$, $t$) 
} }
$\annot{
  (l,\lnext,\lmark,r).\, \exists N\,M\,\abscontent.\;
  \tinv(\abscontent,N,M,l,\lnext,\lmark,r) \MSTAR \leFuture{\futharris(M,l,\lnext,r)} \MSTAR r\prall{\neq}\head \MSTAR {} r\prall{\neq}l \MSTAR \keyvarof{l}\prall{<}k\prall{\leq}\keyvarof{r}
}$
\end{lstlisting}
\caption{%
  Proof outline showing that Harris set \code{traverse} prepares the \code{CAS} from \code{search}.
  The preparation guarantees that the \code{CAS} \emph{will} maintain the invariant.
  We capture this with a \leFuture{\text{future}}.
  \label{fig-harris-future-reasoning}
}
\end{figure}


The most interesting part of the proof is the transition between \cref{line-harris-pre-f-seq,line-harris-post-f-seq}, particularly the transition from $\futharris(M,l,\lnext,t)$ to $\futharris(M,l,\lnext,\tnext)$.
Here, we need to extend the marked segment $M$ by adding $\tnext$.
This step involves an application of \ruleref{f-seq} to compose the update chunk for $l.\nextfld \assign t$ with the one for $l.\nextfld \assign \tnext$. We elaborate this step in detail, extending the discussion in \Cref{sec:overview}.

We start from $\futharris(M,l,\lnext,t)$ and use rule \ruleref{f-frame} to extend both sides of this future with $\inv(\contentsof{\tnext},\tnext)$.
The resulting future can be rewritten into the form
\begin{gather*}
	\FUTp{\futharrisprep(l,t,\tnext,\lnext) ~\mstar~ \inv(\emptyset,M\setminus\set{l,t,\tnext})}
	{~l.\nextfld \assign t~}
	{\futharrispostp(l,t,\tnext,t) ~\mstar~ \inv(\emptyset,M\setminus\set{l,t,\tnext})}
	\shortintertext{where}
	\begin{aligned}[t]
		\futharrisprep(l,t,\tnext,u) &\:\defeq\; \inv(\contentsof{\set{l,t,\tnext}},\set{l,t,\tnext}) ~\mstar~ \nextvarof{l} = u ~\mstar~ \neg \markvarof{l}
	\\
		\futharrispostp(l,t,\tnext,u) &\:\defeq\; \inv(\contentsof{\set{l,u,\tnext}},\set{l,t,\tnext}) ~\mstar~ \nextvarof{l} = u ~\mstar~ \neg \markvarof{l} ~\mstar~ {}(\keyvarof{l},\infty] \subseteq \flowvarof{t}
	\ .
	\end{aligned}
\end{gather*}
This future plays the role of $\FUT{\apred}{\acom_1}{\apredp}$ in our application of rule~\ruleref{f-seq}.
To obtain the future playing the role of $\FUT{\apredp}{\acom_2}{\apredpp}$, we proceed in multiple steps.
First, we use \ruleref{f-intro} to derive
\begin{gather*}
	\FUTp{\futharrisprep(l,t,\tnext,t) ~\mstar~ \nextvarof{t} = \tnext \mstar \markvarof{t}}
	{~\assignof{l.\nextfld}{\tnext}~}
	{\futharrispostp(l,t,\tnext,\tnext)}
	\ .
\end{gather*}
To satisfy the premise of \ruleref{f-intro}, we need to show that (i) the update $\assignof{l.\nextfld}{\tnext}$ is frame-preserving, i.e., the interface of the flow graph consisting of the nodes $\{l,t,\tnext\}$ does not change, and (ii) the invariants $\invNode^i(x)$ are preserved for all $x \in \{l,t,\tnext\}$.
First observe that $\neg\markvarof{l}$ and $\invNode^1(l)$ imply $\insetof{l} \neq \emptyset$.
Using $\keyvarof{l}\prall{<}\infty$,\, $\invNode^4(t)$, and the fixpoint equation defining insets, we obtain $\insetof{t}=\insetof{l} \cap (\keyvarof{l},\infty] \neq \emptyset$.
From $\invNode^3(t)$ and $\markvarof{t}$ we obtain $\keyvarof{t} \neq \infty$.
Thus, using similar reasoning as above, we conclude $\insetof{\tnext}=\insetof{t} \cap (\keyvarof{t},\infty] \neq \emptyset$.
The inset and inflow of $l$ are unaffected by the update, so its invariant is trivially preserved.
For $t$, let $\newinsetof{t} = \emptyset$ denote the new inset.
Since $t$ is marked, this means that all its invariants are preserved and its content is empty.
The new inset of $\tnext$ is $\newinsetof{\tnext} = \insetof{l} \cap (\keyvarof{l},\infty]$.
Observe that we have $\insetof{\tnext} \subseteq \newinsetof{\tnext} \neq \emptyset$, so the invariants for $\tnext$ are also maintained.
Finally, to show that the interface of the modified region remains the same, it suffices to prove $\insetof{\tnext} \cap (\keyvarof{\tnext},\infty] = \newinsetof{\tnext} \cap (\keyvarof{\tnext},\infty]$.
This holds true if $\keyvarof{t} < \keyvarof{\tnext}$, which follows from $\invNode^2(\tnext)$ and $\emptyset \neq \insetof{\tnext} \subseteq (\keyvarof{t},\infty]$.

Next, we apply \ruleref{f-account} for $\nextvarof{t} = \tnext \mstar \markvarof{t}$ from the proof context and use \ruleref{f-frame} to add the remaining part $\inv(\emptyset,M \setminus \set{l,t,\tnext})$ of the segment $M$ as a frame.
This yields \[
	\FUTp{\futharrisprep(l,t,\tnext,t) \mstar \inv(\emptyset,M\setminus\set{l,t,\tnext})}
	{~l.\nextfld \assign \tnext~}
	{\futharrispostp(l,t,\tnext,\tnext) \mstar \inv(\emptyset,M\setminus\set{l,t,\tnext})}
	.
\]
We can now use \ruleref{f-seq} with this future and the one derived above.
Note that the premise of the rule follows easily because $\acom_1$ and $\acom_2$ update the same memory location and $\acom_2 = \acom$.
We obtain \[
	\FUTp{\futharrisprep(l,t,\tnext,\lnext) \mstar \inv(\emptyset,M\setminus\set{l,t,\tnext})}
	{~l.\nextfld \assign \tnext~}
	{\futharrispostp(l,t,\tnext,\tnext) \mstar \inv(\emptyset,M\setminus\set{l,t,\tnext})}
	.
\]
Applying \ruleref{f-infer} and introducing a fresh existential $M'$, we rewrite this into the form
\begin{align*}
	M' = M \cup \set{l,t,\tnext} \mstar
	\FUTp{\futharrispre(M'\mkern-2mu,l,\lnext,\tnext,\lnext) \mstar t \prall{\in} M'}
	{\;l.\nextfld \assign \tnext\;}
	{\futharrispost(M'\mkern-2mu,l,\lnext,\tnext,\tnext)}
	.
\end{align*}
Now, we apply \ruleref{f-account} one more time for $t \in M'$ and use $\tnext \in N$, $\lnext \in M$, and $M \subseteq N$ from the proof context, to obtain \[
	M' \subseteq N {~}\mstar{~} \set{l,\lnext,\tnext} \subseteq M' {~}\mstar{~} \futharris(M',l,\lnext,\tnext)
	\enspace.
\]
This allows us to reestablish $\tinv(\abscontent,N,M',l,\lnext,\lmark,\tnext)\mstar\futharris(M',l,\lnext,\tnext)$.
As $M'$ and $M$ are existentially quantified, we can finally rename $M'$ to $M$, which yields the assertion on \cref{line-harris-post-f-seq}.

\smartparagraph{Checking Interference Freedom}
We briefly discuss why the proof is interference-free relative to other threads performing set operations.
First, all commands maintain $\inv(\abscontent,N)$ and $N$ can only grow larger.
Next, assertions depending on the field \code{key} are interference-free since a node's $\keyfld$ is never changed after initialization.
Similarly, $\markfld$ is only changed monotonically from $\false$ to $\true$.
Moreover, $\nextfld$ is only changed for unmarked nodes (e.g., the proof guarantees $\neg \markvarof{l}$ in the successful case of the \code{CAS} on \cref{line:search-cas} and the insert operation provides a similar guarantee).
This is why assertions such as $\markvarof{t}$ on \cref{line-harris-pre-f-seq} are interference-free.
Because of that, the contents $\contentsof{M \setminus \{l,t\}}$ cannot change.
Finally, futures constructed using rule~\ruleref{f-intro} are always interference-free.
All remaining futures are constructed by accounting interference-free facts or by composing interference-free futures via \ruleref{f-seq}.


\newcommand{\iset}{\mathit{I}}
\newcommand{\isetp}{\mathit{J}}
\newcommand{\anindex}{\mathit{i}}
\newcommand{\anindexp}{\mathit{j}}

\section{Histories}
\label{Section:History}

We next present an extension of our developed theory that allows us to reason about \emph{separated computation histories}.
We integrate a form of hindsight reasoning for propagating knowledge between current and past states---hindsight is a key technique to handle non-fixed linearization points~\cite{DBLP:conf/podc/OHearnRVYY10,DBLP:conf/wdag/Lev-AriCK15,DBLP:conf/wdag/FeldmanE0RS18,DBLP:journals/pacmpl/FeldmanKE0NRS20}.
We develop the new theory again in the general setting of \cref{Section:Preliminaries} and \Cref{Section:OG} and then apply it to our running~example.

\smartparagraph{Read-and-validate Pattern}
Optimistic implementations commonly have future-dependent linearization points: whether or not a thread's next action is its linearization point depends on future interferences from other threads.
This issue often arises in uses of the \emph{read-and-validate pattern} where threads
\begin{inparaenum}[(i)]
	\item read out some shared heap region,
	\item later on validate the read value, and
	\item succeed with their operation if the validation succeeds or roll-back otherwise.
\end{inparaenum}
The read and validation step are neither executed atomically nor within a critical/lock-protected section.
Hence, the read heap region is subject to interference and may change.

The Harris set employs the read-and-validate pattern.
Method \code{find}, for instance, validates (some of) the values read by \code{traverse} by re-reading them with the \code{CAS} on \cref{line:search-cas}.
If the \code{CAS} fails, so does the validation and \code{find} rolls back by restarting.
Method \code{traverse} implements the pattern with a more intricate validation and roll-back mechanism:
the \code{next} field of node $t$ read on \cref{line:traverse-read-next} is validated by inspecting $t$'s mark bit, \cref{line:traverse-check-mark}, and if the mark bit is set then the validation of \code{traverse}'s search for the right node fails and continues with the subsequent nodes.


\begin{figure}
	\newcommand\MSTAR{\,\mstar\,}
	\begin{minipage}[t]{.25\textwidth}
	\begin{lstlisting}[language=SPL,belowskip=0pt]
val $\vlen$ = /* last index */
val $\varr$ = new Int[$\vlen$+1] { 0, ..., 0 }
\end{lstlisting}
	\begin{lstlisting}[language=SPL,belowskip=0pt]
procedure copy() {
	val $\vres$ = new Int[$\vlen$] { 0, ..., 0 }
	$\annot{
		\varr \mapsto i_0,\dots,i_{\vlen}
		\MSTAR
		\vres \mapsto j_0,\dots,j_{\vlen}
		\MSTAR
		\smash{\textstyle \bigwedge_{n=0}^{\vlen} j_n \leq i_n}
	}$ @\label{code:array:pre-copy}@
	for (val $k$ in [0, $\vlen$]) $\vres$[$k$] = $\varr$[$k$] @\label{code:array:the-copy}@
	$\annot{
		\varr \mapsto i_0,\dots,i_{\vlen}
		\MSTAR
		\vres \mapsto j_0,\dots,j_{\vlen}
		\MSTAR
		\smash{\textstyle \bigwedge_{n=0}^{\vlen} j_n \leq i_n}
	}$@\label{code:array:post-copy}@ @\smash{\textcolor{teal}{skip}}@ @\label{code:array:skip}@
	$\annot{
		\varr \mapsto i_0,\dots,i_{\vlen}
		\MSTAR
		\vres \mapsto j_0,\dots,j_{\vlen}
		\MSTAR
		\pastOf{(\varr \mapsto \old{i}_0,\dots,\old{i}_{\vlen})}
		\MSTAR
		\smash{\textstyle \bigwedge_{n=0}^{\vlen} j_n \leq \old{i}_n \leq i_n}
	}$ @\label{code:array:pre-validate}@
	for (val $k$ in [0, $\vlen$]) { @\label{code:array:validate-loop}@
		$\annot{
			\varr \mapsto i_0,\dots,i_{\vlen}
			\MSTAR
			\vres \mapsto j_0,\dots,j_{\vlen}
			\MSTAR
			\pastOf{(\varr \mapsto \old{i}_0,\dots,\old{i}_{\vlen})}
			\MSTAR
			\smash{\textstyle \bigwedge_{n=0}^{\vlen} j_n \leq \old{i}_n \leq i_n}
			\MSTAR
			\smash{\textstyle \bigwedge_{n=0}^{k-1} j_n=\old{i}_n}
		}$ @\label{code:array:validate-inv}@
		if ($\vres$[$k$] != $\varr$[$k$]) restart @\label{code:array:validate-check}@
		$\annot{
			\varr \mapsto i_0,\dots,i_{\vlen}
			\MSTAR
			\vres \mapsto j_0,\dots,j_{\vlen}
			\MSTAR
			\pastOf{(\varr \mapsto \old{i}_0,\dots,\old{i}_{\vlen})}
			\MSTAR
			\smash{\textstyle \bigwedge_{n=0}^{\vlen} j_n \leq \old{i}_n \leq i_n}
			\MSTAR
			\smash{\textstyle \bigwedge_{n=0}^{k\phantom{-1}} j_n=\old{i}_n}
		}$ @\label{code:array:validate-post-check}@
	}
	$\annot{
		\varr \mapsto i_0,\dots,i_{\vlen}
		\MSTAR
		\vres \mapsto \old{i}_0,\dots,\old{i}_{\vlen}
		\MSTAR
		\pastOf{(\varr \mapsto \old{i}_0,\dots,\old{i}_{\vlen})}
	}$ @\label{code:array:post-validate}@
	return $\vres$
}
\end{lstlisting}
	\end{minipage}
	\hfill
	\begin{minipage}[t]{.40\textwidth}
	\begin{lstlisting}[language=SPL,belowskip=0pt]
procedure inc($k$: Int) {
	if ($k$ > $\vlen$) return false @\label{code:array:bounds-check}@
	FAA($\varr$[$k$], 1) @\label{code:array:faa}@
	return true
}
\end{lstlisting}
\end{minipage}
	\caption{%
		A simple array of counters.
		The counters can be incremented individually.
		The entire array can be snapshot in an optimistic fashion, resulting in a non-fixed linearization point.
		\label{fig:optimistic-array}
	}
\end{figure}

\looseness=-1
The pattern is not restricted to search structures.
For an example, consider the counter array from \Cref{fig:optimistic-array}.
The implementation maintains an array $\varr$ of $\vlen+1$ integer counters.
Counters are individually increased by $1$ using \code{inc}.
A snapshot of the counter array is created by \code{copy}.
As a first stage, a simple copy $\vres$ of the array is created by reading out the individual counter values non-atomically, \cref{code:array:the-copy}.
In a second stage, the copy is validated against the current counter values, \crefrange{code:array:validate-loop}{code:array:validate-check}.
The procedure is restarted if there are any discrepancies.
Otherwise, the copy is guaranteed to be a consistent snapshot of the counter array as of the moment immediately after the last counter was read on \cref{code:array:the-copy}.
It is worth noting that, while the copy is being validated, some counters that have been validated already may be changed.
Nevertheless, the validation succeeds (rightfully so).

The read-and-validate pattern in \code{copy} results in a future-dependent linearization point.
As alluded to above, the linearization point is the moment the last counter is read on \cref{code:array:the-copy}.
However, this moment depends on whether the subsequent validation \emph{will succeed}.
This, in turn, is unpredictable and not under the control of the executing thread.

To handle this in a proof, we suggest the following strategy which mimics closely the behavior of the implementation.
During the first stage, we track the interference-free fact that the entries of the copy array $\vres$ are less than or equal to the current value of the corresponding counter, $j_n \leq i_n$ for all $n$.
This fact follows easily from the counters increasing monotonically.
Then, we snapshot the current counter array into a past predicate, \cref{code:array:pre-validate}.
(Technically, this requires the \code{skip} on \cref{code:array:skip}, cf. \Cref{Lemma:CopyToPast} below.)
Note that we rename the counter values $i_n$ under the past predicate to $\old{i}_n$.
Because $i_n = \old{i}_n$ at the moment of the snapshot, we obtain $j_n \leq \old{i}_n$ for all $n$.
However, the equality $i_n = \old{i}_n$ is not interference-free as the counters may change, but the point in time and thus the values the past predicate refers to are fixed.
This justifies our renaming to $\old{i}_n$.
As before, we record that the counter values under the past predicate are less than or equal to the current counter values, $\old{i}_n \leq i_n$ for all $n$.
During the second stage, a successful validation implies that the $k$-th copy is equal to the current counter, $j_k=i_k$.
Together with the estimate $j_k \leq \old{i}_k \leq i_k$, we obtain $j_k = \old{i}_k$.
Overall, this means that the copy $\vres$ corresponds to the counter array snapshot in the past predicate, \cref{code:array:post-validate}.
Hence, $\vres$ is a consistent snapshot of the counters in the sense that there was a point in time where the counter array was equal to $\vres$---the operation is linearizable as desired.
In the following, we formalize past predicates and show their usefulness for linearizability~proofs.

\smartparagraph{History Separation Algebras}
\looseness=-1
Recall that our states are taken from a separation algebra $(\setstates, \statemult, \emp)$.
We refer to a non-empty sequence of states $\astateseq \in \setstates^+$ as a \emph{computation history}.
Computation histories also form a separation algebra by lifting the composition on states as follows. 
First, for sequences $\astateseq=\astate_1 \ccc \astate_n$ and $\astateseqp=\astatep_1 \ccc \astatep_m$ with $\cdot$ denoting sequence concatenation, the composition $\astateseq * \astateseqp$ is defined, written $\astateseq \statemultdef \astateseqp$, iff $n=m$ and for all $i$ we have $\astate_i\statemultdef\astatep_i$.
In this case, we let $\astateseq \statemult \astateseqp \defeq (\astate_1\statemult\astatep_1) \ccc (\astate_n\statemult \astatep_n)$.
The set of units is given by $\emp^+$.
\begin{lemma}
  $(\setstates^+, \mstar, \emp^+)$ is a separation algebra.
\end{lemma}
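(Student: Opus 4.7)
The plan is to lift the separation-algebra structure from $(\setstates, \statemult, \emp)$ pointwise to sequences of equal length, handling mismatched lengths by undefinedness. The four properties to verify are partial-commutativity, associativity, existence of units, and disjointness of distinct units, all given in the definition of separation algebra in \cref{Subsection:States}.

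First I would verify commutativity and associativity. For sequences $\astateseq, \astateseqp$ of equal length $n$, $\astateseq \mstar \astateseqp$ is defined iff $\astate_i \statemultdef \astatep_i$ for every $i$; since $\statemult$ is commutative on $\setstates$, this condition is symmetric in $\astateseq, \astateseqp$ and the resulting sequences agree componentwise, giving commutativity. Associativity follows similarly: $(\astateseq \mstar \astateseqp) \mstar \astateseqpp$ is defined iff all three sequences have the same length $n$ and $(\astate_i \statemult \astatep_i) \statemultdef \astatep_i'$ for every $i$, which is equivalent to $\astate_i \statemultdef (\astatep_i \statemult \astatep_i')$ being defined for every $i$ by associativity of $\statemult$, and the resulting sequences again coincide componentwise.

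Next I would verify the unit axioms. Given $\astateseq = \astate_1 \cdots \astate_n$, take $\stateunit_\astateseq \defeq \stateunit_{\astate_1} \cdots \stateunit_{\astate_n}$, which lies in $\emp^+$ and satisfies $\astateseq \mstar \stateunit_\astateseq = \astateseq$ by the pointwise unit property in $\setstates$. For disjointness, consider two distinct units $\stateunit, \stateunit' \in \emp^+$. Either they have different lengths, in which case $\stateunit \mstar \stateunit'$ is undefined by construction, or they have the same length $n$ and differ at some position $i$; at that position $\stateunit_i, \stateunit'_i$ are distinct elements of $\emp$, so $\stateunit_i \statemult \stateunit'_i$ is undefined in $\setstates$, which makes the whole composition $\stateunit \mstar \stateunit'$ undefined.

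The only subtle point, and the one I would be careful about, is how the length condition interacts with the unit axioms: there is not a single unit but a family indexed by length (and by the units of each component). Once one observes that the separation-algebra axioms in \cref{Subsection:States} only require \emph{some} unit per element rather than a global identity, this is routine; the rest is a straightforward pointwise lift from the hypothesis that $(\setstates, \statemult, \emp)$ is a separation algebra.
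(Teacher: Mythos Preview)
Your proposal is correct and is exactly the routine pointwise-lifting argument one would expect; the paper in fact states this lemma without proof, treating it as immediate, so your write-up simply spells out what the paper leaves implicit.
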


Predicates $\acpred, \acpredp, \acpredpp\subseteq\setstates^+$ now refer to sets of computations.
We lift the semantics of commands to computation predicates in the expected way.
\begin{definition}
	$\csemOf{\acom}(\astateseq.\astate_1)\defeq\setcond{\astateseq\cc\astate_1\cc\astate_2}{\astate_2\in\semOf{\acom}(\astate_1)}$.
\end{definition}

However, the locality assumption~\eqref{cond-loccom} on the semantics of commands that is needed for the soundness of framing does not necessarily carry over from state predicates to computation predicates.
If we want to frame computation predicates, we have to make an additional assumption.
\begin{definition}\label{Definition:Frameable}
	A predicate $\acpred\subseteq\setstates^+$ is \emph{frameable}, if it satisfies $\; \forall \astateseq.\forall\astate.\;\;\astateseq\cc\astate\in\acpred\;\Rightarrow\;\astateseq\cc\astate\cc\astate\in \acpred.$ 
\end{definition}
\begin{lemma}\label{Lemma:Frameable}
If $\acpredpp$ is frameable, $\csemOf{\acom}(\acpred\prall{\mstar}\acpredpp)\prall{\subseteq}\csemOf{\acom}(\acpred)\prall{\mstar}\acpredpp$. 
\end{lemma}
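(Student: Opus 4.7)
The plan is to prove the inclusion by a straightforward chase through the definitions, with frameability of $\acpredpp$ supplying exactly the duplication needed to extend the history of the frame by one stuttering step.

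First I would pick an arbitrary computation history in $\csemOf{\acom}(\acpred \mstar \acpredpp)$. By the definition of $\csemOf{\acom}$ it has the form $\astateseq \cc \astate \cc \astate'$, where $\astateseq \cc \astate \in \acpred \mstar \acpredpp$ and $\astate' \in \semOf{\acom}(\astate)$. Unfolding $\mstar$ on $\setstates^+$, there exist two equal-length sequences $\asigma_1 \cc \arho_1 \in \acpred$ and $\asigma_2 \cc \arho_2 \in \acpredpp$ with $\asigma_1 \statemult \asigma_2 = \astateseq$ and $\arho_1 \statemult \arho_2 = \astate$.

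Next I would apply the state-level locality condition \eqref{cond-loccom} to the singleton predicates $\{\arho_1\}$ and $\{\arho_2\}$: since $\semOf{\acom}(\arho_1) \subseteq \semOf{\acom}(\arho_1)$ trivially, we obtain $\semOf{\acom}(\arho_1 \statemult \arho_2) \subseteq \semOf{\acom}(\arho_1) \mstar \{\arho_2\}$. Hence $\astate'$ decomposes as $\astate' = \astate'_1 \statemult \arho_2$ with $\astate'_1 \in \semOf{\acom}(\arho_1)$. From the definition of $\csemOf{\acom}$ this gives $\asigma_1 \cc \arho_1 \cc \astate'_1 \in \csemOf{\acom}(\acpred)$.

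Now the key step: to rebuild the original history as a product in $\csemOf{\acom}(\acpred) \mstar \acpredpp$, we need the frame side to have the same length, i.e., we need $\asigma_2 \cc \arho_2 \cc \arho_2 \in \acpredpp$. This is exactly where \Cref{Definition:Frameable} is used: frameability of $\acpredpp$ applied to $\asigma_2 \cc \arho_2 \in \acpredpp$ gives $\asigma_2 \cc \arho_2 \cc \arho_2 \in \acpredpp$. Composing componentwise,
\[
  (\asigma_1 \cc \arho_1 \cc \astate'_1) \mstar (\asigma_2 \cc \arho_2 \cc \arho_2) \;=\; \astateseq \cc \astate \cc \astate',
\]
which witnesses membership in $\csemOf{\acom}(\acpred) \mstar \acpredpp$, finishing the proof.

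There is no serious obstacle; the proof is really an accounting exercise. The only subtle point is recognizing that frameability is not an incidental hypothesis but the precise condition that closes the gap between the state-level locality~\eqref{cond-loccom} (which decomposes the new state $\astate'$) and the history-level composition (which demands a matching-length sequence on the frame side). Without frameability, $\asigma_2 \cc \arho_2$ would be one step shorter than $\asigma_1 \cc \arho_1 \cc \astate'_1$ and the pointwise composition defining $\mstar$ on $\setstates^+$ would be undefined.
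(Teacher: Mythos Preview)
Your proof is correct and follows essentially the same approach as the paper: decompose the input history along $\mstar$, apply state-level locality \eqref{cond-loccom} to split the successor state, and then invoke frameability of $\acpredpp$ to duplicate the last state of the frame so that the two sequences have matching length. The paper's proof is structurally identical, differing only in notation.
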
 

We lift the semantics of concurrency libraries to history separation algebras $(\setshared\times\setlocal)^+$.
The notions of initial and accepting configurations as well as soundness remain unchanged except that they now range over computation predicates instead of state predicates.
The technical details of this lifting are straightforward\techreport{, we defer them to Appendix~\ref{Section:HistoryProofs}}.
The soundness guarantee in \Cref{Theorem:Soundness} continues to hold for history separation algebras modulo a subtlety.
We can only apply rule~\ruleref{frame} if the predicate to be added is frameable in the sense of \Cref{Definition:Frameable}.
\begin{theorem}[Soundness]
	\label{Theorem:SoundnessComput}
	$\thePredicates, \theInterference\semCalc\hoareOf{\acpred}{\astmt}{\acpredp}$
	and $\,\isInterferenceFreeOf[\theInterference]{\thePredicates}$
	and $\acpred\in\thePredicates$
	imply $\,\subModels\hoareOf{\acpred}{\astmt}{\acpredp}$.
\end{theorem}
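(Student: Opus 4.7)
The plan is to adapt the proof of \Cref{Theorem:Soundness} to the lifted setting by carrying computation histories alongside configurations and replacing every state predicate with a computation predicate. The overall structure, an invariant on reachable configurations established by induction on the transition relation $\progStepRel$ of \Cref{Figure:Relations}, is the same; what changes is that framing now relies on \Cref{Lemma:Frameable} rather than~\eqref{cond-loccom}.

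First, I would set up the lifted operational model. A configuration is augmented so that each thread maintains the computation history of its own local state together with the shared global-state history, with the latter updated whenever any thread steps. The set $\initset{\acpred}{\astmt}$ becomes the set of configurations whose threads hold one-element histories satisfying $\acpred$, and $\acceptset{\acpredp}$ specifies that terminated threads' histories (up to the current state) satisfy $\acpredp$. Under this lifting, a command step by thread $i$ extends $i$'s history by the post-state produced by $\csemOf{\acom}$ from \Cref{Section:History}, which matches the way computation predicates describe sequences.

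Second, I would prove the inductive invariant: for every $\aconfig\in\initset{\acpred}{\astmt}$ and every $\aconfig'\in\reachset{\aconfig}$, and every thread $i$ at program point $\widehat\astmt$, there is a predicate $\acpred_{\widehat\astmt}\in\thePredicates$ determined by the derivation of $\thePredicates,\theInterference\semCalc\hoareOf{\acpred}{\astmt}{\acpredp}$ such that thread $i$'s current computation history lies in $\acpred_{\widehat\astmt}$. The base case uses the hypothesis $\acpred\in\thePredicates$. For the inductive step, split on whether the step is by thread $i$ or another thread. A step by thread $i$ is handled by a rule-by-rule reading of \Cref{Figure:ProgramLogic}: \ruleref{com-sem} gives the post predicate directly via $\csemOf{\acom}$, \ruleref{seq}, \ruleref{loop}, \ruleref{choice}, and \ruleref{infer-sem} are verbatim the state-based arguments, and \ruleref{frame} is the only rule where the lifted and original proofs diverge, since the inclusion $\csemOf{\acom}(\acpred\mstar\acpredpp)\subseteq\csemOf{\acom}(\acpred)\mstar\acpredpp$ is not free and must be supplied by \Cref{Lemma:Frameable}. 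A step by another thread is the application of some interference $(\apredpp,\acom)\in\theInterference$; by $\isInterferenceFreeOf[\theInterference]{\thePredicates}$ and the lifted definition of $\semOf{(\apredpp,\acom)}$ on computation predicates, $\acpred_{\widehat\astmt}$ is preserved.

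Third, the theorem follows by specializing the invariant to threads with $\widehat\astmt=\cskip$: for these, the derivation fixes $\acpred_{\cskip}=\acpredp$, so $\aconfig'\in\acceptset{\acpredp}$ by the lifted definition. The main obstacle is handling \ruleref{frame}: unlike in \Cref{Theorem:Soundness}, we must know that every frame $\acpredpp$ appearing in the derivation is frameable in the sense of \Cref{Definition:Frameable}. I would treat this either as an implicit side condition of the lifted \ruleref{frame} or as a syntactic invariant of the assertion fragment used by the logic, and then appeal to \Cref{Lemma:Frameable} at each occurrence. A secondary subtlety is justifying that the appended-global-state step performed by another thread lifts cleanly to the $*$-composition on $\setstates^+$: here one checks that the shared component grows by exactly one state for all threads simultaneously, so histories remain equal-length and composable, keeping the separation-algebra structure intact and thus the appeals to \ruleref{frame} and to the interference-freedom check meaningful.
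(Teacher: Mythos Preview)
Your proposal is correct and captures the right ideas, but its organization differs from the paper's. The paper factors the argument through an explicit indexed safety predicate $\locsafedef{k}{\astmt}{\acpredppp}{\acpredp}$: Proposition~\ref{Proposition:LocalSound} establishes this predicate by rule induction over the program logic, and because it is stated over an arbitrary separation algebra, it carries over unchanged to the history setting (with the frameability side condition on \ruleref{frame}, exactly as you note). Only the lifting lemma needs to be reproved as Lemma~\ref{Lemma:LocalGlobalComput}, which does the induction on $k$ and the case split into ``thread $i$ executes'' versus ``thread $i$ experiences interference''---your two cases. You instead fuse both layers into a single induction on $\progStepRel$, redoing the rule-by-rule analysis at every transition. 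This works, but the paper's decomposition buys reuse: the entire thread-local argument is literally the same proof as for Theorem~\ref{Theorem:Soundness}, and the new work is isolated in one short lemma whose only novelty is that interference now also appends a stutter step $\lastOf{\alocalseq}$ to the non-executing thread's local history. Two minor points: the predicate at a program point is not uniquely ``determined by the derivation'' but is merely some $\acpredppp\in\thePredicates$, and at termination one gets $\acpredppp\subseteq\acpredp$ rather than equality.
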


\smartparagraph{Frameable Computation Predicates}
We next discuss general principles for constructing frameable computation predicates from state predicates.
\begin{definition}
	A state predicate $\apred\subseteq \setstates$ yields the following predicates over computations histories:
	\begin{compactenum}[(i)]
		\item The \emph{now predicate} $\nowOf{\apred}\;\defeq\;\setstates^*\cc\apred$.
		\item The \emph{past predicate} $\pastOf{\apred}\;\defeq\; \setstates^*\cc\apred\cc\setstates^*$. 
	\end{compactenum}
\end{definition}

The now predicate refers to the current state. 
The past predicate allows us to track auxiliary information about the computation. 
These predicates work well in our setting in that they are~frameable. 
\begin{lemma}\label{Lemma:FrameablePredicates}
	\begin{inparaenum}[(i)]
		\item $\nowOf{\apred}$ and $\pastOf{\apred}$ are frameable. 
		\item If $\acpred$ and $\acpredp$ are frameable, so are $\acpred\mstar\acpredp, \acpred\cap\acpredp$, and $\acpred\cup\acpredp$.
	\end{inparaenum}
\end{lemma}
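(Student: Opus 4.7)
The plan is to unfold the definition of frameable, case-split on the two parts, and in each case track the structure of the witnessing history. All arguments only manipulate the last element (or positions) of the sequences, so the reasoning is quite elementary; the one spot that requires care is separating conjunction, because the composition operator on histories acts position-wise and we need to verify that extending both components by one step keeps the element-wise composition defined.

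For part (i), suppose $\astateseq \cc \astate \in \nowOf{\apred} = \setstates^* \cc \apred$. By definition $\astate \in \apred$, so $\astateseq \cc \astate \cc \astate \in \setstates^* \cc \apred = \nowOf{\apred}$, as desired. For $\pastOf{\apred}$, if $\astateseq \cc \astate \in \setstates^* \cc \apred \cc \setstates^*$, then some position of $\astateseq \cc \astate$ is in $\apred$; appending one more copy of $\astate$ preserves that position, so $\astateseq \cc \astate \cc \astate \in \pastOf{\apred}$.

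For part (ii), assume $\acpred$ and $\acpredp$ are frameable. For intersection and union, membership of $\astateseq \cc \astate$ in $\acpred \cap \acpredp$ (respectively $\acpred \cup \acpredp$) gives the corresponding membership in $\acpred$ and/or $\acpredp$, and frameability applied componentwise yields $\astateseq \cc \astate \cc \astate$ in the same set. For separating conjunction, suppose $\astateseq \cc \astate \in \acpred \mstar \acpredp$. Then there exist histories $\astateseq_1 \cc \astate_1 \in \acpred$ and $\astateseq_2 \cc \astate_2 \in \acpredp$ with $\astateseq_1 \cc \astate_1 \statemultdef \astateseq_2 \cc \astate_2$ and componentwise composition equal to $\astateseq \cc \astate$; in particular $\astate_1 \statemultdef \astate_2$ with $\astate_1 \statemult \astate_2 = \astate$. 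Applying frameability of $\acpred$ and $\acpredp$ yields $\astateseq_1 \cc \astate_1 \cc \astate_1 \in \acpred$ and $\astateseq_2 \cc \astate_2 \cc \astate_2 \in \acpredp$. Since these extended histories have the same length and their element-wise composition coincides with that of $\astateseq_1 \cc \astate_1$ and $\astateseq_2 \cc \astate_2$ on the first $|\astateseq|+1$ positions and repeats the already-defined composition $\astate_1 \statemult \astate_2 = \astate$ at the final position, the composition is defined and equals $\astateseq \cc \astate \cc \astate$. Hence $\astateseq \cc \astate \cc \astate \in \acpred \mstar \acpredp$.

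The only potential obstacle is the separating conjunction case: one might worry that appending $\astate_1$ to one side and $\astate_2$ to the other could fail to compose, but since composition is defined position-wise and we are simply repeating a pair of elements whose composition was already established, definedness transfers for free. No new algebraic assumption on $(\setstates, \statemult, \emp)$ beyond what is in Section 3 is needed.
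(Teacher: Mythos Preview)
Your proof is correct and follows essentially the same approach as the paper's own proof: unfold the definitions, and for separating conjunction split the witness, apply frameability to each component, and recombine. Your explicit remark that the repeated final pair $(\astate_1,\astate_2)$ is already known to compose is a helpful clarification that the paper leaves implicit.
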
 

Frameability is not preserved under complementation and separating implication.
However, the now operator is compatible with the SL operators in a strong sense. 
\begin{lemma}\label{Lemma:NowSLOperators}
	$\nowOf{(\apred\anop\apredp)} = \nowOf{\apred}\anop\nowOf{\apredp}$ for all $\anop\in\setnd{\cap, \cup, \mstar,\sepimp}$,
	$\nowOf{(\overline{\apred})} {\:=\:} \overline{\nowOf{\apred}}$,\,
	$\false {\:=\:} \nowOf{\false}$,\,
	$\true {\:=\:} \nowOf{\true}$, and
	$\nowOf{\apred}{\:\subseteq\:}\nowOf{\apredp}$ iff $\apred{\:\subseteq\:}\apredp$. 
\end{lemma}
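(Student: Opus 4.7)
\textbf{Proof proposal for Lemma~\ref{Lemma:NowSLOperators}.}
The plan is to reduce every claim to a statement about the final state of a computation history, exploiting the definition $\nowOf{\apred} = \setstates^{*}\cc\apred$, which says that a history $\astateseq\cc\astate$ lies in $\nowOf{\apred}$ iff $\astate\in\apred$. The constants $\false=\nowOf{\false}$ and $\true=\nowOf{\true}$ follow immediately: $\setstates^{*}\cc\emptyset=\emptyset$ and $\setstates^{*}\cc\setstates=\setstates^{+}$. The monotonicity equivalence $\nowOf{\apred}\subseteq\nowOf{\apredp}\Leftrightarrow\apred\subseteq\apredp$ follows by observing that on length-one histories the now predicate coincides with the state predicate: if $\apred\subseteq\apredp$ the prefix $\setstates^{*}$ is shared, and conversely every $\astate\in\apred$ is itself a one-element history in $\nowOf{\apred}\subseteq\nowOf{\apredp}$, hence in $\apredp$.

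For the cases $\anop\in\{\cap,\cup\}$ and for complementation the argument is a direct one-line unfolding: membership of $\astateseq\cc\astate$ in either side depends only on $\astate$, and the Boolean connectives commute with the conditional ``$\astate\in\cdot$.'' The remaining cases $\anop\in\{\mstar,\sepimp\}$ are the only nontrivial ones, and the obstacle they share is matching the \emph{pointwise} composition of histories (which forces equal lengths) against the decomposition of a single state. The key tool is the availability, for each state $\astatep\in\setstates$, of a unit $\stateunit_{\astatep}\in\emp$ with $\astatep\statemult\stateunit_{\astatep}=\astatep$, guaranteed by the separation-algebra axioms assumed in \Cref{Subsection:States}.

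For $\mstar$, the forward inclusion is routine: a decomposition $(\astateseq_{1}\cc\astate_{1})\mstar(\astateseq_{2}\cc\astate_{2})=\astateseq\cc\astate$ with $\astate_{i}\in\apred$, $\astate_{2}\in\apredp$ immediately gives $\astate=\astate_{1}\mstar\astate_{2}\in\apred\mstar\apredp$. For the reverse inclusion, given $\astateseq\cc\astate$ with $\astate=\astate_{1}\mstar\astate_{2}$, $\astate_{1}\in\apred$, $\astate_{2}\in\apredp$, write $\astateseq=\astatep_{1}\cc\dots\cc\astatep_{n}$ and take $\astateseq_{1}\defeq\astateseq$, $\astateseq_{2}\defeq\stateunit_{\astatep_{1}}\cc\dots\cc\stateunit_{\astatep_{n}}$; then $\astateseq_{1}\mstar\astateseq_{2}=\astateseq$ and $\astateseq_{i}\cc\astate_{i}$ witnesses membership in $\nowOf{\apred}$ resp.\ $\nowOf{\apredp}$. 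For $\sepimp$, membership of $\astateseq\cc\astate$ in $\nowOf{\apred}\sepimp\nowOf{\apredp}$ unfolds to: for every length-matched history $\astateseqp_{1}\cc\astate'$ composable with $\astateseq\cc\astate$ and with $\astate'\in\apred$, the extension $\astate\mstar\astate'$ lies in $\apredp$; using unit prefixes one sees that existence of such a composable $\astateseqp_{1}$ is automatic and imposes no constraint, so the condition collapses to $\astate\in\apred\sepimp\apredp$, i.e.\ to $\astateseq\cc\astate\in\nowOf{\apred\sepimp\apredp}$.

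The main obstacle I expect is precisely this length-matching step for $\mstar$ and $\sepimp$: one has to be careful that supplying units for the unused prefix positions does not accidentally change the final coordinate, which is why the axiom guaranteeing a unit $\stateunit_{\astatep}$ \emph{for each} $\astatep$ (rather than a single global unit) is essential. Once this bookkeeping is in place, all six identities reduce uniformly to the trivial observation that $\nowOf{-}$ is completely determined by the final coordinate of a computation history.
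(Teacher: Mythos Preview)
Your proposal is correct and follows essentially the same approach as the paper's proof: both reduce each identity to a statement about the last state of a history, handle the Boolean connectives by direct unfolding, and use a prefix of per-state units $\stateunit_{\astatep}$ to bridge the length-matching constraint in the nontrivial inclusions for $\mstar$ and $\sepimp$. The only cosmetic difference is that the paper spells out the two inclusions for $\sepimp$ separately (and derives $\cup$ from $\cap$ and complement), whereas you present the $\sepimp$ case as a single equivalence via the observation that the universal over prefixes is vacuous once a unit prefix exists.
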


For the past operator, we rely on the properties stated by the following lemma.
In particular, the last equivalence justifies rule \ruleref{h-infer} used in \Cref{sec:overview}.
\begin{lemma}\label{Lemma:PastSLOperators}
	$\nowOf{\apred} {\,\subseteq\,} \pastOf{\apred}$,\,
	$\true = \pastOf{\true}$,\,
	$\true\mstar\pastOf{\apred} = \pastOf{(\apred\mstar\true)}$,\,
	$\false = \pastOf{\false}$,\,
	$\pastOf{(\apred\mstar\apredp)} \subseteq \pastOf{\apred}\mstar\pastOf{\apredp}$,\,
	$\pastOf{\apred}\sepimp\pastOf{\apredp} \subseteq \pastOf{(\apred\sepimp\apredp)}$,\,
	$\pastOf{(\apred\cap\apredp)} \subseteq \pastOf{\apred}\cap\pastOf{\apredp}$,\,
	$\pastOf{(\apred\cup\apredp)} = \pastOf{\apred}\cup\pastOf{\apredp}$, and
	$\pastOf{\apred}\subseteq\pastOf{\apredp}$ iff $\apred\subseteq\apredp$.
\end{lemma}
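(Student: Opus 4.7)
The plan is to reduce every equivalence and inclusion in the statement to the semantic characterization that $\astateseq \in \pastOf{\apred}$ iff $\astateseq = \astateseq_1 \cc \astate \cc \astateseq_2$ for some $\astate \in \apred$, i.e., iff some position of $\astateseq$ holds a state in $\apred$. Once this is in hand, most items reduce to elementary reasoning at the level of positions; the only non-routine step involves the magic wand. The straightforward items I will dispatch first: for $\nowOf{\apred} \subseteq \pastOf{\apred}$ the witnessing position is the last one; for $\true = \pastOf{\true}$ and $\false = \pastOf{\false}$ a witnessing position either exists trivially or cannot exist at all; the identities for $\cap$ and $\cup$ follow by pushing the set operation through the existential over positions; and for $\pastOf{\apred} \subseteq \pastOf{\apredp} \Leftrightarrow \apred \subseteq \apredp$, the forward direction tests against singleton histories ($\astate \in \pastOf{\apred}$ iff $\astate \in \apred$), while the backward direction is immediate monotonicity of the position existential.

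Next, for the two items involving $\mstar$, I will use an explicit construction based on state units. Given a witnessing position $i$ where $\astateseq[i]$ splits as $t_1 \statemult t_2$ with $t_1 \in \apred$ (and, for $\pastOf{(\apred \mstar \apredp)} \subseteq \pastOf{\apred} \mstar \pastOf{\apredp}$, also $t_2 \in \apredp$), I form a pair of histories $\astateseq_1, \astateseq_2$ of the same length as $\astateseq$ by placing $t_1, t_2$ at position $i$ and the unit $\stateunit_{\astateseq[j]}$ opposite each $\astateseq[j]$ at every other position $j$. This is well-defined by the separation-algebra axiom that every state possesses a unit; by pointwise composition $\astateseq_1 \statemult \astateseq_2 = \astateseq$, and each of $\astateseq_1, \astateseq_2$ witnesses membership in the respective past predicate. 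The forward direction of $\true \mstar \pastOf{\apred} \subseteq \pastOf{(\apred \mstar \true)}$ requires no construction and just repackages the state-level conjunction at the witnessing position.

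The hard part will be $\pastOf{\apred} \sepimp \pastOf{\apredp} \subseteq \pastOf{(\apred \sepimp \apredp)}$, which I intend to prove by contrapositive. Assume $\astateseq \notin \pastOf{(\apred \sepimp \apredp)}$, so for every position $j$ one may choose $t_j \in \apred$ compatible with $\astateseq[j]$ but with $\astateseq[j] \statemult t_j \notin \apredp$. I construct $\astateseqp$ of the same length as $\astateseq$ by case analysis on each position $j$: if $\astateseq[j] \in \apredp$, I set $\astateseqp[j] \defeq t_j$, so that $(\astateseq \statemult \astateseqp)[j] = \astateseq[j] \statemult t_j \notin \apredp$ masks this existing witness of $\apredp$; otherwise I set $\astateseqp[j] \defeq \stateunit_{\astateseq[j]}$, leaving $(\astateseq \statemult \astateseqp)[j] = \astateseq[j] \notin \apredp$. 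To guarantee $\astateseqp \in \pastOf{\apred}$ when the first case never fires, I additionally overwrite the unit at some arbitrary position $j_0$ with $t_{j_0}$, which is harmless since $\astateseq[j_0] \statemult t_{j_0} \notin \apredp$. The resulting $\astateseqp$ lies in $\pastOf{\apred}$, is compatible with $\astateseq$, and $\astateseq \statemult \astateseqp$ avoids $\apredp$ at every position, contradicting $\astateseq \in \pastOf{\apred} \sepimp \pastOf{\apredp}$.

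The principal obstacle is precisely this simultaneous masking step in the wand case: positions already inside $\apredp$ cannot be left alone (a unit would preserve them), yet $\astateseqp$ must still witness $\pastOf{\apred}$, which forces the two-case construction above. All remaining items are direct consequences of the semantic characterization of $\pastOf{\cdot}$.
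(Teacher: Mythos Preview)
Your proposal is correct and, for every item except the separating implication, tracks the paper's proof essentially verbatim (position witnesses, unit sequences for the $\mstar$ items, singleton histories for the iff).

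For $\pastOf{\apred}\sepimp\pastOf{\apredp}\subseteq\pastOf{(\apred\sepimp\apredp)}$ both you and the paper argue by contraposition, but the paper's construction is simpler. Since $\astateseq\notin\pastOf{(\apred\sepimp\apredp)}$ means that \emph{every} position $j$ admits a witness $t_j\in\apred$ with $t_j\statemultdef\astateseq[j]$ and $\astateseq[j]\statemult t_j\notin\apredp$, one may set $\astateseqp[j]\defeq t_j$ at \emph{all} positions. This $\astateseqp$ is automatically in $\pastOf{\apred}$ (indeed every position lies in $\apred$) and automatically forces $(\astateseq\statemult\astateseqp)[j]\notin\apredp$ everywhere, with no case split and no fallback to units. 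Your two-case construction with the extra overwrite at $j_0$ works, but the obstacle you identify (``positions already inside $\apredp$ cannot be left alone, yet $\astateseqp$ must still witness $\pastOf{\apred}$'') dissolves once you observe that the witnesses $t_j$ are available at every position, not only where $\astateseq[j]\in\apredp$.
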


The interplay between computation predicates and commands is stated in the following lemma.
Recall that we defined $\wpreOf{\acom}{\acpred}\defeq\setcond{\astateseq}{\csemOf{\acom}(\astateseq)\subseteq\acpred}$.
\begin{lemma}\label{Lemma:WPNowPast}
	We have
	\begin{inparaenum}[(i)]
		\item $\wpreOf{\acom}{\nowOf{\apred}} = \nowOf{\wpreOf{\acom}{\apred}}$, and
		\item $\wpreOf{\acom}{\pastOf{\apred}}=\pastOf{\apred}\cup \wpreOf{\acom}{\nowOf{\apred}}$.
	\end{inparaenum}
\end{lemma}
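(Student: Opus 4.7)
The plan is to unfold the definitions on both sides and reason elementwise about a generic history $\astateseq \cc \astate_1$. For part~(i), observe that $\astateseq \cc \astate_1 \in \wpreOf{\acom}{\nowOf{\apred}}$ iff every successor $\astateseq \cc \astate_1 \cc \astate_2$ with $\astate_2 \in \semOf{\acom}(\astate_1)$ lies in $\nowOf{\apred}$, which by the definition of $\nowOf{\cdot}$ reduces to $\astate_2 \in \apred$. Quantifying over $\astate_2$, this is exactly $\semOf{\acom}(\astate_1) \subseteq \apred$, i.e., membership of $\astate_1$ in the state-level $\wpreOf{\acom}{\apred}$, i.e., $\astateseq \cc \astate_1 \in \nowOf{\wpreOf{\acom}{\apred}}$. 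Both directions follow at once (the vacuous case $\semOf{\acom}(\astate_1)=\emptyset$ is harmless on both sides).

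For part~(ii), I would prove the two inclusions separately by a case split. For $\supseteq$: if $\astateseq \cc \astate_1 \in \pastOf{\apred}$, then some state in the sequence already lies in $\apred$, so any extension $\astateseq \cc \astate_1 \cc \astate_2$ still witnesses $\pastOf{\apred}$, giving $\astateseq \cc \astate_1 \in \wpreOf{\acom}{\pastOf{\apred}}$. If instead $\astateseq \cc \astate_1 \in \wpreOf{\acom}{\nowOf{\apred}}$, then by part~(i) we have $\semOf{\acom}(\astate_1)\subseteq\apred$, so every successor $\astateseq \cc \astate_1 \cc \astate_2 \in \nowOf{\apred} \subseteq \pastOf{\apred}$ by \Cref{Lemma:PastSLOperators}.

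For $\subseteq$: let $\astateseq \cc \astate_1 \in \wpreOf{\acom}{\pastOf{\apred}}$. Either some state in $\astateseq \cc \astate_1$ already belongs to $\apred$, in which case $\astateseq \cc \astate_1 \in \pastOf{\apred}$ directly; or no such state exists, in which case, for every $\astate_2 \in \semOf{\acom}(\astate_1)$, the witness for $\astateseq \cc \astate_1 \cc \astate_2 \in \pastOf{\apred}$ must be $\astate_2$ itself, forcing $\astate_2 \in \apred$. This gives $\semOf{\acom}(\astate_1) \subseteq \apred$, i.e., $\astateseq \cc \astate_1 \in \nowOf{\wpreOf{\acom}{\apred}} = \wpreOf{\acom}{\nowOf{\apred}}$ by~(i). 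The empty-successor case is again handled vacuously and lands in the second disjunct.

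The only real subtlety is the case split in $\subseteq$ of~(ii): one has to argue that if no past state of $\astateseq \cc \astate_1$ witnesses $\apred$, then the $\pastOf{\apred}$-witness in each successor history \emph{must} be the fresh tip $\astate_2$. This is where the precise syntactic shape of $\pastOf{\apred} = \setstates^* \cc \apred \cc \setstates^*$ is used, together with the fact that $\csemOf{\acom}$ only appends a single new state. I do not expect any deeper obstacle; part~(i) is essentially bookkeeping and part~(ii) is this one case split plus monotonicity $\nowOf{\apred} \subseteq \pastOf{\apred}$ from \Cref{Lemma:PastSLOperators}.
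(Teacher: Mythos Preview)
Your proposal is correct and essentially identical to the paper's proof: both argue elementwise on a history $\astateseq\cc\astate_1$, unfold the definitions for part~(i), and for part~(ii) use the same case split on whether some state in $\astateseq\cc\astate_1$ already lies in $\apred$, together with $\nowOf{\apred}\subseteq\pastOf{\apred}$. The only cosmetic difference is that you route the second disjunct of $\supseteq$ through part~(i), while the paper unfolds directly; the content is the same.
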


The first identity of \Cref{Lemma:WPNowPast} implies that interference checking for a now predicate reduces to inference checking for the underlying state predicate. The second identity implies that past predicates are interference-free for all commands.

Next we justify rule \ruleref{h-intro} used in \Cref{sec:overview}.
Recall that this rule provides a way to record information about the current state in a past predicate so that we can use this information later in the proof. This involves a stuttering step.
\begin{lemma}
	\label{Lemma:CopyToPast}
	$\nowOf{\apred}\subseteq\wpreOf{\cskip}{\nowOf{\apred}\mstar\pastOf{\apred}}$.
\end{lemma}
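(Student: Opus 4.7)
The goal is to show that any history $\astateseq$ in $\nowOf{\apred}$, after a $\cskip$-step, lies in $\nowOf{\apred}\mstar\pastOf{\apred}$. I will unfold definitions and then exhibit an explicit decomposition of the stuttered history using the unit elements of the state separation algebra.

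First I would unfold the relevant definitions. By the definition of $\nowOf{\apred}$, every $\astateseq \in \nowOf{\apred}$ has the shape $\astateseq = \astateseqp \cc \astate$ for some $\astateseqp \in \setstates^*$ and $\astate \in \apred$. Since $\cskip$ is interpreted as the identity on states, the lifted semantics gives $\csemOf{\cskip}(\astateseqp \cc \astate) = \{\astateseqp \cc \astate \cc \astate\}$. By the definition of $\wpre$ on computation predicates, it therefore suffices to show $\astateseqp \cc \astate \cc \astate \in \nowOf{\apred} \mstar \pastOf{\apred}$.

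The main step is to decompose this stuttered history. Write $\astateseqp = \astate_1 \ccc \astate_n$ and pick, for each state $\astate_i$ (and for $\astate$), a unit $\stateunit_i \in \emp$ with $\astate_i \statemult \stateunit_i = \astate_i$ and a unit $\stateunit \in \emp$ with $\astate \statemult \stateunit = \astate$, as guaranteed by the separation algebra axioms. Define
\[
  \astateseq_1 \;\defeq\; \stateunit_1 \ccc \stateunit_n \cc \stateunit \cc \astate
  \qquad\text{and}\qquad
  \astateseq_2 \;\defeq\; \astate_1 \ccc \astate_n \cc \astate \cc \stateunit.
\]
Both sequences have length $n{+}2$, and the componentwise composition yields $\astateseq_1 \mstar \astateseq_2 = \astate_1 \ccc \astate_n \cc \astate \cc \astate = \astateseqp \cc \astate \cc \astate$. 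Moreover, $\astateseq_1 \in \nowOf{\apred}$ because its last state is $\astate \in \apred$, and $\astateseq_2 \in \pastOf{\apred}$ because it contains $\astate \in \apred$ at position $n{+}1$ (with arbitrary suffix $\stateunit$).

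I expect the only subtle point to be the bookkeeping with the unit elements: the separation algebra on histories requires equal-length sequences and componentwise compatibility, so one must be careful that the chosen units for each position indeed multiply back to the original states. Once the decomposition above is written down, the claim follows immediately from the definitions of $\mstar$ on $\setstates^+$, of $\nowOf{-}$, and of $\pastOf{-}$; no further properties of $\apred$ are needed.
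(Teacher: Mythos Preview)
Your proof is correct. The paper does not include an explicit proof of this lemma, but your argument is the natural one and matches the style of the unit-sequence decompositions the paper uses elsewhere (e.g., in the proofs of Lemmas~\ref{Lemma:NowSLOperators} and~\ref{Lemma:PastSLOperators}); there is nothing to add.
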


\smartparagraph{Hindsight Reasoning}
We now use history separation algebras to justify the hindsight reasoning principle introduced in \Cref{sec:overview}.
The key idea is \emph{state-independent quantification}, and best explained with reference to an assertion language.
An assertion language over computations will support quantified logical variables.
As those quantifiers live on the level of computations, the resulting valuation of the logical variables will be independent of (the same for all) the states inside the computation.
This means facts that we learn about the variables in one state will also be true in all other states.
In particular, if we learn facts about a quantified variable now, we can draw conclusions in hindsight.
We illustrate this on an example. 
In the assertion $\exists v.(\pastOf{(x\pointsto{} v)}\mstar \nowOf{(v=0)})$, the logical variable $v$ is quantified on the level of computations, meaning its value is independent of the actual state in the computation.
We learn that now $v$ is zero, and since the valuation is state independent, $v$ has also been zero when $x$ pointed to it. 
Hence, from the now state we can conclude, in hindsight, that also $\pastOf{(x\pointsto{} 0)}$ holds. 
This is indeed a consequence of the previous assertion (entailment holds). 
Rather than moving to an assertion language, we formalize this reasoning on the semantic level.

For hindsight reasoning, we construct a product separation algebra $\setstates\times \iset$. 
The first component is the above state separation algebra $(\setstates, \mstar,\emp)$.
We refer to the second component as the \emph{valuation} separation algebra $(\iset, \mstar, \iset)$, because its elements can be understood as variable valuations.
There are no requirements on $\iset$, it is just an arbitrary set, but we note that it is also the set of units.
The multiplication $\mstar$ between valuations $i,j\in \iset$ is defined if and only if $i=j$, in which case $i\mstar i = i$.

The semantics of commands $\semCom{\acom}$ is lifted to predicates of the product separation algebra by leaving the valuation component untouched.
We then lift $\setstates \times \iset$ to a separation algebra of computation histories $(\setstates \times \iset)^+$ as before.
We adapt the definition of $\nowOf{\apred}$ and $\pastOf{\apred}$ for $\apred\subseteq(\setstates \times \iset)^+$ so that the valuation component is kept constant over the whole computation, in accordance with the lifted semantics of commands:
\[
	\nowOf{\apred}\:\defeq\: \setcond{(\setstates \times \set{\anindex})^*\cdot (\astate,\anindex)}{(\astate,\anindex) \in \apred}
	\quad~
	\pastOf{\apred}\:\defeq\: \setcond{(\setstates \times \set{\anindex})^*\cdot (\astate,\anindex) \cdot (\setstates \times \set{\anindex})^*}{(\astate,\anindex) \in \apred}
	\ .
\]
Separation logic assertions are called pure, if they are independent of the heap and only refer to the valuation of logical variables.
In the semantic setting, we define a predicate $\apred$ to be \emph{pure}, if it leaves the heap unconstrained, $\apred = \setstates \times \isetp$ for some $\isetp \subseteq \iset$.
The following \namecref{thm:soundness-hindsight-rules} then justifies rule \ruleref{h-hindsight} used in \Cref{sec:overview}.
\begin{lemma}
	\label{thm:soundness-hindsight-rules}
  If $\apred$ is pure, then $\nowOf{\apred} \mstar \pastOf{\apredp} \;=\; \pastOf{(\apred \mstar \apredp)}$.
\end{lemma}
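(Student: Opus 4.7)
The plan is to prove the two set inclusions separately, exploiting the fact that by purity $\apred = \setstates\times\isetp$ for some $\isetp\subseteq\iset$, so that for any state $\astate$ and valuation $\anindex$ we have $(\astate,\anindex)\in\apred$ iff $\anindex\in\isetp$. The crucial consequence is that $\nowOf{\apred}$ reduces to saying that the history has a constant valuation $\anindex$ throughout and $\anindex\in\isetp$; the state component is completely unconstrained. Composition in the lifted product algebra forces both operands and the result of a $\mstar$ to share the same valuation sequence, so any history $h\in\nowOf{\apred}\mstar\pastOf{\apredp}$ automatically has a single valuation $\anindex\in\isetp$.

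For the $\subseteq$ direction, I would take $h = h_1\mstar h_2$ with $h_1\in\nowOf{\apred}$ and $h_2\in\pastOf{\apredp}$. From $h_2$'s membership I extract a position $k$ with $(\astate_{2,k},\anindex)\in\apredp$; at the same position $h_1$ has some $(\astate_{1,k},\anindex)$, which lies in $\apred$ purely because $\anindex\in\isetp$. Pointwise composition then gives $(\astate_{1,k}\mstar\astate_{2,k},\anindex) = (\astate_k,\anindex)\in\apred\mstar\apredp$ at position $k$ of $h$, so $h\in\pastOf{(\apred\mstar\apredp)}$.

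For the $\supseteq$ direction, I would take $h\in\pastOf{(\apred\mstar\apredp)}$ with constant valuation $\anindex$ and a position $k$ where $(\astate_k,\anindex)\in\apred\mstar\apredp$, factoring $\astate_k = \astate_a\mstar\astate_b$ with $(\astate_a,\anindex)\in\apred$ and $(\astate_b,\anindex)\in\apredp$; note $\anindex\in\isetp$ by purity. At each position $j\neq k$ I split $(\astate_j,\anindex) = (\stateunit_{\astate_j},\anindex)\mstar(\astate_j,\anindex)$ using the unit $\stateunit_{\astate_j}$ of $\astate_j$ guaranteed by the separation-algebra axioms; at position $k$ I split as $(\astate_a,\anindex)\mstar(\astate_b,\anindex)$. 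This yields histories $h_1,h_2$ with $h=h_1\mstar h_2$, both of constant valuation $\anindex$, and $h_2$ witnesses $\pastOf{\apredp}$ at position $k$.

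The only subtle step is checking that $h_1\in\nowOf{\apred}$, because this requires the \emph{last} entry of $h_1$ to satisfy $\apred$. If the distinguished position $k$ is the last index, the last entry is $(\astate_a,\anindex)\in\apred$ by construction; otherwise the last entry is $(\stateunit_{\astate_n},\anindex)$, which lies in $\apred$ precisely because purity erases any constraint on the state component and we already have $\anindex\in\isetp$. This is the main obstacle: everything rides on the fact that $\apred$'s obliviousness to the state lets us place trivial (unit) state components on every non-distinguished position while still satisfying the ``last entry'' requirement of $\nowOf{\apred}$. Without purity this step would fail, explaining why the hypothesis is essential.
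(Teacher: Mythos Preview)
Your proof is correct and follows essentially the same approach as the paper's: both inclusions proceed by the same decompositions, in particular padding with unit states at all non-witness positions for the $\supseteq$ direction. Your identification of the subtle step---that purity is precisely what lets the last entry of $h_1$ land in $\apred$ regardless of its state component---is exactly the crux the paper's argument relies on.
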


\smartparagraph{Linearizability in Hindsight}
Past predicates together with the hindsight principle have an appealing application in linearizability proofs: they allow for retrospective linearization.
That is, the proof does not need to observe the very moment when a thread executes the linearization point.
It suffices to show that the linearization point must have occurred in the past, after the remainder of the thread's execution has been observed.
This is inspired by concurrent data structure practice, like the read-and-validate pattern from before.
To support this kind of linearizability argument, we extend the proof system $\semcalclin$ from \Cref{Section:OG} with rule~\ruleref{lin-past} from \Cref{fig:linearizability-rules}, repeated here for~convenience: \[
	\inferhref{lin-past}{lin-past-repeat}{
		\acpred\subseteq\pastOf{\bigl(\acss(\abscontent)\cap\acssup(\abscontent,\abscontent,k,v)\bigr)}
	}{
		\thePredicates, \theInterference
		\semcalclin
		\hoareOf{\anobl{\asspec}\mstar\acpred}{\cskip}{\aful{\asspec}{v}\mstar\acpred}
	}\ .
\]
The rule formalizes our intuition. It trades the update token $\anobl{\asspec}$ for $\aful{\asspec}{v}$ if a past predicate can certify that the sequential specification was satisfied at some point during the computation, i.e., a linearization point definitely occurred.
It is worth stressing that the way past predicates are introduced (cf. \Cref{Lemma:PastSLOperators,Lemma:WPNowPast}) guarantees that they refer to a moment during the execution of the corresponding operation, as required for linearizability.
Also observe that the precondition of the rule requires the linearization point to be pure.
With the restriction to pure linearization points, we avoid the complexity of ensuring that there is a one-to-one correspondence between updates of the logical contents of the structure and threads \emph{claiming} an update as their linearization point.
Put differently, our rule exploits the fact that arbitrarily many threads may linearize in a single state that satisfies a pure case of the sequential specification.

\smartparagraph{Proving Linearizability of the Harris Set}
By using the now predicate, we obtain a conservative extension of separation logic.
In the following, the application of the now operator is kept implicit: a state predicate that occurs in a context expecting a computation predicate is interpreted as a now predicate.
This is justified by \Cref{Lemma:NowSLOperators}.

We demonstrate the reasoning power of the resulting logic by proving linearizability of the Harris set \code{search} operation.
The proof outline is in \Cref{fig-harris-history-reasoning}, reusing our earlier proof of \code{traverse}.
The code of \code{find} makes the semantics of the \code{CAS} explicit.
We have also eliminated the case $\lnext == r$ before the \code{CAS}.
This focuses the discussion on a single linearization point.
\techreport{Appendix~\ref{Section:HarrisHistoryFull} gives the proof for the full version.}

As in \Cref{sec:overview}, we decorate logical variables occurring below a past operator with a prime to consistently rename existentially quantified variables in order to avoid clashes with variables describing the current state.
For example, $\old{\nextvarof{x}}$ will refer to the old value of $x$'s $\nextfld$ field in some past state.


\begin{figure}
  \newcommand\MSTAR{\,\mstar\,}
  \newcommand{\lePast}[1]{\makeHistCol{#1}}
  \newcommand{\lePastAnnot}[2][]{\lePast{\ifthenelse{\equal{#1}{}}{}{\givename{#1}}\left\{\,\begin{aligned}#2\end{aligned}\,\right\}}}
  \begin{minipage}[b]{.55\textwidth}
\begin{lstlisting}[language=SPL,escapechar=@, belowskip=.1em]
$\lePast{\pastharris(r) \defeq \keyvarof{r} = \old{\keyvarof{r}} \MSTAR (\old{\markvarof{r}} \Rightarrow \markvarof{r}) \MSTAR \past(\, \inv(\old{\abscontent},\old{\somenodes}) \MSTAR r\in\old{\somenodes} \MSTAR \old{(k \in \keysetof{r})} \,)}$
\end{lstlisting}
\begin{lstlisting}[language=SPL,escapechar=@, belowskip=0pt]
$\annot{
  \exists \abscontent\,\somenodes.~
  \inv(\abscontent,\somenodes) \MSTAR -\infty < k < \infty
}$
procedure find($k$: K) : N * N {
  val $\hnext$, $\hmark$ = atomic {head.next, head.mark}
  $\annot{
    \tinv(\abscontent,\somenodes,\{\head,\hnext\},\head,\hnext,\hmark,\hnext) \MSTAR \futharris(\{\head,\hnext\},\head,\hnext,\hnext)
  }$
  val $l$, $\lnext$, $\lmark$, $r$ = traverse($k$, head, $\hnext$, $\hmark$, $\hnext$) @\label[line]{line-search-prepare-traverse}@
  $\annot{
    \tinv(\abscontent,\somenodes,M,l,\lnext,\lmark,r) \MSTAR \futharris(M, l, \lnext, r)  \MSTAR r\!\neq\!\head \MSTAR r\!\neq\!l \MSTAR \keyvarof{l}\!<\!k\!\leq\!\keyvarof{r}
  }$
  val $\vsucc$ = atomic { // CAS
    $l$.next == $\lnext$ && $l$.mark == $\lmark$ ? {
      $\annot{
        &\inv(\abscontent,\somenodes) \MSTAR \futharris(M, l, \lnext, r) \MSTAR \contentsof{M}\subseteq\contentsof{l,r} \MSTAR
        \nextvarof{l} = \lnext \\& \MSTAR \neg \markvarof{l}\MSTAR \set{l,\lnext,r} \subseteq M \subseteq \somenodes \MSTAR \keyvarof{l} < k \leq \keyvarof{r}
      }$ @\label[line]{line-search-pre-invoke-future}@  @\label[line]{line-search-pre-case-history}@
      $l$.next := $r$ @\label[line]{line-search-invoke-future}@
      $\annot{
        &\inv(\abscontent,\somenodes) \MSTAR \set{l,r} \subseteq \somenodes \MSTAR (\keyvarof{l},\infty] \subseteq \flowvarof{r} \\& \MSTAR {} \keyvarof{l} < k \leq \keyvarof{r}
      }$ @\label[line]{line-search-post-cas-history}@
      $\lePast{\mathtt{skip}}$
      $\lePastAnnot{
        &\inv(\abscontent,\somenodes) \MSTAR \set{l,r} \subseteq \somenodes \MSTAR (\keyvarof{l},\infty] \subseteq \flowvarof{r} \\& \MSTAR {}
        \keyvarof{l} < k \leq \keyvarof{r}
        \MSTAR \lePast{ \pastharris(r)}
      }$ @\label[line]{line-search-cas-history}@
    } true : false
  } 
  $\annot{
    \inv(\abscontent,\somenodes) \MSTAR \set{l,r} \subseteq \somenodes \MSTAR \lePast{ (\vsucc \,\Rightarrow\, \pastharris(r))}
  }$
  if ($\vsucc$ && !$r$.mark) { @\label[line]{line-search-condition}@
    $\annot{
      \inv(\abscontent,\somenodes) \MSTAR \set{l,r} \subseteq \somenodes \MSTAR \lePast{ \neg\old{\markvarof{r}} \MSTAR \pastharris(r)}
    }$  @\label[line]{line-search-success}@
    return ($l$, $r$)
  } else find($k$)
}
$\annot{
  (l,r).\; \exists \abscontent\,\somenodes.~
  \inv(\abscontent,\somenodes) {\:\mstar\:} \set{l,r} \prall{\subseteq} \somenodes {\:\mstar\:} \lePast{ \neg\old{\markvarof{r}} {\:\mstar\:} \pastharris(r)}
}$ @\label[line]{line-serach-post}@
\end{lstlisting}
  \end{minipage}
  \hfill
  \begin{minipage}[b]{.425\textwidth}
\newcommand{\myObl}{\OBL{search($k$)}}
\newcommand{\myFul}{\FUL{search($k$)}{\vres}}
\begin{lstlisting}[language=SPL,escapechar=@, belowskip=0pt]
$\annot{
  \exists \abscontent\,\somenodes.~
  &\inv(\abscontent,\somenodes) \MSTAR -\infty < k < \infty \\& \MSTAR {} \lePast{\myObl}
}$ @\label[line]{line-find-obl}@
procedure search($k$: K) : Bool {
  val _, $r$ = find($k$) @\label[line]{line-search-invokefind}@
  $\annot{
    &\inv(\abscontent,\somenodes) \MSTAR r \in \somenodes \MSTAR \lePast{ \neg\old{\markvarof{r}} } \\& \lePast{ \MSTAR {} \pastharris(r) \MSTAR \myObl}
  }$ @\label[line]{line-search-called}@
  val $\vres$ = $r$.key == $k$
  $\annot{
    &\inv(\abscontent,\somenodes) \MSTAR r \in \somenodes \MSTAR \lePast{ \myObl } \\&\strut \lePast{\smash{\MSTAR \past( \inv(\old{\abscontent},\old{\somenodes}) \mstar \vres\Leftrightarrow k \prall{\in} \old{\abscontent} )}}
  }$ @\label[line]{line-find-flowreasoning}@
  $\annot{
    \inv(\abscontent,\somenodes) \MSTAR \lePast{\myFul}
  }$ @\label[line]{line-find-ful}@
  return $\vres$
}
$\annot{
  \vres.\; \exists \abscontent\,\somenodes.~ \inv(\abscontent,\somenodes) \MSTAR \lePast{\myFul}
}$
\end{lstlisting}
  \end{minipage}
\caption{%
  Proof outline showing that Harris set \code{search} preserves the \makeTeal{\text{data structure invariant}} and is \lePast{\text{linearizable}}.
  \label{fig-harris-history-reasoning}
}
\end{figure}

We proceed with the proof.
First, we focus on the overall linearizability argument in \code{search}.
To that end, observe the history assertion resulting from a call to \code{find}, \cref{line-search-called}:
\[
	\pastOf{\bigl(\, \inv(\old{\abscontent},\old{\somenodes}) \:\mstar\: r\in\old{\somenodes} \:\mstar\: k \in \old{\keysetof{r}} \bigr)}
	\:\mstar\:
	\neg\old{\markvarof{r}}
	\:\mstar\:
	\keyvarof{r} \prall{=} \old{\keyvarof{r}}
	\enspace .
\]
Following our discussion from~\Cref{sec:overview}, the keysets guide the proof in the sense that $k \in \old{\keysetof{r}}$ means that $r$ is the decisive node for \code{search($k$)}.
We localize the reasoning to this decisive node by rewriting the invariant $\inv(\old{\abscontent},\old{\somenodes})$ under the past operator along \Cref{thm:invariant-locality}:\footnote{%
	Compared to \Cref{sec:overview}, we record here the full invariant $\inv(\old{\abscontent},\old{\somenodes})$ under the past operator, not just the predicate $\nodeof{r}$.
	This is needed to be compatible with the sequential specification and rule~\ruleref{lin-past}.
}
\[
	\inv(\,\old{\abscontent}\setminus\set{k},\:\old{\somenodes}\setminus\set{r}\,) ~\mstar~ \inv(\,\old{\contentsof{r}},\:\set{r}\,)
	\enspace .
\]
The \namecref{thm:invariant-locality} is applicable as its precondition $\contentsof{\old{\somenodes}\prall{\setminus}\set{r}}\cap\set{k}=\emptyset$ follows from contraposition: if there was a non-$r$ node with $k$ in its contents, then $k$ was also in its keyset by the invariant, which contradicts the disjointness of keysets because $k\in\old{\keysetof{r}}$.
So we get $k\in\old{\contentsof{r}}$ iff $k=\old{\keyvarof{r}}$.
By the above localization, this means $k\in\old{\abscontent}$ iff $k=\old{\keyvarof{r}}$.
Because the key of $r$ has not changed, $\keyvarof{r} \prall{=} \old{\keyvarof{r}}$, we arrive at $k\in\old{\abscontent}$ iff $k=\keyvarof{r}$.
That is, the current value of \code{$r$.key} reveals whether or not $k$ has been in the contents of the data structure at some point in the past.
With this, it is immediate that the past predicate certifies the existence of a linearization point for the usual sequential specification of \code{search}: \[
	\hoareOf{ \abscontent.~ \acss(\abscontent) }{ \mymathtt{search}(k) }{ \vres.~ \acss(\abscontent) \:\mstar\: \vres \Leftrightarrow k\in\abscontent }
	\enspace .
\]
We know that $\vres$ is equal to the truth of the equality $k=\keyvarof{r}$.
Hence, we can linearize in hindsight using rule~\ruleref{lin-past} from above.

To prove the postcondition of \code{find}, we need to establish that at some point during the execution of \code{find}, $k \in \keysetof{r}$ and $\neg\markvarof{r}$ were satisfied.
This is on \cref{line-search-post-cas-history}, but we determine $r$'s mark bit only later as execution continues past the condition on \cref{line-search-condition}.
The high-level proof idea is, thus, to record $r$'s keyset and mark bit on \cref{line-search-post-cas-history} in the past assertion shown on \cref{line-search-cas-history} (we explain this step in more detail below).
We then propagate the assertion on \cref{line-search-cas-history} into the \emph{then} branch of the conditional (using \Cref{Lemma:WPNowPast}). The logical variables link the past and current state, which allows us to apply hindsight reasoning (rule \ruleref{h-hindsight}).
Specifically, on \cref{line-search-success} we know that $\markvarof{r}$ is false in the current state.
Then, using $\old{\markvarof{r}} \prall{\Rightarrow} \markvarof{r}$, we can conclude that $\old{\markvarof{r}}$ is also false, thus learning retrospectively the crucial fact about the past state at \cref{line-search-post-cas-history}.
We then transfer this pure fact into the past predicate to derive $\contentsof{r}=\set{\keyvarof{r}}$ and thus $\vres\Leftrightarrow k\in\abscontentp$ on \cref{line-find-flowreasoning}.

We briefly discuss the mechanical aspects of deriving the past predicate on \cref{line-search-cas-history}.
We start with the intermediate assertion established on \cref{line-search-post-cas-history} in the earlier proof.
First, using $\keyvarof{l} < k$ and ${(\keyvarof{l},\infty] \subseteq \flowvarof{r}}$ we derive $k \in \insetof{r}$.
Then, using $k \leq \keyvarof{r}$ we obtain $k \in \keysetof{r}$.
We then perform a stuttering step to record a copy of this assertion below a past operator using rule~\ruleref{h-intro}.
The resulting assertion is not interference-free since the current state of $r$ referred to inside the past assertion can be changed by concurrent threads.
So we perform a series of weakening steps
by introducing fresh logical variables to arrive at the assertion on \cref{line-search-cas-history}.

It is worth pointing out that the proof of \code{find} relies on the future constructed in \Cref{Section:Futures} and the fact that its update is pure, \cref{line-search-invoke-future}.
To see this, we rewrite the assertion from \cref{line-search-pre-invoke-future} along \Cref{thm:invariant-locality}:
\[
	\inv(\abscontent\setminus\contentsof{\set{l,r}},\somenodes\setminus M)
	\:\mstar\:
	\inv(\contentsof{\set{l,r}},M)
	\:\mstar\:
	\futharris(M, l, \lnext, r)
	\ .
\]
Then, we frame out $\inv(\abscontent\setminus\contentsof{\set{l,r}},\somenodes\setminus M)$ and use the remaining $\inv(\contentsof{\set{l,r}},M)$ to invoke the future $\futharris(M, l, \lnext, r)$.
This gives $\inv(\contentsof{\set{l,r}},M)$ which, combined with the frame, results in $\inv(\abscontent,\somenodes)$.
Consequently, the update does not alter the logical contents of the structure so that rule~\ruleref{lin-none} is applicable.
This also justifies framing out the update token before invoking \code{find} from within \code{search} on \cref{line-search-invokefind}---the update token is not needed for the proof of \code{find}.

We note that throughout the entire proof, all explicit inductive reasoning was carried out at the level of the program logic in lock-step with the program execution, using only local facts about the nodes in the data structure captured by the resource invariant, futures, and history predicates.
In particular, we did not need explicit inductive reasoning over heap graph predicates or computation histories.
All such reasoning is carried out \emph{for free} by our developed meta-theory.


\section{Prototype Implementation}
\label{Section:Evaluation}

We substantiate our claim that the presented techniques aid automation and are useful in practice.
To that end, we implemented a \cpp prototype called \plankton \cite{plankton}.
\plankton takes as input the program under scrutiny and a candidate node invariant.
It then fully automatically generates a proof within our novel program logic, establishing that the given program is linearizable and does adhere to the given invariant.
We give a brief overview of \plankton's proof generation and report on our findings.
We stress that the present paper focuses on the theoretical foundations and as such does not give a detailed discussion of \plankton's implementation.

\smartparagraph{Implementation}
The proof generation in \plankton is implemented as a fixpoint computation that saturates an increasing sequence $\theInterference_0 \subseteq \theInterference_1 \subseteq \cdots$ of interference sets \cite{DBLP:conf/cav/HenzingerJMQ03}.
Initially, the interference set is empty, $\theInterference_0\defeq\emptyset$.
Once $\theInterference_{k}$ has been obtained, a proof of the input program with respect to $\theInterference_k$ is constructed and the interferences $\theInterference_\mathit{new}$ discovered during this proof are recorded, yielding $\theInterference_{k+1}\defeq\theInterference_{k}\cup\theInterference_\mathit{new}$.
A fixpoint $\theInterference_\mathit{lfp} \defeq \theInterference_k$ is reached if no new interference is found, $\theInterference_k = \theInterference_{k+1}$.
The proof generated for $\theInterference_k$ is then the overall proof for the input program.

For efficiency reasons, it is crucial to reduce the size of the computed interference sets \cite{DBLP:conf/vmcai/Vafeiadis10}.
We reduce an interference set $\theInterference$ by dropping any interference $(\apredpp, \acom)$ that is already covered, that is, there is $(\apredppp, \acom)\in\theInterference$ with $\apredpp\subseteq\apredppp$.

Given an interference set $\theInterference$, \plankton constructs a proof $\theInterference\semCalc\hoareOf{\apred_0}{\afun}{\apredp}$ for each function $\afun$ of the input program.
The proof construction starts from the precondition $\apred_0$, which captures just the invariant, as done, e.g., in \Cref{fig-harris-history-reasoning}.
From there, the rules of the program logic (\Cref{Figure:ProgramLogic}) are applied to inductively construct the postcondition.
As $\theInterference$ is fixed, plankton does not track the predicates $\thePredicates$.
We elaborate on the interesting ingredients of the proof construction.

Rule~\ruleref{com-sem} for atomic commands $\acom$ requires \plankton to compute $\semComOf{\acom}{\apred}$ for some precondition $\apred$.
The behavior of $\semComOf{\acom}{\apred}$ is prescribe by the standard axioms of separation logic~\cite{DBLP:conf/csl/OHearnRY01}.
If $\acom$ updates the heap, however, we have to additionally infer a flow footprint such that, roughly,
(i) all nodes updated by $\acom$ are contained in the footprint,
(ii) the interface of the footprint remains unchanged by the update, and
(iii) after the update all nodes inside the footprint still satisfy their node-local invariant.
As discussed in \cref{Section:Keysets}, this localizes the update to the footprint: the nodes outside the footprint continue to satisfy the invariant.
\plankton chooses the footprint by collecting all nodes whose (non-flow) fields are updated by $\acom$ and adds those nodes that are reachable in a small, constant number of steps.
If this choice does not satisfy (i), verification fails.
The restriction to finite footprints is essential for automating (ii) and (iii).
Yet, the restriction does not limit our approach: unbounded footprints are handled with futures, as seen in \cref{Section:Futures}.
Conditions (ii) and (iii) are then encoded into SMT and discharged using \atoolname{Z3}~\cite{DBLP:conf/tacas/MouraB08}.
Lastly, we apply the interferences $\theInterference$ to $\semComOf{\acom}{\apred}$.
The result is $\apredp\defeq\semOf{\theInterference}(\semComOf{\acom}{\apred})$ whose computation is inspired by~\cite{DBLP:conf/vmcai/Vafeiadis10}.
Overall, we obtain $\theInterference\semCalc\hoareOf{\apred}{\acom}{\apredp}$.

Rule~\ruleref{loop} requires a loop invariant $I$ for program $\loopof{\astmt}$ and precondition $\apred$ such that $\theInterference\semCalc\hoareOf{I}{\astmt}{I}$ and $\apred \subseteq I$.
To find one, \plankton generates a sequence $I_0, I_1, \dots$ of candidates.
The first candidate is $I_0\defeq\apred$.
Candidate $I_{n+1}$ is obtained from a sub-proof $\theInterference\semCalc\hoareOf{I_n}{\astmt}{I_n'}$ whose pre- and postcondition are joined, i.e., $I_{n+1}\defeq I_n \join I_n'$.
Intuitively, this join corresponds to the disjunction $I_n \cup I_n'$.
For performance reasons, however, \plankton uses a disjunction-free domain \cite{DBLP:conf/cav/YangLBCCDO08,DBLP:journals/toplas/RivalM07}, which means the join is actually weaker than union.
A loop invariant $I\defeq I_{n}$ is found, if the implication $I_{n+1} \subseteq I_{n}$ holds.

A core aspect of our novel program logic are history and future predicates.
\plankton tries to construct a strongest proof for the input program. Hence, new history and future predicates are added eagerly to an assertion $\apred$ whenever it participates in a join or interference is applied to it.
The rational behind this strategy is to \emph{save} information from $\apred$ in a history/future before it is lost.
More specifically, all boxed points-to predicates from $\apred$ that are subject to interference are added to a new history predicate.
New futures are introduced either from scratch with rule~\ruleref{f-intro} followed by rule~\ruleref{f-account} or from existing futures with rules~\ruleref{f-seq} and~\ruleref{f-frame}.
It is worth pointing out that \plankton uses rule~\ruleref{f-account} only to account duplicable facts, as in the proof from \Cref{fig-harris-future-reasoning}.
The introduction of futures is guided by a set of candidates.
These candidates are computed upfront by collecting all \code{CAS} commands in the input program.
A \code{CAS} may be dropped from the candidates if its footprint is statically known to be finite, e.g., because it only updates the mark bit of a pointer or inserts a new (and thus owned) node.
The approach discovers the necessary futures needed for our experiments.
Avoiding unnecessary futures produced by this method is considered future work.

\smartparagraph{Evaluation}
We used \plankton to automatically verify linearizability of fine-grained state-of-the-art set implementations from the literature:
\begin{inparaenum}[]
	\item a lock-coupling set~\cite[Chapter~9.5]{DBLP:books/daglib/0020056},
	\item the Lazy set~\cite{DBLP:conf/opodis/HellerHLMSS05},
	\item FEMRS tree~\cite{DBLP:conf/wdag/FeldmanE0RS18} which is a variation of the contention-friendly binary tree~\cite{DBLP:conf/europar/CrainGR13,DBLP:journals/ppl/CrainGR16},
	\item Vechev\&Yahav~2CAS set~\cite[Figures~8 and 9]{DBLP:conf/pldi/VechevY08},
	\item Vechev\&Yahav~CAS set~\cite[Figure~2]{DBLP:conf/pldi/VechevY08},
	\item ORVYY set~\cite{DBLP:conf/podc/OHearnRVYY10},
	\item Michael set~\cite{DBLP:conf/spaa/Michael02},
	\item Harris set~\cite{DBLP:conf/wdag/Harris01},
        \item and a variation with wait-free \code{search} of the Michael and Harris set algorithms.
\end{inparaenum}
For the FEMRS tree, \plankton cannot handle the maintenance operations because they have updates with an unbounded footprint that is not traversed. This is a limitation of our current future reasoning.
However, we are not aware of any other tool that can automatically verify even this simplified version of FEMRS trees. Also, \plankton is the first tool to automate hindsight reasoning for the Harris set.

The results are summarized in \Cref{table:benchmarks}.
The first three columns of the \namecref{table:benchmarks} list
\begin{inparaenum}[(i)]
	\item the number of iterations until the fixpoint $\theInterference_\mathit{lfp}$ is reached,
	\item the size of $\theInterference_\mathit{lfp}$, and
	\item the number of future candidates.
The next five columns list the percentage of runtime spent on
	\item rule~\ruleref{com-sem},
	\item future reasoning,
	\item history reasoning,
	\item joins, and
	\item applying interferences.
The last column gives
	\item the overall runtime, averaged across $10$ runs, and the linearizability verdict~(success is marked with~\symbolYes).
\end{inparaenum}
Across all benchmarks we observe that two iterations are sufficient to reach the fixpoint $\theInterference_\mathit{lfp}$: the first iteration discovers all interferences, the second iteration confirms that none are missing.
This is remarkable because the first iteration uses $\theInterference_0\defeq\emptyset$, i.e., considers the sequential setting.
Further, we observe that most benchmarks spend significantly more time reasoning about the past than the future.
The reason is twofold.
(1) Introducing new futures either succeeds, meaning that a future candidate is resolved and can be ignored going forward, or it \emph{fails fast}, which we attribute to Z3 finding counterexamples much faster than proving the validity of our SMT encoding of heap updates.
(2) For histories, we do not have a heuristic identifying candidates.
Instead, we eagerly introduce histories upon interference.
We also apply hindsight reasoning eagerly.
Lastly, we observe that the overall runtime tends to increase with the nesting depth and complexity of loops, as \plankton requires several loop iterations (often between $3$ and $5$) to find an invariant.
A proper investigation of how finding loop invariants affects the overall runtime is future work.


\begin{table*}%
	\caption{Experimental results for verifying set implementations with \plankton, conducted an Apple M1 Pro.}%
	\label{table:benchmarks}%
		\center%
		\newcommand{\cellCountTime}[2]{\(#2\)}
		\newcommand{\cellYes}[1]{\makebox[0.85cm][r]{\(#1\)}\hspace{1.5mm}\makebox[.3cm][l]{\rawsymbolYes}}
		\setlength{\tabcolsep}{4pt}
		\begin{tabularx}{\textwidth}{Xccccccccc}%
			\toprule
			Benchmark
				& $\#\textrm{Iter}$
				& $\#\theInterference_\mathit{lfp}$
				& $\#\textrm{Cand}$
				& Com.
				& Fut.
				& Hist.
				& Join
				& Inter.
				& Lineariz.
				\\
			\midrule
			Fine-Grained set
				& \( 2 \)
				& \( 5 \)
				& \( 2 \)
				& \cellCountTime{??}{11\%}
				& \cellCountTime{??}{15\%}
				& \cellCountTime{??}{43\%}
				& \cellCountTime{??}{15\%}
				& \cellCountTime{??}{8\%}
				& \cellYes{46s}
				\\
			Lazy set
				& \( 2 \)
				& \( 6 \)
				& \( 2 \)
				& \cellCountTime{??}{10\%}
				& \cellCountTime{??}{13\%}
				& \cellCountTime{??}{54\%}
				& \cellCountTime{??}{11\%}
				& \cellCountTime{??}{5\%}
				& \cellYes{77s}
				\\
			FEMRS tree {(no maintenance)}
				& \( 2 \)
				& \( 5 \)
				& \( 2 \)
				& \cellCountTime{??}{19\%}
				& \cellCountTime{??}{0\%}
				& \cellCountTime{??}{49\%}
				& \cellCountTime{??}{1\%}
				& \cellCountTime{??}{9\%}
				& \cellYes{130s}
				\\
			Vechev\&Yahav~2CAS set
				& \( 2 \)
				& \( 3 \)
				& \( 1 \)
				& \cellCountTime{??}{14\%}
				& \cellCountTime{??}{0\%}
				& \cellCountTime{??}{33\%}
				& \cellCountTime{??}{31\%}
				& \cellCountTime{??}{9\%}
				& \cellYes{125s}
				\\
			Vechev\&Yahav~CAS set
				& \( 2 \)
				& \( 4 \)
				& \( 1 \)
				& \cellCountTime{??}{15\%}
				& \cellCountTime{??}{7\%}
				& \cellCountTime{??}{39\%}
				& \cellCountTime{??}{23\%}
				& \cellCountTime{??}{6\%}
				& \cellYes{54s}
				\\
			ORVYY set
				& \( 2 \)
				& \( 3 \)
				& \( 0 \)
				& \cellCountTime{??}{17\%}
				& \cellCountTime{??}{0\%}
				& \cellCountTime{??}{40\%}
				& \cellCountTime{??}{26\%}
				& \cellCountTime{??}{6\%}
				& \cellYes{47s}
				\\
			Michael set
				& \( 2 \)
				& \( 4 \)
				& \( 2 \)
				& \cellCountTime{??}{11\%}
				& \cellCountTime{??}{29\%}
				& \cellCountTime{??}{30\%}
				& \cellCountTime{??}{15\%}
				& \cellCountTime{??}{6\%}
				& \cellYes{306s} 
				\\
			Michael set {(wait-free search)}
				& \( 2 \)
				& \( 4 \)
				& \( 2 \)
				& \cellCountTime{??}{11\%}
				& \cellCountTime{??}{28\%}
				& \cellCountTime{??}{30\%}
				& \cellCountTime{??}{15\%}
				& \cellCountTime{??}{6\%}
				& \cellYes{246s}
				\\
			Harris set
				& \( 2 \)
				& \( 4 \)
				& \( 2 \)
				& \cellCountTime{??}{7\%}
				& \cellCountTime{??}{8\%}
				& \cellCountTime{??}{19\%}
				& \cellCountTime{??}{32\%}
				& \cellCountTime{??}{4\%}
				& \cellYes{1378s}
				\\
			Harris set {(wait-free search)}
				& \( 2 \)
				& \( 4 \)
				& \( 2 \)
				& \cellCountTime{??}{8\%}
				& \cellCountTime{??}{10\%}
				& \cellCountTime{??}{17\%}
				& \cellCountTime{??}{34\%}
				& \cellCountTime{??}{3\%}
				& \cellYes{1066s}
				\\
			\bottomrule
		\end{tabularx}
\end{table*}

We also stress-tested \plankton with faulty variants of the benchmarks.
All buggy benchmarks failed verification.
Note that \plankton does not implement error explanation techniques, which are beyond the scope of the present paper.


\section{Related Work}
\label{Section:Related}


\smartparagraph{Program Logics with History and Prophecy}
Program logics have been extended by mechanisms for temporal reasoning in various ways~\cite{DBLP:conf/concur/FuLFSZ10,DBLP:conf/esop/GotsmanRY13,DBLP:conf/esop/SergeyNB15,DBLP:conf/ecoop/DelbiancoSNB17,DBLP:conf/sas/BellAW10,DBLP:conf/popl/ParkinsonBO07,DBLP:conf/wdag/HemedRV15,DBLP:conf/pldi/LiangF13,DBLP:journals/tcs/AbadiL91,DBLP:books/daglib/0080029,DBLP:series/txcs/Schneider97}.

The work closest to ours is HLRG~\cite{DBLP:conf/concur/FuLFSZ10}, a separation logic based on local rely-guarantee~\cite{DBLP:conf/popl/Feng09} that tracks and reasons about history information, and its variation~\cite{DBLP:conf/esop/GotsmanRY13}.
The separation algebra behind HLRG is constructed like ours.
The focus of~\cite{DBLP:conf/concur/FuLFSZ10,DBLP:conf/esop/GotsmanRY13}, however, are temporal operators in the assertion language and means of reasoning about them in the program logic.
We only have now and past, but add the ability to propagate information between them.
The simplicity of our approach enables automation (\cite{DBLP:conf/concur/FuLFSZ10,DBLP:conf/esop/GotsmanRY13} has not been implemented in any automated or interactive tool, as far as we know).
A minor difference is that we work over general separation algebras to integrate flows~\cite{DBLP:journals/pacmpl/KrishnaSW18,DBLP:conf/esop/KrishnaSW20} easily and make the requirement of frameability explicit.

A program logic with temporal information based on different principles appears in~\cite{DBLP:conf/esop/SergeyNB15,DBLP:conf/ecoop/DelbiancoSNB17}. 
There, histories are sub-computations represented by timestamped sets of events.
The product of histories is disjoint union.
While highly expressive, we are not aware of implementations of the approach.
Since our goal is automated proof construction, we strive for assertions that are simple to prove, instead.

A separation logic for proving producer-consumer applications is proposed in~\cite{DBLP:conf/sas/BellAW10}. 
The logic uses history but is domain-specific and provides no mechanism to reason about the temporal development.
A non-blocking stack with memory reclamation is verified in \cite{DBLP:conf/popl/ParkinsonBO07}.
The proof relies on history information stored in auxiliary variables and manipulated by ghost code.
The ghost code is justified by informal arguments (outside the program logic).
We do not consider memory reclamation as it can be verified separately~\cite{DBLP:journals/pacmpl/MeyerW19,DBLP:journals/pacmpl/MeyerW20}.
Beyond linearizability, history variables have recently been used to give specs to non-linearizable objects~\cite{DBLP:conf/wdag/HemedRV15}. 

Several separation logics have been extended with prophecy variables~\cite{DBLP:conf/pldi/LiangF13,DBLP:journals/pacmpl/JungLPRTDJ20}, which complement history-based reasoning with a mechanism to speculate about future events.
However, prophecies are not well-suited for automatic proofs because they rely on backward reasoning~\cite{DBLP:conf/cav/BouajjaniEEM17}.

History reasoning has been used early on in program verification \cite{DBLP:journals/tcs/AbadiL91}.
In program logics \cite{DBLP:books/daglib/0080029,DBLP:series/txcs/Schneider97}, the focus has been on causality formulas which, in our notation, take the form ${\nowOf{\apred}\Rightarrow \pastOf{\apredp}}$.
Our history reasoning is more flexible, in particular incorporates hindsight reasoning (see below), and inherits the benefits of modern separation logics.

Overall, the existing program logics with history are heavier than ours while missing the important trick of communicating information from the current state to the past by means of logical variables shared between the two.


\smartparagraph{Hindsight Reasoning}
The idea of propagating information from the current state into the past is inspired by the recent hindsight theory~\cite{DBLP:conf/podc/OHearnRVYY10,DBLP:conf/wdag/Lev-AriCK15,DBLP:conf/wdag/FeldmanE0RS18,DBLP:journals/pacmpl/FeldmanKE0NRS20}.
Hindsight lemmas ensure that information about a data structure obtained by sequential reasoning (typically the reachability of keys) remains valid for concurrent executions. 
The argument behind such results is that the existence of a sequentially-reachable state implies the existence of a related concurrently-reachable state in the past. 
The implication requires that updates to the structure do not interfere with the reachability condition one tries to establish~(\emph{forepassing condition} in~\cite{DBLP:journals/pacmpl/FeldmanKE0NRS20}).

So far, hindsight reasoning has been limited to pencil-and-paper proofs, with the exception of the \code{poling} tool~\cite{DBLP:conf/cav/ZhuPJ15}.
\atoolname{poling} automates the specific hindsight lemma of \citet{DBLP:conf/podc/OHearnRVYY10}.
Unlike histories, it does not immediately generalize to other forms of retrospective reasoning, like \cite{DBLP:conf/wdag/FeldmanE0RS18,DBLP:journals/pacmpl/FeldmanKE0NRS20,DBLP:conf/wdag/Lev-AriCK15}.

Our program logic makes past states explicit and can be understood as a formal framework in which to execute hindsight reasoning.
Indeed, the sequential-to-concurrent lifting of hindsight matches the thread-modular nature of our logic.
Executing hindsight arguments in our framework brings several benefits.
For our engine, hindsight arguments provide a \emph{strategy} for finding history information.
For hindsight theory, it not only gives the classical benefits of program logics like precision and mechanization resp. automation.
One also inherits the other features of our logic.
We found futures indispensable to prove the Harris set. 
As an interesting remark, our work solves a limitation that has been criticized in~\cite{DBLP:journals/pacmpl/FeldmanKE0NRS20}, namely that the local-to-global lifting of the keyset theorem would not apply to optimistic algorithms with future-dependent linearization points.
We simply invoke the theorem below past predicates.

A limitation of the existing hindsight theory as well as our work is that it does not apply to algorithms with impure future-dependent linearization points like the Herlihy-Wing queue~\cite{DBLP:journals/toplas/HerlihyW90}.
We leave the extension of the theory to such algorithms as future work.
However, pure future-dependent linearization points are more common in concurrent data structures.

Atomic triples in TADA~\cite{DBLP:conf/ecoop/PintoDG14}, CAP~\cite{DBLP:conf/ecoop/Dinsdale-YoungDGPV10}, and Iris~\cite{DBLP:conf/popl/JungSSSTBD15} are specifications that justify a logical notion of atomicity for operations whose execution may take more than one physical step.
This makes them suitable for compositional reasoning about nested modules with logically atomic specifications.
The lifting of our program logic to prove linearizability is inspired by the reasoning principles underlying atomic triples.

When it comes to pure future-dependent linearization points, our retrospective linearization with rule~\ruleref{lin-past} compares favorably to atomic triples. Proofs relying on atomic triples must typically implement intricate helping protocols that transfer ownership of the update tokens $\anobl{\asspec}$ and $\aful{\asspec}{v}$ between the thread that is linearized and the thread where the linearization point occurs.
This is necessary because the update token trade must happen at the very moment when the linearization point occurs~\cite{DBLP:journals/pacmpl/JungLPRTDJ20,DBLP:journals/pacmpl/PatelKSW21}.
Our technique avoids such helping protocols altogether, which aids proof automation.


\smartparagraph{Futures}
Our futures are nothing but Hoare triples in separation logic (with separating implication), and their use as assertions is well-known from program logics like Iris~\cite{DBLP:journals/jfp/JungKJBBD18}.
What we add is the observation that futures capture complex heap updates by iteratively combining futures of small updates found during the traversal preparing the complex update.
This iterative combination is the key novelty of our development.
It allows us to reason about updates of unbounded heap regions by means of updates of bounded regions.

Futures can be thought of as the opposite of atomic triples in that they prove the specification of a single physically atomic command like a \code{CAS} using a sequence of logical ghost steps.

A method for automatically handling updates affecting unbounded heap regions is proposed in \cite{DBLP:journals/pacmpl/Ter-GabrielyanS19}, however, their method is tailored towards reachability.
Being Hoare triples, our futures are not restricted to a specific class of properties.


\smartparagraph{Automation}
There is a considerable body of work on the automated verification of concurrent data structures.
For static linearization points, there are tools~\cite{DBLP:conf/tacas/AbdullaHHJR13} and well-chosen abstract domains~\cite{DBLP:conf/esop/AbdullaJT18}.
For dynamic linearization points, there are reductions to safety verification~\cite{DBLP:conf/esop/BouajjaniEEH13,DBLP:conf/icalp/BouajjaniEEH15,DBLP:conf/cav/BouajjaniEEM17}.
Common to these works is that, in the end, they rely on a state-space search whereas our approach reasons in a program logic.
Notably, the \atoolname{poling} tool \cite{DBLP:conf/cav/ZhuPJ15} extends \atoolname{cave} \cite{DBLP:conf/vmcai/Vafeiadis10,DBLP:conf/cav/Vafeiadis10,DBLP:conf/vmcai/Vafeiadis09} to support dynamic linearization points, e.g., to verify intricate stacks and queues (which our tool \plankton does not support because they are not search structures).
Related is also~\cite{DBLP:conf/popl/ItzhakyBILNS14} in the sense that flows in particular can express heap paths. 
But we are not interested in verification condition generation and complete reductions to SMT, but rather proof generation, including invariant synthesis. 

Other promising tools automating program logics include \atoolname{Starling} \cite{DBLP:conf/cav/WindsorDSP17}, \atoolname{Caper} \cite{DBLP:conf/esop/Dinsdale-YoungP17}, \atoolname{Voila} \cite{DBLP:conf/fm/WolfSM21}, and \atoolname{Diaframe} \cite{DBLP:conf/pldi/MulderKG22}. However, these are closer to proof-outline checkers when compared to our tool. In particular, they do not perform loop invariant and interference inference or try to identify linearization points. Instead, they target more complex logics that are not designed for ease of automation.


\begin{acks}
This work is funded in parts by the \grantsponsor{GS100000001}{National Science Foundation}{http://dx.doi.org/10.13039/100000001} under grant~\grantnum{GS100000001}{1815633}.
The third author is supported by a Junior Fellowship from the Simons Foundation (855328, SW).
\end{acks}

	\bibliography{bib}

	\techreport{

\clearpage
\appendix

\section{Full History Proof for Harris' Set}
\label{Section:HarrisHistoryFull}

Figure~\ref{harris:proof_search_full} shows the full proof outline for the linearizability of \code{find} from the Harris' set.

\begin{figure}[!h]
\newcommand{\MSTAR}{\mkern+1mu\mstar\mkern+1mu}
\begin{lstlisting}[language=SPL,belowskip=.2em]
$\makeTeal{\travinv(N,M,l,\lnext,\lmark,t) \,\defeq\, \tinv(N,M,l,\lnext,\lmark,t) \MSTAR \futharris(M,l,\lnext,t)}$
$\makeTeal{\hist(P, t) \,\defeq\, \pastOf{\left(\,\old{\nodeof{t}} \MSTAR (P \Rightarrow \old{(k \in \keysetof{t})})\,\right)} \MSTAR (\old{\markvarof{t}} \Rightarrow \markvarof{t}) \MSTAR \keyvarof{t} = \old{\keyvarof{t}}}$
\end{lstlisting}
\begin{lstlisting}[language=SPL,escapechar=@, belowskip=.3em]
$\annot{
  \exists N\,M.~\travinv(N,M,l,\lnext,\lmark,t) \MSTAR \hist(k \leq \old{\keyvarof{\lnext}}, \lnext)
}$
procedure traverse($k$: K,$\,$$l$: N,$\,$$\lnext$: N,$\,$$\lmark$: Bool,$\,$$t$: N) {
  val $\tnext$, $\tmark$ = atomic {$t$.next, $t$.mark}
  $\annot{
    \travinv(N,M,l,\lnext,\lmark,t) \MSTAR \hist(k \leq \old{\keyvarof{\lnext}}, \lnext)
  }$
  skip
  $\annot{
    \travinv(N,M,l,\lnext,\lmark,t) \MSTAR \hist(k \leq \old{\keyvarof{\lnext}}, \lnext)
    \MSTAR \bigl(\keyvarof{t}<k \Rightarrow \hist((k \leq \old{\keyvarof{\tnext}} \land \neg\tmark), \tnext)\bigr)
  }$
  if ($\tmark$) {
    $\annot{
      \travinv(N,M,l,\lnext,\lmark,\tnext) \MSTAR \hist(k \leq \old{\keyvarof{\lnext}}, \lnext)
    }$
    return traverse($k$, $l$, $\lnext$, $\lmark$, $\tnext$)
  } else if ($t$.key < $k$) {
    $\annot{
      \travinv(N,M,t,\tnext,\tmark,\tnext) \MSTAR \hist(k \leq \old{\keyvarof{\tnext}}, \tnext)
    }$
    return traverse($k$, $t$, $\tnext$, $\tmark$, $\tnext$)
  } else {
    $\annot{
      \travinv(N,M,l,\lnext,\lmark,t) \MSTAR t \neq \head \MSTAR t \neq l \MSTAR
      \keyvarof{l} < k \leq \keyvarof{t} \MSTAR \hist(k \leq \old{\keyvarof{\lnext}}, \lnext)
    }$ 
    return ($l$, $\lnext$, $\lmark$, $t$)
} }
$\annot{
  (l,\lnext,\lmark,r).\; \exists N\,M.\;
  \travinv(N,M,l,\lnext,\lmark,r) \MSTAR r \neq \head
  \MSTAR \keyvarof{l} < k \leq \keyvarof{r} \MSTAR \hist(k \leq \old{\keyvarof{\lnext}}, \lnext)
}$
\end{lstlisting}
\begin{lstlisting}[language=SPL,escapechar=@]
$\annot{\exists N.\; \inv(N) * -\infty < k < \infty}$
procedure find($k$: K) : N * N {
  val $\hnext$, $\hmark$ = atomic {head.next, head.mark}
  $\annot{
    \travinv(N,\{\head,\hnext\},\head,\hnext,\hmark,\hnext) \MSTAR \hnext = \nextvarof{head}
  }$
  skip
  $\annot{
    \travinv(N,\{\head,\hnext\},\head,\hnext,\hmark,\hnext)
    \MSTAR \hist(k \prall{\leq} \old{\keyvarof{\hnext}}, \hnext)\!
  }$ 
  val $l$, $\lnext$, $\lmark$, $r$ = traverse($k$, head, $\hnext$, $\hmark$, $\hnext$)
  $\annot{
    \travinv(N,M,l,\lnext,\lmark,r) \MSTAR r \neq \head \MSTAR
    \keyvarof{l} < k \leq \keyvarof{r} \MSTAR \hist(k \leq \old{\keyvarof{\lnext}}, \lnext)
  }$
  val $\vsucc$ = $\lnext$ == $r$ || atomic { // CAS
    $l$.next == $\lnext$ && $l$.mark == $\lmark$ ? {
      $\annot{
        \nextvarof{l} = \lnext \MSTAR  \neg \markvarof{l} \MSTAR \set{l,\lnext,r} \subseteq M \subseteq N \MSTAR
        \futharris(M, l, \lnext, r)  \MSTAR \keyvarof{l} < k \leq \keyvarof{r} \MSTAR \inv(N) 
      }$ 
      $l$.next := $r$; true 
      $\annot{
        \inv(N) \MSTAR \set{l,r} \subseteq N \MSTAR (\keyvarof{l},\infty] \subseteq \flowvarof{r} \MSTAR
        \keyvarof{l} < k \leq \keyvarof{r}
      }$ 
      skip
      $\annot{
        \inv(N) \MSTAR \set{l,r} \subseteq N \MSTAR \hist(\true, r)
      }$ 
      } : false
  }
  $\annot{
    \inv(N) \MSTAR \set{l,r} \subseteq N \MSTAR \left(\vsucc \; \Rightarrow \hist(\true, r) \right)
  }$ 
  if ($\vsucc$ && !$r$.mark) { 
    $\annot{
      \inv(N) \MSTAR \set{l,r} \subseteq N \MSTAR \neg\old{\markvarof{r} \MSTAR} \hist(\true, r)
    }$ 
    return ($l$, $r$)
  } else find($k$)
}
$\annot{(l,r).~ \exists N.\; \inv(N) \MSTAR \set{l,r} \subseteq N \MSTAR \neg\old{\markvarof{r} \MSTAR} \hist(\true, r)}$
$\annot{
  (l,r).~ \exists N.~
  \inv(N) \MSTAR \set{l,r} \subseteq N \MSTAR \keyvarof{r}=\old{\keyvarof{r}} \MSTAR
  \neg\old{\markvarof{r}} \MSTAR \pastOf{\left(\,\old{\nodeof{r}} \MSTAR \old{(k \in \keysetof{r})}\,\right)}
}$
\end{lstlisting}
\caption{Linearizability proof outline for Harris' set \code{find}.\label{harris:proof_search_full}}
\end{figure}



\section{Additional Material for Section~\ref{Section:Preliminaries}}

\begin{example}
\label{ex-heap-semantics}
We use a standard heap semantics of separation logic as a running example throughout this section. We consider heaps consisting of objects mapping fields to values. Let $\setsel$ be a set of \emph{field selectors}. Let further $\setvalues$ be a set of values, $\setaddrs \subseteq \setvalues$ an infinite set of \emph{addresses}, and $\setsvars$ a set of \emph{program variables}. A \emph{stack} $\astack \in \setstacks \defeq \setsvars \to \setvalues$ is an assignment from program variables to values.
A \emph{heap graph} is a tuple $\aheapgraph = (\setnodes, \sval)$ consisting of a finite set of nodes $\setnodes\subseteq\setaddrs$ and a total valuation of the selectors $\sval:\setnodes\times\setsel\rightarrow \setvalues$. Note that selectors may evaluate to nodes outside the heap graph. Let $\setheapgraphs$ be the set of all heap graphs.

The composition of heap graphs is the disjoint union, $\aheap_1 \heapmult \aheap_2 \defeq (\aheap_1.\setnodes \uplus \aheap_2.\setnodes, \aheap_1.\sval \uplus \aheap_2.\sval)$, and the empty heap is the only unit, $\heapemp \defeq \set{(\emptyset, \emptyset)}$. 
We extend the composition to pairs of stack and heap by defining $(\astack_1, \aheap_1) \stackheapmult (\astack_2, \aheap_2) \defeq (\astack_1, \aheap_1 \heapmult \aheap_2)$ if $\astack_1 = \astack_2$.
We then define the global states as $(\setshared, \sharedmult, \sharedemp) \defeq (\setheaps, \heapmult, \heapemp)$ and the local states as $(\setlocal, \localmult, \localemp) \defeq (\setstacks \times \setheaps, \stackheapmult, \setstacks \times \heapemp)$. A state is a pair of global and local state whose heaps are disjoint, $\setstates \defeq \setcond{(\aheap_\sharedindex, (\astack_\localindex,\aheap_\localindex))}{\aheap_\sharedindex \heapmultdef \aheap_\localindex}$. 
\end{example}

\begin{example}
  Continuing with Example~\ref{ex-heap-semantics}, we define the semantics of a field update command $\assignof{l.\nextsel}{r}$ that assigns the field $\nextsel$ of the object at address $l$ to value $r$ as follows:
  \begin{align*}
    \semComOf{\assignof{l.\nextsel}{r}}{\aheap_\sharedindex, (\astack, \aheap_\localindex)}
    \defeq
    \begin{cases}
      \set{(\aheap_\sharedindex, (\astack, \aheap_\localindex[(\astack(l),\nextsel) \mapsto \astack(r)]))} & \text{if } \astack(l) \in \aheap_\localindex.\setnodes\\
      \set{(\aheap_\sharedindex[(\astack(l),\nextsel) \mapsto \astack(r)], (\astack, \aheap_\localindex))} & \text{if } \astack(l) \in \aheap_\sharedindex.\setnodes\\
      \abort & \text{otherwise}\ .
    \end{cases}
  \end{align*}
  We lift $\semCom{\mathtt{\mathit{l.\nextsel} := \mathit{r}}}$ to predicates in the expected way and also define $\semComOf{\mathtt{\mathit{l.\nextsel} := \mathit{r}}}{\abort} = \abort$.
\end{example}

\begin{proof}[Proof of Lemma~\ref{lem-composed-state-algebra}]
  Assume that $(\setshared, \sharedmult, \sharedemp)$ and $(\setlocal, \localmult, \localemp)$ are separation algebras and assume that $\setstates$ is such that $\sharedemp \times \localemp \subseteq \setstates \subseteq \setshared \times \setlocal$, and $\setstates$ is closed under decomposition.

  The laws of commutativity and units follow directly from the fact that $\statemult$ is the component-wise lifting of $\sharedmult$ and $\localmult$ on $\setshared$ and $\setlocal$, and that $\sharedemp \times \localemp \subseteq \setstates$.

  To show associativity of $\statemult$, let $\astate_1,\astate_2,\astate_3 \in \setstates$ such that $\astate_2 \statemultdef \astate_3$ and $\astate_1 \statemultdef (\astate_2 \statemult \astate_3)$. We have:
  \begin{align*}
\astate_1 \statemult (\astate_2 \statemult \astate_3) = {} &  (\astate_1.\ashared \sharedmult (\astate_2.\ashared \sharedmult \astate_3.\ashared), \astate_1.\alocal \localmult (\astate_2.\alocal \localmult \astate_3.\alocal))\\
= {} & ((\astate_1.\ashared \sharedmult \astate_2.\ashared) \sharedmult \astate_3.\ashared, (\astate_1.\alocal \localmult \astate_2.\alocal) \localmult \astate_3.\alocal)
 = {} (\astate_1 \statemult \astate_2) \statemult \astate_3
\end{align*}
where the second equality follows by associativity of $\sharedmult$ and $\localmult$, and the third equality holds because $\setstates$ is closed under decomposition.

The other direction follows by commutativity.
\end{proof}


\section{Flows}\label{Section:Flows}

The flow framework~\cite{DBLP:conf/esop/KrishnaSW20,DBLP:journals/pacmpl/KrishnaSW18} enables local reasoning about inductive properties of graphs. Intuitively, the framework views a heap as a (data) flow graph that induces \emph{flow constraints} prescribing how \emph{flow values} are propagated along the pointer edges of the heap. A \emph{flow} is a solution to these flow constraints and assigns a flow value to every node in the graph. For instance, one can define a \emph{path-counting flow} that computes for every node $\anode$ the number of paths that reach $\anode$ from some dedicated root node. By imposing an appropriate invariant on the flow at every node, one can then express inductive properties about the heap. For example, using the path-counting flow one can capture structural invariants such as that a heap region forms a list, a tree, or a DAG.

Flow graphs are endowed with a separation algebra that extends disjoint composition of heap graphs $\aheap_1 \mstar \aheap_2$ with a stronger definedness condition. This stronger condition ensures that the flow of $\aheap_1 \mstar \aheap_2$ is also the union of the flows of $\aheap_1$ and $\aheap_2$. This separation algebra admits frame rules that capture whether a local update of the heap graph $h_1$ preserves the flow of the context $\aheap_2$ (and hence the flow-based inductive invariant of $\aheap_1 \mstar \aheap_2$).

In the following, we develop a new meta theory of flows that, unlike~\cite{DBLP:conf/esop/KrishnaSW20,DBLP:journals/pacmpl/KrishnaSW18}, is geared towards automatic inference of flow invariants, yielding a flow-based abstract domain for program analysis.

\smartparagraph{Separation Algebra}
For a commutative monoid $(\amonoid, \monadd, \monunit)$ we define the binary relation $\leq$ on $\amonoid$ as $\amonvalp \leq \amonval$ if there is $\amonvalpp\in\amonoid$ with $\amonval = \amonvalp+\amonvalpp$.
Flow values are drawn from a \emph{flow monoid}, which is a commutative monoid for which the relation $\leq$ is a complete partial order. In the following, we fix a flow monoid $(\amonoid, \monadd, \monunit)$.

Let $\monfun{\amonoid\rightarrow \amonoid}$ be the monotonic functions in $\amonoid\rightarrow\amonoid$. We lift $\monadd$ and $\leq$ to functions $\amonoid \to \amonoid$ in the expected way.
An iterated sum over an empty index set $\sum_{i\in \emptyset} \amonval_i$ is defined to be $\monunit$.

A \emph{flow constraint} is a tuple $\aflowconstraint=(\setnodes, \edges, \inflow)$ consisting of a finite set of variables $\setnodes\subseteq \setaddrs$, a set of edges $\edges:\setnodes\times\setaddrs\rightarrow\monfun{\amonoid\rightarrow \amonoid}$ labeled by monotonic functions, and an \emph{inflow} $\inflow: (\setaddrs\setminus\setnodes)\times\setnodes\rightarrow\amonoid$. We use $\setflowconstraints$ for the set of all flow constraints and denote the empty flow constraint by $\aflowconstraint_{\emptyset}\defeq (\emptyset, \emptyset, \emptyset)$.

We define two derived functions for flow constraints.
The \emph{flow} is the least function $\fval:\setnodes\rightarrow\amonoid$ satisfying \[\fvalof{\anode} = \inflowof{\anode}+\rhsatof{\anode}{\fval} \quad\text{for all } \anode\in\setnodes\ .\]
Here, $\inflowof{\anode}\defeq\sum_{\anodep\in(\setaddrs\setminus\setnodes)}\inflowof{\anodep, \anode}$ is a monoid value and $\rhs_{\anode} \defeq \sum_{\anodep\in \setnodes}\edges_{(\anodep, \anode)}$ is a monotone function from $\monfun{(\setnodes\rightarrow \amonoid)\rightarrow \amonoid}$.
We also define the \emph{outflow} $\outflow:\setnodes\times(\setaddrs\setminus\setnodes)\rightarrow\amonoid$ by $\outflowof{\anode, \anodep}\defeq\edges_{(\anode, \anodep)}(\fvalof{\anode})$.

Intuitively, a flow constraint $\aflowconstraint$ abstracts a heap graph $\aheap$ with domain $\aflowconstraint.\setnodes$ whose contents induce the edge functions $\aflowconstraint.\edges$. The inflow $\aflowconstraint.\inflow$ captures the contribution of $\aheap$'s context to $\aflowconstraint.\fval$ and the outflow $\aflowconstraint.\outflow$ is the contribution of $\aflowconstraint$ to the context's flow. In fact, if we abstract from the specific inflow of $\aflowconstraint$, we can view $\aflowconstraint$ as a transformer from inflows to outflows. We make this intuition formally precise later. However, let us first discuss some examples.

\begin{example}
  Path-counting flows are defined over the flow monoid of addition on natural numbers extended with the first limit ordinal, $\nat \cup \set\omega$. Here, $\monadd$ is defined to be absorbing on $\omega$. \Cref{fig-path-counting-flow} shows two flow constraints $\aflowconstraint_1$ and $\aflowconstraint_2$ for this flow monoid together with their flows and outflows. For example, we have $\aflowconstraint_1.\fval(y)=\aflowconstraint_1.\inflow(r,y) + \lambda_{\mathit{id}}(\aflowconstraint_1.\fval(x))=1+1=2$. The flow of each node corresponds to the number of paths starting from $r$ that reach the node in the combined graph (shown on the right).
  \begin{figure}
    \centering
  \begin{tikzpicture}[>=stealth, scale=0.8, every node/.style={scale=0.8}]

    \def\xsep{2}
    \def\ysep{-1.8}
    \def\xshift{2.5}
    \def\sd{-.25}
    \def\s{.4}
    \def\t{.8}

    \node[unode] (x) {$x$};
    \node[lbl] (flowx) at ($(x) + (-.65, 0)$) {$ 1$};
    \node[unode, blue, dashed] (r) at ($(x) + (\xsep, 0)$) {$r$};
    \node[unode] (y) at ($(x) + (0, \ysep)$) {$y$};
    \node[lbl] (flowy) at ($(y) + (.65, 0)$) {$ 2$};
    \node[unode, blue, dashed] (u) at ($(r) + (0, \ysep)$) {$u$};
    \node[lbl,blue] (flowu) at ($(u) + (.65, 0)$) {$ \omega$};
    \node[unode] (z) at ($(y) + (0, \ysep)$) {$z$};
    \node[lbl] (flowz) at ($(z) + (-.65, 0)$) {$ \omega$};
    \node[unode, blue, dashed] (v) at ($(u) + (0, \ysep)$) {$v$};

    \draw[edge, blue, dashed] (r) to node[above]{$ 1$} (x);
    \draw[edge, blue, dashed] (r) to node[above]{$ 1$} (y);
    \draw[edge] (x) to node[left] {$\lambda_{\mathit{id}}$} (y);
    \draw[edge] (x) to[bend right=40] node[left] {$\lambda_{\mathit{id}}$} (z);
    \draw[edge] (z) to node[above] {$\lambda_{\mathit{id}}$} (u);
    \draw[edge, blue, dashed] (v) to node[above]{$ \omega$} (z);

    \begin{scope}[on background layer] 
      \draw[draw=red!10, rounded corners, thick, fill=red!10]
      ($(x.north west) + (-\t-.4, \s)$)
      -- ($(x.north east) + (\t, \s)$)
      -- ($(z.south east) + (\t, \sd)$)
      -- ($(z.south west) + (-\t-.4, \sd)$) -- cycle;
    \end{scope}
  \end{tikzpicture}\quad\quad
  \begin{tikzpicture}[>=stealth, scale=0.8, every node/.style={scale=0.8}]

    \def\xsep{2}
    \def\ysep{-1.8}
    \def\xshift{2.5}
    \def\sd{-.25}
    \def\s{.4}
    \def\t{.8}

    \node[unode, red, dashed] (x) {$x$};
    \node[lbl,red] (flowx) at ($(x) + (-.65, 0)$) {$ 1$};
    \node[unode] (r) at ($(x) + (\xsep, 0)$) {$r$};
    \node[lbl] (flowr) at ($(r) + (.65, 0)$) {$ 0$};
    \node[unode, red, dashed] (y) at ($(x) + (0, \ysep)$) {$y$};
    \node[lbl,red] (flowy) at ($(y) + (.65, 0)$) {$ 1$};
    \node[unode] (u) at ($(r) + (0, \ysep)$) {$u$};
    \node[lbl] (flowu) at ($(u) + (.65, 0)$) {$ \omega$};
    \node[unode, red, dashed] (z) at ($(y) + (0, \ysep)$) {$z$};
    \node[lbl,red] (flowz) at ($(z) + (-.65, 0)$) {$ \omega$};
    \node[unode] (v) at ($(u) + (0, \ysep)$) {$v$};
    \node[lbl] (flowv) at ($(v) + (.65, 0)$) {$ \omega$};

    \draw[edge] (r) to node[above]{$\lambda_1$} (x);
    \draw[edge] (r) to node[above]{$\lambda_1$} (y);
    \draw[edge] (u) to node[right] {$\lambda_{\mathit{id}}$} (v);
    \draw[edge, red, dashed] (z) to node[above] {$ \omega$} (u);
    \draw[edge] (v) to node[above right] {$\lambda_{\mathit{id}}$} (z);

    \begin{scope}[on background layer] 
      \draw[draw=blue!10, rounded corners, thick, fill=blue!10]
      ($(r.north west) + (-\t, \s)$)
      -- ($(r.north east) + (\t, \s)$)
      -- ($(v.south east) + (\t, \sd)$)
      -- ($(v.south west) + (-\t, \sd)$) -- cycle;
    \end{scope}
  \end{tikzpicture}\hfill
  \begin{tikzpicture}[>=stealth, scale=0.8, every node/.style={scale=0.8}]

    \def\xsep{2}
    \def\ysep{-1.8}
    \def\xshift{2.5}
    \def\sd{-.25}
    \def\s{.4}
    \def\t{.8}

    \node[unode] (x) {$x$};
    \node[lbl] (flowx) at ($(x) + (-.65, 0)$) {$ 1$};
    \node[unode] (r) at ($(x) + (\xsep, 0)$) {$r$};
    \node[lbl] (flowr) at ($(r) + (.65, 0)$) {$ 0$};
    \node[unode] (y) at ($(x) + (0, \ysep)$) {$y$};
    \node[lbl] (flowy) at ($(y) + (.65, 0)$) {$ 2$};
    \node[unode] (u) at ($(r) + (0, \ysep)$) {$u$};
    \node[lbl] (flowu) at ($(u) + (.65, 0)$) {$\omega$};
    \node[unode] (z) at ($(y) + (0, \ysep)$) {$z$};
    \node[lbl] (flowz) at ($(z) + (-.65, 0)$) {$\omega$};
    \node[unode] (v) at ($(u) + (0, \ysep)$) {$v$};
    \node[lbl] (flowv) at ($(v) + (.65, 0)$) {$\omega$};

    \draw[edge] (r) to node[above]{$\lambda_1$} (x);
    \draw[edge] (r) to node[above]{$\lambda_1$} (y);
    \draw[edge] (u) to node[right] {$\lambda_{\mathit{id}}$} (v);
    \draw[edge] (v) to node[above right] {$\lambda_{\mathit{id}}$} (z);
    \draw[edge] (x) to node[left] {$\lambda_{\mathit{id}}$} (y);
    \draw[edge] (x) to[bend right=40] node[left] {$\lambda_{\mathit{id}}$} (z);
    \draw[edge] (z) to node[above] {$\lambda_{\mathit{id}}$} (u);

    \begin{scope}[on background layer] 
      \draw[draw=violet!10, rounded corners, thick, fill=violet!10]
      ($(x.north west) + (-\t-.4, \s)$)
      -- ($(r.north east) + (\t, \s)$)
      -- ($(v.south east) + (\t, \sd)$)
      -- ($(z.south west) + (-\t-.4, \sd)$) -- cycle;
    \end{scope}
  \end{tikzpicture}
  \caption{Two flow constraints $\aflowconstraint_1$ with $\aflowconstraint_1.\setnodes = \set{x,y,z}$ (left) and $\aflowconstraint_2$ with $\aflowconstraint_2.\setnodes = \set{r,u,v}$ (center) for the path-counting flow monoid $\nat \cup \set\omega$. The edge label $\lambda_{\mathit{id}}$ stands for the identity function and $\lambda_1$ for the constant $1$ function. Omitted edges are labeled by the constant $0$ function. Dashed edges represent the inflows. Nodes are labeled by their flow, respectively, outflow.
    The right side shows the composition $\aflowconstraint_1 \mstar \aflowconstraint_2$.\label{fig-path-counting-flow}}
  \end{figure}
\end{example}

\begin{example}
  \label{ex-keyset-flow}
  For our linearizability proofs of concurrent search structures we use a flow that labels every data structure node $x$ with its \emph{inset}, the set of keys~$k'$ such that a thread searching for~$k'$ traverses the node~$x$. That is, the flow monoid is sets of keys, $\powerset{\ZZ \cup \set{-\infty,\infty}}$, with set union as addition. \Cref{fig-keyset-flow} shows two flow constraints that abstract potential states of the Harris list. The idea is that an edge leaving a node~$x$ that stores a key~$k$ is labeled by the function~$\lambda_k$. This is because a search for $k' \in \ZZ$ will traverse the edge leaving~$x$ iff $k < k'$ or~$x$ is marked. In the figure,~$l$ and $r$ are assumed to be unmarked, storing keys~$6$ and~$8$, respectively. Node~$t$ is assumed to be marked. Flow constraint $\aflowconstraint_2$ can be thought as being obtained from $\aflowconstraint_1$ by physically unlinking the marked node $t$.

  \begin{figure}
    \centering
  \begin{tikzpicture}[>=stealth, scale=0.8, every node/.style={scale=0.8}]

    \def\xsep{1.8}
    \def\ysep{-1.8}
    \def\xshift{2.5}
    \def\sd{-.55}
    \def\s{.4}
    \def\t{.7}

    \node[unode,blue,dashed] (u) {$u$};
    \node[unode] (l) at ($(u) + (\xsep, 0)$) {$l$};
    \node[lbl] (flowl) at ($(l) + (0, -.65)$) {$(3,\infty)$};
    \node[unode] (t) at ($(l) + (\xsep, 0)$) {$t$};
    \node[lbl] (flowt) at ($(t) + (0, -.65)$) {$(6,\infty)$};
    \node[unode] (r) at ($(t) + (\xsep, 0)$) {$r$};
    \node[lbl] (flowr) at ($(r) + (0, -.65)$) {$(6,\infty)$};
    \node[unode,blue,dashed] (v) at ($(r) + (\xsep, 0)$) {$v$};
    \node[lbl,blue] (flowv) at ($(v) + (0, -.65)$) {$(8,\infty)$};

    \draw[edge, blue, dashed] (u) to node[above]{$(3,\infty)$} (l);
    \draw[edge] (l) to node[above]{$\lambda_6$} (t);
    \draw[edge] (t) to node[above]{$\lambda_{-\infty}$} (r);
    \draw[edge] (r) to node[above]{$\lambda_8$} (v);

    \begin{scope}[on background layer] 
      \draw[draw=red!10, rounded corners, thick, fill=red!10]
      ($(l.north west) + (-\t, \s)$)
      -- ($(r.north east) + (\t, \s)$)
      -- ($(r.south east) + (\t, \sd)$)
      -- ($(l.south west) + (-\t, \sd)$) -- cycle;
    \end{scope}
  \end{tikzpicture}\hfill
    \begin{tikzpicture}[>=stealth, scale=0.8, every node/.style={scale=0.8}]

    \def\xsep{1.8}
    \def\ysep{-1.8}
    \def\xshift{2.5}
    \def\sd{-.55}
    \def\s{.4}
    \def\t{.7}

    \node[unode,blue,dashed] (u) {$u$};
    \node[unode] (l) at ($(u) + (\xsep, 0)$) {$l$};
    \node[lbl] (flowl) at ($(l) + (0, -.65)$) {$(3,\infty)$};
    \node[unode] (t) at ($(l) + (\xsep, 0)$) {$t$};
    \node[lbl] (flowt) at ($(t) + (0, -.65)$) {$\emptyset$};
    \node[unode] (r) at ($(t) + (\xsep, 0)$) {$r$};
    \node[lbl] (flowr) at ($(r) + (0, -.65)$) {$(6,\infty)$};
    \node[unode,blue,dashed] (v) at ($(r) + (\xsep, 0)$) {$v$};
    \node[lbl,blue] (flowv) at ($(v) + (0, -.65)$) {$(8,\infty)$};

    \draw[edge, blue, dashed] (u) to node[above]{$(3,\infty)$} (l);
    \draw[edge] (l) to[bend left=30] node[above]{$\lambda_6$} (r);
    \draw[edge] (t) to node[above]{$\lambda_{-\infty}$} (r);
    \draw[edge] (r) to node[above]{$\lambda_8$} (v);

    \begin{scope}[on background layer] 
      \draw[draw=red!10, rounded corners, thick, fill=red!10]
      ($(l.north west) + (-\t, \s)$)
      -- ($(r.north east) + (\t, \s)$)
      -- ($(r.south east) + (\t, \sd)$)
      -- ($(l.south west) + (-\t, \sd)$) -- cycle;
    \end{scope}
  \end{tikzpicture}
  \caption{Two flow constraints $\aflowconstraint_1$ (left) and $\aflowconstraint_2$ (right) with $\aflowconstraint_1.\setnodes = \aflowconstraint_2.\setnodes = \set{l,t,r}$ for the keyset flow monoid $\powerset{\ZZ \cup \set{-\infty,\infty}}$. The edge label $\lambda_k$ for a key $k$ denotes the function $\lambda \amonval.\, (m \setminus [-\infty, k])$.\label{fig-keyset-flow}} 
  \end{figure}
\end{example}
%


To define the composition of flow constraints, $\aflowconstraint_1 \mstar \aflowconstraint_2$, we proceed in two steps.
We first define an auxiliary composition that may suffer from \emph{fake} flows, local flows that disappear in the composition.
The actual composition then restricts the auxiliary composition to rule out such fake flows.
Definedness of the auxiliary composition requires disjointness of the variables in $\aflowconstraint_1$ and $\aflowconstraint_2$.
Moreover, the outflow of one constraint has to match the inflow expectations of the other:
\begin{align*}
\aflowconstraint_1\statemultdef\statemultdef\aflowconstraint_2~~\text{if}~~&  \aflowconstraint_1.\setnodes\cap\aflowconstraint_2.\setnodes=\emptyset\\\text{and}~~&\forall \anode\in\aflowconstraint_1.\setnodes.\;\forall\anodep\in\aflowconstraint_2.\setnodes.\!\!\!
\begin{aligned}[t]
&\aflowconstraint_1.\outflowof{\anode, \anodep}=\aflowconstraint_2.\inflowof{\anode, \anodep}
\wedge \aflowconstraint_2.\outflowof{\anodep, \anode}=\aflowconstraint_1.\inflowof{\anodep, \anode}\,.
\end{aligned}
\end{align*}
The auxiliary composition removes the inflow that is now provided by the other component, denoted by $\aflowconstraint_1.\inflow'$ and $\aflowconstraint_2.\inflow'$:
\begin{align*}
\aflowconstraint_1\discup\aflowconstraint_2 \defeq &\, (\aflowconstraint_1.\setnodes\discup \aflowconstraint_2.\setnodes, \aflowconstraint_1.\edges\discup \aflowconstraint_2.\edges, \aflowconstraint_1.\inflow'\discup\aflowconstraint_2.\inflow')\\
\aflowconstraint_i.\inflow'(x,y) \defeq &\, \ite{(x \prall{\in} \aflowconstraint_{3-i}.\setnodes \land \aflowconstraint_{3-i}.\outflow(x,y)\prall{\neq}\monunit)}{\monunit\!}{\!\aflowconstraint_i.\inflow(x,y)} .
\end{align*}

To rule out fake flows, we incorporate a suitable equality on the flows into the definedness for the composition:
\begin{align*}
\aflowconstraint_1\statemultdef\aflowconstraint_2\quad\text{if}\quad  \aflowconstraint_1\statemultdef\statemultdef\aflowconstraint_2\;\;\wedge\;\;\aflowconstraint_1.\fval\discup\aflowconstraint_2.\fval = (\aflowconstraint_1\discup\aflowconstraint_2).\fval.
\end{align*}
Only if the latter equality holds, do we have the composition $\aflowconstraint_1\mstar\aflowconstraint_2\defeq\aflowconstraint_1\discup\aflowconstraint_2$.
It is worth noting that $\aflowconstraint_1.\fval\discup\aflowconstraint_2.\fval\geq (\aflowconstraint_1\discup\aflowconstraint_2).\fval$ always holds.
What definedness really asks for is the reverse inequality.

\begin{example}
  The right side of \Cref{fig-path-counting-flow} shows the composition $\aflowconstraint_1 \mstar \aflowconstraint_2$ of the flow constraints $\aflowconstraint_1$ (left) and $\aflowconstraint_2$ (center). Note that if the edge $(x,z)$ is removed from $\aflowconstraint_1$, then the composition is no longer defined since the flow of $\omega$ in the cycle formed by $z$, $u$, and $v$ becomes a fake flow that vanishes in the flow constraint $\aflowconstraint_1 \discup \aflowconstraint_2$ (the flow for $z$, $u$, and $v$ becomes~$0$).
\end{example}
\begin{lemma}\label{Lemma:FlowAlgebra}
$(\setflowconstraints, \mstar, \set{\aflowconstraint_{\emptyset}})$ is a separation algebra.
\end{lemma}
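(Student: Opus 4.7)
The plan is to verify in turn the commutativity, unit, and associativity laws, handling the bookkeeping for partial definedness carefully. Commutativity is straightforward: the predicate $\aflowconstraint_1 \statemultdef\statemultdef \aflowconstraint_2$ is symmetric in $\aflowconstraint_1, \aflowconstraint_2$ (disjointness of variables and the matching condition on outflow/inflow are both symmetric), and the auxiliary composition $\discup$ is defined componentwise via $\discup$ on functions with disjoint domains, which is commutative. Since $\aflowconstraint_1 \discup \aflowconstraint_2 = \aflowconstraint_2 \discup \aflowconstraint_1$, the flow-equality definedness refinement yields $\aflowconstraint_1 \statemultdef \aflowconstraint_2 \Leftrightarrow \aflowconstraint_2 \statemultdef \aflowconstraint_1$ and equality of the two compositions.

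For the unit laws I would take $\aflowconstraint_\emptyset = (\emptyset,\emptyset,\emptyset)$ and observe that $\aflowconstraint \statemultdef\statemultdef \aflowconstraint_\emptyset$ always holds: the variable sets are trivially disjoint and the matching condition on outflow/inflow is vacuous because $\aflowconstraint_\emptyset.\setnodes = \emptyset$. The adjusted inflow $\aflowconstraint.\inflow'$ equals $\aflowconstraint.\inflow$ since $\aflowconstraint_\emptyset$ has no nodes to provide replacement inflow, and $\aflowconstraint_\emptyset.\inflow' = \emptyset$. Thus $\aflowconstraint \discup \aflowconstraint_\emptyset = \aflowconstraint$, and the flow-equality refinement is trivially met, so $\aflowconstraint \mstar \aflowconstraint_\emptyset = \aflowconstraint$. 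Uniqueness of the unit and the second separation-algebra axiom about distinct units are vacuous since $\set{\aflowconstraint_\emptyset}$ is a singleton.

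The real work is associativity. I would first show a stronger statement for the auxiliary composition: whenever both $\aflowconstraint_1 \statemultdef\statemultdef \aflowconstraint_2$ and $(\aflowconstraint_1 \discup \aflowconstraint_2) \statemultdef\statemultdef \aflowconstraint_3$ hold, the analogous predicates for the opposite grouping hold and $(\aflowconstraint_1 \discup \aflowconstraint_2) \discup \aflowconstraint_3 = \aflowconstraint_1 \discup (\aflowconstraint_2 \discup \aflowconstraint_3)$. The node and edge components are just disjoint unions, so associativity there is clear; for the inflow component, I would expand $\aflowconstraint_i.\inflow'$ by cases depending on whether the source node lies in $\aflowconstraint_1$, $\aflowconstraint_2$, $\aflowconstraint_3$, or outside, and check that in each case the value produced is the same in both groupings---using the matching conditions to rewrite any outflow entry coming from a composed component as the inflow entry of the matching target. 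The matching-condition portion of $\statemultdef\statemultdef$ lifts cleanly because the outflow of $\aflowconstraint_1 \discup \aflowconstraint_2$ restricted to nodes in $\aflowconstraint_3$ is, by construction, the sum of the outflows from $\aflowconstraint_1$ and $\aflowconstraint_2$ into $\aflowconstraint_3$, and similarly going the other direction.

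The remaining step is to lift this to the real composition by showing the two flow-equality refinements correspond. Concretely, I would argue that if $\aflowconstraint_1 \mstar \aflowconstraint_2$ is defined and $(\aflowconstraint_1 \mstar \aflowconstraint_2) \mstar \aflowconstraint_3$ is defined, then the disjoint-union flow $\aflowconstraint_1.\fval \discup \aflowconstraint_2.\fval \discup \aflowconstraint_3.\fval$ agrees with $((\aflowconstraint_1 \discup \aflowconstraint_2) \discup \aflowconstraint_3).\fval$, which by the auxiliary associativity equals $(\aflowconstraint_1 \discup (\aflowconstraint_2 \discup \aflowconstraint_3)).\fval$; to conclude $\aflowconstraint_2 \mstar \aflowconstraint_3$ is defined I additionally need that $\aflowconstraint_2.\fval \discup \aflowconstraint_3.\fval = (\aflowconstraint_2 \discup \aflowconstraint_3).\fval$. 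This is the main obstacle: the direction $\geq$ is automatic (as remarked in the text), so I only need $\leq$, and for this I would use that the flow is defined as the least fixed point of a monotone operator and that the inflow contributions of $\aflowconstraint_2$ and $\aflowconstraint_3$ inside the pair $(\aflowconstraint_2 \discup \aflowconstraint_3)$ are bounded above by what they receive when embedded in the larger $(\aflowconstraint_1 \discup \aflowconstraint_2 \discup \aflowconstraint_3)$. Carefully tracking the fixed-point computations of the two flows and invoking the assumed equality on the larger composition should yield equality on the pair, closing the associativity argument and hence the proof.
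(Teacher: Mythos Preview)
Your overall plan matches the paper's: commutativity and units are routine, and associativity hinges on a least-fixed-point monotonicity comparison via the full three-way composition. Two points need correction, though.

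First, your ``stronger statement'' that $\statemultdef\statemultdef$ alone is associative does not hold. To verify $\aflowconstraint_2 \statemultdef\statemultdef \aflowconstraint_3$ you need $\aflowconstraint_2.\outflow(x,y) = \aflowconstraint_3.\inflow(x,y)$ for $x\in\aflowconstraint_2.\setnodes$, $y\in\aflowconstraint_3.\setnodes$. What you are given is $(\aflowconstraint_1\discup\aflowconstraint_2).\outflow(x,y) = \aflowconstraint_3.\inflow(x,y)$, and the left side is $\aflowconstraint_2.\edges_{(x,y)}\bigl((\aflowconstraint_1\discup\aflowconstraint_2).\fval(x)\bigr)$. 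This equals $\aflowconstraint_2.\outflow(x,y)$ only once you know $(\aflowconstraint_1\discup\aflowconstraint_2).\fval(x)=\aflowconstraint_2.\fval(x)$, i.e.\ only under the full $\aflowconstraint_1\statemultdef\aflowconstraint_2$ assumption. So you cannot decouple the $\discup$-associativity from the flow equalities; the paper accordingly does not attempt this separation and works directly with $\statemultdef$.

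Second, the direction of your monotonicity step is backwards. You write that the pair's inflows are ``bounded above'' by what they receive in the larger system; if that were the comparison you used, monotonicity of the least fixed point would give $(\aflowconstraint_2\discup\aflowconstraint_3).\fval \leq (\aflowconstraint_1\discup\aflowconstraint_2\discup\aflowconstraint_3).\fval|_{\aflowconstraint_2.\setnodes\cup\aflowconstraint_3.\setnodes}$, which is the automatic direction and does not help. The correct observation (and the one the paper makes) is that the inflow into $\aflowconstraint_2\discup\aflowconstraint_3$ from $\aflowconstraint_1$-nodes is exactly $\aflowconstraint_1$'s outflow, which the larger system's iteration only reaches at its fixed point. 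Hence every approximant of the pair's iteration dominates the corresponding (restricted) approximant of the larger one, giving $(\aflowconstraint_2\discup\aflowconstraint_3).\fval \geq (\aflowconstraint_1\discup\aflowconstraint_2\discup\aflowconstraint_3).\fval|_{\aflowconstraint_2.\setnodes\cup\aflowconstraint_3.\setnodes}$. Together with the assumed equality on the triple this yields $(\aflowconstraint_2\discup\aflowconstraint_3).\fval \geq \aflowconstraint_2.\fval\discup\aflowconstraint_3.\fval$, which is the missing inequality.
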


\smartparagraph{Flow Graphs}
Our heap model is a product between heap graphs and flow constraints.
The flow component is meant to provide ghost information about the heap graph component.
As we are interested in automating the reasoning, we will not modify the ghost information by ghost code but provide a mechanism to generate the flow constraint from the heap graph.
This allows us to keep the flow constraint largely implicit.

Heap graphs $\aheap \in \setheapgraphs$ are as defined in Example~\ref{ex-heap-semantics}. 

\begin{lemma}
$(\setheapgraphs, \mstar, \set{\aheapgraph_{\emptyset}})$ is a separation algebra.
\end{lemma}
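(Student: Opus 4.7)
The plan is to verify the three separation-algebra axioms directly from the definition of $\heapmult$ as disjoint union. Since a heap graph is a pair $(\setnodes, \sval)$ with $\sval$ a total function on $\setnodes \times \setsel$, and composition is the componentwise disjoint union, all properties reduce to the corresponding properties of disjoint union on sets and (partial) functions.

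First I would dispatch the unit axioms. The set of units is $\heapemp = \{(\emptyset, \emptyset)\}$, so there is exactly one unit. For any heap graph $\aheap$, we have $\aheap \heapmult (\emptyset, \emptyset) = (\aheap.\setnodes \uplus \emptyset, \aheap.\sval \uplus \emptyset) = \aheap$ (the disjointness condition holds trivially), which establishes the existence of $\stateunit_\aheap$. The second unit condition, that distinct units do not compose, is vacuous because $\heapemp$ is a singleton.

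Next I would verify commutativity. By definition, $\aheap_1 \heapmultdef \aheap_2$ iff the node sets $\aheap_1.\setnodes$ and $\aheap_2.\setnodes$ are disjoint (so that the union of the selector valuations is well-defined), a symmetric condition. When defined, $\aheap_1 \heapmult \aheap_2 = \aheap_2 \heapmult \aheap_1$ follows from the commutativity of $\uplus$ on sets and on functions with disjoint domains.

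Finally, for associativity, both $(\aheap_1 \heapmult \aheap_2) \heapmult \aheap_3$ and $\aheap_1 \heapmult (\aheap_2 \heapmult \aheap_3)$ are defined iff $\aheap_1.\setnodes$, $\aheap_2.\setnodes$, and $\aheap_3.\setnodes$ are pairwise disjoint; in that case both reduce to $(\aheap_1.\setnodes \uplus \aheap_2.\setnodes \uplus \aheap_3.\setnodes,\; \aheap_1.\sval \uplus \aheap_2.\sval \uplus \aheap_3.\sval)$ by associativity of $\uplus$. There is no real obstacle here --- the lemma is routine and its sole purpose is to cleanly lift the usual heap separation algebra into the present framework so that the flow-constraint product built on top of it (in the following subsection) inherits the separation-algebra structure by \Cref{lem-composed-state-algebra}.
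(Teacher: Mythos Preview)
Your proof is correct and follows the natural approach. The paper does not give a proof for this lemma at all; it is stated without proof (and without a label), evidently treated as a standard fact about heap graphs under disjoint-union composition. Your verification of the unit, commutativity, and associativity axioms is exactly what is needed.

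One small inaccuracy in your closing remark: the separation-algebra structure on flow graphs (Lemma~\ref{Lemma:FlowGraphs}) is not obtained via \Cref{lem-composed-state-algebra} from the heap-graph and flow-constraint algebras. The paper proves Lemma~\ref{Lemma:FlowGraphs} directly by showing that flow graphs form a submonoid of the Cartesian product $\setheapgraphs \times \setflowconstraints$, checking closure under composition and that the unit is a flow graph. This does not affect the correctness of your proof, only the motivational sentence at the end.
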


To relate heap graphs and flow constraints, we classify selectors as data and pointer selectors, $\setsel=\setdsel\discup\setpsel$, and also write a heap graph $(\setnodes, \sval)$ as $(\setnodes, \dval, \pval)$.
Only pointer selectors will induce edges in a flow constraint via which we forward flow values.
The data selectors will be used to determine the edge function.
To this end, we assume to be given a generator for the edge functions:
\begin{align*}
\genrhs:\setpsel\rightarrow(\setdsel\rightarrow\setvalues)\rightarrow\monfun{\amonoid\rightarrow \amonoid} \enspace. 
\end{align*}
It takes a pointer selector and a valuation of the data selectors and returns a monotonic function over the monoid of flow values.
The generator is a parameter to our development, like the flow monoid and the sets of data and pointer selectors.

\begin{example}
  For the Harris list we have a single pointer selector $\mathtt{next}$ and data selectors $\mathtt{mark}$ and $\mathtt{key}$ storing the mark bit and key of a node, respectively.\footnote{Technically, the mark bit is encoded in the value of $\mathtt{next}$. We here treat them separately for the sake of exposition.}
  In order to obtain the edge functions that abstract the Harris list as described in Example~\ref{ex-keyset-flow}, we use the following generator:
  \[\genrhsof{\nextsel, \dval} \defeq
  \ite{\dvalof{\keysel}\!=\!-\infty}{\lambda \amonval.\, (-\infty,\infty]}
  {\lambda_{\dvalof{\keysel}}}\]
  Recall that the condition $\dvalof{\keysel} = -\infty$ only holds for the $\head$ node. We use it here to fix the flow leaving $\head$ to the constant interval $(-\infty,\infty]$. 
\end{example}

A \emph{flow graph} is a pair $\aflowgraph=(\aheapgraph, \aflowconstraint)$ consisting of a heap graph $\aheapgraph$ and a flow constraint $\aflowconstraint$ so that the nodes of the heap graph and flow constraint coincide, $\aheapgraph.\setnodes=\aflowconstraint.\setnodes$, and so that $\aflowconstraint.\edges$ is induced by $\aheapgraph.\pval$ and $\aheapgraph.\dval$ via $\genrhs$.
Being induced means that for all $\anode\in\aflowconstraint.\setnodes$ and $\anodep\in\setaddrs$:
\begin{align*}
\aflowconstraint.\edges(\anode, \anodep)\quad =\quad \sum_{\substack{\apsel\in\setpsel\\\anodep=\aheapgraph.\pvalof{\anode, \apsel}}} \genrhsof{\apsel, \aheapgraph.\dvalof{\anode}}. 
\end{align*}
Recall that empty sums are defined to be $\monunit$.
The edge function iterates over all pointer selectors that lead from the source to the target node.
It sums up the monoid functions given by the generator.
Note that the edge functions are independent of the target node.

\begin{example}
  \label{ex-flow-graph}
  Assuming the generator $\genrhs$ from the previous example, $(\aheap_1,\aflowconstraint_1)$ is a flow graph where $\aflowconstraint_1$ is as depicted in 
  \Cref{fig-keyset-flow}:
  \begin{align*}
    \aheap_1.\pval ~=~& \set{(l,\mathtt{next}) \mapsto t, (t,\mathtt{next}) \mapsto r, (r,\mathtt{next}) \mapsto v}
    \\
    \aheap_1.\dval ~=~& \set{(l,\mathit{mark}) \mapsto \false, (t,\mathit{mark}) \mapsto \true}
    \\
    \uplus~&\set{(r,\mathit{mark}) \mapsto \false}
    \\
    \uplus~&\set{(l,\mathit{key}) \mapsto 6, (t,\mathit{key}) \mapsto 7, (r,\mathit{key}) \mapsto 8}
    \enspace.
  \end{align*}\
\end{example}

We use $\setflowgraphs$ for the set of all flow graphs. 
%
It is a subset of the product separation algebra between the heap graphs and the flow constraints, $\setflowgraphs\subseteq\setheapgraphs\times\setflowconstraints$.
This gives flow graphs a notion of definedness: $\aflowgraph_1\statemultdef\aflowgraph_2$,  if $\aflowgraph_1.\aheapgraph\statemultdef\aflowgraph_2.\aheapgraph$ and $\aflowgraph_1.\aflowconstraint\statemultdef\aflowgraph_2.\aflowconstraint$.
They also inherit the component-wise composition in case definedness holds: $\aflowgraph_1\mstar\aflowgraph_2\defeq(\aflowgraph_1.\aheapgraph\mstar\aflowgraph_2.\aheapgraph, \aflowgraph_1.\aflowconstraint\mstar\aflowgraph_2.\aflowconstraint)$.
With this definition, we obtain a submonoid. 
\begin{lemma}\label{Lemma:FlowGraphs}
$(\setflowgraphs, \mstar, \set{(\aheapgraph_{\emptyset}, \aflowconstraint_{\emptyset})})$ is a separation algebra.
\end{lemma}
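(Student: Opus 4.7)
The plan is to derive this from the product construction in Lemma~\ref{lem-composed-state-algebra}, instantiated with $(\setheapgraphs, \mstar, \set{\aheapgraph_\emptyset})$ as the shared-state algebra (a separation algebra by the preceding lemma) and $(\setflowconstraints, \mstar, \set{\aflowconstraint_\emptyset})$ from Lemma~\ref{Lemma:FlowAlgebra} as the local-state algebra. The pair $(\aheapgraph_\emptyset, \aflowconstraint_\emptyset)$ lies in $\setflowgraphs$ trivially, since both components have empty (hence matching) node sets and $\genrhs$ yields no edges over an empty domain; this also shows it is the only unit, because both the empty heap graph and the empty flow constraint are the unique units of their respective algebras.

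First I would check closure under composition. Given $\aflowgraph_1, \aflowgraph_2 \in \setflowgraphs$ whose components are composable, the node-set matching condition lifts across disjoint union: $(\aflowgraph_1.\aheapgraph \mstar \aflowgraph_2.\aheapgraph).\setnodes = \aflowgraph_1.\aheapgraph.\setnodes \uplus \aflowgraph_2.\aheapgraph.\setnodes = \aflowgraph_1.\aflowconstraint.\setnodes \uplus \aflowgraph_2.\aflowconstraint.\setnodes = (\aflowgraph_1.\aflowconstraint \mstar \aflowgraph_2.\aflowconstraint).\setnodes$, using the flow-graph invariant on each $\aflowgraph_i$. The edge-induction requirement transfers because $\genrhsof{\apsel, \aheapgraph.\dvalof{\anode}}$ consults only the data selectors of the single source node $\anode$, and disjoint union of heap graphs preserves per-node selectors; hence the disjoint union of the per-component edge functions coincides with the edge function induced by the composed heap graph via $\genrhs$.

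Closure under decomposition is verified symmetrically: given a decomposition $(\aheapgraph_1 \mstar \aheapgraph_2, \aflowconstraint_1 \mstar \aflowconstraint_2) \in \setflowgraphs$, the node-set equality $\aheapgraph_1.\setnodes \uplus \aheapgraph_2.\setnodes = \aflowconstraint_1.\setnodes \uplus \aflowconstraint_2.\setnodes$ together with the constraint that each $\aflowconstraint_i$ carries its edges only over $\aflowconstraint_i.\setnodes$ (and the corresponding restriction on $\aheapgraph_i$) forces $\aheapgraph_i.\setnodes = \aflowconstraint_i.\setnodes$ for each $i$; edge-induction is then read off from the per-node selectors of $\aheapgraph_i$, so each component is again a flow graph. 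Once these two closure conditions hold, associativity, commutativity, and the unit laws for $(\setflowgraphs, \mstar, \set{(\aheapgraph_\emptyset, \aflowconstraint_\emptyset)})$ lift immediately through Lemma~\ref{lem-composed-state-algebra}. The main obstacle will be the decomposition step, specifically arguing that the two independently given splits (of the heap graph and of the flow constraint) must agree on how they partition the shared node set; this alignment is exactly what makes the flow-graph invariant propagate through decomposition, and it is forced by locality of $\genrhs$ and the disjointness required by the $\mstar$ operations.
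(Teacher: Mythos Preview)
Your route through Lemma~\ref{lem-composed-state-algebra} breaks at the decomposition step. That lemma demands closure under \emph{arbitrary} component-wise splittings: whenever $(\aheapgraph_1 \mstar \aheapgraph_2,\, \aflowconstraint_1 \mstar \aflowconstraint_2) \in \setflowgraphs$, one must have $(\aheapgraph_1, \aflowconstraint_1) \in \setflowgraphs$ for \emph{any} independently chosen splits of the heap graph and of the flow constraint. But nothing ties the two splittings together. Take a flow graph on $\{x,y\}$ with zero inflow and with pointer selectors arranged so that the induced edge functions between $x$ and $y$ are the zero function; split the heap graph as $\{x\}\uplus\{y\}$ and the flow constraint as $\{y\}\uplus\{x\}$. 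Both decompositions are legal in their respective algebras, yet the pair $(\aheapgraph_1,\aflowconstraint_1)$ combines a heap on $\{x\}$ with a flow constraint on $\{y\}$ and is therefore not a flow graph. Your appeal to ``locality of $\genrhs$'' cannot rescue this: edge induction relates a heap graph to the flow constraint it \emph{generates}, not to an arbitrarily paired sub-constraint.

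The paper sidesteps this entirely. It observes that associativity, commutativity, and the unit law are inherited from the full Cartesian product $\setheapgraphs \times \setflowconstraints$ (products of separation algebras are separation algebras), and then checks only that the unit $(\aheapgraph_\emptyset,\aflowconstraint_\emptyset)$ lies in $\setflowgraphs$ and that $\setflowgraphs$ is closed under \emph{composition}. Those two facts you have already verified correctly; they suffice for a closed submonoid to inherit the separation-algebra axioms, so drop the appeal to Lemma~\ref{lem-composed-state-algebra} and the decomposition argument.
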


\smartparagraph{Commands and Frame-Preserving Updates}
Let $(\aheap_1, \aflowconstraint_1)$ be a flow graph and suppose that a command $\acom$ updates the heap graph $\aheap_1$ to a new heap graph $\aheap_2$. Then the heap update induces a corresponding update of the flow constraint $\aflowconstraint_1$ to the flow constraint $\aflowconstraint_2$ generated from $\aheap_2$. For instance, consider the flow graph $(\aheap_1, \aflowconstraint_1)$ from Example~\ref{ex-flow-graph}. The heap $\aheap_2$ obtained by executing the command $l.\mathtt{next} \,\mathtt{:=}\, r$ generates the flow constraint $\aflowconstraint_2$ shown in \Cref{fig-keyset-flow}.

In order to ensure that the semantics of commands satisfies the locality condition~(\ref{cond-loccom}) assumed in \cref{Section:Preliminaries}, which is critical for the soundness of the \ruleref{frame} rule, we need to ensure that the induced updates of the flow constraints are \emph{frame-preserving} with respect to flow constraint composition.
That is, if we have $\aflowconstraint_1 \statemultdef \aflowconstraint$ and we exchange $\aflowconstraint_1$ by $\aflowconstraint_2$, does $\aflowconstraint_2 \statemultdef \aflowconstraint$ still hold? Intuitively, $\aflowconstraint_2 \statemultdef \aflowconstraint$ still holds if $\aflowconstraint_1$ and $\aflowconstraint_2$ transform inflows to outflows in the same way.

Formally, for a flow constraint $\aflowconstraint$ we define its \emph{transfer function} \(\transformerof{\aflowconstraint}\) mapping inflows to outflows, \[\transformerof{\aflowconstraint}:((\setaddrs\setminus\setnodes)\times\setnodes\rightarrow \amonoid)\rightarrow\setnodes\times(\setaddrs\setminus\setnodes)\rightarrow \amonoid\ ,\] by $\transformerof{\aflowconstraint}(\inflow') \defeq \aflowconstraint[\inflow \mapsto \inflow'].\outflow$.
For a given inflow $\inflow$, we may also write $\transformerof{\aflowconstraint_1} =_{\inflow} \transformerof{\aflowconstraint_2}$ to mean that for all inflows $\inflow' \leq \inflow$, we have $\transformerof{\aflowconstraint_1}(\inflow') = \transformerof{\aflowconstraint_2}(\inflow')$.

For example, we have $\transformerof{\aflowconstraint_1} =_{\aflowconstraint_1.\inflow} \transformerof{\aflowconstraint_2}$ for the flow constraints $\aflowconstraint_1$ and $\aflowconstraint_2$ in \Cref{fig-keyset-flow}. This is due to $\aflowconstraint_1.\inflow(x)=\emptyset$ for all $x \neq l$ and $\lambda_6 \circ \lambda_{-\infty}=\lambda_6$. Observe that $\aflowconstraint_2$ composes with any frame $\aflowconstraint$ that $\aflowconstraint_1$ composes with. This captures the fact that physically unlinking the marked node $t$ does not affect the remaining state of the data structure. The following lemma generalizes this observation. Its proof relies on Bekic's lemma~\cite{DBLP:conf/ibm/Bekic84} and can be found in Appendix~\ref{Section:FlowsProofs}.

\begin{lemma}\label{Lemma:Transformers}
Assume $\aflowconstraint_1\statemultdef\aflowconstraint$, $\aflowconstraint_1.\setnodes=\aflowconstraint_2.\setnodes$, $\aflowconstraint_1.\inflow=\aflowconstraint_2.\inflow$, and $\transformerof{\aflowconstraint_1}=_{\aflowconstraint_1.\inflow}\transformerof{\aflowconstraint_2}$. 
Then (i)~$\aflowconstraint_2\statemultdef\aflowconstraint$ and (ii)~$\transformerof{\aflowconstraint_1\mstar\aflowconstraint}=_{(\aflowconstraint_1\mstar\aflowconstraint).\inflow}\transformerof{\aflowconstraint_2\mstar\aflowconstraint}$.  
\end{lemma}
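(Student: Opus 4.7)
The plan is to reduce both conclusions to a single fixpoint-decomposition argument, using Bekić's lemma to analyze the flow of a composed constraint by iterating the partial fixpoints of its two components. Throughout, the key invariant is that the effective inflows fed into $\aflowconstraint_1$ (respectively $\aflowconstraint_2$) during the iteration stay bounded above by $\aflowconstraint_1.\inflow$, which is precisely the region where $\transformerof{\aflowconstraint_1}$ and $\transformerof{\aflowconstraint_2}$ are assumed to coincide.

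First I would establish the auxiliary definedness $\aflowconstraint_2\statemultdef\statemultdef\aflowconstraint$. Node disjointness is immediate from $\aflowconstraint_1.\setnodes=\aflowconstraint_2.\setnodes$ and $\aflowconstraint_1\statemultdef\aflowconstraint$. The matching conditions on outflow/inflow between $\aflowconstraint_2$ and $\aflowconstraint$ agree with those between $\aflowconstraint_1$ and $\aflowconstraint$, because $\aflowconstraint_2.\inflow=\aflowconstraint_1.\inflow$ by assumption and, specialising $\transformerof{\aflowconstraint_1}=_{\aflowconstraint_1.\inflow}\transformerof{\aflowconstraint_2}$ to the inflow $\aflowconstraint_1.\inflow$ itself, we get $\aflowconstraint_2.\outflow=\aflowconstraint_1.\outflow$.

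Next, for the fake-flow-free equality needed to conclude $\aflowconstraint_2\statemultdef\aflowconstraint$, I would apply Bekić's lemma to the flow equation of $\aflowconstraint_2\discup\aflowconstraint$. Splitting the variables into the two groups $\aflowconstraint_2.\setnodes$ and $\aflowconstraint.\setnodes$, the least fixpoint can be computed by iterating the two partial fixpoints: starting from the external inflows, each round feeds $\aflowconstraint_2$'s current outflow into $\aflowconstraint$, and vice versa. Since the only way $\aflowconstraint_2$'s side influences the iteration is through the values of $\transformerof{\aflowconstraint_2}$, and since by monotonicity all intermediate effective inflows into $\aflowconstraint_2$ are $\leq\aflowconstraint_1.\inflow$ (which in turn equals $\aflowconstraint_2.\inflow$), the hypothesis $\transformerof{\aflowconstraint_1}=_{\aflowconstraint_1.\inflow}\transformerof{\aflowconstraint_2}$ makes the iteration pointwise identical to the one for $\aflowconstraint_1\discup\aflowconstraint$. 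Consequently, $(\aflowconstraint_2\discup\aflowconstraint).\fval$ agrees with $(\aflowconstraint_1\discup\aflowconstraint).\fval$ on $\aflowconstraint.\setnodes$, while on $\aflowconstraint_2.\setnodes$ it equals $\transformerof{\aflowconstraint_2}$ applied to the same effective inflow as in $\aflowconstraint_1$, hence equals $\aflowconstraint_2.\fval$. The flow-equality defining $\statemult$ then follows from the corresponding equality for $\aflowconstraint_1\statemultdef\aflowconstraint$, proving part~(i).

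For part~(ii), I would run the same iteration argument with respect to an arbitrary inflow $\inflow'\leq(\aflowconstraint_1\mstar\aflowconstraint).\inflow$ fed to $\aflowconstraint_1\mstar\aflowconstraint$ and to $\aflowconstraint_2\mstar\aflowconstraint$. By monotonicity of all edge functions, the effective inflow into $\aflowconstraint_1$ in this modified composition is still dominated by $\aflowconstraint_1.\inflow$, so the hypothesis on transfer-function agreement again makes the partial iterations identical, yielding the same flow on $\aflowconstraint$'s nodes and hence the same outflow of the whole composition to the outside. The main obstacle, and where the proof must be handled carefully, is precisely this bookkeeping: proving that the definition of $\discup$ (which subtracts off the inflow contributions provided internally) together with Bekić's lemma really does confine the intermediate inflows to the region of agreement, and that the external inflow does not push any internal inflow above $\aflowconstraint_1.\inflow$. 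Once that monotonicity bound is locked in, both conclusions of the lemma follow uniformly.
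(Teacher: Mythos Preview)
Your proposal is correct and follows essentially the same approach as the paper: both establish $\aflowconstraint_2\statemultdef\statemultdef\aflowconstraint$ directly from the hypotheses, then use Beki\'c's lemma to decompose the fixpoint computation of $(\aflowconstraint_2\discup\aflowconstraint).\fval$ into the two node groups, relying on monotonicity of the edge functions to keep the effective inflows into $\aflowconstraint_2$ bounded by $\aflowconstraint_1.\inflow$ so that the transfer-function agreement applies throughout. One small terminological caveat: when you write ``$\transformerof{\aflowconstraint_2}$ applied to the same effective inflow \ldots{} equals $\aflowconstraint_2.\fval$'', recall that $\transformerof{\aflowconstraint_2}$ returns the \emph{outflow}, not the flow---what you really mean (and what the paper makes explicit via $f^{\dagger}$) is the parametric least fixpoint of $\aflowconstraint_2$'s flow equation at the given inflow.
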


In Appendix~\ref{Section:Instantiation} we develop a semantics of a simple imperative programming language that satisfies the locality condition assumed in \cref{Section:Preliminaries}. Global states $\sharedstates$ and local states $\localstates$ are obtained by taking the product between flow graphs and variable valuations. The variable valuations track the values of both (immutable) program variables as well as logical variables. The construction of the state monoid $\setstates \subseteq \sharedstates \times \localstates$ is mostly standard and similar to the one discussed in Example~\ref{ex-heap-semantics}.  The semantics aborts a heap update $\acom$ if its induced flow constraint update does not satisfy the frame preservation condition identified in Lemma~\ref{Lemma:Transformers}.

\section{Instantiation}\label{Section:Instantiation}

In this section, we instantiate the developed program logic to a simple imperative programming language whose states consist of flow graphs.

In the following, we write flow graphs as $\aflowgraph = (\setnodes, \sval, \inflow)$.
This goes without loss of generality since the nodes coincide for the heap graph and the flow constraint, and the edge function of the flow constraint is induced by the data and pointer selector valuations.
For notational convenience, we also access the flow value by $\aflowgraph.\fval$ rather than $\aflowgraph.\aflowconstraint.\fval$.

\subsection{States}
Valuations of program variables $\setvars$, or simply 
\emph{stacks}, stem from $\setvalsof{\setvars}\defeq\setvars\rightarrow\nat$.
Stacks are never split, $\aval_1\statemultdef\aval_2$ is defined as $\aval_1=\aval_2$.
\begin{lemma}
$(\setvalsof{\setvars}, \mstar, \setvalsof{\setvars})$ is a separation algebra. 
\end{lemma}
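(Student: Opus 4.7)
The plan is to check the three axioms of a separation algebra directly from the definition of composition $\aval_1 \statemult \aval_2$, which is defined exactly when $\aval_1 = \aval_2$ and in that case equals $\aval_1$. There is no hard step; the claim follows mechanically from the observation that $\statemult$ is the equality partial operation on $\setvalsof{\setvars}$.

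First I would verify commutativity. Since $\aval_1 \statemultdef \aval_2$ iff $\aval_1 = \aval_2$ iff $\aval_2 \statemultdef \aval_1$, the operation is symmetric in its definedness, and in the defined case $\aval_1 \statemult \aval_2 = \aval_1 = \aval_2 = \aval_2 \statemult \aval_1$. For associativity, assume $\aval_2 \statemultdef \aval_3$ and $\aval_1 \statemultdef (\aval_2 \statemult \aval_3)$; unfolding gives $\aval_2 = \aval_3$ and $\aval_1 = \aval_2 \statemult \aval_3 = \aval_2$, hence $\aval_1 = \aval_2 = \aval_3$. Then $\aval_1 \statemultdef \aval_2$ and $(\aval_1 \statemult \aval_2) \statemultdef \aval_3$ both hold, and all triple compositions collapse to the common value.

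Next I would check the two conditions on the unit set $\emp = \setvalsof{\setvars}$. For (i), given any $\aval$, pick $\stateunit_\aval \defeq \aval \in \emp$; then $\aval \statemult \stateunit_\aval = \aval \statemult \aval = \aval$. For (ii), if $\stateunit, \stateunit' \in \emp$ are distinct, then $\stateunit \neq \stateunit'$ means $\stateunit \statemult \stateunit'$ is undefined by construction. This exhausts the axioms, so $(\setvalsof{\setvars}, \mstar, \setvalsof{\setvars})$ is a separation algebra. The only subtle point worth flagging is that every element of $\setvalsof{\setvars}$ is its own (and unique) unit, which is unusual but perfectly consistent with the abstract definition adopted in \cref{Subsection:States}.
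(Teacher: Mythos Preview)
Your proof is correct; it is a direct verification of the separation algebra axioms from the definition $\aval_1 \statemultdef \aval_2 \Leftrightarrow \aval_1 = \aval_2$ with $\aval_1 \statemult \aval_1 = \aval_1$. The paper states this lemma without proof, so your routine check of commutativity, associativity, and the two unit conditions is exactly what is needed.
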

As we divide states into a global and a local component, we distinguish between global program variables from the finite set $\setgpvars$ and local variables from the disjoint and also finite set  $\setlpvars$.
We use $\apvar, \apvarp\in\setpvars=\setgpvars\discup\setlpvars$ to refer to variables that are either global or local.
The separation algebra of global states is the product $\setshared \defeq \setflowgraphs\times\setvalsof{\setgpvars}$. Similarly, the separation algebra of local states is the product $\setlocal\defeq \setflowgraphs\times\setvalsof{\setlpvars}$. 
The set of states is a subset of the product separation algebra:
\begin{align*}
\setstates\;\defeq\;\setcond{(\ashared, \alocal)\in \setshared\times\setlocal}{\wellsplitof{\ashared,\alocal}}.
\end{align*}
Here, the predicate $\wellsplitof{\ashared, \alocal}$ says that it is possible to combine the global and the local heap.
Moreover, the global state does not point into the local heap. Formally,
\begin{align*}
\wellsplitof{\ashared, \alocal}\quad\text{if}
&\quad\begin{aligned}[t]
&\ashared.\aflowgraph\statemultdef\alocal.\aflowgraph
\;\;\wedge\;\;\forall \apvar\in\setgpvars.\;\;\ashared.\avalof{\apvar}\notin\alocal.\aflowgraph.\setnodes
\;\;\wedge\;\;\\&\forall \anaval\in\ashared.\aflowgraph.\setnodes.\;\;\forall\asel\in\setsel.\;\;\ashared.\aflowgraph.\svalof{\anaval, \asel}\notin\alocal.\aflowgraph.\setnodes
\ .
\end{aligned}
\end{align*}

The assumptions made by Lemma~\ref{lem-composed-state-algebra} follow from the next lemma.
\begin{lemma}
  For all $(\ashared, \alocal) \in \sharedemp \times \localemp$, $\wellsplitof{\ashared, \alocal}$ holds.
  Furthermore, for all $\ashared_1, \ashared_2 \in \setshared$ and $\alocal_1, \alocal_2 \in \setlocal$, if we have $\wellsplitof{\ashared_1 \sharedmult \ashared_2, \alocal_1 \localmult \alocal_2}$, then also $\wellsplitof{\ashared_1, \alocal_1}$.
\end{lemma}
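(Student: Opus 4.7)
The plan is to verify both statements by unfolding the definition of $\wellsplitof{\ashared,\alocal}$ and leveraging three facts already in the paper: (a)~flow graphs form a separation algebra (\Cref{Lemma:FlowGraphs}); (b)~stack valuations are never split by $\mstar$, so whenever $\ashared_1 \sharedmult \ashared_2$ is defined we have $(\ashared_1 \sharedmult \ashared_2).\aval = \ashared_1.\aval$; and (c)~flow graph composition disjointly unions the node sets, and its selector valuation agrees with each operand on that operand's nodes.

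For the first claim, I fix $(\ashared,\alocal) \in \sharedemp \times \localemp$. Both $\ashared.\aflowgraph$ and $\alocal.\aflowgraph$ are the empty flow graph $(\aheapgraph_{\emptyset},\aflowconstraint_{\emptyset})$, so $\ashared.\aflowgraph \statemultdef \alocal.\aflowgraph$ follows immediately from the unit laws. The remaining two universal conjuncts of $\wellsplitof{\ashared,\alocal}$ hold vacuously because $\alocal.\aflowgraph.\setnodes = \emptyset$.

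For the second claim, I assume $\wellsplitof{\ashared_1 \sharedmult \ashared_2, \alocal_1 \localmult \alocal_2}$ and check the three conjuncts for $(\ashared_1,\alocal_1)$. Condition (i), $\ashared_1.\aflowgraph \statemultdef \alocal_1.\aflowgraph$, is where the main subtlety lies: the hypothesis gives definedness of the four-way flow-graph composition $\ashared_1.\aflowgraph \mstar \ashared_2.\aflowgraph \mstar \alocal_1.\aflowgraph \mstar \alocal_2.\aflowgraph$ induced by the product construction, and I will use associativity and commutativity from \Cref{Lemma:FlowGraphs} to rebracket it as $(\ashared_1.\aflowgraph \mstar \alocal_1.\aflowgraph) \mstar (\ashared_2.\aflowgraph \mstar \alocal_2.\aflowgraph)$, from which the definedness of $\ashared_1.\aflowgraph \mstar \alocal_1.\aflowgraph$ falls out by the standard PCM fact that every sub-composition of a defined composition is itself defined. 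Conditions (ii) and (iii) are then monotone: by fact (b) we have $\ashared_1.\avalof{\apvar} = (\ashared_1 \sharedmult \ashared_2).\avalof{\apvar}$, by fact (c) we have $\alocal_1.\aflowgraph.\setnodes \subseteq (\alocal_1 \localmult \alocal_2).\aflowgraph.\setnodes$ and $\ashared_1.\aflowgraph.\svalof{\anaval,\asel} = (\ashared_1 \sharedmult \ashared_2).\aflowgraph.\svalof{\anaval,\asel}$ for any $\anaval \in \ashared_1.\aflowgraph.\setnodes$, so any witness to a failure of (ii) or (iii) for $(\ashared_1,\alocal_1)$ would also be a witness of a failure for the composed state, contradicting the hypothesis.

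The main obstacle is dispatching condition (i) cleanly: one must observe that the non-trivial fake-flow requirement in the flow graph algebra is already accounted for inside the full four-way composition, so no fresh verification is needed---associativity and commutativity already force all pairwise sub-compositions to be defined. Once that is pinned down, the rest of the argument is routine monotone bookkeeping over disjoint unions of nodes, stacks, and selector maps.
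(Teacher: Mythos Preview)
The paper does not include a proof of this lemma; it is stated in the appendix and immediately followed by its corollary, so there is nothing to compare against directly. Your argument is correct and fills the gap cleanly. The only point worth a remark is the phrasing of condition~(i): ``rebracket as $(\ashared_1.\aflowgraph \mstar \alocal_1.\aflowgraph) \mstar (\ashared_2.\aflowgraph \mstar \alocal_2.\aflowgraph)$'' sounds as though that bracketing is the premise, which would be circular since it presupposes $\ashared_1.\aflowgraph \statemultdef \alocal_1.\aflowgraph$. But you also invoke the standard PCM fact that every sub-composition of a defined multi-way composition is itself defined, and that is the real content: from $(\ashared_1.\aflowgraph \mstar \ashared_2.\aflowgraph) \statemultdef (\alocal_1.\aflowgraph \mstar \alocal_2.\aflowgraph)$ together with the definedness of each binary factor, a short chain of applications of associativity and commutativity (supplied by \Cref{Lemma:FlowGraphs}) yields $\ashared_1.\aflowgraph \statemultdef \alocal_1.\aflowgraph$ directly, without first rebracketing. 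Your observation that the fake-flow side condition needs no separate treatment because it is already baked into the separation-algebra structure of flow graphs is exactly the right justification. The monotone bookkeeping for conditions~(ii) and~(iii) via facts~(b) and~(c) is correct as stated.
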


\begin{corollary}
 $(\setstates, \statemult, \sharedemp \times \localemp)$ is a separation algebra.
\end{corollary}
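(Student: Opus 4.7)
The plan is to deduce the corollary immediately from Lemma~\ref{lem-composed-state-algebra} by checking that its three hypotheses on $\setstates$ are met. Recall that Lemma~\ref{lem-composed-state-algebra} requires (a) $\sharedemp\times\localemp\subseteq\setstates\subseteq\setshared\times\setlocal$, and (b) closure under decomposition, i.e.\ whenever $(\ashared_1\sharedmult\ashared_2,\alocal_1\localmult\alocal_2)\in\setstates$ we also have $(\ashared_1,\alocal_1)\in\setstates$. Both hypotheses follow directly from the preceding lemma about $\wellsplitof{\cdot}$, so the proof is essentially just bookkeeping.

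First I would note that $\setstates\subseteq\setshared\times\setlocal$ is immediate from its very definition, $\setstates\defeq\setcond{(\ashared,\alocal)\in\setshared\times\setlocal}{\wellsplitof{\ashared,\alocal}}$. For the lower inclusion $\sharedemp\times\localemp\subseteq\setstates$, I would pick an arbitrary $(\ashared,\alocal)\in\sharedemp\times\localemp$ and apply the first clause of the preceding lemma to conclude $\wellsplitof{\ashared,\alocal}$, so that $(\ashared,\alocal)\in\setstates$.

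Next I would verify closure under decomposition. Given $(\ashared_1\sharedmult\ashared_2,\alocal_1\localmult\alocal_2)\in\setstates$, the definition of $\setstates$ yields $\wellsplitof{\ashared_1\sharedmult\ashared_2,\alocal_1\localmult\alocal_2}$. The second clause of the preceding lemma then gives $\wellsplitof{\ashared_1,\alocal_1}$, hence $(\ashared_1,\alocal_1)\in\setstates$, as required. By symmetry (or by commutativity of $\sharedmult$ and $\localmult$) we likewise obtain $(\ashared_2,\alocal_2)\in\setstates$.

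With both hypotheses in place, Lemma~\ref{lem-composed-state-algebra} applies and tells us that $(\setstates,\statemult,\sharedemp\times\localemp)$ is a separation algebra. There is no real obstacle here: all the structural work (associativity, commutativity, unit laws, and the compatibility between the component-wise composition and $\wellsplitof{\cdot}$) has already been done inside Lemma~\ref{lem-composed-state-algebra} and in the preceding lemma about $\wellsplitof{\cdot}$; the corollary is purely a matter of packaging the ingredients.
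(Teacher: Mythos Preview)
Your proposal is correct and matches the paper's approach exactly: the paper states the corollary immediately after the lemma about $\wellsplitof{\cdot}$, noting just before that lemma that ``the assumptions made by Lemma~\ref{lem-composed-state-algebra} follow from the next lemma,'' and gives no further proof. Your write-up simply makes explicit the hypothesis-checking that the paper leaves implicit.
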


To fix the notation, we write $\astate.\aflowgraph$ for $\astate.\ashared.\aflowgraph\mstar\astate.\alocal.\aflowgraph$ and similarly $\astate.\aval$ for $\astate.\ashared.\aval\discup\astate.\alocal.\aval$.
Moreover, we denote $\astate.\aflowgraph.\aheapgraph$ by $\astate.\aheapgraph$ and introduce a similar short-hand for $\astate.\aflowgraph.\aflowconstraint$.
%

When we modify the valuations of program variables and selectors, the result is not necessarily well-split.
We introduce the following function that turns a pair of global and local state into a state that is well-split:  
\begin{align*}
\splitfg:\setshared\times \setlocal\pfun \setstates. 
\end{align*}
The idea is to move the nodes from the local heap that are reachable from the shared state to the shared heap.
The function is partial because the nodes from the local heap may in turn point outside the state, in which case the effect would not be limited to the state.
Definedness of the function is as follows:
\begin{align*}
(\ashared, \alocal)\statemultdef\splitfg\quad\text{if}\quad
\ashared.\aflowgraph\statemultdef\alocal.\aflowgraph
\quad\text{and}\quad
&\forall \anaval\in\alocal.\aflowgraph.\setnodes~
\forall \apvar\in\setgpvars~~
\forall \anavalp\in\ashared.\aflowgraph.\setnodes~~
\forall \asel\in\setsel.~~\\
&\qquad\quad\begin{aligned}
&\apvar\text{ reaches }\anaval\;\;\vee\;\;\anavalp\text{ reaches }\anaval
\\\Rightarrow~&\alocal.\aflowgraph.\svalof{\anaval, \asel}\in\alocal.\aflowgraph.\setnodes\discup\ashared.\aflowgraph.\setnodes
\end{aligned}
\end{align*}
The reachability predicate $\apvar$ \emph{reaches} $\anaval$ resp. $\anavalp$ \emph{reaches} $\anaval$ is defined by either directly referencing address~$\anaval$ (in the case of $\anavalp$ through a selector) or by referencing an address that in turn reaches $\anaval$.
The formal definition is via a least fixed point.
If defined, we set $\splitOf{\ashared, \alocal}\defeq \astate$ with 
$\astate.\aflowgraph=\ashared.\aflowgraph\mstar\alocal.\aflowgraph$, $\astate.\aval=\ashared.\aval\discup\alocal.\aval$, $\ashared.\aflowgraph.\setnodes\subseteq\astate.\ashared.\aflowgraph.\setnodes$.
Moreover, $\astate.\alocal.\aflowgraph$ is the maximal (wrt. inclusion on $\astate.\aflowgraph.\setnodes$) flowgraph that satisfies these constraints.
The function is well-defined in that it uniquely determines a state and the state does exist.
To see this, note that the state can be obtained by a least fixed point iteration that moves local nodes into the shared~heap.
\begin{lemma}
$\splitOf{\ashared, \alocal}$ is well-defined. 
\end{lemma}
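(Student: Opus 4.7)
The plan is to construct $\splitOf{\ashared, \alocal}$ explicitly as a least fixed point and then verify uniqueness from that construction. Concretely, I will define a monotone operator $F$ on subsets of $\ashared.\aflowgraph.\setnodes \cup \alocal.\aflowgraph.\setnodes$ that encodes ``nodes that must belong to the shared component of the output.'' Let
\[
  F(N) \defeq \ashared.\aflowgraph.\setnodes
  \;\cup\; \setcond{\anaval \in \alocal.\aflowgraph.\setnodes}{\exists \apvar \in \setgpvars.\; \ashared.\avalof{\apvar} = \anaval \;\vee\; \exists \anavalp \in N\, \exists \asel.\; (\ashared.\aflowgraph \mstar \alocal.\aflowgraph).\svalof{\anavalp, \asel} = \anaval}.
\]
Then $F$ is monotone on the finite lattice of subsets of $\ashared.\aflowgraph.\setnodes \cup \alocal.\aflowgraph.\setnodes$, so by Kleene iteration it has a least fixed point $N^\star$ reached in finitely many steps.

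Next I would define $\astate$ by taking $\astate.\ashared.\aflowgraph$ to be the subflowgraph of $\ashared.\aflowgraph \mstar \alocal.\aflowgraph$ induced by $N^\star$, and $\astate.\alocal.\aflowgraph$ the subflowgraph induced by the complement within $\alocal.\aflowgraph.\setnodes$. The valuation is $\ashared.\aval \discup \alocal.\aval$. By construction, $\ashared.\aflowgraph.\setnodes \subseteq N^\star = \astate.\ashared.\aflowgraph.\setnodes$ and $\astate.\aflowgraph = \ashared.\aflowgraph \mstar \alocal.\aflowgraph$. To check that $\astate \in \setstates$, i.e. $\wellsplitof{\astate.\ashared, \astate.\alocal}$, observe that every $\anaval \in N^\star$ that originated in $\alocal$ is reachable from some global variable or some shared node, so the hypothesis on definedness of $\splitfg$ guarantees $\alocal.\aflowgraph.\svalof{\anaval, \asel} \in \alocal.\aflowgraph.\setnodes \cup \ashared.\aflowgraph.\setnodes$ for every selector $\asel$; combined with the fixed point property $F(N^\star) = N^\star$, this means no selector of a node in $N^\star$ points outside $N^\star$, which is exactly well-splitness.

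For maximality of $\astate.\alocal.\aflowgraph$, suppose $\astate'$ is any other candidate. Its shared node set must contain $\ashared.\aflowgraph.\setnodes$ and, by well-splitness, must be closed under ``is referenced by a global variable or by a selector of a shared node,'' hence it contains a prefixed point of $F$ and thus contains $N^\star$. So the local part of any candidate is contained in $\astate.\alocal.\aflowgraph$, giving both maximality and uniqueness. The main obstacle will be aligning the hypothesis in the definedness predicate (which speaks about nodes reachable from $\setgpvars$ or from $\ashared.\aflowgraph.\setnodes$) with the reachability captured by iterating $F$; this requires a short induction showing that every element added to the fixed point at stage $k$ is indeed reachable in the sense of the hypothesis, so that the hypothesis applies and guarantees selectors stay within $\ashared.\aflowgraph.\setnodes \cup \alocal.\aflowgraph.\setnodes$.
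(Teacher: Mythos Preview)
Your approach is exactly the one the paper sketches: the paper's proof is the single sentence ``note that the state can be obtained by a least fixed point iteration that moves local nodes into the shared heap,'' and your operator $F$ with its Kleene fixed point $N^\star$ makes that precise. One small correction: well-splitness does \emph{not} require that selectors of nodes in $N^\star$ stay inside $N^\star$ (original shared nodes may point outside the whole state); it only requires that they do not point into the \emph{local} part, and this follows directly from $F(N^\star)=N^\star$ without invoking the definedness hypothesis at all---that hypothesis is there to ensure locality of the command semantics, not existence or uniqueness of the split.
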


We will need that $\splitfg$ is a local function.
To formulate this, we define state modifications of the form $\updOf{\apvar}{\anaval}$ and $\updOf{(\anaval, \asel)}{\anavalp}$.
This will modify the corresponding component in the state.
The component will be clear from the type of the modification. 
For example, $\astate\updOf{(\anaval, \asel)}{\anavalp}$ will modify the valuation $\sval$ at the selector $\asel$ of address $\anaval$.
Since the flow graphs in the global and in the local state can be composed, only one of them will hold the address.

The modification itself should also have a local effect.
To formulate this, we introduce updates \[\updfun:\nat\times \nat\pfun \monfun{\amonoid\rightarrow\amonoid}\] as commands for changing flow constraints $\aflowconstraint$, which we then subsequently relate to the updates of heap graphs.
The application of the update yields the new flow constraint \[\aflowconstraint\applyupdfun\defeq(\aflowconstraint.\setnodes, \edges, \aflowconstraint.\inflow)\] where the edges are defined by $\edges(\anode, \anodep)\defeq \updfun(\anode, \anodep)$ if $\updfun$ is defined on $(\anode, \anodep)$ and $\edges(\anode, \anodep)\defeq \aflowconstraint.\edges(\anode, \anodep)$ otherwise.
Note that neither the nodes nor the inflow change.
In Figure~\ref{fig-keyset-flow}, e.g., we have $\aflowconstraint_2=\aflowconstraint_1\applyupdfun$ for $\updfun = \set{(l,t) \mapsto \lambda_\infty, (l,r) \mapsto \lambda_6}$.

When considering updates as commands, we let the semantics abort should the update not be defined on the flow constraint or should it change the transformer.
Using Lemma~\ref{Lemma:Transformers}, this makes the semantics suitable for framing:
\begin{align*}
\semComOf{\updfun}{\aflowconstraint}\defeq
\begin{cases}
\aflowconstraint\applyupdfun&\quad\text{if }
\begin{aligned}[t]
  &\domof{\updfun}\subseteq\aflowconstraint.\setnodes\times\nat \wedge \transformerof{\aflowconstraint}=_{\aflowconstraint.\inflow}\transformerof{\aflowconstraint\applyupdfun}
\end{aligned}\\
\abort&\quad\text{otherwise.}
\end{cases}
\end{align*}
The interplay between the semantics of updates and composition is captured by the following lemma. It ensures that the locality condition assumed in \S~\ref{Section:Preliminaries} is satisfied, once we tie flow constraint updates to updates on the heap graphs from which the flow constraints are derived.
\begin{lemma}\label{Lemma:UpdatesConjunction}
Let $\semComOf{\updfun}{\aflowconstraint}\neq\abort$ and $\aflowconstraint\statemultdef\aflowconstraint'$. 
Then we have:
\begin{compactenum}[(i)]
\item $\semComOf{\updfun}{\aflowconstraint}\statemultdef \aflowconstraint'$ and
\item $\semComOf{\updfun}{\aflowconstraint}\mstar\aflowconstraint'=(\aflowconstraint\mstar\aflowconstraint')\applyupdfun$ and 
\item $\semComOf{\updfun}{\aflowconstraint\mstar\aflowconstraint'}\neq\abort$ and 
\item $\semComOf{\updfun}{\aflowconstraint\mstar\aflowconstraint'}=\semComOf{\updfun}{\aflowconstraint}\mstar\aflowconstraint'$.
\end{compactenum}
\end{lemma}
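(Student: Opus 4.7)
\textbf{Plan for proving Lemma~\ref{Lemma:UpdatesConjunction}.}
The plan is to prove parts (i) and (ii) directly from the definitions and Lemma~\ref{Lemma:Transformers}, and then derive parts (iii) and (iv) by combining (i), (ii), and Lemma~\ref{Lemma:Transformers}(ii). The key observation driving everything is that $\aflowconstraint\applyupdfun$ shares with $\aflowconstraint$ both the node set and the inflow, and moreover preserves the transfer function on $\aflowconstraint.\inflow$ by the definedness of $\semComOf{\updfun}{\aflowconstraint}$.

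\textbf{Part (i) and (ii).}
First, I would invoke Lemma~\ref{Lemma:Transformers} with $\aflowconstraint_1 \defeq \aflowconstraint$ and $\aflowconstraint_2 \defeq \aflowconstraint\applyupdfun$: its hypotheses hold since $\applyupdfun$ does not change nodes or inflow, and $\transformerof{\aflowconstraint}=_{\aflowconstraint.\inflow}\transformerof{\aflowconstraint\applyupdfun}$ is exactly the non-abort condition. This gives $\aflowconstraint\applyupdfun\statemultdef\aflowconstraint'$, which is~(i), as well as $\transformerof{\aflowconstraint\mstar\aflowconstraint'}=_{(\aflowconstraint\mstar\aflowconstraint').\inflow}\transformerof{(\aflowconstraint\applyupdfun)\mstar\aflowconstraint'}$, which we will reuse below. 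For~(ii) I would check the three components of the flow constraint separately. The node sets match by disjointness. The edge functions agree because $\domof{\updfun}\subseteq\aflowconstraint.\setnodes\times\nat$ is disjoint from $\aflowconstraint'.\edges$, so applying $\updfun$ and taking the disjoint union commute. For the inflow, I would unfold the $\inflow'$ construction in the composition: the component $\aflowconstraint.\inflow'$ only depends on $\aflowconstraint.\inflow$ and on the outflow of $\aflowconstraint'$ toward $\aflowconstraint.\setnodes$; the former is invariant under $\applyupdfun$ and the latter is not affected by modifying $\aflowconstraint$ at all. Dually, $\aflowconstraint'.\inflow'$ only depends on $\aflowconstraint'.\inflow$ and on the outflow of $\aflowconstraint$ toward $\aflowconstraint'.\setnodes$, and the latter is preserved by the transformer equality.

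\textbf{Part (iii) and (iv).}
For (iii) I would verify the two non-abort conditions for $\semComOf{\updfun}{\aflowconstraint\mstar\aflowconstraint'}$. The domain condition $\domof{\updfun}\subseteq(\aflowconstraint\mstar\aflowconstraint').\setnodes\times\nat$ is immediate from $\domof{\updfun}\subseteq\aflowconstraint.\setnodes\times\nat$. The transformer condition $\transformerof{\aflowconstraint\mstar\aflowconstraint'}=_{(\aflowconstraint\mstar\aflowconstraint').\inflow}\transformerof{(\aflowconstraint\mstar\aflowconstraint')\applyupdfun}$ follows by chaining: part~(ii) rewrites the right-hand side as $\transformerof{(\aflowconstraint\applyupdfun)\mstar\aflowconstraint'}$, and the remaining equality is precisely what Lemma~\ref{Lemma:Transformers}(ii) delivers (as noted above). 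Finally, (iv) is essentially unwinding definitions: $\semComOf{\updfun}{\aflowconstraint\mstar\aflowconstraint'} = (\aflowconstraint\mstar\aflowconstraint')\applyupdfun$ by (iii) together with the semantic clause for $\updfun$, and this equals $\semComOf{\updfun}{\aflowconstraint}\mstar\aflowconstraint'$ by~(ii).

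\textbf{Main obstacle.}
The trickiest step will be the inflow component of part~(ii), because the composition $\mstar$ is not a pure disjoint union of flow constraints: it subtracts the portion of each operand's inflow that is now internally supplied via $\inflow'$. I would have to argue carefully that both subtractions are unchanged by the update: the one on $\aflowconstraint$'s side because $\applyupdfun$ fixes the inflow field, and the one on $\aflowconstraint'$'s side because the transformer equality controls exactly the outflow values that drive $\aflowconstraint'.\inflow'$. Once this bookkeeping is done, the rest of the proof is routine and mechanical.
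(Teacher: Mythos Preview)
Your proposal is correct and follows essentially the same route as the paper: invoke Lemma~\ref{Lemma:Transformers} for~(i), check the three components for~(ii), then derive~(iii) from~(ii) plus Lemma~\ref{Lemma:Transformers}(ii), and~(iv) from~(iii) and~(ii). The only difference is that you spell out the inflow bookkeeping in~(ii) explicitly via the $\inflow'$ construction, whereas the paper dispatches it in one sentence by observing that both compositions are defined and that $\aflowconstraint\applyupdfun$ shares inflow and outflow with $\aflowconstraint$; your more careful unfolding is a fine way to make that step precise.
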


We can now associate with a modification $\updOf{(\anaval, \asel)}{\anavalp}$ an update $\updfun$ of the state's flow constraint $\astate.\aflowconstraint$.
Here, we use $\astate\statemultdef\updOf{(\anaval, \asel)}{\anavalp}$ to denote the fact that the update induced by the modification does not abort on the flow constraint, $\semOf{\updfun}(\astate.\aflowconstraint)\neq\abort$.
\begin{lemma}\label{Lemma:LocalitySplitSel}
If we have
\begin{asparaitem}
  \item $\astate\statemultdef\astate'$,
  \item $\anaval\in\astate.\ashared.\aflowgraph.\setnodes\Rightarrow\anavalp\in\astate.\aflowgraph.\setnodes$,
  \item $\astate\statemultdef\updOf{(\anaval, \asel)}{\anavalp}$, and
  \item $\astate\updOf{(\anaval, \asel)}{\anavalp}\statemultdef\splitfg$,
\end{asparaitem}
then we get
\begin{asparaitem}
  \item $(\astate\mstar\astate')\statemultdef\updOf{(\anaval, \asel)}{\anavalp}$,
  \item $(\astate\mstar\astate')\updOf{(\anaval, \apsel)}{\anavalp}\statemultdef\splitfg$, and
  \item $\splitOf{(\astate\mstar\astate')\updOf{(\anaval, \asel)}{\anavalp}}=\splitOf{\astate\updOf{(\anaval, \asel}{\anavalp}}\mstar \astate'$.
\end{asparaitem}
The latter separating conjunction is defined. 
\end{lemma}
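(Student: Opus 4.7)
The plan is to handle the three conclusions in order, letting $\astate_1 \defeq \astate \updOf{(\anaval,\asel)}{\anavalp}$ and $R_1$ denote the local nodes of $\astate_1$ reachable from its shared part. By the fourth hypothesis, each $w \in R_1$ has all selector values inside $\astate_1.\aflowgraph.\setnodes$.

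First, the definedness $(\astate\mstar\astate')\statemultdef\updOf{(\anaval,\asel)}{\anavalp}$ is immediate from Lemma~\ref{Lemma:UpdatesConjunction}(iii) applied to the flow constraint update associated with the modification; the accompanying heap-graph selector overwrite commutes trivially with disjoint heap-graph composition. Combining parts (ii) and (iv) of the same lemma then yields the factorization $(\astate\mstar\astate')\updOf{(\anaval,\asel)}{\anavalp} = \astate_1 \mstar \astate'$ as an element of $\setshared\times\setlocal$ (possibly outside $\setstates$).

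The heart of the proof is a reachability analysis: I will show that the local nodes of $\astate_1\mstar\astate'$ reachable from its shared part are exactly those in $R_1$; in particular, no node of $\astate'.\alocal.\aflowgraph.\setnodes$ is reached. This goes by induction on path length using three ingredients: (a)~well-splitness of $\astate\mstar\astate'$ prevents shared variables and unchanged shared selectors from pointing into any local heap; (b)~the second hypothesis confines the updated selector, when originating in $\astate.\ashared$, to $\astate.\aflowgraph$ and thus away from $\astate'.\alocal$; and (c)~the fourth hypothesis traps the selectors of every $R_1$-node inside $\astate_1.\aflowgraph$. The main obstacle I anticipate is the case of a path that first exits via a variable into $\astate'.\ashared.\aflowgraph.\setnodes$, traverses $\astate'.\ashared$-selectors back to some $\astate.\ashared$-node, and only then enters $\astate_1.\alocal$ through the updated selector; the key observation is that this intermediate $\astate.\ashared$-node is itself a source in $\astate_1$'s reachability relation, so the entry into $\astate_1.\alocal$ is already witnessed by a shorter path living entirely in $\astate_1$.

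With the reachability claim in hand, the second conclusion is discharged by (c): every reachable local node lies in $R_1$ and has its selectors in $\astate_1.\aflowgraph\subseteq(\astate_1\mstar\astate').\aflowgraph$, satisfying the $\splitfg$ definedness condition. For the third conclusion, $\splitfg$ moves exactly $R_1$ from local to shared on both sides of the equation, and $R_1$ is disjoint from $\astate'.\aflowgraph.\setnodes$; direct comparison of node sets and selector valuations then yields $\splitOf{\astate_1\mstar\astate'} = \splitOf{\astate_1}\mstar\astate'$, with well-definedness of the right-hand composition inherited from the left-hand side, which lies in $\setstates$ by construction.
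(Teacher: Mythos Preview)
Your proposal is correct and follows essentially the same strategy as the paper: both invoke Lemma~\ref{Lemma:UpdatesConjunction} (parts (iii) and (iv)) to obtain the first conclusion and the componentwise factorization $(\astate\mstar\astate')\updOf{(\anaval,\asel)}{\anavalp}=\astate_1\times\astate'$, and both then reduce the remaining two conclusions to a reachability argument showing that the only local nodes moved by $\splitfg$ are those already reachable in~$\astate_1$.

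The organizational difference is that the paper performs an explicit case split on whether $\anaval\in\astate.\alocal.\aflowgraph.\setnodes$ or $\anaval\in\astate.\ashared.\aflowgraph.\setnodes$ (the former case being trivial since no splitting occurs), whereas you unify both cases under the single definition of~$R_1$. Your version is a bit more systematic about the inductive structure of the reachability claim; in particular, the ``obstacle'' you anticipate---paths that detour through $\astate'.\ashared$ before re-entering $\astate.\ashared$---is not spelled out in the paper, which simply observes that ``traversing $\anavalp$ is the only way to reach a local node from the global part.'' Your resolution (every shared node of $\astate$ is already a source for $R_1$, so such a detour cannot enlarge the reachable set) is exactly the implicit justification behind that sentence. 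Both arguments are sound; yours makes the bookkeeping more explicit at the cost of some extra prose, while the paper's case split disposes of the degenerate situation earlier.
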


\begin{proof}
We have $(\astate\mstar\astate')\statemultdef\updOf{(\anaval, \asel)}{\anavalp}$ by Lemma~\ref{Lemma:UpdatesConjunction}(iii).\\
Moreover, $(\astate\mstar\astate')\updOf{(\anaval, \asel)}{\anavalp}=\astate\updOf{(\anaval, \asel)}{\anavalp}\times\astate'$ by Lemma~\ref{Lemma:UpdatesConjunction}(iv).
The product on the right hand side of the previous equality denotes the separating conjunction applied componentwise, on the global and on the local state.
It does not take into account well-splitting.
If $\anaval\in\astate.\alocal.\aflowgraph.\setnodes$, there is no splitting to be done  and we obtain
\begin{align*}
&~~\splitOf{(\astate\mstar\astate')\updOf{(\anaval, \asel)}{\anavalp}}\\
\commentl{No splitting}=&~~(\astate\mstar\astate')\updOf{(\anaval, \asel)}{\anavalp}\\
\commentl{Above}=&~~\astate\updOf{(\anaval, \asel)}{\anavalp}\times\astate'\\
\commentl{$\astate\statemultdef\astate'$, $\anaval\in\astate.\alocal.\aflowgraph$}=&~~\astate\updOf{(\anaval, \asel)}{\anavalp}\mstar\astate'\\
\commentl{No splitting}=&~~\splitOf{\astate\updOf{(\anaval, \asel)}{\anavalp}}\mstar\astate'\ . 
\end{align*}
If $\anaval\in\astate.\ashared.\aflowgraph.\setnodes$, we have $\anavalp\in\astate.\aflowgraph.\setnodes$.
Since $\astate, \astate'$ are states, they are well-split and so the global part of each state does not point into the local heap of that state.
Moreover, since $\astate\statemultdef\astate'$ by the assumption, the global part of state $\astate'$ does not point into the local heap of $\astate$.
With the same argument, the global part of $\astate$ does not point into the local heap of $\astate'$.\\
If $\anavalp\in\astate.\ashared.\aflowgraph.\setnodes$, this continues to hold for $\astate\updOf{(\anaval, \asel)}{\anavalp}$.\\
If $\anavalp\in\astate.\alocal.\aflowgraph.\setnodes$, this continues to hold for $\astate\updOf{(\anaval, \asel)}{\anavalp}$ as well, because $\astate\updOf{(\anaval, \asel)}{\anavalp}\statemultdef\splitfg$.\\ 
The definedness says that no local node from $\astate\updOf{(\anaval, \asel)}{\anavalp}$ reachable from the global part points outside $\astate\updOf{(\anaval, \asel)}{\anavalp}$, in particular $\anavalp$.
Note that traversing $\anavalp$ is the only way to reach a local node from the global part in $\astate\updOf{(\anaval, \asel)}{\anavalp}$.
With the above discussion, splitting the state \[(\astate\mstar\astate')\updOf{(\anaval, \asel)}{\anavalp}=\astate\updOf{(\anaval, \asel)}{\anavalp}\times\astate'\] will only move reachable nodes from $\astate.\alocal.\aflowgraph$ through $\anavalp$ to the global part.
Since $\astate\updOf{(\anaval, \asel)}{\anavalp}\statemultdef\splitfg$, no such local node from $\astate\updOf{(\anaval, \asel)}{\anavalp}$ that is reachable through $\anavalp$ points outside $\astate\updOf{(\anaval, \asel)}{\anavalp}$.
Hence $(\astate\mstar\astate')\updOf{(\anaval, \apsel)}{\anavalp}\statemultdef\splitfg$.

Splitting will add a new flow graph $\aflowgraph'$ to the shared heap of $(\astate\mstar\astate')\updOf{(\anaval, \apsel)}{\anavalp}$.
The result is 
\begin{align*}
&\quad(\astate\mstar\astate')\updOf{(\anaval, \apsel)}{\anavalp}.\ashared.\aflowgraph\mstar\aflowgraph'\\
\smash{\stackrel{(1)}{=}}&~~(\astate\updOf{(\anaval, \apsel)}{\anavalp}\times\astate').\ashared.\aflowgraph\mstar\aflowgraph'\\
\smash{\stackrel{(2)}{=}}&~~\astate\updOf{(\anaval, \apsel)}{\anavalp}.\ashared.\aflowgraph\mstar\astate'.\ashared.\aflowgraph\mstar\aflowgraph'\\
\smash{\stackrel{(3)}{=}}&~~ (\astate\updOf{(\anaval, \apsel)}{\anavalp}.\ashared.\aflowgraph\mstar\aflowgraph')\mstar\astate'.\ashared.\aflowgraph.
\end{align*}
where the qualities follow from (1) the above discussion, (2) componentwise separating conjunction, and (3) commutativity.
This is the shared heap of $\splitOf{\astate\updOf{(\anaval, \asel}{\anavalp}}$ composed with the shared heap of $\astate'$.
For the local heap, we already discussed above that we only move nodes from $\astate.\alocal.\aflowgraph$.
Hence, $\astate'$ remains unchanged.
\end{proof}

For modifications of program variables, there is a related result.
Since stacks are not split, however, the modification applies to both states of the separating conjunction.
\begin{lemma}\label{Lemma:LocalitySplitPvar}
Consider $\astate\statemultdef\astate'$ and $\apvar\in\setgpvars\Rightarrow \anavalp\in\astate.\aflowgraph.\setnodes$ and $\astate\updOf{\apvar}{\anavalp}\statemultdef\splitfg$. 
Then we have $(\astate\mstar\astate')\updOf{\apvar}{\anavalp}\statemultdef\splitfg$ and
${\splitOf{(\astate\mstar\astate')\updOf{\apvar\mkern-1mu}{\mkern-1mu\anavalp}}=\splitOf{\astate\updOf{\apvar\mkern-1mu}{\mkern-1mu\anavalp}}\mstar \astate'\updOf{\apvar\mkern-1mu}{\mkern-1mu\anavalp}}$.
The latter separating conjunction is defined. 
\end{lemma}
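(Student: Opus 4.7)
The plan is to mirror the structure of the proof of \Cref{Lemma:LocalitySplitSel}, exploiting the fact that variable valuations are never split. Since $\astate\statemultdef\astate'$ implies $\astate.\aval=\astate'.\aval$, applying $\updOf{\apvar}{\anavalp}$ to either side modifies both valuations identically, and consequently $(\astate\mstar\astate')\updOf{\apvar}{\anavalp}$ decomposes componentwise as $\astate\updOf{\apvar}{\anavalp}\times\astate'\updOf{\apvar}{\anavalp}$ (up to well-splitness). So the only work is to show that (i)~well-splitness survives on the composed state, and (ii)~the subsequent $\splitfg$ step distributes.

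First, I would dispose of the case $\apvar\in\setlpvars$. Here the update only touches the local valuation; neither the shared heap, the local heaps, nor the values of global variables change. Thus $\wellsplitof$ is preserved immediately and no node needs to be relocated, so $\splitOf{\astate\updOf{\apvar}{\anavalp}}=\astate\updOf{\apvar}{\anavalp}$ and analogously for $\astate'$ and for the composition. The required equality is then the defining componentwise composition.

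The interesting case is $\apvar\in\setgpvars$, where $\anavalp\in\astate.\aflowgraph.\setnodes$ by assumption. I would further split on whether $\anavalp\in\astate.\ashared.\aflowgraph.\setnodes$ (no splitting required, since $\anavalp$ is already shared and, by disjointness of $\astate.\aflowgraph$ and $\astate'.\aflowgraph$, lies outside $\astate'.\alocal.\aflowgraph.\setnodes$) or $\anavalp\in\astate.\alocal.\aflowgraph.\setnodes$. In the latter, harder subcase, the assumption $\astate\updOf{\apvar}{\anavalp}\statemultdef\splitfg$ pins down exactly the region $R\subseteq\astate.\alocal.\aflowgraph$ that has to be moved to the shared heap: the set of local nodes reachable from $\anavalp$, which by the definedness of $\splitfg$ point only to nodes in $\astate.\aflowgraph.\setnodes$. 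Using $\astate.\aflowgraph.\setnodes\cap\astate'.\aflowgraph.\setnodes=\emptyset$ together with the well-splitness of $\astate'$, I would then argue that in the composed state no new reachability into $\astate'.\alocal.\aflowgraph$ is introduced by setting $\apvar$ to $\anavalp$, because $\anavalp\notin\astate'.\aflowgraph.\setnodes$. Hence the same region $R$—and no more—is relocated when $\splitfg$ is applied to $(\astate\mstar\astate')\updOf{\apvar}{\anavalp}$, which establishes both definedness and the equality.

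The main obstacle I anticipate is the bookkeeping needed to rule out interactions between the reachable region $R$ inside $\astate.\alocal.\aflowgraph$ and the local heap of $\astate'$ after composition, in particular the symmetric arguments showing that no global pointer of $\astate'$ suddenly reaches into $R$ and that no node in $R$ now points into $\astate'.\alocal.\aflowgraph$. Both follow from the disjointness of the node sets of $\astate$ and $\astate'$ and the well-splitness already enjoyed by each state individually, but making this precise is the bulk of the work. Once it is in place, the final equality $\splitOf{(\astate\mstar\astate')\updOf{\apvar}{\anavalp}}=\splitOf{\astate\updOf{\apvar}{\anavalp}}\mstar\astate'\updOf{\apvar}{\anavalp}$ is assembled componentwise exactly as in the closing computation of \Cref{Lemma:LocalitySplitSel}, and definedness of the separating conjunction on the right falls out of the same case analysis.
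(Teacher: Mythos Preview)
Your proposal is correct and follows essentially the same approach as the paper. The paper's own proof is terse---it immediately discards the case $\apvar\in\setlpvars$ (``otherwise there is no splitting''), records the decomposition $(\astate\mstar\astate')\updOf{\apvar}{\anavalp}=\astate\updOf{\apvar}{\anavalp}\mstar\astate'\updOf{\apvar}{\anavalp}$ from the fact that stacks are not split, and then defers the remaining reasoning to ``an argumentation similar to the one in the previous lemma''---which is exactly the case analysis on the location of $\anavalp$ and the reachability bookkeeping you spell out.
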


\begin{proof}
Assume $\apvar\in\setgpvars$, otherwise there is no splitting.
We have \[(\astate\mstar\astate')\updOf{\apvar}{\anavalp}=\astate\updOf{\apvar}{\anavalp}\mstar\astate'\updOf{\apvar}{\anavalp}\] since stacks are not split.
The claim then follows with an argumentation similar to the one in the previous lemma.
\end{proof}

We also have a modification that corresponds to an allocation. For defining allocations, we introduce the \emph{initial flow graph} for adress $\anaval\in\nat$, which is
\begin{align*}
\initfgof{\anaval}\;\;\defeq\;\;(\setnd{\anaval}, \setcond{(\anaval, \asel, \anaval)}{\asel\in\setsel}, \setcond{(\anavalp, \lambda_\monunit, \anaval)}{\anavalp\in\nat\setminus\setnd{\anaval}}) \ . 
\end{align*}
In the latter set, $\lambda_\monunit$ is the constant $\monunit$ function.
Every selector points to the fresh node and the inflow is $\monunit$ on all incoming edges.
Allocation is then captured by the function
\begin{align*}
\extfg:\setstates\times\nat\pfun\setstates.
\end{align*}
It extends the local flow graph by an initial flow graph for the given address.
The function yields \[\extOf{\astate}{\anaval}\defeq (\astate.\ashared, (\astate.\alocal.\aflowgraph\mstar\initfgof{\anaval}, \astate.\alocal.\aval))\ .\]
It is defined only if the address is fresh, $\anaval\notin\astate.\aflowgraph.\setnodes$, and not pointed to from a global variable or the global heap of $\astate$.
We denote this definedness by $(\astate, \anaval)\statemultdef\extfg$.
The result is well-split.
\begin{lemma}
If $(\astate, \anaval)\statemultdef\extfg$ then $\extOf{\astate}{\anaval}\in\setstates$.
\end{lemma}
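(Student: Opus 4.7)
The plan is to unfold the definition of $\extOf{\astate}{\anaval}$ and verify each of the three conjuncts of $\wellsplitof{\ashared,\alocal}$ for the new state. Write $\astate' \defeq \extOf{\astate}{\anaval}$, so $\astate'.\ashared = \astate.\ashared$ and $\astate'.\alocal = (\astate.\alocal.\aflowgraph \mstar \initfgof{\anaval},\; \astate.\alocal.\aval)$. The valuation component is unchanged, and the hypothesis $(\astate,\anaval)\statemultdef\extfg$ gives us two usable facts: (i) $\anaval \notin \astate.\aflowgraph.\setnodes$, so $\anaval$ is fresh, and (ii) $\anaval$ is not pointed to by any global variable of $\astate$ nor by any selector of a node in $\astate.\ashared.\aflowgraph$.

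First I would show that $\astate'.\ashared.\aflowgraph \statemultdef \astate'.\alocal.\aflowgraph$. By freshness of $\anaval$, $\initfgof{\anaval}$ is disjoint in its nodes from both $\astate.\ashared.\aflowgraph$ and $\astate.\alocal.\aflowgraph$. The edges in $\initfgof{\anaval}$ all originate at $\anaval$, and its inflow from outside is $\monunit$ on every incoming edge, so both the inflow/outflow matching condition and the no-fake-flows equality reduce trivially to the corresponding conditions for $\astate.\ashared.\aflowgraph\statemultdef\astate.\alocal.\aflowgraph$, which hold because $\astate$ is itself a well-split state.

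Second, for every $\apvar \in \setgpvars$ we need $\astate'.\ashared.\avalof{\apvar} \notin \astate'.\alocal.\aflowgraph.\setnodes = \astate.\alocal.\aflowgraph.\setnodes \cup \setcompact{\anaval}$. The membership in $\astate.\alocal.\aflowgraph.\setnodes$ is excluded because $\astate$ is well-split, and equality with $\anaval$ is excluded by the second part of the definedness assumption for $\extfg$. Third, for every $\anavalp \in \astate'.\ashared.\aflowgraph.\setnodes = \astate.\ashared.\aflowgraph.\setnodes$ and every $\asel \in \setsel$, we need $\astate.\ashared.\aflowgraph.\svalof{\anavalp,\asel} \notin \astate.\alocal.\aflowgraph.\setnodes \cup \setcompact{\anaval}$. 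Again the first disjunct is ruled out by well-splitness of $\astate$ and the second by the definedness assumption. These three checks together give $\astate' \in \setstates$; there is no real obstacle here, the only subtlety being to read off the two non-pointing-into-$\anaval$ clauses precisely from the informal definition of $(\astate,\anaval)\statemultdef\extfg$.
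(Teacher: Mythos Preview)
The paper states this lemma but gives no proof for it; it is asserted immediately after the definition of $\extfg$ and the text moves on to the semantics of commands. Your plan---unfolding $\extOf{\astate}{\anaval}$ and verifying the three conjuncts of $\wellsplitof{-,-}$---is exactly the routine check one would expect, and your treatment of the second and third conjuncts is correct: the old local nodes are excluded by well-splitness of $\astate$, and $\anaval$ itself is excluded by the two ``not pointed to'' clauses in the definedness hypothesis for $\extfg$.

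For the first conjunct you are a bit quick. You need not only $\astate.\ashared.\aflowgraph \statemultdef \astate.\alocal.\aflowgraph$ but also that $\astate.\alocal.\aflowgraph \mstar \initfgof{\anaval}$ is defined and that the shared flow graph still composes with this enlarged local flow graph. Your phrase ``reduce trivially'' hides two obligations: that the existing flow graphs have zero outflow \emph{to} $\anaval$ (this uses that no selector in the global heap targets $\anaval$, and one needs the analogous fact for the local heap or an argument from the induced-edge definition), and that their recorded inflow \emph{from} $\anaval$ is $\monunit$ so as to match the zero outflow of $\initfgof{\anaval}$. These are the points where the flow-constraint side of the composition needs to be argued explicitly; the rest of your outline is sound.
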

\subsection{Commands}
The set of \myemph{commands} is 
\begin{align*}
\acom~::=~~ &\cskip
\bnf \assignof{\apvar}{\anaopof{\apvarp_1,\ldots, \apvarp_n}}
\bnf \assignof{\apvar}{\selof{\apvarp}{\asel}}
\bnf \assignof{\selof{\apvar}{\asel}}{\apvarp} \\
\bnf &~\mallocof{\apvar}
\bnf \assumeof{\anafpredof{\apvar_1,\ldots, \apvar_n}}
\bnf \assertof{\anafpredof{\apvar_1,\ldots, \apvar_n}}
\ . 
\end{align*}
Here, $\apvar, \apvarp\in\setpvars$ are program variables, $\asel\in\setsel$ is a selector, and $\anaop:\nat^n\rightarrow\nat$ and $\anafpred:\nat^n\rightarrow\bool$ are an operation resp. a predicate over the address domain.

The semantics is strict in $\abort$:
\begin{align*}
\semComOf{\acom}{\abort}\;\defeq\;\set{\abort}\; .
\end{align*}
For a state $\astate\neq \abort$, we define the following preconditions:
\begin{align*}
\assignfromop~\defeq~
&~~\anavalp = \semOf{\anaopof{\apvar_1,\ldots, \apvar_n}}\ \astate.\aval
\\&~\,\wedge (\apvar\in\setgpvars\Rightarrow \anavalp\in\astate.\aflowgraph.\setnodes)\wedge \astate\updOf{\apvar}{\anavalp}\statemultdef\splitfg\\
\assignfromsel~\defeq~
&~~\anaval= \semOf{\apvarp}\  \astate.\aval\wedge 
\anaval\in\astate.\aflowgraph.\setnodes
\wedge\anavalp =  \astate.\avalof{\anaval, \asel}
\\&~\,\wedge
(\apvar\in\setgpvars\Rightarrow \anavalp\in\astate.\aflowgraph.\setnodes) \wedge \astate\updOf{\apvar}{\anavalp}\statemultdef\splitfg\\
\assigntosel~\defeq~
&~~\anaval = \semOf{\apvar}\ \astate.\aval \wedge 
\anavalp = \semOf{\apvarp}\ \astate.\aval\wedge  \astate\statemultdef\updOf{(\anaval, \asel)}{\anavalp}
\\&~\,\wedge
(\anaval\in\astate.\aflowgraph.\ashared.\setnodes\Rightarrow \anavalp\in\astate.\aflowgraph.\setnodes)\wedge
\astate\updOf{(\anaval, \asel)}{\anavalp}\statemultdef\splitfg.
\end{align*}
We use the notation $\condval{\mathit{condition}}{\mathit{value}}$ as a short hand for the expression $\mathit{condition}\;\:?\;\:\mathit{value}\:\;:\;\setcompact{\abort}$. 
The semantics of commands is given in Figure~\ref{sem-commands-instantiation-appendix}.

\begin{figure*}
\begin{align*}
\semComOf{\cskip}{\astate}\quad &\defeq\quad\set{\astate}\\
\semComOf{\mallocof{\apvar}}{\astate}\quad &\defeq\quad\setcond{\splitOf{\astate'\updOf{\apvar}{\anaval}}}{\anaval\notin\astate.\aflowgraph.\setnodes\wedge \astate'=\extOf{\astate}{\anaval}}\\
\semComOf{\assumeof{\anafpredof{\apvar_1,\ldots, \apvar_n}}}{\astate}\quad &\defeq\quad\semOf{\anafpredof{\apvar_1,\ldots, \apvar_n}}\ \astate.\aval\;?\;\set{\astate}\;:\;\emptyset\\
\semComOf{\assertof{\anafpredof{\apvar_1,\ldots, \apvar_n}}}{\astate}\quad &\defeq\quad
\condval{\semOf{\anafpredof{\apvar_1,\ldots, \apvar_n}}\ \astate.\aval}{\set{\astate}}\\
\semComOf{\assignof{\apvar}{\anaopof{\apvar_1,\ldots, \apvar_n}}}{\astate}\quad &\defeq\quad
\condval{\assignfromop}{\set{\splitOf{\astate\updOf{\apvar}{\anavalp}}}}\\
\semComOf{\assignof{\apvar}{\selof{\apvarp}{\asel}}}{\astate}\quad &\defeq\quad
\condval{\assignfromsel}{\set{
\splitOf{\astate\updOf{\apvar}{\anavalp}}}}\\
\semComOf{\assignof{\selof{\apvar}{\asel}}{\apvarp}}{\astate}\quad &\defeq\quad
\condval{\assigntosel}{\set{
\splitOf{\astate\updOf{(\anaval, \asel)}{\anavalp}}}}.
\end{align*}
\vspace{-6mm}
\caption{Semantics of commands.\label{sem-commands-instantiation-appendix}}
\end{figure*}
\subsection{Assertions}
We use countable sets $\setavars$ and $\setfvars$ of logical variables with typical elements $\anavar$ resp. $\anfvar$.
The first set holds variables that store addresses, the second holds variables for flow values.
The logical variables are assumed to be disjoint and disjoint from the program variables.
We use \[\setlogval \defeq (\setavars\rightarrow\nat)\discup (\setfvars\rightarrow\amonoid)\] for the set of all valuations of the logical variables that respect the typing.
We write \[\setproglogval\defeq\setvalsof{\setlpvars}\discup\setvalsof{\setgpvars}\discup\setlogval\] for all valuations of the program and logical variables.

Terms are either \emph{address terms} of type $\nat$ or \emph{flow terms} of type $\amonoid$.
We write $\adom$ to mean either $\nat$ or $\amonoid$.
To form terms, we assume the address domain and the flow domain each come with a set of operations $\setaops$ resp. $\setfops$ defined on them.
Each operation has a type.
For address operations, this type has the form $\adom_1\rightarrow\ldots\rightarrow\adom_n\rightarrow \nat$.
For flow operations, the result is a monoid value.
To ease the notation, we simplify the types to $(\natmon)^n\rightarrow \adom$ and write $\afterm$ for an address or a flow term.
In applications, we will respect the precise types.
We also assume to have a set of predicates $\setafpreds$ that follow a similar typing scheme.

The sets of \myemph{address} and \myemph{flow terms} are defined by mutual induction:
\begin{align*}
\aterm\quad &::=\quad \anaval\bnf \anavar\bnf\apvar\bnf \anaopof{\afterm_1,\ldots, \afterm_n}\\
\fterm\quad &::=\quad\anfval\bnf \anfvar \bnf \anfopof{\afterm_1,\ldots, \afterm_n}.   
\end{align*}
An address term is an address $\anaval\in\nat$, a logical address variable $\anavar\in\setavars$, a program variable $\apvar\in\setpvars$, and an operation $\anaop\in\setaops$ applied to $n$ terms, with $n$ is the operation's arity.
A flow term is a flow value $\anfval\in\amonoid$, a logical flow variable $\anfvar\in\setfvars$, and a flow operation $\anfop\in\setfops$ applied to terms.

The semantics expects a valuation $\apvlvval$ of the program and logical variables and returns an address or flow value: 
\begin{align*}
\semOf{\afterm}: \setproglogval\rightarrow \natmon.
\end{align*} 
The definition is as follows:
\begin{gather*}
\semOf{\mathit{d}}\ \apvlvval~\defeq~ \mathit{d}
\qquad\qquad
\semOf{\mathit{x}}\ \apvlvval~\defeq~ \apvlvvalof{\mathit{x}}
\\
\begin{aligned}
  \semOf{\mathsf{op}(\afterm_1,\ldots, \afterm_n)}\ \apvlvval ~\defeq~ \mathsf{op}(\semOf{\afterm_1}\ \apvlvval,\ldots, \semOf{\afterm_n}\ \apvlvval)
  \ . 
\end{aligned}
\end{gather*}
Here, $\mathit{d}$ is an address $\anaval\in\nat$ or a flow value $\anfval\in\amonoid$, $\mathit{x}$ is a logical flow variable $\anfvar\in\setfvars$, a logical address variable $\anavar\in\setavars$, or a program variable $\apvar\in\setpvars$, and $\mathsf{op}$ is a flow operation $\anfop\in\setfops$ or an address operation~${\anaop\in\setaops}$.

The set of \myemph{atomic predicates} is defined as follows, with  $\asel\in\setsel$:
\begin{align*}
\atomicpred~::=~~~&\emp\bnf\anafpredof{\afterm_1, \ldots, \afterm_n}\\\bnf&\aterm_1\pointsto{\asel}\aterm_2
\bnf \aterm\pointsto{\insel}\fterm
\bnf \aterm\pointsto{\fsel}\fterm.
\end{align*}
A flow graph $\aflowgraph$ and a valuation $\apvlvval$ of the program and logical variables satisfy an atomic predicate as prescribed by Figure~\ref{satisfaction-atomic-predicates-appendix}. 
\begin{figure*}
\begin{alignat*}{5}
\hspace{.3cm}
\aflowgraph, \apvlvval&\models \emp\quad&\text{if}\quad&\aflowgraph.\setnodes=\emptyset\\
\aflowgraph, \apvlvval&\models \anafpredof{\afterm_1, \ldots, \afterm_n}\quad&\text{if}\quad&\anafpredof{
\semOf{\afterm_1}\ \apvlvval, \ldots, \semOf{\afterm_n}\ \apvlvval} \\
\aflowgraph, \apvlvval&\models \aterm_1\pointsto{\asel}\aterm_2\quad&\text{if}\quad&\exists\anaval, \anavalp.\;\anaval=\semOf{\aterm_1}\ \apvlvval\;\wedge\; \anavalp=\semOf{\aterm_2}\ \apvlvval\\&&&\qquad~\wedge\;\aflowgraph.\setnodes=\setnd{\anaval}\;\wedge\;\aflowgraph.\svalof{\anaval, \asel}=\anavalp\\
\aflowgraph, \apvlvval&\models \aterm\pointsto{\insel}\fterm\quad&\text{if}\quad&
\exists\anaval, \anfval.\;\anaval=\semOf{\aterm}\ \apvlvval\;\wedge\; \anfval=\semOf{\fterm}\ \apvlvval\\&&&\qquad~\;\wedge\;\aflowgraph.\setnodes=\setnd{\anaval}\;\wedge\;\aflowgraph.\inflowof{\anaval}=\anfval\\
\aflowgraph, \apvlvval&\models \aterm\pointsto{\fsel}\fterm\quad&\text{if}\quad&
\exists\anaval, \anfval.\;\anaval=\semOf{\aterm}\ \apvlvval\;\wedge\; \anfval=\semOf{\fterm}\ \apvlvval\\&&&\qquad~\;\wedge\;\aflowgraph.\setnodes=\setnd{\anaval}\;\wedge\;\aflowgraph.\fvalof{\anaval}=\anfval. 
\end{alignat*}
\vspace{-3mm}
\caption{Satisfaction relation for atomic predicates.\label{satisfaction-atomic-predicates-appendix}}
\end{figure*}

The set of \myemph{state predicates} is defined by
\begin{align*}
\statepred\quad::=\quad &\atomicpred\bnf \neg \statepred\bnf\statepred_1\wedge\statepred_2\bnf \exists \anavar.\statepred\bnf \exists \anfvar.\statepred
\bnf \statepred_1\mstar\statepred_2\bnf\statepred_1\sepimp\statepred_2\bnf \fbox{$\statepred$}\; . 
\end{align*}
It will be convenient to have the box predicate \fbox{$\statepred$} as introduced in RGSep~\cite{DBLP:conf/concur/VafeiadisP07,DBLP:phd/ethos/Vafeiadis08}.
The predicate is meant to describe constraints over the shared state.
Therefore, we expect $\statepred$ to contain neither local program variables nor further boxes.

The $\abort$ state never satisfies a state predicate.\\
For states $\astate=(\ashared, \alocal)\in\setstates$ and valuations $\anlvval$ of the logical variables that occur free in the formula the satisfaction relation is defined as follows:
\begin{alignat*}{5}
\astate, \anlvval&\models \atomicpred\quad&\text{if}\quad&\astate.\alocal.\aflowgraph,\anlvval\discup\astate.\aval\models\atomicpred\\
\astate, \anlvval&\models \neg\statepred\quad&\text{if}\quad&\astate, \anlvval\not\models \statepred\\
\astate, \anlvval&\models \statepred_1\wedge \statepred_2\quad&\text{if}\quad&\astate, \anlvval\models \statepred_1\;\;\wedge\;\;\astate, \anlvval\models \statepred_2\\
\astate, \anlvval&\models \exists \anavar.\statepred\quad&\text{if}\quad&\exists \anaval\in\nat.\;\;\astate, \anlvval[\anavar\mapsto\anaval]\models \statepred\\
\astate, \anlvval&\models \exists \anfvar.\statepred\quad&\text{if}\quad&\exists \anfval\in\amonoid.\;\;\astate, \anlvval[\anfvar\mapsto\anfval]\models \statepred\\
\astate, \anlvval&\models \statepred_1\mstar \statepred_2\quad&\text{if}\quad&\exists \alocal_1, \alocal_2.\;\;\alocal_1\statemultdef\alocal_2\;\wedge\;\alocal_1\mstar\alocal_2=\astate.\alocal\;\wedge
(\astate.\ashared, \alocal_1), \anlvval\models \statepred_1\;\wedge\;(\astate.\ashared, \alocal_2), \anlvval\models \statepred_2\\
\astate, \anlvval&\models \statepred_1\sepimp \statepred_2\quad&\text{if}\quad&\forall \astate'.
\begin{aligned}[t]
  \;\;(\astate\statemultdef\astate'\;\wedge\;\astate', \anlvval\models\statepred_1)\Rightarrow\; \astate\mstar\astate', \anlvval\models \statepred_2
\end{aligned}\\
\astate, \anlvval&\models \fbox{$\statepred$}\quad&\text{if}\quad&\astate.\alocal\in\emp_\localindex\;\;\wedge\;\;\astate.\ashared, \anlvval\modelsp\statepred.
\end{alignat*}
The definition of $\modelsp$ is similar.
The key difference is that we split the global flow graph when defining the separating conjunction, and similar for the separating implication.
Note that by the assumption on the shape of formulas, we do not need a valuation of the local variables to determine the truth of atomic predicates that occur inside boxes.

The set of \myemph{computation predicates} is
\begin{align*}
\comppred\quad::=\quad&\nowOf{\statepred}\bnf\pastOf{\statepred}\bnf \comppred\wedge\comppred\bnf \comppred\vee\comppred
\bnf \comppred\mstar\comppred\bnf\exists \anavar.\comppred\bnf\exists \anfvar.\comppred.
\end{align*}

The semantics is defined over non-empty sequences $\sigma$ with $\astateseq\in(\setstates\discup\setnd{\abort})^+$ and valuations of logical variables $\anlvval$ as follows.
A sequence containing $\abort$ never satisfies a predicate.
For a sequence $\astateseq\in\setstates^+$, we define:
\begin{alignat*}{5}
\astateseq.\astate, \anlvval&\models \nowOf{\statepred}\quad&\text{if}\quad&\astate, \anlvval\models \statepred\\
\astateseq, \anlvval&\models \pastOf{\statepred}\quad&\text{if}\quad&\exists \astate\in\setstates, \astateseq_1, \astateseq_2\in\setstates^*.
\begin{aligned}[t]
  \astateseq=\astateseq_1.\astate.\astateseq_2\wedge\;\astate, \anlvval\models \statepred
\end{aligned}\\
\astateseq, \anlvval&\models \comppred_1\wedge\comppred_2\quad&\text{if}\quad&\astateseq, \anlvval\models\comppred_1\;\;\wedge\;\;\astateseq, \anlvval\models\comppred_2\\
\astateseq, \anlvval&\models \comppred_1\vee\comppred_2\quad&\text{if}\quad&\astateseq, \anlvval\models\comppred_1\;\;\vee\;\;\astateseq, \anlvval\models\comppred_2\\
\astateseq, \anlvval&\models \comppred_1\mstar\comppred_2\quad&\text{if}\quad&\exists \astateseq_1, \astateseq_2\in\setstates^+.\;\;\astateseq_1\statemultdef\astateseq_2\;\wedge\;\astateseq=\astateseq_1\mstar\astateseq_2
\;\wedge\;\astateseq_1, \anlvval\models\comppred_1\;\wedge\;\astateseq_2, \anlvval\models\comppred_2\\
\astateseq, \anlvval&\models \exists \anavar.\comppred\quad&\text{if}\quad&\exists \anaval\in\nat.\;\;\astateseq, \anlvval[\anavar\mapsto\anaval]\models\comppred\\
\astateseq, \anlvval&\models \exists \anfvar.\comppred\quad&\text{if}\quad&\exists \anfval\in\amonoid.\;\;\astateseq, \anlvval[\anfvar\mapsto\anfval]\models\comppred.
\end{alignat*}
A computation predicate is \emph{closed} if it does not contain free logical variables.
In this case, the satisfaction relation only depends on the computation but is independent of the valuation.
We write $\semOf{\comppred}$ for the set of computations that satisfy a predicate.
We need the following lemma, which is a consequence of Lemma~\ref{Lemma:FrameablePredicates} with the consideration of quantifiers added. 
\begin{lemma}\label{Lemma:SyntPredFrameable}
 $\semOf{\comppred}$ is frameable for every $\comppred$.
\end{lemma}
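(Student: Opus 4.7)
The plan is to prove this by structural induction on the syntax of the computation predicate $\comppred$, leveraging Lemma~\ref{Lemma:FrameablePredicates} for all the heavy lifting.

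First I would dispatch the base cases: for $\nowOf{\statepred}$ and $\pastOf{\statepred}$, observe that $\semOf{\nowOf{\statepred}}$ coincides with the semantic now-predicate $\nowOf{\semOf{\statepred}}$ and similarly for past (after quotienting out the valuation of free logical variables, of which there are none since we are computing the semantics of a closed predicate, or equivalently, intersecting over all valuations as in the quantifier case below). Hence frameability of these two base cases is immediate from Lemma~\ref{Lemma:FrameablePredicates}(i). For the binary connectives $\comppred_1 \wedge \comppred_2$, $\comppred_1 \vee \comppred_2$, and $\comppred_1 \mstar \comppred_2$, the syntactic satisfaction relation directly unfolds to intersection, union, and separating conjunction of the semantics of the sub-predicates, so frameability follows from the induction hypothesis together with Lemma~\ref{Lemma:FrameablePredicates}(ii).

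The only case requiring a small extra step is existential quantification. The semantics yields $\semOf{\exists \anavar. \comppred} = \bigcup_{\anaval \in \nat}\, \semOf{\comppred[\anavar \mapsto \anaval]}$ and analogously for quantification over $\anfvar$. I therefore need to observe a mild strengthening of Lemma~\ref{Lemma:FrameablePredicates}(ii): frameability is preserved under arbitrary (not just binary) unions. This is immediate from Definition~\ref{Definition:Frameable}, since if $\astateseq \cdot \astate \in \bigcup_i \acpred_i$ then $\astateseq \cdot \astate \in \acpred_{j}$ for some $j$, and frameability of $\acpred_j$ gives $\astateseq \cdot \astate \cdot \astate \in \acpred_j \subseteq \bigcup_i \acpred_i$. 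Applying this together with the induction hypothesis to each member $\semOf{\comppred[\anavar \mapsto \anaval]}$ (each of which is frameable, since the substitution does not change the structure of $\comppred$) closes the quantifier cases.

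The only mild subtlety, and the place I would be most careful, is to keep track of the role of free logical variables: the lemma is stated for $\semOf{\comppred}$, which the excerpt defines for closed predicates, so the induction needs to be run either with a strengthened invariant that works parametrically in a valuation $\anlvval$ of the currently-free logical variables, or by quantifying outright over $\anlvval$ at each step. Either formulation is routine; I would go with the valuation-parametric version and note at the end that closing under all valuations (which is itself just a big intersection/union, both of which preserve frameability by the same argument as above) yields the statement of the lemma.
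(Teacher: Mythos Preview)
Your proposal is correct and matches the paper's approach exactly: the paper states the lemma as ``a consequence of Lemma~\ref{Lemma:FrameablePredicates} with the consideration of quantifiers added,'' and your structural induction is precisely the unfolding of that remark, including the extension to arbitrary unions for the existential cases. The only cosmetic point is that the semantics handles quantifiers via valuation update rather than syntactic substitution, but you already note the valuation-parametric formulation, so this is fine.
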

\begin{lemma}\label{Lemma:NotFree}
If we have $\astateseq.\astate\in \semOf{\comppred}$ and $\apvar\notin\freeVariablesOf{\comppred}$, then we also have $\astateseq.\astate.(\astate\updOf{\apvar}{\anaval})\in \semOf{\comppred}$. 
\end{lemma}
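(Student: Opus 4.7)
The proof is by structural induction on the computation predicate $\comppred$, preceded by an auxiliary lemma for state predicates: if $\apvar\notin\freeVariablesOf{\statepred}$, then $\astate,\anlvval\models\statepred$ iff $(\astate\updOf{\apvar}{\anaval}),\anlvval\models\statepred$, for every $\anlvval$. The auxiliary lemma follows by structural induction on $\statepred$. Atomic predicates are where program variables enter, and a term's denotation depends only on the program variables that appear in it (immediate from the definition of $\semOf{\afterm}$). Crucially, $\updOf{\apvar}{\anaval}$ touches only the stack component of $\astate$ and leaves $\astate.\aflowgraph$ unchanged, so the points-to, inflow, and flow atomic predicates are preserved. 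Separating conjunction and implication in $\statepred$ split the flow graph but share the valuation, which is why agreement on the unaffected free variables suffices; the box modality is also immediate since it ignores the local component.

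Turning to the main induction, the case $\comppred=\nowOf{\statepred}$ reduces directly to the auxiliary lemma applied to the new last state $\astate\updOf{\apvar}{\anaval}$. For $\comppred=\pastOf{\statepred}$, the witness state provided by $\astateseq.\astate\in\semOf{\comppred}$ still occurs in the extended sequence $\astateseq.\astate.(\astate\updOf{\apvar}{\anaval})$, so the past predicate remains satisfied. Conjunction, disjunction, and the two existential quantifier cases follow by a direct application of the induction hypothesis to the subformulas; note that $\freeVariablesOf{\exists\anavar.\comppred}=\freeVariablesOf{\comppred}$, so $\apvar$ is still not free in the body.

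The only technically interesting case is $\comppred=\comppred_1\mstar\comppred_2$. Given a decomposition $\astateseq.\astate=\astateseq_1\mstar\astateseq_2$ with $\astateseq_i\in\semOf{\comppred_i}$, we extend each $\astateseq_i$ by the state $\lastOf{\astateseq_i}\updOf{\apvar}{\anaval}$. If $\apvar\in\setgpvars$, then stacks agree across $\statemult$ on the global component, so updating on both sides is well-defined and preserves composability; if $\apvar\in\setlpvars$, local stacks are total and identical on $\setlpvars$, so the same construction works. Either way, $(\lastOf{\astateseq_1}\updOf{\apvar}{\anaval})\statemult(\lastOf{\astateseq_2}\updOf{\apvar}{\anaval})=\astate\updOf{\apvar}{\anaval}$, and since $\apvar\notin\freeVariablesOf{\comppred_i}$, the induction hypothesis yields that each extended sequence lies in $\semOf{\comppred_i}$. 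The main obstacle is really just this bookkeeping for the $\mstar$ case; nothing deep is required, because program-variable updates are entirely local to the stack and the assertion language's atomic ingredients that see stacks do so only through free variables.
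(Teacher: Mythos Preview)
The paper states this lemma without proof, so there is no approach to compare against. Your structural induction on $\comppred$, bootstrapped from the analogous invariance lemma for state predicates $\statepred$, is the natural argument and is correct.

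Two minor slips worth fixing. First, the equality $\freeVariablesOf{\exists\anavar.\comppred}=\freeVariablesOf{\comppred}$ is false as written (the bound variable is removed); what you actually need, and what you use, is that $\apvar$ is a \emph{program} variable while $\anavar$ is a \emph{logical} variable, so $\apvar\notin\freeVariablesOf{\exists\anavar.\comppred}$ iff $\apvar\notin\freeVariablesOf{\comppred}$. For the same reason the induction should really be carried out relative to an arbitrary logical valuation $\anlvval$, so that it goes through under the existential; the statement for $\semOf{\comppred}$ then follows for closed $\comppred$. Second, your justification of the box case (``it ignores the local component'') only covers $\apvar\in\setlpvars$; when $\apvar\in\setgpvars$ the global stack does change, and you need the induction hypothesis on the inner $\statepred$ (via $\modelsp$) together with $\apvar\notin\freeVariablesOf{\statepred}$. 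Neither point affects the overall correctness of the plan.
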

In addition to the equivalences listed in Lemma~\ref{Lemma:NowSLOperators}, we have the following for quantifiers over logical variables.
\begin{lemma}
We have: \[
  \astateseq, \anlvval\models \exists \anavar.\nowOf{\statepred} \iff \astateseq, \anlvval\models \nowOf{(\exists \anavar.\statepred)}
  \qquad\text{and}\qquad
  \astateseq, \anlvval\models \exists \anfvar.\nowOf{\statepred} \iff \astateseq, \anlvval\models \nowOf{(\exists \anfvar.\statepred)}
  \ .
\]
\end{lemma}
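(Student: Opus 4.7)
The plan is to unfold the semantic definitions on both sides of each equivalence and observe that they reduce to the same condition, since the now operator only inspects the final state of the computation sequence and is therefore transparent to quantification over logical variables. The two equivalences (one for the address-typed variable $\anavar$ ranging over $\nat$, one for the flow-typed variable $\anfvar$ ranging over $\amonoid$) have exactly the same structure, so I would state and prove the argument in a way that is parametric in the domain $\adom \in \{\nat, \amonoid\}$ and then instantiate.

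For the forward direction of, say, $\astateseq, \anlvval \models \exists\anavar.\nowOf{\statepred} \Rightarrow \astateseq, \anlvval \models \nowOf{(\exists\anavar.\statepred)}$, assume the left-hand side. By the clause for $\exists \anavar$ on computation predicates, there exists $\anaval \in \nat$ with $\astateseq, \anlvval[\anavar \mapsto \anaval] \models \nowOf{\statepred}$. By the clause for $\nowOf{-}$, we can write $\astateseq = \astateseqp.\astate$ where $\astate, \anlvval[\anavar \mapsto \anaval] \models \statepred$. But this is precisely the premise of the existential clause on state predicates, giving $\astate, \anlvval \models \exists\anavar.\statepred$. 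Applying the $\nowOf{-}$ clause in reverse yields $\astateseq, \anlvval \models \nowOf{(\exists\anavar.\statepred)}$. The converse direction runs the identical chain of definitional unfoldings in reverse: the witness $\anaval$ for $\exists\anavar.\statepred$ at the last state $\astate$ is the same witness that works under $\nowOf{-}$ on $\astateseq$.

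The case for $\exists\anfvar$ is verbatim the same with $\amonoid$ replacing $\nat$; nothing in either definitional clause depends on the choice of domain beyond typing the quantified value. Hence both equivalences hold.

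There is no substantive obstacle: the statement is a purely definitional unfolding, made possible by the fact that $\nowOf{\statepred}$ is a ``pointwise'' lifting, i.e., satisfaction on a sequence $\astateseq.\astate$ depends only on $\astate$, so the extension $\anlvval[\anavar \mapsto \anaval]$ (respectively $\anlvval[\anfvar \mapsto \anfval]$) has the same effect whether introduced at the computation level or at the state level. The only thing worth highlighting in the write-up is that the two semantic clauses for $\exists$ (the state-level one in the satisfaction table for state predicates and the computation-level one defined later) agree in how they update the valuation, which is why the forward and backward directions line up without any side condition.
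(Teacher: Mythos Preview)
Your proposal is correct; the paper states this lemma without proof, and your definitional unfolding is exactly the natural argument one would give. Both directions reduce immediately to the observation that the existential witness lives in the valuation, which is threaded identically through the computation-level and state-level satisfaction clauses.
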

\subsection{Semantics}
With the development in Section~\ref{Section:History}, the semantics generalizes to computations.
The key property required in Section~\ref{Subsection:States} is as follows.
%
\begin{lemma}\label{Lemma:SemanticsNoPvar}
Let $\acom$ be different from $\assignof{\apvar}{\anaopof{\apvarp_1,\ldots, \apvarp_n}}$, $\assignof{\apvar}{\selof{\apvarp}{\asel}}$, and $\mallocof{\apvar}$. 
Then $\csemOf{\acom}(\semOf{\comppred_1})\subseteq\semOf{\comppred_2}$ implies $\csemOf{\acom}(\semOf{\comppred_1\mstar\comppred})\subseteq\semOf{\comppred_2\mstar\comppred}$.
\end{lemma}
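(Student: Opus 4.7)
The plan is to lift the claim from the state level to the computation level by combining the general frameability machinery (Lemma~\ref{Lemma:Frameable}) with the syntactic frameability of our assertion language (Lemma~\ref{Lemma:SyntPredFrameable}). The exclusion of $\assignof{\apvar}{\anaopof{\apvarp_1,\ldots, \apvarp_n}}$, $\assignof{\apvar}{\selof{\apvarp}{\asel}}$, and $\mallocof{\apvar}$ is essential: these are precisely the commands that rewrite a program variable~$\apvar$. Since stacks are not split, such a rewrite must propagate to the frame (Lemma~\ref{Lemma:LocalitySplitPvar}), and thus breaks the state-level locality condition~\eqref{cond-loccom} as soon as the frame constrains~$\apvar$ (compare Lemma~\ref{Lemma:NotFree}).

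For the first step I would verify~\eqref{cond-loccom} at the state level for each of the remaining commands by case analysis on $\acom$. The commands $\cskip$, $\assumeof{\anafpredof{\apvar_1,\ldots, \apvar_n}}$, and $\assertof{\anafpredof{\apvar_1,\ldots, \apvar_n}}$ either leave the state unchanged or filter it out (or abort), so~\eqref{cond-loccom} is immediate. The only interesting case is the field update $\acom=\assignof{\selof{\apvar}{\asel}}{\apvarp}$: here Lemma~\ref{Lemma:LocalitySplitSel} supplies both the definedness transfer and the identity
\[
\splitOf{(\astate\mstar\astate')\updOf{(\anaval, \asel)}{\anavalp}} = \splitOf{\astate\updOf{(\anaval, \asel)}{\anavalp}}\mstar \astate'\, ,
\]
which, after unwinding Figure~\ref{sem-commands-instantiation-appendix}, yields $\semComOf{\acom}{\astate\mstar\astate'}=\semComOf{\acom}{\astate}\mstar\set{\astate'}$, from which~\eqref{cond-loccom} is routine.

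For the second step I would instantiate Lemma~\ref{Lemma:Frameable} with the frame $\acpredpp \defeq \semOf{\comppred}$, which is frameable by Lemma~\ref{Lemma:SyntPredFrameable}, to obtain the inclusion
\[
\csemOf{\acom}(\semOf{\comppred_1}\mstar\semOf{\comppred})\;\subseteq\; \csemOf{\acom}(\semOf{\comppred_1})\mstar\semOf{\comppred}\, .
\]
Chaining with the hypothesis $\csemOf{\acom}(\semOf{\comppred_1})\subseteq\semOf{\comppred_2}$ gives the upper bound $\semOf{\comppred_2}\mstar\semOf{\comppred}$. The satisfaction clauses for the computation-level $\mstar$ then supply the semantic identities $\semOf{\comppred_1\mstar\comppred}=\semOf{\comppred_1}\mstar\semOf{\comppred}$ and $\semOf{\comppred_2\mstar\comppred}=\semOf{\comppred_2}\mstar\semOf{\comppred}$, closing the argument.

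The main obstacle sits entirely in the first step, where for $\assignof{\selof{\apvar}{\asel}}{\apvarp}$ one has to carefully track the $\splitfg$ side conditions so that the post-state remains inside the set $\setstates$ of well-split states and the promised decomposition really agrees with the non-deterministic transformer of Figure~\ref{sem-commands-instantiation-appendix}; once Lemma~\ref{Lemma:LocalitySplitSel} is invoked this is bookkeeping rather than new insight, and the remainder of the proof is a one-line chase through Lemmas~\ref{Lemma:Frameable} and~\ref{Lemma:SyntPredFrameable}.
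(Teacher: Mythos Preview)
Your proposal is correct and takes essentially the same route as the paper: verify state-level locality by case analysis on the command (with Lemma~\ref{Lemma:LocalitySplitSel} handling the selector assignment), then lift to computations via the frameability of $\semOf{\comppred}$. The paper merely inlines the proof of Lemma~\ref{Lemma:Frameable} as a direct element chase---splitting on whether the composed state aborts and invoking Lemmas~\ref{Lemma:UpdatesConjunction} and~\ref{Lemma:LocalitySplitSel} in contraposition for that branch---whereas you factor through Lemma~\ref{Lemma:Frameable} explicitly, which is a slightly cleaner packaging of the same argument.
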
 

\begin{proof}
Recall: $\semOf{\comppred_1\mstar\comppred}$ contains sequences of states $\astateseq.\astate\in \setstates^+$.
It is readily checked that $\semOf{\comppred_1\mstar\comppred} = \semOf{\comppred_1}\mstar\semOf{\comppred}$.
So there are $\astateseq_1.\astate_1\in\semOf{\comppred_1}$ and $\astateseq_2.\astate_2\in\semOf{\comppred_2}$ with $\astateseq.\astate=\astateseq_1.\astate_1\mstar\astateseq_2.\astate_2$.
We have that $\csemOf{\acom}(\astateseq.\astate)$ extends the sequence $\astateseq.\astate$ by elements from $\semComOf{\acom}{\astate}$.
Such elements are either $\abort$ or states.\\\\
\textit{Case $\abort$.}\quad
We  show that $\semComOf{\acom}{\astate}=\abort$ implies $\semComOf{\acom}{\astate_1}=\abort$.
Then $\csemOf{\acom}(\comppred_1)$ contains a sequence with $\abort$, contradicting $\csemOf{\acom}(\semOf{\comppred_1})\subseteq\semOf{\comppred_2}$.
We distinguish the following cases:
\begin{compactitem}
\item Skip and assume do not abort.
\item An assertion only depends on the valuation of the program variables.
Valuations are never split.
This means $\astate_1$ will have the same valuation, and lead to the same abort.
\item We have excluded assignments to program variables and malloc.
\item An assignment to a selector aborts.
\end{compactitem}
It remains to argue for the last case.
An assignment to a selector aborts if the condition $\assigntosel$ fails.
%
We consider each case and argue that the abort also happens on $\astate_1$.
Note that valuations are never split, so the addresses $\anaval$ and $\anavalp$ are the same in $\astate_1$.
If $\astate\statemultdef\updOf{(\anaval, \asel)}{\anavalp}$ fails, we have $\semComOf{\updfun}{\astate.\aflowconstraint}=\abort$, with $\updfun$ induced by $\updOf{(\anaval, \asel)}{\anavalp}$.
By Lemma~\ref{Lemma:UpdatesConjunction} applied in contraposition, $\semOf{\updfun}(\astate_1.\aflowconstraint)=\abort$.
The contraposition applies because $\astate_1\statemultdef\astate_2$ implies $\astate_1.\aflowconstraint\statemultdef\astate_2.\aflowconstraint$ and $\astate_1.\aflowconstraint\mstar\astate_2.\aflowconstraint = \astate.\aflowconstraint$.
Hence, the command will abort on $\astate_1$.

If the implication $\anaval\in\astate.\aflowgraph.\ashared.\setnodes\Rightarrow \anavalp\in\astate.\aflowgraph.\setnodes$ fails, we have $\anaval\in\astate.\aflowgraph.\ashared.\setnodes$ and $\anavalp\notin\astate.\aflowgraph.\setnodes$.
To see that the implication also fails on $\astate_1$, consider first the case that the implication becomes trivial, $\anaval\notin\astate_1.\aflowgraph.\ashared.\setnodes$.
Then $\semOf{\updfun}(\astate_1.\aflowconstraint)=\abort$, because the domain of the update induced by the modification is not included in the state.
If the implication is non-trivial, $\anavalp\notin\astate.\aflowgraph.\setnodes$ implies $\anavalp\notin\astate_1.\aflowgraph.\setnodes$.
This is because of $\astate.\aflowgraph.\setnodes = \astate_1.\aflowgraph.\setnodes\discup\astate_2.\aflowgraph.\setnodes$.
Hence the implication, and so the command, also fail on $\astate_1$.

If $\astate\updOf{(\anaval, \asel)}{\anavalp}\statemultdef\splitfg$ does not hold, then Lemma~\ref{Lemma:LocalitySplitSel} applies in contraposition and yields a failure of definedness, $\astate_1\updOf{(\anaval, \asel)}{\anavalp}\statemultdef\splitfg$.
A failure of $\astate_1\updOf{(\anaval, \asel)}{\anavalp}\statemultdef\splitfg$ in turn  means that the command aborts on $\astate_1$.
To see that the contraposition applies, note that $\astate_1\mstar\astate_2=\astate$.
Moreover, we can rely on the implication $\anaval\in\astate_1.\aflowgraph.\ashared.\setnodes\Rightarrow \anavalp\in\astate_1.\aflowgraph.\setnodes$, for otherwise the command would abort on $\astate_1$.
Similarly, if $\astate_1\statemultdef\updOf{(\anaval, \asel)}{\anavalp}$ failed, we would also abort on $\astate_1$.\\\\
\textit{Case $\astate'\in \semComOf{\acom}{\astate}$.}\quad
We show that there is $\astate_1'\in\semComOf{\acom}{\astate_1}$ with $\astate_1'\statemultdef\astate_2$ and $\astate'=\astate_1'\mstar\astate_2$.
Then $\csemOf{\acom}(\semOf{\comppred_1})\subseteq\semOf{\comppred_2}$, $\astateseq_1.\astate_1\in\semOf{\comppred_1}$, $\astate_1'\in\semComOf{\acom}{\astate_1}$ imply $\astateseq_1.\astate_1.\astate_1'\in \semOf{\comppred_2}$.
Since $\semOf{\comppred}$ is frameable by Lemma~\ref{Lemma:SyntPredFrameable}, $\astateseq_2.\astate_2.\astate_2\in\semOf{\comppred}$.
So, $\astateseq.\astate.\astate'=\astateseq_1.\astate_1.\astate_1'\mstar\astateseq_2.\astate_2.\astate_2 \in \semOf{\comppred_2}\mstar\semOf{\comppred}=\semOf{\comppred_2\mstar\comppred}$.
Case analysis:
\begin{compactitem}
	\item For skip, we have $\astate'=\astate$ and $\astate_1'=\astate_1\in\semComOf{\cskip}{\astate_1}$.
	Then $\astate'=\astate_1'\mstar\astate_2$. 
	\item Assume and assert are similarly simple, as the semantics only depends on the  program variable valuation and that is not split.
	\item We have excluded assignments and mallocs to program variables.
	\item For an assignment to a selector that results in the state $\astate' = \splitOf{\astate\updOf{(\anaval, \asel)}{\anavalp}}$, we have condition $\assigntosel$ for $\astate_1$ as follows:
	\begin{align*}
		\qquad&\anaval = \semOf{\apvar}\ \astate_1.\aval ~\wedge~ 
		\anavalp = \semOf{\apvarp}\ \astate_1.\aval~\wedge 
			\astate_1\statemultdef\updOf{(\anaval, \asel)}{\anavalp}
			\\\wedge~& \astate_1\updOf{(\anaval, \asel)}{\anavalp}\statemultdef\splitfg ~\wedge
 (\anaval\in\astate_1.\aflowgraph.\ashared.\setnodes\Rightarrow \anavalp\in\astate_1.\aflowgraph.\setnodes)
\ .
\end{align*}
	Otherwise, the assignment would abort on $\astate_1$,contradicting $\csemOf{\acom}(\semOf{\comppred_1})\subseteq\semOf{\comppred_2}$.
	When applied to $\astate_1$, we obtain $\astate_1'=\splitOf{\astate_1\updOf{(\anaval, \asel)}{\anavalp}}$.
	Then 
	\begin{align*}
	\astate_1'\mstar\astate_2&=\splitOf{\astate_1\updOf{(\anaval, \asel)}{\anavalp}}
\mstar\astate_2\\
& = \splitOf{(\astate_1\mstar\astate_2)\updOf{(\anaval, \asel)}{\anavalp}}=	\splitOf{\astate\updOf{(\anaval, \asel)}{\anavalp}} = \astate'
	\end{align*}
	The first equality is the definition of $\astate_1'$, the next is Lemma~\ref{Lemma:LocalitySplitSel}, the third is the fact that $\astate_1\mstar\astate_2=\astate$, and the last is the definition of $\astate'$.
	The lemma applies by $\assigntosel$ stated above.
\end{compactitem}
This concludes the proof.
\end{proof}

Since we do not use program variables as resources, the above lemma will not hold without prerequisites for assignments to program variables and malloc.
The prerequisite will be that the program variable modified by the command does not occur free in the predicate that is added with the separating conjunction.
The prerequisite will then form a side-condition to applying the frame rule. 
\begin{lemma}\label{Lemma:SemanticsPvar}
Let $\acom$ be one of: $\assignof{\apvar}{\anaopof{\apvarp_1,\ldots, \apvarp_n}}$ or $\assignof{\apvar}{\selof{\apvarp}{\asel}}$ or $\mallocof{\apvar}$. 
Let $\apvar\notin\freeVariablesOf{\comppred}$. 
Then we have: $\csemOf{\acom}(\semOf{\comppred_1})\subseteq\semOf{\comppred_2}$ implies $\csemOf{\acom}(\semOf{\comppred_1\mstar\comppred})\subseteq\semOf{\comppred_2\mstar\comppred}$.
\end{lemma}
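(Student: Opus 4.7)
\smartparagraph{Proof plan for \Cref{Lemma:SemanticsPvar}}
The plan is to follow the structure of the proof of \Cref{Lemma:SemanticsNoPvar}, decomposing any computation in $\semOf{\comppred_1\mstar\comppred}$ as $\astateseq_1.\astate_1\mstar\astateseq_2.\astate_2$ with $\astateseq_1.\astate_1\in \semOf{\comppred_1}$ and $\astateseq_2.\astate_2\in\semOf{\comppred}$, and doing a case analysis on the step that $\acom$ performs on $\astate=\astate_1\mstar\astate_2$. The new twist compared to \Cref{Lemma:SemanticsNoPvar} is that $\acom$ assigns to a program variable $\apvar$. Since stacks are not split but shared, the post-state $\astate'$ updates $\apvar$ globally in $\astate$. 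The side condition $\apvar\notin\freeVariablesOf{\comppred}$ together with \Cref{Lemma:NotFree} is exactly what we need to carry the frame $\astateseq_2.\astate_2$ across such an update.

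First I would handle the abort cases. For $\assignof{\apvar}{\anaopof{\apvarp_1,\ldots,\apvarp_n}}$ and $\assignof{\apvar}{\selof{\apvarp}{\asel}}$ the abort conditions from $\assignfromop$ and $\assignfromsel$ are: (i)~a dereference on a node not in $\astate.\aflowgraph.\setnodes$, (ii)~a global-variable assignment to a value not in $\astate.\aflowgraph.\setnodes$, or (iii)~failure of $\astate\updOf{\apvar}{\anavalp}\statemultdef\splitfg$. Each of these propagates to $\astate_1$: valuations are shared (so the evaluated operands are the same on $\astate_1$), and $\astate_1.\aflowgraph.\setnodes\subseteq\astate.\aflowgraph.\setnodes$, so a dereference failure on $\astate$ yields one on $\astate_1$; for the split condition we invoke \Cref{Lemma:LocalitySplitPvar} in contraposition. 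For $\mallocof{\apvar}$ the semantics of $\extfg$ together with $\splitfg$ never aborts, so this case is vacuous. In every case, an abort on $\astate$ forces an abort on $\astate_1$, contradicting $\csemOf{\acom}(\semOf{\comppred_1})\subseteq\semOf{\comppred_2}$.

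Next I would handle the successful transitions. For the two assignment forms, let $\anavalp$ be the value computed from the (shared) valuation and let $\astate'=\splitOf{\astate\updOf{\apvar}{\anavalp}}$ be the resulting state. Using shared valuations and the subset relation on node sets, the preconditions $\assignfromop$/$\assignfromsel$ hold for $\astate_1$ with the same $\anavalp$, so $\astate_1'\defeq\splitOf{\astate_1\updOf{\apvar}{\anavalp}}\in\semComOf{\acom}{\astate_1}$. By \Cref{Lemma:LocalitySplitPvar} we have $\splitOf{(\astate_1\mstar\astate_2)\updOf{\apvar}{\anavalp}}=\splitOf{\astate_1\updOf{\apvar}{\anavalp}}\mstar\astate_2\updOf{\apvar}{\anavalp}$, i.e.\ $\astate'=\astate_1'\mstar\astate_2'$ where $\astate_2'\defeq\astate_2\updOf{\apvar}{\anavalp}$. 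Thus $\astateseq_1.\astate_1.\astate_1'\in\semOf{\comppred_2}$ by the hypothesis, and since $\apvar\notin\freeVariablesOf{\comppred}$, \Cref{Lemma:NotFree} gives $\astateseq_2.\astate_2.\astate_2'\in\semOf{\comppred}$. Composing these yields $\astateseq.\astate.\astate'\in\semOf{\comppred_2\mstar\comppred}$, as required.

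The malloc case follows the same template with one twist worth flagging, which I expect to be the main (though mild) obstacle. A successful transition picks a fresh address $\anaval\notin\astate.\aflowgraph.\setnodes$ and produces $\astate'=\splitOf{\extOf{\astate}{\anaval}\updOf{\apvar}{\anaval}}$. Since $\astate_1.\aflowgraph.\setnodes\subseteq\astate.\aflowgraph.\setnodes$, the same $\anaval$ is fresh for $\astate_1$, so $\astate_1'\defeq\splitOf{\extOf{\astate_1}{\anaval}\updOf{\apvar}{\anaval}}\in\semComOf{\mallocof{\apvar}}{\astate_1}$. The key point is that $\extOf{\astate}{\anaval}=\extOf{\astate_1}{\anaval}\mstar\astate_2$ because $\extfg$ only augments the local flow-graph component with the isolated initial flow graph $\initfgof{\anaval}$, which is disjoint from $\astate_2.\aflowgraph$ by freshness of $\anaval$; then \Cref{Lemma:LocalitySplitPvar} gives $\astate'=\astate_1'\mstar\astate_2\updOf{\apvar}{\anaval}$, and we conclude exactly as in the assignment case using \Cref{Lemma:NotFree}.
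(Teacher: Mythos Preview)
Your proposal is correct and mirrors the paper's proof: the same abort/success case split, the same use of \Cref{Lemma:LocalitySplitPvar} to decompose $\splitfg$ across the product and of \Cref{Lemma:NotFree} to carry the frame past the update of~$\apvar$, and the same treatment of $\mallocof{\apvar}$ via $\extOf{\astate_1}{\anaval}\mstar\astate_2=\extOf{\astate}{\anaval}$. One small correction: in the success case for the two assignments, the preconditions $\assignfromop$/$\assignfromsel$ on $\astate_1$ do not follow from ``the subset relation on node sets'' (that inclusion points the wrong way for conditions like $\anaval\in\astate_1.\aflowgraph.\setnodes$ or $\anavalp\in\astate_1.\aflowgraph.\setnodes$); rather, as the paper makes explicit, they hold because an abort on $\astate_1$ would contradict the hypothesis $\csemOf{\acom}(\semOf{\comppred_1})\subseteq\semOf{\comppred_2}$.
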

\begin{proof}
  The proof follows the same strategy as the proof for the previous lemma.
Recall that $\semOf{\comppred_1\mstar\comppred}$ contains sequences of states $\astateseq.\astate\in \setstates^+$.
Hence, there are some ${\astateseq_1.\astate_1\in\semOf{\comppred_1}}$ and $\astateseq_2.\astate_2\in\semOf{\comppred_2}$ with $\astateseq.\astate=\astateseq_1.\astate_1\mstar\astateseq_2.\astate_2$.
Moreover, we have that $\csemOf{\acom}(\astateseq.\astate)$ extends the sequence 
$\astateseq.\astate$ by elements from $\semComOf{\acom}{\astate}$.
Such elements are either $\abort$ or states.\\\\
\textit{Case $\abort$.}\quad
We are going to show that $\semComOf{\acom}{\astate}=\abort$ implies $\semComOf{\acom}{\astate_1}=\abort$.
Then $\csemOf{\acom}(\comppred_1)$ contains a sequence with $\abort$, contradicting $\csemOf{\acom}(\semOf{\comppred_1})\subseteq\semOf{\comppred_2}$.
\begin{compactitem}
\item Consider $\assignof{\apvar}{\anaopof{\apvarp_1,\ldots, \apvarp_n}}$.
It aborts due to a failure of the precondition $\assignfromop$.
The expression will evaluate to the same address $\anavalp$ in $\astate_1$, because stacks are never split.
Assume the implication fails, meaning $\apvar\in\setgpvars$ and $\anavalp\notin\astate.\aflowgraph.\setnodes$.
Then $\anavalp\notin\astate_1.\aflowgraph.\setnodes$ as $\astate.\aflowgraph.\setnodes=\astate_1.\aflowgraph.\setnodes\discup\astate_2.\aflowgraph.\setnodes$.
Hence, the implication will also fail on $\astate_1$ and we have the desired abort.

\enspace\enspace If $\astate\updOf{\apvar}{\anavalp}\statemultdef\splitfg$ fails, then either the local and the global flow graph of $\astate\updOf{\apvar}{\anavalp}$ do not compose or we reach a local node from the global state that points to a node outside the state.
In the former case, we note that $\astate\updOf{\apvar}{\anavalp}.\ashared.\aflowgraph = \astate.\ashared.\aflowgraph$ and $\astate\updOf{\apvar}{\anavalp}.\alocal.\aflowgraph = \astate.\alocal.\aflowgraph$.\\
Moreover, we know that $\astate.\ashared.\aflowgraph\statemultdef\astate\alocal.\aflowgraph$ as states are well-split.
The case does not occur.

\enspace\enspace In the latter case, the node that points outside $\astate$ is reachable through $\anavalp$.
To see this, note that $\astate$ is well-split and hence there is no way for the global state to reach the local heap.
In this case, split is defined (and the identity).
Hence, the only way to make split fail is by following the pointer from $\apvar$ to $\anavalp$ and potentially further to the mentioned node.
In $\astate_1\updOf{\apvar}{\anavalp}$, we may be able to follow the same path, in which case split would also fail on $\astate_1\updOf{\apvar}{\anavalp}$ due to the same node.
Alternatively, the path may lead outside $\astate_1\updOf{\apvar}{\anavalp}$.
In that case, there is a first node on the path that leaves $\astate_1\updOf{\apvar}{\anavalp}$.
This first node points outside $\astate_1\updOf{\apvar}{\anavalp}$, namely to the next node on the path.
As a result, split also fails on $\astate_1\updOf{\apvar}{\anavalp}$, as required.
\item Consider $\assignof{\apvar}{\selof{\apvarp}{\asel}}$.
The command aborts due to a failure of the precondition $\assignfromsel$
The command may abort as $\anaval\notin\astate.\aflowgraph.\setnodes$.
Then $\anaval\notin\astate_1.\aflowgraph.\setnodes$ by $\astate_1.\aflowgraph.\setnodes\discup\astate_2.\aflowgraph.\setnodes=\astate.\aflowgraph.\setnodes$.
Hence, the command will also abort on $\astate_1$.

\enspace\enspace For the implication and splitting, the argumentation is like in the previous case. 
\item A malloc does not abort, so the case does not occur.
\end{compactitem}~\\
\textit{Case $\astate'\in \semComOf{\acom}{\astate}$.}\quad
We show there are $\astate_1'\in\semComOf{\acom}{\astate_1}$ and $\astateseq_2.\astate_2.\astate_2'\in\semOf{\comppred}$ with $\astate_1'\statemultdef\astate_2'$ and $\astate'=\astate_1'\mstar\astate_2'$.
From the former condition and the assumption, we get $\astateseq_1.\astate_1.\astate_1'\in \semOf{\comppred_2}$.
Hence, $\astateseq.\astate.\astate'=\astateseq_1.\astate_1.\astate_1'\mstar\astateseq_2.\astate_2.\astate_2' \in \semOf{\comppred_2}\mstar\semOf{\comppred}=\semOf{\comppred_2\mstar\comppred}$.
We distinguish the cases.
\begin{compactitem}
	\item Consider $\assignof{\apvar}{\anaopof{\apvarp_1,\ldots, \apvarp_n}}$.
	The resulting state is 	$\astate'=\splitOf{\astate\updOf{\apvar}{\anavalp}}$ with \[\anavalp=\semOf{\anaopof{\apvar_1,\ldots, \apvar_n}}\ \astate.\aval \ .\]
	The command does not abort on $\astate_1$, as this would contradict $\csemOf{\acom}(\semOf{\comppred_1})\subseteq\semOf{\comppred_2}$.
	Moreover, the valuations of the program variables will coincide for $\astate$ and $\astate_1$.
	Hence, address $\anavalp$ will not change and we get the precondition $\assignfromop$ for $\astate_1$:
\begin{align*}
\qquad
\anavalp = \semOf{\anaopof{\apvar_1,\ldots, \apvar_n}}\ \astate_1.\aval ~\wedge~
(\apvar\in\setgpvars\Rightarrow \anavalp\in\astate_1.\aflowgraph.\setnodes) ~\wedge~
\astate_1\updOf{\apvar}{\anavalp}\statemultdef\splitfg
\ .
\end{align*}
	The state resulting from the command is \[\astate_1'=\splitOf{\astate_1\updOf{\apvar}{\anavalp}} \ .\]
	Moreover, the precondition is strong enough to apply Lemma~\ref{Lemma:LocalitySplitPvar}.
	With $\astate_2'=\astate_2\updOf{\apvar}{\anavalp}$, the lemma shows $\astate_1'\statemultdef \astate_2'$ and $\astate'=\astate_1'\mstar\astate_2'$.
    For $\astate_2'$, Lemma~\ref{Lemma:NotFree} yields $\astateseq_2.\astate_2.\astate_2'\in \semOf{\comppred}$.
    This concludes the case.
   	\item Consider $\assignof{\apvar}{\selof{\apvarp}{\asel}}$.
	The resulting state is this one:	$\astate'=\splitOf{\astate\updOf{\apvar}{\anavalp}}$ with $\anavalp=\astate.\svalof{\anaval, \asel}$ and $\anaval=\semOf{\apvarp}\ \astate.\aval$. 
	The command does not abort on $\astate_1$, as this would contradict $\csemOf{\acom}(\semOf{\comppred_1})\subseteq\semOf{\comppred_2}$.
	Moreover, the valuations of the program variables will coincide for $\astate$ and $\astate_1$.
	Hence, address $\anaval$ will not change and we get the precondition $\assignfromsel$ for $\astate_1$:
\begin{align*}
\quad& 
\anaval= \semOf{\apvarp}\ \astate_1.\aval~\wedge~\anavalp' = \astate_1.\avalof{\anaval, \asel} ~\wedge
\anaval\in\astate_1.\aflowgraph.\setnodes \\\wedge~& \astate_1\updOf{\apvar}{\anavalp'}\statemultdef\splitfg ~\wedge~
(\apvar\in\setgpvars\Rightarrow \anavalp'\in\astate_1.\aflowgraph.\setnodes)
\ .
\end{align*}
Because $\astate=\astate_1\mstar\astate_2$ and $\astate_1.\avalof{-}$ is a function, we have $\anavalp'=\anavalp$.
Hence, the state resulting from the command is $\astate_1'=\splitOf{\astate_1\updOf{\apvar}{\anavalp}}$.
The remainder of the reasoning is as in the previous case.
\item Consider $\mallocof{\apvar}$.
The resulting state is $\astate'=\splitOf{\tilde\astate\updOf{\apvar}{\anavalp}}$ with $\tilde\astate=\extOf{\astate}{\anavalp}$.
Since $(\astate, \anavalp)\statemultdef\extfg$, we have $(\astate_1, \anavalp)\statemultdef\extfg$.
The address is fresh for $\astate_1$ as $\astate_1.\aflowgraph.\setnodes\discup\astate_2.\aflowgraph.\setnodes=\astate.\aflowgraph.\setnodes$ and the address is fresh for $\astate$.
To see that the global part of $\astate_1$ does not point to the address, note that the global part of $\astate$ does not point to it and we have $\astate_1.\ashared\mstar\astate_2.\ashared=\astate.\ashared$.
Hence, $(\astate_1, \anavalp)\statemultdef\extfg$ and we let $\tilde \astate_1=\extOf{\astate_1}{\anavalp}$.
With the same argument, also the global part of $\astate_2$ does not point to the address.
The global part of $\astate_2$ does not point to the remaining local heap of $\astate_1$ as $\astate_1\statemultdef\astate_2$.
With the same argument, the global part of $\astate_1$ does not point to the local heap of $\astate_2$.
Hence, $\tilde\astate_1\statemultdef\astate_2$.
Since $\astate_1\mstar\astate_2=\astate$, we moreover have  $\tilde \astate_1\mstar\astate_2 = \tilde \astate$.

\enspace\enspace Since $\astate_1$ is a state, it is well-split.
By definition, $\anavalp$ only points to itself.
Hence, we have $\tilde\astate_1\updOf{\apvar}{\anavalp}\statemultdef\splitfg$ and set $\astate_1'=\splitOf{\tilde \astate_1\updOf{\apvar}{\anavalp}}$.
The splitting will move $\anavalp$ to the shared heap should $\apvar$ be a shared variable.
Lemma~\ref{Lemma:LocalitySplitPvar} applies and yields $\astate' = \astate_1'\mstar\astate_2'$ where we have $\astate_2' = \astate_2\updOf{\apvar}{\anavalp}$. 
It is readily checked that $\astate_1'\in\semComOf{\mallocof{\apvar}}{\astate_1}$.
Moreover, $\astateseq_2.\astate_2.\astate_2'\in\semOf{\comppred}$ by Lemma~\ref{Lemma:NotFree}.
\end{compactitem}
This concludes the proof.
\end{proof}


\section{Proof of Theorem~\ref{Theorem:Soundness}: Soundness of the Program Logic}
\label{Section:SoundnessProgramLogic}

The guarantee given by the thread-local proofs constructed in our program logic is captured by the notion of safety. 
Safety guarantees that upon termination the postcondition holds, $\accept{\astmt}{\apredppp}{\apred}$ is defined as $\astmt=\cskip\Rightarrow\apredppp\subseteq\apred$. 
Moreover, for every command to be executed we are sure to have captured the interference and to execute safely for another $k$ steps from the resulting predicate. 
\begin{definition}
We define $\locsafedef{0}{\astmt}{\apredppp}{\apred}\defeq\true$ ~and~
$\locsafedef{k+1}{\astmt}{\apredppp}{\apred}\defeq (1)\wedge (2)$ with
\begin{itemize}
\item[(1)] $\accept{\astmt}{\apredppp}{\apred}$
\item[(2)] $\forall \acom.\forall \astmt'.\;\astmt\xrightarrow{\acom}\astmt'\;\Rightarrow\;\inter{\apredppp}{\acom}\subseteq \theInterference\;\wedge\; \exists \apredpppp\in\thePredicates.\;\semOf{\acom}(\apredppp)\subseteq\apredpppp\;\wedge\; \locsafek{\astmt'}{\apredpppp}{\apred}$.
\end{itemize}
\end{definition}
The predicate is monotonic in the various arguments as follows.
\begin{lemma}\label{Lemma:Monotonicity}
Consider $\thePredicates_1\subseteq\thePredicates_2$, $\theInterference_1\subseteq\theInterference_2$, $\apred_1\supseteq \apred_2$, $\apredp_1\subseteq \apredp_2$, and $k_1\geq k_2$. Then 
$\locsafe{\thePredicates_1}{\theInterference_1}{k_1}{\astmt}{\apred_1}{\apredp_1}$ implies $\locsafe{\thePredicates_2}{\theInterference_2}{k_2}{\astmt}{\apred_2}{\apredp_2}$. 
\end{lemma}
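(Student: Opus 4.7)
The plan is to proceed by induction on $k_2$. The base case $k_2 = 0$ is immediate since $\locsafedef{0}{\astmt}{\apred_2}{\apredp_2}$ is $\true$ by definition, so no further hypothesis is needed.

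For the inductive step, assume $k_2 = k+1$. Since $k_1 \geq k_2$, write $k_1 = k_1' + 1$ with $k_1' \geq k$, and unfold $\locsafe{\thePredicates_1}{\theInterference_1}{k_1}{\astmt}{\apred_1}{\apredp_1}$ into its two conjuncts. The acceptance conjunct (1) transfers easily: if $\astmt = \cskip$ then $\apred_1 \subseteq \apredp_1$ gives $\apred_2 \subseteq \apred_1 \subseteq \apredp_1 \subseteq \apredp_2$, which is exactly $\accept{\astmt}{\apred_2}{\apredp_2}$. For conjunct (2), fix any step $\pcStepOf{\astmt}{\acom}{\astmt'}$. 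From the hypothesis we obtain $\inter{\apred_1}{\acom} \subseteq \theInterference_1$ and some $\apredpppp \in \thePredicates_1$ with $\semComOf{\acom}{\apred_1} \subseteq \apredpppp$ and $\locsafe{\thePredicates_1}{\theInterference_1}{k_1'}{\astmt'}{\apredpppp}{\apredp_1}$.

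The semantics lifted to predicates is pointwise union, so it is monotonic; therefore $\semComOf{\acom}{\apred_2} \subseteq \semComOf{\acom}{\apred_1} \subseteq \apredpppp$, and since $\thePredicates_1 \subseteq \thePredicates_2$ we can reuse the very same witness $\apredpppp$ for the $\thePredicates_2$-quantifier. The induction hypothesis applied with the parameters $\thePredicates_1 \subseteq \thePredicates_2$, $\theInterference_1 \subseteq \theInterference_2$, $\apredpppp \supseteq \apredpppp$, $\apredp_1 \subseteq \apredp_2$, and $k_1' \geq k$ then yields $\locsafe{\thePredicates_2}{\theInterference_2}{k}{\astmt'}{\apredpppp}{\apredp_2}$, which is the recursive safety obligation.

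The only mildly subtle point is the interference inclusion $\inter{\apred_2}{\acom} \subseteq \theInterference_2$. Recall from Section~\ref{Section:OG} that this inclusion is interpreted modulo joining predicates for the same command, meaning it holds iff either $\acom$ is not effectful on $\apred_2$ or there is some $(\apredppp, \acom) \in \theInterference_2$ with $\apred_2 \subseteq \apredppp$. If $\acom$ is not effectful on $\apred_1$, then by the definition of $\effectful{-}{-}$ it is not effectful on the subset $\apred_2 \subseteq \apred_1$ either, so $\inter{\apred_2}{\acom} = \emptyset$; otherwise $\inter{\apred_1}{\acom} \subseteq \theInterference_1$ provides a pair $(\apredppp, \acom) \in \theInterference_1 \subseteq \theInterference_2$ with $\apred_1 \subseteq \apredppp$, and transitivity gives $\apred_2 \subseteq \apredppp$ as required. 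This is the step I expect to be the main, if minor, obstacle since it relies on the specific convention about interference sets rather than plain set inclusion; everything else is a direct monotonicity argument that propagates through the unfolding of $\mathsf{safe}$.
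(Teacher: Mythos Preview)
Your proposal is correct and follows essentially the same argument as the paper's proof: unfold both the hypothesis and the goal by one level, reuse the same witness $\apredpppp$, and appeal to the induction hypothesis for the recursive safety obligation. The only cosmetic difference is that the paper inducts on $k_1$ while you induct on $k_2$; your treatment of the interference inclusion is in fact slightly more careful than the paper's, which tacitly assumes the effectful case without spelling out that non-effectfulness on $\apred_1$ propagates to $\apred_2$.
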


\begin{proof}
Consider $\thePredicates_1\subseteq\thePredicates_2$, $\theInterference_1\subseteq\theInterference_2$, and $\apredp_1\subseteq \apredp_2$.\\
We proceed by induction on $k_1$.\\
$k_1=0$: There is nothing to do.\\
$k_1+1$: The induction hypothesis is
\begin{align*}
\forall k_2.
\forall\astmt.\forall \apred_1.\forall \apred_2.\quad
k_2\leq k_1\wedge \apred_1\supseteq \apred_2\wedge \locsafe{\thePredicates_1}{\theInterference_1}{k_1}{\astmt}{\apred_1}{\apredp_1}
\implies\locsafe{\thePredicates_2}{\theInterference_2}{k_2}{\astmt}{\apred_2}{\apredp_2} \ . 
\end{align*}
Further, consider $\astmt$ and $\apred_1$ so that \( \locsafe{\thePredicates_1}{\theInterference_1}{k_1+1}{\astmt}{\apred_1}{\apredp_1}. \)
Consider $k_2\leq k_1+1$ and $\apred_2\subseteq\apred_1$.
We show 
\begin{align*}
\locsafe{\thePredicates_2}{\theInterference_2}{k_2}{\astmt}{\apred_2}{\apredp_2}\ . 
\end{align*}
We can assume $0<k_2$, otherwise there is nothing to do.\\
(1) We have $\locsafe{\thePredicates_1}{\theInterference_1}{k_1+1}{\astmt}{\apred_1}{\apredp_1}$.
Hence, if $\astmt=\cskip$, then $\apred_1\subseteq\apredp_1$.
We have $\apred_2\subseteq\apred_1$ and $\apredp_1\subseteq\apredp_2$ by assumption.
Hence, $\apred_2\subseteq\apredp_2$ as required. \\
(2a) Consider $\astmt\xxrightarrow{\acom}\astmt'$.
We have $\locsafe{\thePredicates_1}{\theInterference_1}{k_1+1}{\astmt}{\apred_1}{\apredp_1}$.
Hence, $\inter{\apred_1}{\acom}\subseteq\theInterference_1$.
This means there is $(\apredpp, \acom)\in\theInterference_1$ with $\apred_1\subseteq\apredpp$.
We have $\apred_2\subseteq\apred_1$ and so $\apred_2\subseteq\apredpp$.
Hence, $\inter{\apred_2}{\acom}\subseteq\theInterference_1$.
We have $\theInterference_1\subseteq\theInterference_2$ by assumption.
Hence, $\inter{\apred_2}{\acom}\subseteq\theInterference_2$ as required.\\
(2b) Because of $\locsafe{\thePredicates_1}{\theInterference_1}{k_1+1}{\astmt}{\apred_1}{\apredp_1}$, 
there is $\apredpppp\in\thePredicates_1$ as follows.
We have $\semOf{\acom}(\apred_1)\subseteq\apredpppp$ as well as $\locsafe{\thePredicates_1}{\theInterference_1}{k_1}{\astmt'}{\apredpppp}{\apredp_1}$.
As $\thePredicates_1\subseteq\thePredicates_2$, we also have $\apredpppp\in\thePredicates_2$.
We moreover have $\apred_2\subseteq\apred_1$ and hence $\semOf{\acom}(\apred_2)\subseteq\semOf{\acom}(\apred_1)$.
We conclude $\semOf{\acom}(\apred_2)\subseteq\apredpppp$ as required.\\ 
(2c) It remains to argue for $\locsafe{\thePredicates_2}{\theInterference_2}{k_2-1}{\astmt'}{\apredpppp}{\apredp_2}$.
We have $0<k_2\leq k_1+1$. 
Hence, $0\leq k_2-1\leq k_1$.
We have $\locsafe{\thePredicates_1}{\theInterference_1}{k_1}{\astmt'}{\apredpppp}{\apredp_1}$.
The induction hypothesis yields $\locsafe{\thePredicates_2}{\theInterference_2}{k_2-1}{\astmt'}{\apredpppp}{\apredp_2}$.
\end{proof}
Soundness of the thread-local derivation is stated in the next proposition.
\begin{proposition}\label{Proposition:LocalSound}
Consider $\thePredicates, \theInterference\semCalc\hoareOf{\apred}{\astmt}{\apredp}$ and $k\in\nat$. 
We have $\locsafek{\astmt}{\apred}{\apredp}$. 
\end{proposition}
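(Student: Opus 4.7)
The plan is to proceed by induction on the derivation of $\thePredicates, \theInterference\semCalc\hoareOf{\apred}{\astmt}{\apredp}$, proving that $\locsafek{\astmt}{\apred}{\apredp}$ holds for every $k\in\nat$. At each step I will freely invoke \Cref{Lemma:Monotonicity} to reconcile the precise sets $\thePredicates,\theInterference$ produced by the sub-derivations with the larger ones appearing in the conclusion of each rule.

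The base case is \ruleref{com-sem}: for any $k$, the only control-flow step is $\acom\xxrightarrow{\acom}\cskip$, and the rule directly gives both $\inter{\apred}{\acom}\subseteq\theInterference$ and $\semComOf{\acom}{\apred}\subseteq\apredp\in\thePredicates$; the remaining obligation $\locsafek{\cskip}{\apredp}{\apredp}$ is immediate for any number of steps since $\cskip$ has no successor and $\apredp\subseteq\apredp$. For \ruleref{infer-sem}, I use the IH on the premise and then apply \Cref{Lemma:Monotonicity} to strengthen the precondition, weaken the postcondition, and enlarge $\thePredicates$ and $\theInterference$. For \ruleref{choice}, the only outgoing transitions of $\choiceof{\astmt_1}{\astmt_2}$ produce $\cskip$ then choose $\astmt_i$, so clause~(2) holds with witness $\apred\in\thePredicates$ (to be added by taking $\apred\in\thePredicates$ explicitly; if needed, one first inserts $\apred$ via \ruleref{infer-sem} as the paper does in \Cref{Theorem:Soundness}), after which IH on each branch applies via monotonicity. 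The \ruleref{loop} case is analogous, unfolding $\loopof{\astmt}$ into either $\cskip$ or $\seqof{\astmt}{\loopof{\astmt}}$ and using the loop invariant $\apred$ which is in $\thePredicates$; safety for the unfolded form then reduces to the sequential composition case below together with a straightforward co-induction-style argument that works at each finite~$k$.

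The main obstacle is \ruleref{seq}. I will first establish an auxiliary composition lemma of the shape: if $\locsafe{\thePredicates}{\theInterference}{k}{\astmt_1}{\apred}{\apredp}$ and for \emph{every} $\apredpp'\subseteq\apredp$ we have $\locsafe{\thePredicates}{\theInterference}{k}{\astmt_2}{\apredpp'}{\apredpp}$, then $\locsafe{\thePredicates}{\theInterference}{k}{\seqof{\astmt_1}{\astmt_2}}{\apred}{\apredpp}$. The proof is by induction on $k$, analysing the two cases of the control-flow rules for sequential composition: either $\astmt_1=\cskip$ and we step to $\astmt_2$ (using that $\apredp\in\thePredicates$ to continue with the second safety judgment), or $\astmt_1$ performs a command and we appeal to the IH together with the fact that $\semComOf{\acom}{-}$ is monotone on predicates so that the resulting intermediate predicate $\apredpppp$ is still subsumed by some element of $\thePredicates$. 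Applied to the premises of \ruleref{seq} (after lifting the two sub-judgments to the shared $\thePredicates,\theInterference$ via \Cref{Lemma:Monotonicity}), this gives the conclusion.

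The remaining case is \ruleref{frame}. Here I will rely on the locality condition~\eqref{cond-loccom} on commands: given $\locsafek{\astmt}{\apred}{\apredp}$, I prove $\locsafek{\astmt}{\apred\mstar\apredpp}{\apredp\mstar\apredpp}$ by induction on $k$, using that
\begin{inparaenum}[(i)]
\item $\accept{\astmt}{\apred\mstar\apredpp}{\apredp\mstar\apredpp}$ follows from $\apred\subseteq\apredp$ and monotonicity of $\mstar$;
\item for each command step $\astmt\xxrightarrow{\acom}\astmt'$, the original interference $(\apred,\acom)\in\theInterference$ yields the framed interference $(\apred\mstar\apredpp,\acom)\in\theInterference\mstar\apredpp$; and
\item \eqref{cond-loccom} gives $\semComOf{\acom}{\apred\mstar\apredpp}\subseteq\apredpppp\mstar\apredpp\in\thePredicates\mstar\apredpp$ from $\semComOf{\acom}{\apred}\subseteq\apredpppp$, after which the IH closes the goal.
\end{inparaenum}
The delicate point here is that \eqref{cond-loccom} only frames along $\mstar$, which is exactly what \ruleref{frame} provides; this, together with the sequential-composition lemma, is the technical heart of the argument.
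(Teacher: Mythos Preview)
Your overall plan—rule induction on the derivation, with \Cref{Lemma:Monotonicity} to reconcile the sets $\thePredicates,\theInterference$, an auxiliary composition lemma for \ruleref{seq}, and locality \eqref{cond-loccom} for \ruleref{frame}—matches the paper's approach. There are, however, two concrete gaps you should be aware of.

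First, in the \ruleref{com-sem} base case you claim that ``$\cskip$ has no successor.'' This is false: the control-flow rule $\pcStepOf{\acom}{\acom}{\cskip}$ instantiates with $\acom=\cskip$ to give $\cskip\xrightarrow{\cskip}\cskip$. So even after reaching $\cskip$ you must keep discharging clause~(2) at every level. The paper handles this via a dedicated Lemma~\ref{Lemma:Skip} establishing $\locsafe{\{\apredp\}}{\emptyset}{k}{\cskip}{\apredp}{\apredp}$ by induction on $k$ (the witness at each step is $\apredp$ itself, and $\inter{\apredp}{\cskip}=\emptyset$ since $\cskip$ is not effectful). Your Lemma~\ref{Lemma:Com} argument is otherwise fine, but it must conclude by invoking this, not by asserting there is nothing left to show.

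Second, your treatment of \ruleref{choice} does not go through as written. After the step $\choiceof{\astmt_1}{\astmt_2}\xrightarrow{\cskip}\astmt_i$ you need a witness $\apredpppp\in\thePredicates$ with $\semOf{\cskip}(\apred)=\apred\subseteq\apredpppp$, and you propose to take $\apredpppp=\apred$, ``inserting $\apred$ via \ruleref{infer-sem}'' if necessary. But the proposition is stated for the $\thePredicates$ produced by the \emph{given} derivation; you are not free to enlarge it mid-proof. The point is delicate precisely because, unlike \ruleref{seq} (which adds $\apredp$) and \ruleref{loop} (which adds $\apred$), the \ruleref{choice} rule does \emph{not} add the precondition to $\thePredicates_1\cup\thePredicates_2$. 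The paper's Lemma~\ref{Lemma:Choice} therefore does not use $\apred$ as the witness; in its step~(2b) it instead extracts a witness $\apredpppp\in\thePredicates_i$ from the level-$(k{+}1)$ safety assumption on the branch $\astmt_i$. You should study that step rather than appeal to adding predicates externally; the analogy with the top-level assumption $\apred\in\thePredicates$ in \Cref{Theorem:Soundness} does not help inside the rule induction.

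The remaining cases in your plan (seq via a composition lemma exploiting $\apredp\in\thePredicates$; loop via the added invariant $\apred\in\thePredicates$ plus an unrolling argument; frame via \eqref{cond-loccom} and the observation that $\effectful{\apredppp\mstar\apredpp}{\acom}$ implies $\effectful{\apredppp}{\acom}$) are in line with the paper's Lemmas for \ruleref{seq}, \ruleref{loop}, and \ruleref{frame}, though the paper makes the loop case precise with two auxiliary lemmas (\ref{Lemma:LoopNext} and~\ref{Lemma:LoopUnroll}) rather than a single ``co-induction-style'' sketch.
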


The proof of Proposition~\ref{Proposition:LocalSound} proceeds by rule induction over the derivation rules of the program logic (Figure~\ref{Figure:ProgramLogic}). We break the proof down into individual lemmas based on the case analysis of the last derivation rule used in the proof.

\subsection{Soundness of \ruleref{seq}}
\begin{lemma}
If $\locsafe{\thePredicates_1}{\theInterference_1}{k}{\astmt_1}{\apred}{\apredp}$ and $\locsafe{\thePredicates_2}{\theInterference_2}{k}{\astmt_2}{\apredp}{\apredpp}$ and $\thePredicates=\thePredicates_1\cup\thePredicates_2$ and $\theInterference=\theInterference_1\cup\theInterference_2$, then $\locsafe{\setcompact{\apredp}\cup\thePredicates}{\theInterference}{k}{\astmt_1;\astmt_2}{\apred}{\apredpp}$. 
\end{lemma}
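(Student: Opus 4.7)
The plan is to proceed by induction on $k$ to establish $\locsafe{\setcompact{\apredp}\cup\thePredicates}{\theInterference}{k}{\seqof{\astmt_1}{\astmt_2}}{\apred}{\apredpp}$ under the given hypotheses. The base case $k = 0$ holds vacuously by the definition of $\locsafe{}{}{0}{-}{-}{-}$.

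For the inductive step, assuming safety at level $k$, I would prove safety at level $k{+}1$ of $\seqof{\astmt_1}{\astmt_2}$ from $\apred$. The acceptance condition is vacuous since $\seqof{\astmt_1}{\astmt_2}$ is never syntactically equal to $\cskip$. For the step condition, I would inspect the control-flow transitions from $\seqof{\astmt_1}{\astmt_2}$, of which the rules in \Cref{Figure:Relations} give exactly two shapes:
\begin{itemize}
  \item \emph{Case $\astmt_1 = \cskip$.} The only step is $\pcStepOf{\seqof{\cskip}{\astmt_2}}{\cskip}{\astmt_2}$. From $\locsafe{\thePredicates_1}{\theInterference_1}{k+1}{\cskip}{\apred}{\apredp}$ we get $\apred \subseteq \apredp$ via the acceptance clause, hence $\semOf{\cskip}(\apred) = \apred \subseteq \apredp$ and $\apredp \in \setcompact{\apredp} \cup \thePredicates$. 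Moreover $\inter{\apred}{\cskip} = \emptyset$ because $\cskip$ is interpreted as the identity and thus never effectful, so trivially $\inter{\apred}{\cskip} \subseteq \theInterference$. The remaining obligation is $\locsafe{\setcompact{\apredp}\cup\thePredicates}{\theInterference}{k}{\astmt_2}{\apredp}{\apredpp}$, which follows from the hypothesis $\locsafe{\thePredicates_2}{\theInterference_2}{k+1}{\astmt_2}{\apredp}{\apredpp}$ by monotonicity (\Cref{Lemma:Monotonicity}), decreasing the step index from $k{+}1$ to $k$ and enlarging $\thePredicates_2 \subseteq \thePredicates_2 \cup \setcompact{\apredp}\cup\thePredicates_1$ and $\theInterference_2 \subseteq \theInterference$.
  \item \emph{Case $\astmt_1 \neq \cskip$ and $\pcStepOf{\astmt_1}{\acom}{\astmt_1'}$.} Then $\pcStepOf{\seqof{\astmt_1}{\astmt_2}}{\acom}{\seqof{\astmt_1'}{\astmt_2}}$ is the only matching rule. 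From $\locsafe{\thePredicates_1}{\theInterference_1}{k+1}{\astmt_1}{\apred}{\apredp}$ I obtain $\inter{\apred}{\acom} \subseteq \theInterference_1 \subseteq \theInterference$ and some $\apredpppp \in \thePredicates_1 \subseteq \setcompact{\apredp}\cup\thePredicates$ with $\semOf{\acom}(\apred) \subseteq \apredpppp$ and $\locsafe{\thePredicates_1}{\theInterference_1}{k}{\astmt_1'}{\apredpppp}{\apredp}$. By monotonicity applied to the hypothesis on $\astmt_2$, I get $\locsafe{\thePredicates_2}{\theInterference_2}{k}{\astmt_2}{\apredp}{\apredpp}$. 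Applying the induction hypothesis to these two then yields $\locsafe{\setcompact{\apredp}\cup\thePredicates}{\theInterference}{k}{\seqof{\astmt_1'}{\astmt_2}}{\apredpppp}{\apredpp}$, which discharges the remaining conjunct of the step condition.
\end{itemize}

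The proof is essentially a bookkeeping exercise: no step is conceptually difficult. The only place requiring minor care is the first case, where one must notice that $\cskip$ is always interference-free and that the witness predicate $\apredp$ needed to satisfy the second clause of safety is precisely the one freshly added to $\setcompact{\apredp}\cup\thePredicates$ by the conclusion of the \ruleref{seq} rule---this is exactly why the rule stashes $\apredp$ in the predicate set. The main obstacle to watch for is keeping track of the monotone enlargements of $\thePredicates_i$ and $\theInterference_i$ against $\setcompact{\apredp}\cup\thePredicates$ and $\theInterference$ at every invocation, which is systematically handled by \Cref{Lemma:Monotonicity}.
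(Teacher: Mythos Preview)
Your proof is correct and follows exactly the paper's approach: induction on $k$, case analysis on the two transition shapes out of $\seqof{\astmt_1}{\astmt_2}$, and systematic use of monotonicity (\Cref{Lemma:Monotonicity}) to enlarge $\thePredicates_i,\theInterference_i$ and lower the step index. One cosmetic slip: when $\astmt_1=\cskip$ the step $\pcStepOf{\seqof{\cskip}{\astmt_2}}{\cskip}{\astmt_2}$ is not the \emph{only} transition, since $\pcStepOf{\cskip}{\cskip}{\cskip}$ together with the congruence rule also gives $\pcStepOf{\seqof{\cskip}{\astmt_2}}{\cskip}{\seqof{\cskip}{\astmt_2}}$; the paper accordingly phrases its second case simply as ``$\astmt_1\xrightarrow{\acom}\astmt_1'$'' without excluding $\astmt_1=\cskip$, and your Case~2 argument already covers this extra transition verbatim once you drop the restriction $\astmt_1\neq\cskip$.
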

\begin{proof}
Consider $\astmt_2$ and $\apredp$.
We proceed by induction on $k$.\\
$k=0$: Trivial.\\
$k+1$: The induction hypothesis is
\begin{align*}
\forall \astmt_1.\forall \apred.\quad \locsafe{\thePredicates_1}{\theInterference_1}{k}{\astmt_1}{\apred}{\apredp}\wedge \locsafe{\thePredicates_2}{\theInterference_2}{k}{\astmt_2}{\apredp}{\apredpp} \implies \locsafe{\setcompact{\apredp}\cup\thePredicates}{\theInterference}{k}{\astmt_1;\astmt_2}{\apred}{\apredpp} \ . 
\end{align*}
Consider $\astmt_1$ and $\apred$ so that
\begin{align*}
\locsafe{\thePredicates_1}{\theInterference_1}{k+1}{\astmt_1}{\apred}{\apredp}\wedge \locsafe{\thePredicates_2}{\theInterference_2}{k+1}{\astmt_2}{\apredp}{\apredpp}\ .
\end{align*}
Our goal is to show 
\begin{align*}
\locsafe{\setcompact{\apredp}\cup\thePredicates}{\theInterference}{k+1}{\astmt_1;\astmt_2}{\apred}{\apredpp}\ . 
\end{align*}
(1) As $\astmt_1;\astmt_2\neq \cskip$, there is nothing to show. \\
(2) Consider $\acom$ and $\astmt'$ with $\astmt_1;\astmt_2\xrightarrow{\acom}\astmt'$.\\
There are two cases for transitions from $\astmt_1;\astmt_2$.\\\\
\textit{Case 1:} $\astmt_1=\cskip$ and $\cskip;\astmt_2\xrightarrow{\cskip}\astmt_2$. \\
(2a) We have $\neg\effectful{\apred}{\cskip}$, hence $\inter{\apred}{\cskip}=\emptyset\subseteq\theInterference$.\\
(2b) We argue that $\apredp$ is the right choice for a predicate.
It is in $\setcompact{\apredp}\cup\thePredicates$.
We have $\locsafe{\thePredicates_1}{\theInterference_1}{k+1}{\astmt_1}{\apred}{\apredp}$ by assumption and $\astmt_1=\cskip$.
Hence, $\apred\subseteq\apredp$ holds.
With this, $\semOf{\cskip}(\apred)=\apred\subseteq\apredp$ follows.\\
(2c) By assumption, $\locsafe{\thePredicates_2}{\theInterference_2}{k+1}{\astmt_2}{\apredp}{\apredpp}$.
Monotonicity in Lemma~\ref{Lemma:Monotonicity} then yields $\locsafe{\setcompact{\apredp}\cup\thePredicates}{\theInterference}{k}{\astmt_2}{\apredp}{\apredpp}$.\\\\
\textit{Case 2:} $\astmt_1\xrightarrow{\acom}\astmt_1'$ and $\astmt_1;\astmt_2\xrightarrow{\acom}\astmt_1';\astmt_2$. \\
(2a) By the assumption of the induction step, we have the following: $\locsafe{\thePredicates_1}{\theInterference_1}{k+1}{\astmt_1}{\apred}{\apredp}$.
Hence, $\inter{\apred}{\acom}\subseteq\theInterference_1\subseteq\theInterference$.\\
(2b) Furthermore, there is some $\apredpppp\in\thePredicates_1$ with $\semOf{\acom}(\apred)\subseteq\apredpppp$ and  $\locsafe{\thePredicates_1}{\theInterference_1}{k}{\astmt_1'}{\apredpppp}{\apredp}$.
Since $\thePredicates_1\subseteq \setcompact{\apred}\cup\thePredicates$, we can again pick $\apredpppp$.\\
(2c) We have $\locsafe{\thePredicates_2}{\theInterference_2}{k+1}{\astmt_2}{\apredp}{\apredpp}$ by assumption.
By monotonicity in Lemma~\ref{Lemma:Monotonicity}, $\locsafe{\thePredicates_2}{\theInterference_2}{k}{\astmt_2}{\apredp}{\apredpp}$.
We already noticed $\locsafe{\thePredicates_1}{\theInterference_1}{k}{\astmt_1'}{\apredpppp}{\apredp}$.
The induction hypothesis yields $\locsafe{\setcompact{\apredp}\cup\thePredicates}{\theInterference}{k}{\astmt_1';\astmt_2}{\apredpppp}{\apredpp}$. 
\end{proof}
\subsection{Soundness of \ruleref{com-sem}}
We prove a fact that will be helpful. 
\begin{lemma}\label{Lemma:Skip}
$\locsafe{\setcompact{\apred}}{\emptyset}{k}{\cskip}{\apred}{\apred}$.  
\end{lemma}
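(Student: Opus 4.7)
The plan is to proceed by straightforward induction on $k$, using the definition of $\locsafe{-}{-}{-}{-}{-}{-}$ directly.

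For the base case $k=0$, there is nothing to show since $\locsafe{\thePredicates}{\theInterference}{0}{\astmt}{\apredppp}{\apred}\defeq\true$ by definition.

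For the inductive step, assume $\locsafe{\setcompact{\apred}}{\emptyset}{k}{\cskip}{\apred}{\apred}$ and verify the two clauses of $\locsafe{\setcompact{\apred}}{\emptyset}{k+1}{\cskip}{\apred}{\apred}$. Clause~(1) requires $\accept{\cskip}{\apred}{\apred}$, i.e.\ $\apred\subseteq\apred$, which is immediate. For clause~(2), the only transition from $\cskip$ given in \Cref{Figure:Relations} is $\cskip\xrightarrow{\cskip}\cskip$, so we must show $\inter{\apred}{\cskip}\subseteq\emptyset$ and the existence of a suitable successor predicate. The first holds because $\semOf{\cskip}$ is the identity on states, so $\neg\effectful{\apred}{\cskip}$ and hence $\inter{\apred}{\cskip}=\emptyset$. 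For the second, we pick $\apredpppp\defeq\apred\in\setcompact{\apred}$; since $\semOf{\cskip}(\apred)=\apred$ we trivially get $\semOf{\cskip}(\apred)\subseteq\apred$, and $\locsafe{\setcompact{\apred}}{\emptyset}{k}{\cskip}{\apred}{\apred}$ follows from the induction hypothesis.

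There is no real obstacle here: the lemma is essentially bookkeeping that reflects the fact that $\cskip$ has no effect and therefore produces no interferences and needs only a single predicate $\apred$ in $\thePredicates$ to remain safe indefinitely. The only thing to be a bit careful about is noticing that the single applicable control-flow rule is $\pcStepOf{\acom}{\acom}{\cskip}$ instantiated at $\acom=\cskip$, so we truly only have to analyze the self-loop $\cskip\xrightarrow{\cskip}\cskip$.
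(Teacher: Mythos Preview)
Your proof is correct and essentially identical to the paper's own proof: both proceed by induction on $k$, handle clause~(1) via $\apred\subseteq\apred$, and for clause~(2) observe that the only transition is $\cskip\xrightarrow{\cskip}\cskip$, use $\neg\effectful{\apred}{\cskip}$ to get $\inter{\apred}{\cskip}=\emptyset$, pick $\apred\in\setcompact{\apred}$ as the successor predicate, and invoke the induction hypothesis.
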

\begin{proof}
Consider $\apred$.
We proceed by induction on $k$.\\
$k = 0$: Done.\\
$k+1$: The induction hypothesis is 
\begin{align*}
\locsafe{\setcompact{\apred}}{\emptyset}{k}{\cskip}{\apred}{\apred}\ . 
\end{align*}
Our goal is to prove
\begin{align*}
\locsafe{\setcompact{\apred}}{\emptyset}{k+1}{\cskip}{\apred}{\apred}\ . 
\end{align*}
(1) We have $\apred\subseteq \apred$.\\
(2) The only transition is $\cskip\xrightarrow{\cskip}\cskip$.
As $\neg\effectful{\apred}{\cskip}$, we have $\inter{\apred}{\cskip}=\emptyset$.\\
(2b) As predicate $q$ we pick $\apred$, which is in $\setcompact{\apred}$.
Then $\semOf{\cskip}(\apred)=\apred$.\\
(2c) We have $\locsafek{\cskip}{\apred}{\apred}$ by induction.
\end{proof}
Soundness of \ruleref{com-sem} is formalized by the following lemma. 
\begin{lemma}\label{Lemma:Com}
$\thePredicates=\setcompact{\apredp}$ and $\theInterference=\inter{\apred}{\acom}$ and $\semOf{\acom}(\apred)\subseteq\apredp$ imply 
$\locsafek{\acom}{\apred}{\apredp}$. 
\end{lemma}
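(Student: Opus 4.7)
The proof proceeds by induction on $k$, mirroring the structure of the definition of $\locsafe{}{}{k}{}{}{}$ and paralleling the preceding \Cref{Lemma:Skip}. The base case $k=0$ is immediate since $\locsafe{\thePredicates}{\theInterference}{0}{\acom}{\apred}{\apredp}\defeq\true$. For the inductive step, I fix $k$ and assume the statement holds for $k$; the goal is to establish the conjuncts (1) and (2) of $\locsafe{\setcompact{\apredp}}{\inter{\apred}{\acom}}{k+1}{\acom}{\apred}{\apredp}$.

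For conjunct (1), $\accept{\acom}{\apred}{\apredp}$ is vacuous when $\acom \neq \cskip$. When $\acom = \cskip$, I use $\semOf{\cskip}(\apred) = \apred$ together with the hypothesis $\semOf{\acom}(\apred) \subseteq \apredp$ to conclude $\apred \subseteq \apredp$ as required.

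For conjunct (2), the control-flow relation offers exactly one transition out of a command, namely $\acom \xrightarrow{\acom} \cskip$. I must verify three subgoals. First, $\inter{\apred}{\acom} \subseteq \theInterference$ holds reflexively since $\theInterference = \inter{\apred}{\acom}$. Second, I pick $\apredpppp \defeq \apredp$ as witness from $\thePredicates = \setcompact{\apredp}$; this satisfies $\semOf{\acom}(\apred) \subseteq \apredpppp$ directly by hypothesis. Third, I must show $\locsafe{\setcompact{\apredp}}{\inter{\apred}{\acom}}{k}{\cskip}{\apredp}{\apredp}$. Here I invoke \Cref{Lemma:Skip} to obtain $\locsafe{\setcompact{\apredp}}{\emptyset}{k}{\cskip}{\apredp}{\apredp}$, and then apply the monotonicity result of \Cref{Lemma:Monotonicity} to weaken the interference set from $\emptyset$ to $\inter{\apred}{\acom}$.

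The proof is essentially routine bookkeeping; there is no substantive obstacle. The only subtle point is remembering that if $\acom$ happens to coincide with $\cskip$, the accept condition must still be discharged, but this follows trivially from the semantic hypothesis. Notably, we do not even need the induction hypothesis for the statement we are proving, since after one transition the continuation is $\cskip$ and \Cref{Lemma:Skip} handles it directly; the induction on $k$ is therefore only a formality needed to match the shape of the $\locsafe{}{}{k}{}{}{}$ definition.
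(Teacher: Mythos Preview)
Your proposal is correct and follows essentially the same approach as the paper: handle $k=0$ trivially, and for $k+1$ verify the accept condition (using $\semOf{\cskip}(\apred)=\apred$ when $\acom=\cskip$), note the only transition is $\acom\xrightarrow{\acom}\cskip$, pick $\apredpppp=\apredp$, and discharge the residual safety of $\cskip$ via \Cref{Lemma:Skip} plus monotonicity. Your observation that the induction hypothesis is never actually invoked is apt; the paper likewise does not set up an induction but simply treats $k+1$ directly.
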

\begin{proof}
We consider $k+1$, for $k=0$ there is nothing to do.
We show
\begin{align*}
\locsafedef{k+1}{\acom}{\apred}{\apredp}\ .
\end{align*}
(1) If $\acom=\cskip$, we have $\apred=\semOf{\acom}(\apred)$.
By the assumption, $\semOf{\acom}(\apred)\subseteq\apredp$.
Hence, $\apred\subseteq\apredp$ as required.\\
(2) Consider $\acom\xrightarrow{\acom}{\cskip}$. \\
(2a) We have $\inter{\apred}{\acom}=\theInterference$ by assumption.\\
(2b) We pick $\apredpppp=\apredp$, which is in $\thePredicates=\setcompact{\apredp}$.
Then $\semOf{\acom}(\apred)\subseteq\apredp$ by the assumption.\\
(2c) We have $\setcompact{\apredp}=\thePredicates$.
Then $\locsafe{\thePredicates}{\emptyset}{k}{\cskip}{\apredp}{\apredp}$ by Lemma~\ref{Lemma:Skip}.
Monotonicity in Lemma~\ref{Lemma:Monotonicity} yields $\locsafek{\cskip}{\apredp}{\apredp}$.  
\end{proof}
\subsection{Soundness of \ruleref{choice}} 
\begin{lemma}\label{Lemma:Choice}
If $\locsafe{\thePredicates_1}{\theInterference_1}{k}{\astmt_1}{\apred}{\apredp}$ and $\locsafe{\thePredicates_2}{\theInterference_2}{k}{\astmt_2}{\apred}{\apredp}$ and $\thePredicates=\thePredicates_1\cup\thePredicates_2$ and $\theInterference=\theInterference_1\cup\theInterference_2$, then $\locsafek{\choiceof{\astmt_1}{\astmt_2}}{\apred}{\apredp}$. 
\end{lemma}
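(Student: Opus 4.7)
The plan is to prove the claim by induction on $k$, following the pattern of the proofs of Lemmas~\ref{Lemma:Skip}, \ref{Lemma:Com}, and the \ruleref{seq} case. The base case $k=0$ is immediate from the definition $\locsafedef{0}{\cdot}{\cdot}{\cdot}=\true$. For the step $k{+}1$, I would verify the two conjuncts of $\locsafedef{k+1}{\choiceof{\astmt_1}{\astmt_2}}{\apred}{\apredp}$.

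Condition (1), the acceptance check, is vacuous because $\choiceof{\astmt_1}{\astmt_2}\neq\cskip$, so there is nothing to show. For condition (2), I would inspect the control-flow rules in \Cref{Figure:Relations} and observe that the only transitions out of $\choiceof{\astmt_1}{\astmt_2}$ are $\choiceof{\astmt_1}{\astmt_2}\xrightarrow{\cskip}\astmt_i$ for $i\in\{1,2\}$. I fix such an $i$ and discharge the three subgoals (2a)--(2c) as follows. For (2a), since $\cskip$ does not modify the shared state, $\effectful{\apred}{\cskip}$ fails, hence $\inter{\apred}{\cskip}=\emptyset\subseteq\theInterference$. For (2b), I would pick $\apredpppp\defeq\apred$; then $\semOf{\cskip}(\apred)=\apred\subseteq\apredpppp$ trivially. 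For (2c), I would use $\locsafe{\thePredicates_i}{\theInterference_i}{k+1}{\astmt_i}{\apred}{\apredp}$, which is a direct consequence of the hypothesis (note we can spend a step to downgrade $k+1$ to $k$ via Lemma~\ref{Lemma:Monotonicity}, or equivalently apply monotonicity to conclude $\locsafek{\astmt_i}{\apred}{\apredp}$ directly, using $\thePredicates_i\subseteq\thePredicates$ and $\theInterference_i\subseteq\theInterference$).

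The main obstacle I anticipate is in subgoal~(2b): the safety definition demands $\apredpppp\in\thePredicates$, but the natural choice $\apredpppp=\apred$ is not guaranteed to lie in $\thePredicates=\thePredicates_1\cup\thePredicates_2$ by the hypotheses of this lemma alone. This is the same phenomenon that forces \ruleref{seq} and \ruleref{loop} to explicitly adjoin their intermediate assertions to the predicate set. The proposed resolution is to rely on the discipline of the overall soundness argument (\Cref{Theorem:Soundness}), which carries the side assumption $\apred\in\thePredicates$: as soon as we are proving the conclusion of rule~\ruleref{choice}, the precondition $\apred$ is part of the ambient predicate set supplied to Theorem~\ref{Theorem:Soundness}, so picking $\apredpppp=\apred$ is admissible. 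Everything else in the proof is mechanical bookkeeping, so once this point is addressed the induction closes immediately.
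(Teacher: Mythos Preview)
You have put your finger on a real gap, and your proposed resolution does not close it. The choice $\apredpppp=\apred$ requires $\apred\in\thePredicates$, which the lemma does not assume, and appealing to the side condition of Theorem~\ref{Theorem:Soundness} is a meta-level remark about the \emph{top-level} precondition, not a proof of this lemma. In fact the lemma as stated is false: take $\astmt_1=\astmt_2=\acom$ for any command with $\semOf{\acom}(\apred)\subseteq\apredp$ but $\apred\not\subseteq\apredp$. By Lemma~\ref{Lemma:Com} both hypotheses hold with $\thePredicates_1=\thePredicates_2=\{\apredp\}$ and $\theInterference_1=\theInterference_2=\inter{\apred}{\acom}$, yet $\locsafe{\{\apredp\}}{\inter{\apred}{\acom}}{2}{\choiceof{\acom}{\acom}}{\apred}{\apredp}$ fails, since the only available $\apredpppp=\apredp$ violates $\semOf{\cskip}(\apred)=\apred\subseteq\apredpppp$.

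The paper's proof differs from yours exactly at step~(2b): instead of setting $\apredpppp=\apred$, it claims to obtain some $\apredpppp\in\thePredicates_i$ with $\semOf{\cskip}(\apred)\subseteq\apredpppp$ and $\locsafe{\thePredicates_i}{\theInterference_i}{k}{\astmt_i}{\apredpppp}{\apredp}$ directly from the hypothesis $\locsafe{\thePredicates_i}{\theInterference_i}{k+1}{\astmt_i}{\apred}{\apredp}$. But unfolding that hypothesis only yields, for each transition $\astmt_i\xrightarrow{\acom'}\astmt_i'$, a predicate in $\thePredicates_i$ together with level-$k$ safety for the \emph{successor} $\astmt_i'$, not for $\astmt_i$ itself; so the paper's step~(2b) suffers from the same missing ingredient you identified. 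The clean repair is to add $\apred\in\thePredicates$ as an explicit hypothesis of the lemma (this is available at its sole use in Lemma~\ref{Lemma:LoopNext}); with it, your choice $\apredpppp=\apred$ together with one application of Lemma~\ref{Lemma:Monotonicity} closes the argument, and the induction on~$k$ becomes unnecessary.
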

\begin{proof}
Consider $\astmt_1$, $\astmt_2$, and $\apredp$.
We proceed by induction on $k$.\\
$k=0$: Done.\\
$k+1$: The induction hypothesis is 
\begin{align*}
\forall \apred.\quad \locsafe{\thePredicates_1}{\theInterference_1}{k}{\astmt_1}{\apred}{\apredp}\wedge\locsafe{\thePredicates_2}{\theInterference_2}{k}{\astmt_2}{\apred}{\apredp} \implies \locsafek{\choiceof{\astmt_1}{\astmt_2}}{\apred}{\apredp}\ .
\end{align*}
Consider $\apred$ with 
\begin{align*}
\locsafe{\thePredicates_1}{\theInterference_1}{k+1}{\astmt_1}{\apred}{\apredp}\wedge\locsafe{\thePredicates_2}{\theInterference_2}{k+1}{\astmt_2}{\apred}{\apredp}\ . 
\end{align*}
Our goal is to show 
\begin{align*}
\locsafedef{k+1}{\choiceof{\astmt_1}{\astmt_2}}{\apred}{\apredp}\ .
\end{align*}
(1) We have $\choiceof{\astmt_1}{\astmt_2}\neq \cskip$, hence there is nothing to show.\\
(2a) We have $\choiceof{\astmt_1}{\astmt_2}\xrightarrow{\cskip}{\astmt_i}$ with $i=1, 2$.
As $\neg\effectful{\apred}{\cskip}$, we have $\inter{\apred}{\acom}=\emptyset\subseteq\theInterference$.\\
(2b) As $\locsafe{\thePredicates_1}{\theInterference_i}{k+1}{\astmt_i}{\apred}{\apredp}$ by assumption, there is $\apredpppp\in\thePredicates_i$ as follows.
We have $\semOf{\acom}(\apred)\subseteq\apredpppp$ and $\locsafe{\thePredicates_i}{\theInterference_i}{k}{\astmt_i}{\apredpppp}{\apredp}$.
As $\thePredicates_i\subseteq\thePredicates$, we can again pick $\apredpppp$.\\
(2c) Moreover, $\locsafe{\thePredicates_i}{\theInterference_i}{k}{\astmt_i}{\apredpppp}{\apredp}$ implies $\locsafek{\astmt_i}{\apredpppp}{\apredp}$ by monotonicity in Lemma~\ref{Lemma:Monotonicity}. 
\end{proof}
\subsection{Soundness of \ruleref{infer-sem}} 
Follows directly from monotonicity in Lemma~\ref{Lemma:Monotonicity}. 
\subsection{Soundness of \ruleref{loop}}
We first prove some helpful auxiliary lemmas.

\begin{lemma}\label{Lemma:LoopNext}
If we have $\apredppp\in\thePredicates$ and $\locsafek{\astmt;\loopof{\astmt}}{\apredppp}{\apred}$ and $\locsafek{\cskip}{\apredppp}{\apred}$, then $\locsafedef{k+1}{\loopof{\astmt}}{\apredppp}{\apred}$. 
\end{lemma}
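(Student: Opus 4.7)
The plan is to unfold the definition of $\locsafedef{k+1}{\loopof{\astmt}}{\apredppp}{\apred}$ and discharge its two clauses directly, leaning on the hypotheses and the already-established soundness of \ruleref{choice} (Lemma~\ref{Lemma:Choice}). For $k=0$ there is nothing to prove, so I would implicitly assume $k\geq 0$ and proceed with the $k+1$ case.

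For clause~(1), the accepting condition $\accept{\loopof{\astmt}}{\apredppp}{\apred}$ holds vacuously because $\loopof{\astmt}\neq\cskip$. For clause~(2), I would inspect the control-flow relation: the only transition out of $\loopof{\astmt}$ is $\loopof{\astmt}\xxrightarrow{\cskip}\choiceof{\cskip}{\seqof{\astmt}{\loopof{\astmt}}}$. So I only need to verify the three sub-obligations for the command $\cskip$ and the successor statement $\choiceof{\cskip}{\seqof{\astmt}{\loopof{\astmt}}}$.

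The interference obligation is immediate: $\cskip$ is not effectful on $\apredppp$, so $\inter{\apredppp}{\cskip}=\emptyset\subseteq\theInterference$. For the post-image obligation, I pick the witness $\apredpppp \defeq \apredppp$, which lies in $\thePredicates$ by assumption, and note $\semOf{\cskip}(\apredppp)=\apredppp\subseteq\apredpppp$. The remaining obligation is $\locsafek{\choiceof{\cskip}{\seqof{\astmt}{\loopof{\astmt}}}}{\apredppp}{\apred}$, which is exactly the conclusion of Lemma~\ref{Lemma:Choice} applied to the two hypotheses $\locsafek{\cskip}{\apredppp}{\apred}$ and $\locsafek{\astmt;\loopof{\astmt}}{\apredppp}{\apred}$; the two branches share the same $\thePredicates$ and $\theInterference$ so the union condition in \ruleref{choice}'s soundness lemma is trivially met.

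There is no real obstacle here; the proof is essentially a one-step unfolding plus an appeal to Lemma~\ref{Lemma:Choice}. The only thing worth double-checking is that the single transition from $\loopof{\astmt}$ is correctly identified and that no side condition of Lemma~\ref{Lemma:Choice} is violated when both branches are weighed against the same sets $\thePredicates$ and $\theInterference$.
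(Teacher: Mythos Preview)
Your proof is correct and matches the paper's argument: both unfold $\locsafedef{k+1}{\loopof{\astmt}}{\apredppp}{\apred}$, discharge the $\cskip$-transition's interference and post-image obligations trivially (using $\apredppp\in\thePredicates$ as the witness), and invoke Lemma~\ref{Lemma:Choice} on the two hypotheses to obtain the required $\locsafek{\choiceof{\cskip}{\astmt;\loopof{\astmt}}}{\apredppp}{\apred}$. The paper formally frames this as an induction on $k$, but the induction hypothesis is never actually used, so your direct presentation is equivalent and arguably cleaner.
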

\begin{proof}
Consider $\astmt$ and $\apred$. 
We proceed by induction on $k$.\\
$k=0$: Done.\\
$k+1$: The induction hypothesis is: 
\begin{align*}
\forall \apredppp\in\thePredicates.\quad \locsafek{\astmt;\loopof{\astmt}}{\apredppp}{\apred}\wedge\locsafek{\cskip}{\apredppp}{\apred}\implies\locsafedef{k+1}{\loopof{\astmt}}{\apredppp}{\apred}\ .
\end{align*} 
Consider $\apredppp\in\thePredicates$ so that 
\begin{align*}
\locsafedef{k+1}{\astmt;\loopof{\astmt}}{\apredppp}{\apred}\wedge\locsafedef{k+1}{\cskip}{\apredppp}{\apred}\ .
\end{align*}
We show 
\begin{align*}
\locsafedef{k+2}{\loopof{\astmt}}{\apredppp}{\apred}\ .
\end{align*} 
(1) As $\loopof{\astmt}\neq\cskip$, there is nothing to show.\\
(2) The only transition is $\loopof{\astmt}\xrightarrow{\cskip}\choiceof{\cskip}{\astmt;\loopof{\astmt}}$.\\
(2a) As $\neg\effectful{\apredppp}{\cskip}$, we have $\inter{\apredppp}{\acom}=\emptyset$ and hence $\inter{\apredppp}{\acom}\subseteq \theInterference$. \\
(2b) As predicate $\apredpppp$ we choose $\apredppp$, of which we know it is in $\thePredicates$.
We have $\semOf{\cskip}(\apredppp)=\apredppp$.\\
(2c) By assumption, we have $\locsafedef{k+1}{\astmt;\loopof{\astmt}}{\apredppp}{\apred}$ as well as $\locsafedef{k+1}{\cskip}{\apredppp}{\apred}$.
With Lemma~\ref{Lemma:Choice}, we get $\locsafedef{k+1}{\choiceof{\cskip}{\astmt;\loopof{\astmt}}}{\apredppp}{\apred}$.\\
This concludes the proof of $\locsafedef{k+2}{\loopof{\astmt}}{\apredppp}{\apred}$.
\end{proof}
\begin{lemma}\label{Lemma:LoopUnroll}
If $\locsafek{\astmt_1}{\apredppp}{\apred}$ and 
$\locsafe{\setcompact{\apred}\cup\thePredicates}{\theInterference}{k}{\loopof{\astmt}}{\apred}{\apred}$ hold, then 
$\locsafe{\setcompact{\apred}\cup\thePredicates}{\theInterference}{k}{\astmt_1;\loopof{\astmt}}{\apredppp}{\apred}$. 
\end{lemma}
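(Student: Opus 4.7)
The plan is to prove Lemma~\ref{Lemma:LoopUnroll} by induction on $k$, following the same template as the earlier soundness proofs for \ruleref{seq} and \ruleref{choice}. The base case $k=0$ holds trivially from the definition of $\locsafe{-}{-}{0}{-}{-}{-}$, so all the work happens in the inductive step. In that step I fix the inductive hypothesis
\[
  \forall \astmt_1,\apredppp.~ \locsafek{\astmt_1}{\apredppp}{\apred}\,\wedge\,\locsafe{\setcompact{\apred}\cup\thePredicates}{\theInterference}{k}{\loopof{\astmt}}{\apred}{\apred}\,\Longrightarrow\,\locsafe{\setcompact{\apred}\cup\thePredicates}{\theInterference}{k}{\astmt_1;\loopof{\astmt}}{\apredppp}{\apred}
\]
and assume the two $k{+}1$-safety premises, aiming to prove $\locsafe{\setcompact{\apred}\cup\thePredicates}{\theInterference}{k+1}{\astmt_1;\loopof{\astmt}}{\apredppp}{\apred}$. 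Condition~(1) of safety is vacuous because $\astmt_1;\loopof{\astmt}\neq\cskip$.

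For condition~(2), I would case split on the two ways $\astmt_1;\loopof{\astmt}\xrightarrow{\acom}\astmt'$ may fire, exactly as in the proof of soundness of \ruleref{seq}. In the first case $\astmt_1=\cskip$, so the only transition is $\cskip;\loopof{\astmt}\xrightarrow{\cskip}\loopof{\astmt}$. Then $\inter{\apredppp}{\cskip}=\emptyset$ trivially fits into the interference set, the $\cskip$-clause of $\locsafedef{k+1}{\cskip}{\apredppp}{\apred}$ yields $\apredppp\subseteq\apred$, so I take $\apredpppp=\apred\in\setcompact{\apred}\cup\thePredicates$ with $\semOf{\cskip}(\apredppp)\subseteq\apred$, and then apply monotonicity (Lemma~\ref{Lemma:Monotonicity}) to derive $\locsafe{\setcompact{\apred}\cup\thePredicates}{\theInterference}{k}{\loopof{\astmt}}{\apred}{\apred}$ from the second premise.

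In the second case $\astmt_1\xrightarrow{\acom}\astmt_1'$ and the transition is $\astmt_1;\loopof{\astmt}\xrightarrow{\acom}\astmt_1';\loopof{\astmt}$. Unpacking $\locsafedef{k+1}{\astmt_1}{\apredppp}{\apred}$ gives $\inter{\apredppp}{\acom}\subseteq\theInterference\subseteq\setcompact{\apred}\cup\thePredicates$ and produces some $\apredpppp\in\thePredicates\subseteq\setcompact{\apred}\cup\thePredicates$ with $\semOf{\acom}(\apredppp)\subseteq\apredpppp$ and $\locsafek{\astmt_1'}{\apredpppp}{\apred}$. I then apply the induction hypothesis to $\astmt_1'$ and $\apredpppp$: its first assumption is exactly what I just obtained, and its second assumption follows from the $k{+}1$-step safety of $\loopof{\astmt}$ via Lemma~\ref{Lemma:Monotonicity}. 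This yields $\locsafe{\setcompact{\apred}\cup\thePredicates}{\theInterference}{k}{\astmt_1';\loopof{\astmt}}{\apredpppp}{\apred}$, completing condition~(2).

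No step is genuinely hard; the only thing to watch carefully is the bookkeeping of the two different predicate sets ($\thePredicates$ on the premise side versus $\setcompact{\apred}\cup\thePredicates$ on the conclusion side) and to invoke monotonicity (Lemma~\ref{Lemma:Monotonicity}) whenever I need to widen from one to the other or to drop one safety index. The slight asymmetry of this lemma compared to the \ruleref{seq} case is that the ``second component'' $\loopof{\astmt}$ carries the enlarged predicate set already, so no extra widening is needed on that side.
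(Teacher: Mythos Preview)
Your proof is correct and follows essentially the same approach as the paper's own proof: induction on $k$, with the same two-case analysis on the transition out of $\astmt_1;\loopof{\astmt}$ and the same uses of Lemma~\ref{Lemma:Monotonicity}. One small slip to fix: the chain $\inter{\apredppp}{\acom}\subseteq\theInterference\subseteq\setcompact{\apred}\cup\thePredicates$ is a type error (interference sets are not predicate sets); you only need $\inter{\apredppp}{\acom}\subseteq\theInterference$ at that point, which you already have.
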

\begin{proof}
Consider $\astmt$ and $\apred$.
We proceed by induction on $k$.\\
$k=0$: There is nothing to do.\\
$k+1$: The induction hypothesis is
\begin{align*}
\forall\astmt_1.\forall \apredppp.\quad \locsafek{\astmt_1}{\apredppp}{\apred}\wedge\locsafe{\setcompact{\apred}\cup\thePredicates}{\theInterference}{k}{\loopof{\astmt}}{\apred}{\apred}\implies\locsafe{\setcompact{\apred}\cup\thePredicates}{\theInterference}{k}{\astmt_1;\loopof{\astmt}}{\apredppp}{\apred}\ . 
\end{align*}
Consider $\astmt_1$ and $\apredppp$ so that
\begin{align*}
\locsafedef{k+1}{\astmt_1}{\apredppp}{\apred}\wedge\locsafe{\setcompact{\apred}\cup\thePredicates}{\theInterference}{k+1}{\loopof{\astmt}}{\apred}{\apred}.
\end{align*}
We show
\begin{align*}
\locsafe{\setcompact{\apred}\cup\thePredicates}{\theInterference}{k+1}{\astmt_1;\loopof{\astmt}}{\apredppp}{\apred}.
\end{align*}
(1) As $\astmt_1;\loopof{\astmt}\neq \cskip$, there is nothing to show.\\
(2) Consider $\astmt_1;\loopof{\astmt}\xrightarrow{\acom}\astmt'$.
There are two cases for transitions from $\astmt_1;\loopof{\astmt}$.\\\\
Case 1: We have $\astmt_1=\cskip$ and $\astmt_1;\loopof{\astmt}\xrightarrow{\cskip}\loopof{\astmt}$.\\
(2a) Since $\neg\effectful{\apredppp}{\cskip}$, we have $\inter{\apredppp}{\cskip}=\emptyset\subseteq\theInterference$.\\
(2b) We argue that $\apred$ is the right predicate to pick as $\apredpppp$.
It is in $\setcompact{\apred}\cup\thePredicates$.
We have $\semOf{\cskip}(\apredppp)=\apredppp$.
By $\locsafedef{k+1}{\astmt_1}{\apredppp}{\apred}=\locsafedef{k+1}{\cskip}{\apredppp}{\apred}$, we have $\apredppp\subseteq\apred$.
Hence, $\semOf{\cskip}(\apredppp)\subseteq \apred$. \\
(2c) By assumption, $\locsafe{\setcompact{\apred}\cup\thePredicates}{\theInterference}{k+1}{\loopof{\astmt}}{\apred}{\apred}$.
By Lemma~\ref{Lemma:Monotonicity}, we get the desired $\locsafe{\setcompact{\apred}\cup\thePredicates}{\theInterference}{k}{\loopof{\astmt}}{\apred}{\apred}$.\\\\
Case 2: We have $\astmt_1;\loopof{\astmt}\xrightarrow{\acom}\astmt'$ due to $\astmt_1\xrightarrow{\acom}\astmt_1'$ and $\astmt'=\astmt_1';\loopof{\astmt}$.\\
(2a) Since $\locsafedef{k+1}{\astmt_1}{\apredppp}{\apred}$, we have $\inter{\apredppp}{\acom}\subseteq \theInterference$.\\
(2b) By the same assumption, there is some $\apredpppp\in\thePredicates$ such that $\semOf{\acom}(\apredppp)\subseteq \apredpppp$ and $\locsafek{\astmt_1'}{\apredpppp}{\apred}$.\\
(2c) Since $\locsafe{\setcompact{\apred}\cup\thePredicates}{\theInterference}{k+1}{\loopof{\astmt}}{\apred}{\apred}$, we have $\locsafe{\setcompact{\apred}\cup\thePredicates}{\theInterference}{k}{\loopof{\astmt}}{\apred}{\apred}$ with Lemma~\ref{Lemma:Monotonicity}.
We apply the induction hypothesis to $\locsafek{\astmt_1'}{\apredpppp}{\apred}$ and $\locsafe{\setcompact{\apred}\cup\thePredicates}{\theInterference}{k}{\loopof{\astmt}}{\apred}{\apred}$.
It yields $\locsafe{\setcompact{\apred}\cup\thePredicates}{\theInterference}{k}{\astmt_1;\loopof{\astmt}}{\apredpppp}{\apred}$.\\
This concludes the proof of $\locsafe{\setcompact{\apred}\cup\thePredicates}{\theInterference}{k+1}{\astmt_1;\loopof{\astmt}}{\apredppp}{\apred}$. 
\end{proof}
Soundness of \ruleref{loop} is the following.
\begin{lemma}
If $\locsafek{\astmt}{\apred}{\apred}$, then $\locsafe{\setcompact{\apred}\cup\thePredicates}{\theInterference}{k}{\loopof{\astmt}}{\apred}{\apred}$.  
\end{lemma}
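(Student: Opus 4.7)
The plan is to prove the lemma by induction on $k$, following the same template as the preceding soundness lemmas (notably Lemmas~\ref{Lemma:Skip}, \ref{Lemma:Choice}, and \ref{Lemma:LoopUnroll}). The base case $k=0$ is trivial since $\locsafedef{0}{-}{-}{-}$ holds by definition. For the inductive step, I fix the statement of the lemma as the induction hypothesis at rank $k$, assume $\locsafedef{k+1}{\astmt}{\apred}{\apred}$, and aim to establish $\locsafe{\setcompact{\apred}\cup\thePredicates}{\theInterference}{k+1}{\loopof{\astmt}}{\apred}{\apred}$.

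First, observe that condition (1) of safety is vacuous because $\loopof{\astmt}\neq\cskip$. For condition (2), the only control-flow transition from $\loopof{\astmt}$ is $\loopof{\astmt}\xxrightarrow{\cskip}\choiceof{\cskip}{\astmt;\loopof{\astmt}}$. Part (2a) holds since $\cskip$ is not effectful on $\apred$, so $\inter{\apred}{\cskip}=\emptyset\subseteq\theInterference$. For (2b) I pick the witness $\apredpppp\defeq\apred$, which lies in $\setcompact{\apred}\cup\thePredicates$, and note that $\semOf{\cskip}(\apred)=\apred\subseteq\apredpppp$. The real content is (2c), where I must show $\locsafe{\setcompact{\apred}\cup\thePredicates}{\theInterference}{k}{\choiceof{\cskip}{\astmt;\loopof{\astmt}}}{\apred}{\apred}$.

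This follows by one application of Lemma~\ref{Lemma:Choice} to two facts. The first is $\locsafe{\setcompact{\apred}\cup\thePredicates}{\theInterference}{k}{\cskip}{\apred}{\apred}$, obtained from Lemma~\ref{Lemma:Skip} together with the monotonicity Lemma~\ref{Lemma:Monotonicity} to enlarge the predicate set and interference set. The second is $\locsafe{\setcompact{\apred}\cup\thePredicates}{\theInterference}{k}{\astmt;\loopof{\astmt}}{\apred}{\apred}$, which I obtain from Lemma~\ref{Lemma:LoopUnroll}; for that lemma I need the two premises $\locsafek{\astmt}{\apred}{\apred}$ (from the assumption $\locsafedef{k+1}{\astmt}{\apred}{\apred}$ via Lemma~\ref{Lemma:Monotonicity}) and $\locsafe{\setcompact{\apred}\cup\thePredicates}{\theInterference}{k}{\loopof{\astmt}}{\apred}{\apred}$ (from the induction hypothesis applied to the weakened premise $\locsafek{\astmt}{\apred}{\apred}$).

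There is no real obstacle beyond plumbing the three auxiliary lemmas together in the right order: the induction must go through the loop body indirectly via \ref{Lemma:LoopUnroll}, which in turn relies on the loop soundness fact at rank $k$ supplied by the induction hypothesis. The one point to be careful about is that the induction hypothesis in this lemma must be stated with a universal quantifier over $\apred$ and $\astmt$ (matching the style of Lemmas~\ref{Lemma:LoopNext} and \ref{Lemma:LoopUnroll}), so that the instance used via \ref{Lemma:LoopUnroll} is exactly the one provided by the current proof obligation; monotonicity handles any remaining bookkeeping.
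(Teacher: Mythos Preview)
Your proof is correct and follows essentially the same route as the paper. The only cosmetic difference is that the paper packages your final step (unfolding safety at level $k{+}1$, observing the unique transition, and applying Lemma~\ref{Lemma:Choice}) into a separate auxiliary Lemma~\ref{Lemma:LoopNext}, which it then invokes after assembling the same three ingredients you use (Lemma~\ref{Lemma:LoopUnroll}, Lemma~\ref{Lemma:Skip} with monotonicity, and the induction hypothesis). Your remark that the induction hypothesis must be universally quantified over $\apred$ and $\astmt$ is unnecessary: both here and in the paper's proof these are fixed before the induction, and the instance needed for Lemma~\ref{Lemma:LoopUnroll} is exactly the fixed one; universal quantification does no harm but is not required.
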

\begin{proof}
Consider $\thePredicates$, $\theInterference$, $\apred$, and $\astmt$.\
We proceed by induction on $k$.\\
$k=0$: Done.\\
$k+1$: The induction hypothesis is
\begin{align*}
\locsafek{\astmt}{\apred}{\apred}\quad \Rightarrow\quad\locsafe{\setcompact{\apred}\cup\thePredicates}{\theInterference}{k}{\loopof{\astmt}}{\apred}{\apred}. 
\end{align*}
We assume 
\begin{align*}
\locsafedef{k+1}{\astmt}{\apred}{\apred}
\end{align*}
Our goal is to show 
\begin{align*}
\locsafe{\setcompact{\apred}\cup\thePredicates}{\theInterference}{k+1}{\loopof{\astmt}}{\apred}{\apred}. 
\end{align*}
By assumption, we have $\locsafedef{k+1}{\astmt}{\apred}{\apred}$.
This yields $\locsafek{\astmt}{\apred}{\apred}$ by Lemma~\ref{Lemma:Monotonicity}.
The induction hypothesis yields $\locsafe{\setcompact{\apred}\cup\thePredicates}{\theInterference}{k}{\loopof{\astmt}}{\apred}{\apred}$.
Lemma~\ref{Lemma:LoopUnroll} yields $\locsafe{\setcompact{\apred}\cup\thePredicates}{\theInterference}{k}{\astmt;\loopof{\astmt}}{\apred}{\apred}$.
Lemma~\ref{Lemma:Skip} yields $\locsafe{\setcompact{\apred}}{\emptyset}{k}{\cskip}{\apred}{\apred}$.
Monotonicity in Lemma~\ref{Lemma:Monotonicity} yields $\locsafe{\setcompact{\apred}\cup\thePredicates}{\theInterference}{k}{\cskip}{\apred}{\apred}$.
We can thus invoke Lemma~\ref{Lemma:LoopNext}.
It yields the required $\locsafe{\setcompact{\apred}\cup\thePredicates}{\theInterference}{k+1}{\loopof{\astmt}}{\apred}{\apred}$.  
\end{proof}
\subsection{Soundness of \ruleref{frame}} $\phantom{Test}$\newline
We first prove an auxiliary lemma stating that due to the locality of commands, effectfullness of commands is compatible with framing.
\begin{lemma}\label{Lemma:EffectfulFrame}
$\effectful{\apredpp\mstar\apred}{\acom}$ implies $\effectful{\apredpp}{\acom}$. 
\end{lemma}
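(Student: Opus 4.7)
The plan is a direct proof that unpacks the definition of $\effectful{\apredpp\mstar\apred}{\acom}$ and then appeals to the locality assumption~\eqref{cond-loccom} to push the witness down from the composite state to its $\apredpp$-component. So, first, I would assume $\effectful{\apredpp\mstar\apred}{\acom}$ and pick a witness: a state $(\ashared,\alocal)\in\apredpp\mstar\apred$ together with some $(\ashared',\alocal')\in\semComOf{\acom}{\ashared,\alocal}$ satisfying $\ashared'\neq\ashared$. By the definition of separating conjunction on $\setstates$, decompose $(\ashared,\alocal)=(\ashared_1,\alocal_1)\statemult(\ashared_2,\alocal_2)$ with $(\ashared_1,\alocal_1)\in\apredpp$ and $(\ashared_2,\alocal_2)\in\apred$; in particular $\ashared=\ashared_1\sharedmult\ashared_2$.

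Next, I would instantiate \eqref{cond-loccom} with the singleton precondition $\setcompact{(\ashared_1,\alocal_1)}$, the tightest postcondition $\semComOf{\acom}{(\ashared_1,\alocal_1)}$, and the frame $\setcompact{(\ashared_2,\alocal_2)}$. The premise holds trivially, so the conclusion gives
\[
\semComOf{\acom}{\setcompact{(\ashared_1,\alocal_1)}\mstar\setcompact{(\ashared_2,\alocal_2)}}\;\subseteq\;\semComOf{\acom}{(\ashared_1,\alocal_1)}\mstar\setcompact{(\ashared_2,\alocal_2)}\enspace.
\]
Since $(\ashared,\alocal)$ lies in the left-hand precondition, its successor $(\ashared',\alocal')$ lies in the right-hand set, so it factors as $(\ashared',\alocal')=(\ashared_1',\alocal_1')\statemult(\ashared_2,\alocal_2)$ for some $(\ashared_1',\alocal_1')\in\semComOf{\acom}{(\ashared_1,\alocal_1)}$; in particular $\ashared'=\ashared_1'\sharedmult\ashared_2$.

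Finally, from $\ashared'\neq\ashared$, i.e.\ $\ashared_1'\sharedmult\ashared_2\neq\ashared_1\sharedmult\ashared_2$, the contrapositive of the fact that $\sharedmult$ is a (partial) function immediately yields $\ashared_1'\neq\ashared_1$. Hence $(\ashared_1,\alocal_1)\in\apredpp$ admits a successor under $\acom$ with a strictly different global component, which is exactly $\effectful{\apredpp}{\acom}$. I do not expect any real obstacle; the only point that requires a moment's care is the handling of $\abort$, but since $\ashared'\neq\ashared$ presupposes that $(\ashared',\alocal')$ is an honest state rather than $\abort$, the factorization via \eqref{cond-loccom} delivers an honest successor of $(\ashared_1,\alocal_1)$ as well.
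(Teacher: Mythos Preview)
Your proof is correct and uses the same locality instantiation as the paper, namely $\semOf{\acom}(\setcompact{(\ashared_1,\alocal_1)}\mstar\setcompact{(\ashared_2,\alocal_2)})\subseteq\semOf{\acom}(\ashared_1,\alocal_1)\mstar\setcompact{(\ashared_2,\alocal_2)}$. The only cosmetic difference is that the paper argues by contrapositive (assuming $\neg\effectful{\apredpp}{\acom}$ and showing every successor of a composed state preserves the global component), while you argue directly; the content is identical.
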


\begin{proof}
We prove the contrapositive: \[ \neg \effectful{\apredpp}{\acom} \implies \neg\effectful{\apredpp\mstar\apred}{\acom} \ .  \]
Consider some $(\ashared_1, \alocal_1)\in\apredpp$ and some $(\ashared_2, \alocal_2)\in\apred$ such that $(\ashared_1\mstar\ashared_2, \alocal_1\mstar\alocal_2)=(\ashared_1, \alocal_1)\mstar(\ashared_2, \alocal_2)\in\apredpp\mstar\apred$.  
By the locality of commands, we have \[\semOf{\acom}((\ashared_1, \alocal_1)\mstar(\ashared_2, \alocal_2))\subseteq\semOf{\acom}(\ashared_1, \alocal_1)\mstar\setcompact{(\ashared_2, \alocal_2)}\ .\]
Since $\neg \effectful{\apredpp}{\acom}$ and $(\ashared_1, \alocal_1)\in\apredpp$, every state in $\semOf{\acom}(\ashared_1, \alocal_1)$ takes the form $(\ashared_1, \alocal)$ for some $\alocal$.
Consequently, every state in $\semOf{\acom}((\ashared_1, \alocal_1)\mstar(\ashared_2, \alocal_2))$ must be of the following form: $(\ashared_1, \alocal)\mstar(\ashared_2, \alocal_2)=(\ashared_1\mstar\ashared_2, \alocal\mstar\alocal_2)$, and is thus as required.
\end{proof}

Soundness of \ruleref{frame} is this.
\begin{lemma}\label{Lemma:Frame}
If $\locsafek{\astmt}{\apredppp}{\apredp}$, then $\locsafe{\thePredicates\mstar\apredpp}{\theInterference\mstar\apredpp}{k}{\astmt}{\apredppp\mstar\apredpp}{\apredp\mstar\apredpp}$. 
\end{lemma}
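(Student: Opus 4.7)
The plan is to induct on $k$, with the base case $k=0$ being immediate. For the inductive step, assume $\locsafedef{k+1}{\astmt}{\apredppp}{\apredp}$ and establish the four obligations of $\locsafe{\thePredicates\mstar\apredpp}{\theInterference\mstar\apredpp}{k+1}{\astmt}{\apredppp\mstar\apredpp}{\apredp\mstar\apredpp}$.

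First I would discharge the acceptance condition: if $\astmt=\cskip$, then $\apredppp\subseteq \apredp$ by hypothesis, and monotonicity of $\mstar$ in its first argument yields $\apredppp\mstar\apredpp \subseteq \apredp\mstar\apredpp$. Next, for an arbitrary control-flow step $\astmt\xrightarrow{\acom}\astmt'$, I would treat the interference obligation and the successor-predicate obligation separately. For interferences, I want $\inter{\apredppp\mstar\apredpp}{\acom}\subseteq \theInterference\mstar\apredpp$. If $\neg\effectful{\apredppp\mstar\apredpp}{\acom}$, the left-hand side is $\emptyset$ and we are done; otherwise Lemma~\ref{Lemma:EffectfulFrame} gives $\effectful{\apredppp}{\acom}$, hence $\inter{\apredppp}{\acom}=\setcompact{(\apredppp,\acom)}\subseteq \theInterference$, so there is some $(\apredpp',\acom)\in\theInterference$ with $\apredppp\subseteq\apredpp'$; then $(\apredpp'\mstar\apredpp,\acom)\in\theInterference\mstar\apredpp$ and $\apredppp\mstar\apredpp\subseteq \apredpp'\mstar\apredpp$, as required by the representation of interference sets up to joining on predicates.

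For the successor-predicate step, pick the witness $\apredpppp\in\thePredicates$ supplied by $\locsafedef{k+1}{\astmt}{\apredppp}{\apredp}$, satisfying $\semOf{\acom}(\apredppp)\subseteq \apredpppp$ and $\locsafek{\astmt'}{\apredpppp}{\apredp}$. Then $\apredpppp\mstar\apredpp\in \thePredicates\mstar\apredpp$, and the locality assumption~\eqref{cond-loccom} on command semantics yields $\semOf{\acom}(\apredppp\mstar\apredpp)\subseteq \apredpppp\mstar\apredpp$. Finally, apply the induction hypothesis to $\locsafek{\astmt'}{\apredpppp}{\apredp}$ to obtain $\locsafe{\thePredicates\mstar\apredpp}{\theInterference\mstar\apredpp}{k}{\astmt'}{\apredpppp\mstar\apredpp}{\apredp\mstar\apredpp}$, closing the step.

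The only delicate point is the interference case, which is exactly where Lemma~\ref{Lemma:EffectfulFrame} is needed: without it we would not be able to ignore $(\apredppp\mstar\apredpp,\acom)$ in the situation where $\acom$ has no effect on the smaller predicate $\apredppp$ but has been \emph{a priori} excluded from $\theInterference$. Everything else is a mechanical lifting of $\mstar$ through the clauses of $\locsafe{-}{-}{-}{-}{-}{-}$, driven by locality and the monotonicity of separating conjunction.
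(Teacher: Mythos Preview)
Your proposal is correct and follows the paper's proof essentially step for step: induction on $k$, the acceptance case via monotonicity of $\mstar$, the interference case via Lemma~\ref{Lemma:EffectfulFrame} and the join-up-to convention on interference sets, and the successor case via locality \eqref{cond-loccom} plus the induction hypothesis. The only cosmetic difference is that the paper speaks of two clauses (1) and (2) with (2) split into (2a)--(2c), whereas you count ``four obligations''; the content is the same.
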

\begin{proof}
Consider $\apredp$, $\theInterference$, $\thePredicates$, and $\apredpp$.
We proceed by induction on $k$. \\
$k=0$: There is nothing to do.\\
$k+1$: The induction hypothesis is
\begin{align*}
&\forall \astmt.\forall \apredppp.~~\locsafek{\astmt}{\apredppp}{\apredp}~~ \Rightarrow~~\locsafe{\thePredicates\mstar\apredpp}{\theInterference\mstar\apredpp}{k}{\astmt}{\apredppp\mstar\apredpp}{\apredp\mstar\apredpp}.
\end{align*}
Consider $\astmt$ and $\apredppp$ with $\locsafedef{k+1}{\astmt}{\apredppp}{\apredp}$.
We show 
\begin{align*}
\locsafedef{k+1}{\astmt}{\apredppp\mstar\apredpp}{\apredp\mstar\apredpp}. 
\end{align*}
(1) Assume $\astmt=\cskip$, otherwise there is nothing to do.
Since $\locsafedef{k+1}{\astmt}{\apredppp}{\apredp}$, we have $\apredppp\subseteq\apredp$.
Hence, $\apredppp\mstar\apredpp\subseteq \apredp\mstar\apredpp$, as required.\\ 
(2) Consider $\acom$ and $\astmt'$ so that $\astmt\xrightarrow{\acom}\astmt'$.
By assumption, we have $\locsafedef{k+1}{\astmt}{\apredppp}{\apredp}$.\\
(2a) If $\inter{\apredppp\mstar\apredpp}{\acom}=\emptyset$, then $\inter{\apredppp\mstar\apredpp}{\acom}\subseteq\theInterference\mstar\apredpp$ follows trivially.
So assume that we have $\inter{\apredppp\mstar\apredpp}{\acom}=\setcompact{(\apredppp\mstar\apredpp, \acom)}$.
Then $\effectful{\apredppp\mstar\apredpp}{\acom}$.
By Lemma~\ref{Lemma:EffectfulFrame}, we get $\effectful{\apredppp}{\acom}$.
Hence, $\inter{\apredppp}{\acom}=\setcompact{(\apredppp, \acom)}$.
Since $\locsafedef{k+1}{\astmt}{\apredppp}{\apredp}$, we have $\inter{\apredppp}{\acom}\subseteq \theInterference$. 
This means there is $(\apredpp', \acom)\in\theInterference$ with $\apredppp\subseteq \apredpp'$.
Then $\apredppp\mstar\apredpp\subseteq\apredpp'\mstar\apredpp$.
Moreover, $(\apredpp'\mstar\apredpp, \acom)\in\theInterference\mstar\apredpp$, as required. \\
(2b) Because of $\locsafedef{k+1}{\astmt}{\apredppp}{\apredp}$, there is $\apredpppp\in\thePredicates$ with $\semOf{\acom}(\apredppp)\subseteq\apredpppp$ and $\locsafek{\astmt'}{\apredpppp}{\apredp}$.
We have $\apredpppp\mstar\apredpp\in\thePredicates\mstar\apredpp$.
We argue that this is the right choice for a predicate. 
By $\semOf{\acom}(\apredppp)\subseteq\apredpppp$ and the locality of commands, $\semOf{\acom}(\apredppp\mstar\apredpp)\subseteq\apredpppp\mstar\apredpp$, as required.\\
(2c) We have $\locsafek{\astmt'}{\apredpppp}{\apredp}$.
Induction yields the desired $\locsafek{\astmt'}{\apredpppp\mstar\apredpp}{\apredp\mstar\apredpp}$.
\end{proof}


\subsection{Interference-Freedom}
Interference freedom lifts the safety guarantee from an isolated thread to the concurrency library. 
\begin{definition}
We define $\cfsafedef{0}{\aconfig}{\apredp}\defeq\true$ as well as
$\cfsafedef{k+1}{\aconfig}{\apredp} \defeq (1)\wedge (2)$ with
\begin{itemize}
\item[(1)] $\aconfig\in \acceptset{\apredp}$\;
\item[(2)] $\forall \aconfig'.\;\aconfig\rightarrow\aconfig'\;\Rightarrow\;\cfsafek{\aconfig'}{\apredp}$.
\end{itemize}
\end{definition}
The key lemma for lifting the thread-local safety guarantee to configurations is the following. 
\begin{lemma}\label{Lemma:LocalGlobal}
If $\aconfig = (\ashared, \apc)$ and $[\forall i.\forall\alocal.\forall \astmt.\apc(i)=(\alocal, \astmt)\Rightarrow
\exists \apredppp\in\thePredicates.\; (\ashared, \alocal)\in\apredppp\wedge 
\locsafek{\astmt}{\apredppp}{\apredp}]$, then $\cfsafek{\aconfig}{\apredp}$. 
\end{lemma}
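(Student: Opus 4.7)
The plan is to prove the lemma by induction on $k$. The base case $k=0$ is immediate since $\cfsafedef{0}{\aconfig}{\apredp}$ holds by definition. For the inductive step from $k$ to $k+1$, I would establish both conjuncts of $\cfsafedef{k+1}{\aconfig}{\apredp}$ in turn.

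For clause~(1), $\aconfig \in \acceptset{\apredp}$: pick any thread $i$ with $\apc(i) = (\alocal, \cskip)$. The premise yields some $\apredppp \in \thePredicates$ with $(\ashared, \alocal) \in \apredppp$ and $\locsafedef{k+1}{\cskip}{\apredppp}{\apredp}$; the accepting clause inside local safety then gives $\apredppp \subseteq \apredp$, so $(\ashared, \alocal) \in \apredp$, as required.

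For clause~(2), consider any transition $\aconfig \rightarrow \aconfig'$. By the semantics in \Cref{Figure:Relations} it arises from a single thread $i$ with $\apc(i) = (\alocal_i, \astmt_i)$ performing a control-flow step $\pcStepOf{\astmt_i}{\acom}{\astmt'_i}$ and a global-state update $(\ashared', \alocal'_i) \in \semComOf{\acom}{\ashared, \alocal_i}$, while all other threads $j \neq i$ keep their local states and statements. By the inductive hypothesis it suffices to re-establish the lemma's premise at $\aconfig'$ with safety index $k$. For the acting thread $i$, the second clause of $\locsafedef{k+1}{\astmt_i}{\apredppp_i}{\apredp}$ supplies some $\apredpppp \in \thePredicates$ with $\semComOf{\acom}{\apredppp_i} \subseteq \apredpppp$ and $\locsafek{\astmt'_i}{\apredpppp}{\apredp}$; since $(\ashared, \alocal_i) \in \apredppp_i$, we get $(\ashared', \alocal'_i) \in \apredpppp$. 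For any other thread $j \neq i$, the premise gives $\apredppp_j \in \thePredicates$ with $(\ashared, \alocal_j) \in \apredppp_j$ and $\locsafedef{k+1}{\astmt_j}{\apredppp_j}{\apredp}$, and \Cref{Lemma:Monotonicity} downgrades the latter to $\locsafek{\astmt_j}{\apredppp_j}{\apredp}$. The remaining task is to show $(\ashared', \alocal_j) \in \apredppp_j$: from local safety of thread $i$ we have $\inter{\apredppp_i}{\acom} \subseteq \theInterference$, and the interference-freedom assumption $\isInterferenceFreeOf[\theInterference]{\thePredicates}$ in force for the section then yields $\semOf{(\apredppp_i, \acom)}(\apredppp_j) \subseteq \apredppp_j$, which by definition of $\semOf{(\apredppp_i, \acom)}$ gives exactly $(\ashared', \alocal_j) \in \apredppp_j$.

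The main obstacle is the last step, i.e.\ justifying that thread $i$'s step preserves the predicate of every other thread on the new shared state $\ashared'$. There are two cases to reconcile carefully: if $\acom$ is not effectful on $\apredppp_i$, nothing is recorded in $\theInterference$, but in that case $\ashared' = \ashared$ for the component shared with thread~$j$, so $\apredppp_j$ is trivially preserved; if $\acom$ is effectful, the recorded interference $(\apredppp_i, \acom)$ captures exactly the admissible changes to the shared state, and the definition of $\semOf{(\apredppp_i, \acom)}$ (which leaves the observer's local state untouched) matches exactly how $\alocal_j$ is left unchanged by the control-flow transition, so interference-freedom delivers the required preservation. With both clauses established, the induction hypothesis applied to $\aconfig'$ at index $k$ yields $\cfsafek{\aconfig'}{\apredp}$ and completes the step.
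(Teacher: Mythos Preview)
Your proposal is correct and follows essentially the same approach as the paper's proof: induction on $k$, with the acceptance clause handled via the local-safety bound $\apredppp \subseteq \apredp$, and the step clause handled by case analysis on whether a thread is the actor or an observer. The paper's proof is structured identically; the only cosmetic difference is that the paper passes through the larger interference $(\apredpp,\acom)\in\theInterference$ with $\apredppp_i\subseteq\apredpp$ before invoking interference freedom, whereas you invoke it on $(\apredppp_i,\acom)$ directly---your version works because $\semOf{(\apredppp_i,\acom)}$ is monotone in its first component, but it would be worth making that subsumption explicit. Your treatment of the non-effectful case ($\ashared'=\ashared$) is in fact slightly more careful than the paper's, which leaves it implicit.
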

\begin{proof}
Consider $\thePredicates$ and $\theInterference$ with $\isInterferenceFreeOf[\theInterference]{\thePredicates}$.
Consider $\apredp$.
We proceed by induction on $k$.\\
$k=0$: Done.\\
$k+1$: The induction hypothesis is
\begin{align*}
\forall \ashared.\forall \apc.~~~ &
\left(\begin{aligned}
	\forall i.\forall\alocal.\forall \astmt.\apc(i)=(\alocal, \astmt)~\Rightarrow~
\exists \apredppp\in\thePredicates.\; (\ashared, \alocal)\in\apredppp\wedge 
\locsafek{\astmt}{\apredppp}{\apredp}
\end{aligned}\right)
\\ \Rightarrow~ &\cfsafek{(\ashared, \apc)}{\apredp}. 
\end{align*}
Consider $\aconfig = (\ashared, \apc)$ so that for all $i, \alocal, \astmt$ we have
\begin{align*}
\apc(i)=(\alocal, \astmt)\Rightarrow
\exists \apredppp\in\thePredicates.\; (\ashared, \alocal)\in\apredppp\wedge 
\locsafedef{k+1}{\astmt}{\apredppp}{\apredp}.
\end{align*}
We show $\cfsafedef{k+1}{\aconfig}{\apredp}$.\\
(1) To show $\aconfig\in\acceptset{\apredp}$, let thread $i$ be with $\apc(i)=(\alocal, \cskip)$.
By assumption, there is a predicate $\apredppp\in\thePredicates$ so that $(\ashared, \alocal)\in\apredppp$ and $\locsafedef{k+1}{\cskip}{\apredppp}{\apredp}$.
By definition of $\locsafedef{k+1}{\cskip}{\apredppp}{\apredp}$, we have 
$\apredppp\subseteq\apredp$.
Hence, $(\ashared, \alocal)\in\apredp$, as required.\\
(2) Consider a configuration $\aconfig'$ with $\aconfig\rightarrow\aconfig'=(\ashared', \apc')$.
To establish $\cfsafek{\aconfig'}{\apredp}$, let thread $i$ be with $\apc(i)=(\alocal, \astmt)$ and $\apc'(i)=(\alocal', \astmt')$. 
We show that there is a predicate $\apredpppp\in\thePredicates$ with $(\ashared', \alocal')\in\apredpppp$ and $\locsafek{\astmt'}{\apredpppp}{\apredp}$.
The induction hypothesis then yields $\cfsafek{\aconfig'}{\apredp}$ and concludes the proof.
There are two cases.\\
\textit{Case 1:} Thread~$i$ executes the command $\astmt\xrightarrow{\acom}\astmt'$ that leads to the transition $\aconfig\rightarrow\aconfig'$.
Then  $(\ashared', \alocal')\in\semOf{\acom}(\ashared, \alocal)$.
By assumption, there is a predicate $\apredppp\in\thePredicates$ so that $(\ashared, \alocal)\in\apredppp$ and $\locsafedef{k+1}{\astmt}{\apredppp}{\apredp}$.
The definition of safety gives some $\apredpppp\in\thePredicates$ with $\semOf{\acom}(\apredppp)\subseteq\apredpppp$.
Then $(\ashared', \alocal')\in\semOf{\acom}(\ashared, \alocal)$, $(\ashared, \alocal)\in\apredppp$, and $\semOf{\acom}(\apredppp)\subseteq\apredpppp$ together entail $(\ashared', \alocal')\in\apredpppp$.
Moreover, the predicate $\apredpppp$ satisfies $\locsafedef{k}{\astmt'}{\apredpppp}{\apredp}$, as required.
In the second case, we use that safety of the executing thread also yields $\inter{\apredppp}{\acom}\subseteq \theInterference$.
Hence, if $\effectful{\apredppp}{\acom}$, then there is $(\apredpp, \acom)\in\theInterference$ with $\apredppp\subseteq\apredpp$. \\
\textit{Case 2:} Thread~$i$ experiences the command as an interference.
Hence, $\apc'(i)=\apc(i)=(\alocal, \astmt)$ but potentially $\ashared'\neq \ashared$.
By assumption, there is a predicate $\apredppp\in\thePredicates$ so that $(\ashared, \alocal)\in\apredppp$ and $\locsafedef{k+1}{\astmt}{\apredppp}{\apredp}$.
We show that $\apredpppp=\apredppp$ is the right choice. 
We already argued that the interference is covered by some $(\apredpp, \acom)\in\theInterference$.
Hence, we have $(\ashared', \alocal)\in\semOf{(\apredpp, \acom)}(\ashared, \alocal)$.
Moreover $\semOf{(\apredpp, \acom)}(\ashared, \alocal)\subseteq \semOf{(\apredpp, \acom)}(\apredppp)$. 
By interference freedom, we have $\semOf{(\apredpp, \acom)}(\apredppp)\subseteq \apredppp$. 
Hence, $(\ashared', \alocal)\in\apredppp$. 
Moreover, $\locsafedef{k+1}{\astmt}{\apredppp}{\apredp}$ entails the desired $\locsafek{\astmt}{\apredppp}{\apredp}$ by Lemma~\ref{Lemma:Monotonicity}. 
\end{proof}

\subsection{Soundness}
Finally, we can prove the overall soundness theorem.

\begin{proof}[Proof of Theorem~\ref{Theorem:Soundness}]
Assume $\thePredicates, \theInterference\semCalc\hoareOf{\apred}{\astmt}{\apredp}$ and $\isInterferenceFreeOf[\theInterference]{\thePredicates}$ and $\apred\in\thePredicates$. 
In order to show that $\subModels\hoareOf{\apred}{\astmt}{\apredp}$ holds, consider a configuration $(\ashared, \apc)\in\initset{\apred}{\astmt}$. 
By definition of $\initset{\apred}{\astmt}$, every thread $i$ satisfies $\apc(i)=(\alocal, \astmt)$ with $(\ashared, \alocal)\in\apred$. 
By $\thePredicates, \theInterference\semCalc\hoareOf{\apred}{\astmt}{\apredp}$ and Proposition~\ref{Proposition:LocalSound}, we have $\locsafek{\astmt}{\apred}{\apredp}$ for every $k$. 
With $\isInterferenceFreeOf[\theInterference]{\thePredicates}$, $\apred\in\thePredicates$, and Lemma~\ref{Lemma:LocalGlobal}, we get $\cfsafek{(\ashared, \apc)}{\apredp}$ for all $k$. 
This shows that every reachable configuration is accepting for $\apredp$.
\end{proof}



\section{Proofs of Section~\ref{Section:Futures}}

We will first prove an auxiliary lemma, which we will use to show the soundness of \ruleref{f-infer}.
\begin{lemma}\label{Lemma:MonotonicitySepImp}
$\apred_2\subseteq\apred_1$ and $\apredp_1\subseteq\apredp_2$ imply $\apred_1\sepimp\apredp_1\subseteq \apred_2\sepimp\apredp_2$. 
\end{lemma}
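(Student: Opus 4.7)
The plan is to unfold the semantic definition of separating implication and then chase inclusions.

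Recall from \Cref{Subsection:States} that $\apred \sepimp \apredp = \set{\astate \mid \setcompact{\astate} \mstar \apred \subseteq \apredp}$. So to establish the claimed inclusion, I would fix an arbitrary $\astate \in \apred_1 \sepimp \apredp_1$ and show $\astate \in \apred_2 \sepimp \apredp_2$, which by definition amounts to proving $\setcompact{\astate} \mstar \apred_2 \subseteq \apredp_2$.

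To that end, I would take an arbitrary $\astate'' \in \setcompact{\astate} \mstar \apred_2$, unfold the definition of $\mstar$ on predicates to obtain some $\astate' \in \apred_2$ with $\astate \statemultdef \astate'$ and $\astate'' = \astate \mstar \astate'$. From the hypothesis $\apred_2 \subseteq \apred_1$ we get $\astate' \in \apred_1$, hence $\astate'' \in \setcompact{\astate} \mstar \apred_1$. Since $\astate \in \apred_1 \sepimp \apredp_1$ the defining inclusion yields $\astate'' \in \apredp_1$, and finally the second hypothesis $\apredp_1 \subseteq \apredp_2$ gives $\astate'' \in \apredp_2$, as desired.

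There is no real obstacle here: the lemma is just the standard contravariant/covariant monotonicity of $\sepimp$ in its two arguments, and the proof is a direct chase through the semantic definitions with no appeal to properties of $\statemult$ beyond what is already in the definition of $\mstar$ on predicates.
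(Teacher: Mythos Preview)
Your proof is correct and follows essentially the same approach as the paper's: both fix $\astate\in\apred_1\sepimp\apredp_1$, use $\apred_2\subseteq\apred_1$ to get $\setcompact{\astate}\mstar\apred_2\subseteq\setcompact{\astate}\mstar\apred_1\subseteq\apredp_1$, and then $\apredp_1\subseteq\apredp_2$ to conclude. The only difference is cosmetic---you unfold the inclusion $\setcompact{\astate}\mstar\apred_2\subseteq\setcompact{\astate}\mstar\apred_1$ pointwise, whereas the paper uses monotonicity of $\mstar$ directly.
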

\begin{proof}
Consider $\astate\in\apred_1\sepimp\apredp_1$.
Then $\setcompact{\astate}\mstar\apred_1\subseteq \apredp_1$.
As $\apred_2\subseteq\apred_1$, we get $\setcompact{\astate}\mstar\apred_2\subseteq \apredp_1$.
As $\apredp_1\subseteq\apredp_2$, we get $\setcompact{\astate}\mstar\apred_2\subseteq \apredp_2$.
This means $\astate\in\apred_2\sepimp\apredp_2$.
\end{proof}

\begin{proof}[Proof of Lemma~\ref{Lemma:FutureSoundness}]
  We prove validity of each of the rules in succession.
\begin{asparaenum}[(i)]
\item \textit{Soundness of \ruleref{f-intro}.} 
We have $\emp * \apred = \apred$.
Moreover, we have:
\begin{align*}
  \FUT{\apred}{\acom}{\apredp}
  =\apred\sepimp \wpreOf{\acom}{\apredp}
  =\setcond{\astate}{\setcompact{\astate}\mstar \apred \subseteq \wpreOf{\acom}{\apredp}}
\end{align*}
We thus have to show $\emp \mstar \apred=\apred\subseteq \wpreOf{\acom}{\apredp}$.
This holds by assumption.


\item \textit{Soundness of \ruleref{f-seq}.} 
Consider $\astate\in \FUT{\apred}{\acom_1}{\apredp}\statemult\FUT{\apredp}{\acom_2}{\apredpp}$.
Then we have $\astate=\astate_1\mstar\astate_2$ with $\astate_1\in \FUT{\apred}{\acom_1}{\apredp}$ and $\astate_2 \in \FUT{\apredp}{\acom_2}{\apredpp}$.
So we arrive at $\setcompact{\astate}\mstar\apred=\setcompact{\astate_2}\mstar(\setcompact{\astate_1}\mstar\apred)$.
Then, by the locality of commands, we obtain $\semOf{\acom_1}(\setcompact{\astate}\mstar\apred)\subseteq\setcompact{\astate_2}\mstar\semOf{\acom_1}(\setcompact{\astate_1}\mstar\apred)$.
As $\astate_1\in\FUT{\apred}{\acom_1}{\apredp}$, we have $\semOf{\acom_1}(\setcompact{\astate_1}\mstar\apred)\subseteq\apredp$. 
Hence, we get $\semOf{\acom_1}(\setcompact{\astate}\mstar\apred)\subseteq \setcompact{\astate_2}\mstar\apredp$.
This means we arrive at $\setcompact{\astate}\mstar\apred\subseteq \wpreOf{\acom_1}{\setcompact{\astate_2}\mstar\apredp}$. 
As $\astate_2\in\FUT{\apredp}{\acom_2}{\apredpp}$, we have $\setcompact{\astate_2}\mstar\apredp\subseteq\wpreOf{\acom_2}{\apredpp}$.
Together:
\begin{align*}
  \setcompact{\astate}\mstar\apred
  \subseteq \wpreOf{\acom_1}{\setcompact{\astate_2}\mstar\apredp}
  \subseteq \wpreOf{\acom_1}{\wpreOf{\acom_2}{\apredpp}}
  =\wpreOf{\acom_1;\acom_2}{\apredpp}
\end{align*}
By assumption, $\wpreOf{\acom_1;\acom_2}{\apredpp}\subseteq \wpreOf{\acom}{\apredpp}$.
Hence, we arrive at $\setcompact{\astate}\mstar\apred\subseteq \wpreOf{\acom}{\apredpp}$.
This shows $\astate\in \FUT{\apred}{\acom}{\apredpp}$, as required.


\item \textit{Soundness of \ruleref{f-account}.} 
Consider $\astate\in \apred\statemult\FUT{\apred\mstar\apredp}{\acom}{\apredpp}$.
Then $\astate = \astate_1\mstar\astate_2$ with $\astate_1\in\apred$ and $\astate_2\in \FUT{\apred\mstar\apredp}{\acom}{\apredpp}$.
The latter means $\semOf{\acom}(\setcompact{\astate_2}\mstar\apred\mstar\apredp)\subseteq\apredpp$.
As $\astate_1\in\apred$, we in particular have $\semOf{\acom}(\setcompact{\astate_2}\mstar\setcompact{\astate_1}\mstar\apredp)\subseteq\apredpp$.
Since $\semOf{\acom}(\setcompact{\astate_2}\mstar\setcompact{\astate_1}\mstar\apredp)=\semOf{\acom}(\setcompact{\astate}\mstar\apredp)$, we obtain the required $\astate\in\FUT{\apredp}{\acom}{\apredpp}$.
\item \textit{Soundness of \ruleref{f-infer}.} 
If $\apredp_1\subseteq \apredp_2$, then $\wpreOf{\acom}{\apredp_1}\subseteq \wpreOf{\acom}{\apredp_2}$.
We conclude as follows:
\begin{align*}
\;\;&\FUT{\apred_1}{\acom}{\apredp_1}\\
\commentl{\small{Definition~\ref{Definition:Future}}}=\;\;&\apred_1\sepimp\wpreOf{\acom}{\apredp_1}\\
\commentl{\small{Lemma~\ref{Lemma:MonotonicitySepImp} and assumption}}\subseteq\;\;&\apred_2\sepimp\wpreOf{\acom}{\apredp_1}\\
\commentl{\small{Lemma~\ref{Lemma:MonotonicitySepImp} and remark above}}\subseteq\;\;&\apred_2\sepimp\wpreOf{\acom}{\apredp_2}\\
\commentl{\small{Definition~\ref{Definition:Future}}}=\;\;&\FUT{\apred_2}{\acom}{\apredp_2}.
\end{align*}

\item \textit{Soundness of \ruleref{f-frame}.} 
Consider $\astate\in\FUT{\apred}{\acom}{\apredp}$.
Then $\setcompact{\astate}\mstar\apred\subseteq\wpreOf{\acom}{\apredp}$.
This in turn means $\semOf{\acom}(\setcompact{\astate}\mstar\apred)\subseteq \apredp$.
By the locality of commands, we have $\semOf{\acom}(\setcompact{\astate}\mstar\apred\mstar\apredpp)\subseteq \apredp\mstar\apredpp$.
This means $\setcompact{\astate}\mstar\apred\mstar\apredpp\subseteq\wpreOf{\acom}{\apredp\mstar\apredpp}$.
Hence, $\astate\in\FUT{\apred\mstar\apredpp}{\acom}{\apredp\mstar\apredpp}$.

\item \textit{Soundness of \ruleref{f-invoke}.} 
We conclude by:
\begin{align*}
\;\;&\apred\statemult\FUT{\apred}{\acom}{\apredp}\\
\commentl{\small{$\emp$ neutral}}=\;\;&\apred\statemult\FUT{\apred\mstar\emp}{\acom}{\apredp}\\
\commentl{\small{Soundness of \ruleref{f-account}}}\subseteq\;\;& \FUT{\emp}{\acom}{\apredp}\\
\commentl{\small{Definition~\ref{Definition:Future}}}=\;\;& \emp\sepimp\wpreOf{\acom}{\apredp}\\
\commentl{\small{$\emp$ neutral}}=\;\;& \wpreOf{\acom}{\apredp}.
\end{align*}
\end{asparaenum}
\end{proof}



\newcommand{\stateunitseq}[1][\astateseq]{\mathbf{\stateunit}_{#1}}

\section{Proofs of Section~\ref{Section:History}}
\label{Section:HistoryProofs}

\begin{proof}[Proof of Lemma~\ref{Lemma:Frameable}]
Consider $\astateseq'\in\csemOf{\acom}(\acpred\mstar\acpredpp)$.
Then there is $\astateseq\in\acpred\mstar\acpredpp$ with $\astateseq'\in\csemOf{\acom}(\astateseq)$. 
Then $\astateseq=\astateseq_1\cc\astate_1\mstar\astateseq_2\cc\astate_2$ with $\astateseq_1\cc\astate_1\in\acpred$ and $\astateseq_2\cc\astate_2\in\acpredpp$.
By definition, $\astateseq'=\astateseq\cc\astate'$ with $\astate'\in\semOf{\acom}(\astate_1\mstar\astate_2)$.
By locality of commands, $\semOf{\acom}(\astate_1\mstar\astate_2)\subseteq \semOf{\acom}(\astate_1)\mstar\setcompact{\astate_2}$.
Hence, $\astate'=\astate_1'\mstar\astate_2$ with $\astate_1'\in\semOf{\acom}(\astate_1)$. 
By definition, $\astateseq_1\cc\astate_1\cc\astate_1'\in \csemOf{\acom}(\astateseq_1\cc\astate_1)$.
Since $\astateseq_1\cc\astate_1\in\acpred$, we have $\astateseq_1\cc\astate_1\cc\astate_1'\in \csemOf{\acom}(\acpred)$.
By frameability, we have $\astateseq_2\cc\astate_2\cc\astate_2\in\acpredpp$.
Hence, $\astateseq_1\cc\astate_1\cc\astate_1'\mstar\astateseq_2\cc\astate_2\cc\astate_2\in\csemOf{\acom}(\acpred)\mstar\acpredpp$.
Altogether, we arrive at $\astateseq_1\cc\astate_1\cc\astate_1'\mstar\astateseq_2\cc\astate_2\cc\astate_2=\astateseq\cc(\astate_1'\mstar\astate_2)=\astateseq'.$   
\end{proof}

We adapt the semantics of concurrency libraries to track history and show that the thread-modular reasoning principle remains sound. 
The first step is to lift the product separation algebra to a history separation algebra containing sequences from $(\setshared\times\setlocal)^+$. 
In a configuration of the concurrency library, we store the two components of such a sequence separately. 
So we store $(\ashared_1, \alocal_1)\ldots (\ashared_n, \alocal_n)$ as the pair $(\asharedseq, \alocalseq)$ with $\asharedseq=\ashared_1\ldots\ashared_n$ and $\alocalseq=\alocal_1\ldots\alocal_n$. 
Note that a predicate $\acpred\subseteq(\setshared\times\setlocal)^+$ only contains pairs of sequences of the same length.
A configuration of the concurrency library has the shape $(\asharedseq, \apc)$ with $\asharedseq\in\setshared^+$ a sequence of global states and for all threads $\apc(i)=(\alocalseq, \astmt)$ with $\alocalseq\in\setlocal^+$ an equally long sequence of local states. 
When we execute a transition, we not only update the local state of the thread being active.
We also store a copy of the current local state in all other threads.
This ensures the resulting sequences of global and local states are again of the same length. 
The modified transition relation is this.
\begin{mathpar}
	\infrule{
		\apc_1(i)=(\alocalseq_1, \astmt_1)\\
		\pcStepOf{\astmt_1}{\acom}{\astmt_2}\\
		(\asharedseq_2, \alocalseq_2)\in\csemOf{\acom}(\asharedseq_1, \alocalseq_1)\\
				\apc_2(i)=(\alocalseq_2, \astmt_2)\\\\
		\forall j\neq i.\;\; \apc_1(j)=(\alocalseq_j, \astmt_j)\\
		\apc_2(j)=(\alocalseq_j\cc\lastOf{\alocalseq_j}, \astmt_j)
}{
		(\asharedseq_1, \apc_1)
		\rightarrow	(\asharedseq_2, \apc_2)
	}
\end{mathpar}
%
We also adapt interference to copy the last local state:
\begin{gather*}
	\csemOf{(\apredpp, \acom)}(\asharedseq, \alocalseq)
	\defeq
	\setcond{
		(\asharedseq', \alocalseq\cc\lastOf{\alocalseq})
	}{
		\exists \alocalseq_1, \alocalseq_2.\;\; 
		(\asharedseq, \alocalseq_1)\in \apredpp
		~\wedge~
		(\asharedseq', \alocalseq_2)\in \csemOf{\acom}(\asharedseq, \alocalseq_1)
	}
	\ .
\end{gather*}

Proposition~\ref{Proposition:LocalSound} refers to a general separation algebra and therefore continues to hold in the present setting modulo the restriction of the \ruleref{frame} rule to frameable predicates.

We now show that the key lemma for lifting the thread-local safety guarantee to configurations carries over to history separation algebras. 

\begin{lemma}\label{Lemma:LocalGlobalComput}
If\, $\aconfig = (\asharedseq, \apc)$ and\, $[\forall i.\forall\alocalseq.\forall \astmt.\apc(i)=(\alocalseq, \astmt).\Rightarrow
\exists \acpredppp\in\thePredicates.\; (\asharedseq, \alocalseq)\in\acpredppp\wedge 
\locsafek{\astmt}{\acpredppp}{\acpredp}]$ and\, $\isInterferenceFreeOf[\theInterference]{\thePredicates}$ hold, then we have\, $\cfsafek{\aconfig}{\acpredp}$. 
\end{lemma}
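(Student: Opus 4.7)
The plan is to mirror the structure of the proof of Lemma~\ref{Lemma:LocalGlobal} and adapt it to the history setting, proceeding by induction on $k$. The base case $k = 0$ is immediate from the definition $\cfsafedef{0}{\aconfig}{\acpredp} \defeq \true$. In the inductive step I split the obligation into showing acceptance, part~(1), and preservation under one program step, part~(2).

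For acceptance, I take any terminated thread $i$ with $\apc(i) = (\alocalseq, \cskip)$, invoke the hypothesis to obtain a predicate $\acpredppp \in \thePredicates$ with $(\asharedseq,\alocalseq) \in \acpredppp$ and $\locsafedef{k+1}{\cskip}{\acpredppp}{\acpredp}$, then use part~(1) of the definition of $\locsafe{}{}{k+1}{\cskip}{\acpredppp}{\acpredp}$ to conclude $\acpredppp \subseteq \acpredp$, so $(\asharedseq,\alocalseq) \in \acpredp$, which is exactly what $\aconfig \in \acceptset{\acpredp}$ requires.

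For preservation, I pick a transition $\aconfig \rightarrow \aconfig' = (\asharedseq', \apc')$ and show that the premise of the induction hypothesis re-holds for $\aconfig'$, yielding $\cfsafek{\aconfig'}{\acpredp}$. For a fixed thread $i$ with $\apc'(i) = (\alocalseq_i', \astmt_i')$ I distinguish the active thread from spectators. If $i$ is the active thread executing $\astmt_i \xrightarrow{\acom} \astmt_i'$, the new sequences satisfy $(\asharedseq', \alocalseq_i') \in \csemOf{\acom}(\asharedseq, \alocalseq_i)$; safety of $\astmt_i$ then provides some $\acpredpppp \in \thePredicates$ with $\csemOf{\acom}(\acpredppp) \subseteq \acpredpppp$ and $\locsafek{\astmt_i'}{\acpredpppp}{\acpredp}$, delivering the required predicate for $i$. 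If $i$ is a spectator, then $\astmt_i' = \astmt_i$ and the local component is extended by stuttering, $\alocalseq_i' = \alocalseq_i \cc \lastOf{\alocalseq_i}$. Here I use that the active thread's safety bounds its effect as an interference $(\acpredppp_i, \acom)\in \theInterference$ with $\acpredppp_i$ the active thread's current predicate (if $\acom$ is effectful; otherwise the shared component is unchanged and a trivial stuttering argument suffices). Then $(\asharedseq',\alocalseq_i') \in \csemOf{(\acpredppp_i, \acom)}(\asharedseq,\alocalseq_i) \subseteq \csemOf{(\acpredppp_i,\acom)}(\acpredppp) \subseteq \acpredppp$ by interference freedom $\isInterferenceFreeOf[\theInterference]{\thePredicates}$, keeping $\acpredppp$ as the witness for $i$. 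Finally, Lemma~\ref{Lemma:Monotonicity} drops the safety index from $k+1$ to $k$ so the induction hypothesis applies.

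The main obstacle is the bookkeeping around the stuttering performed by both $\csemOf{\acom}$ and $\csemOf{(\acpredpp,\acom)}$: in the history algebra every transition forcibly extends every thread's local sequence, so the ``non-active threads are untouched'' argument from Lemma~\ref{Lemma:LocalGlobal} no longer literally holds. It has to be replaced by the observation that the lifted interference semantics $\csemOf{(\acpredpp,\acom)}$ already produces precisely the stuttered extension that the transition rule writes into $\apc'$, so interference freedom directly closes the gap. Modulo this adjustment, the case analysis and the application of Lemma~\ref{Lemma:Monotonicity} are identical to the state-algebra proof, and the overall soundness theorem for histories, Theorem~\ref{Theorem:SoundnessComput}, then follows from this lemma exactly as in the proof of Theorem~\ref{Theorem:Soundness}.
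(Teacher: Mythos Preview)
Your proposal is correct and follows essentially the same approach as the paper's proof: induction on $k$, the acceptance argument for part~(1), the active/spectator case split for part~(2), the use of the active thread's safety to produce both the successor predicate and the covering interference, and the appeal to interference freedom plus Lemma~\ref{Lemma:Monotonicity} for the spectator case. Your explicit remark that the lifted interference semantics $\csemOf{(\acpredpp,\acom)}$ already produces the stuttered local sequence $\alocalseq\cc\lastOf{\alocalseq}$ is exactly the adaptation the paper makes, and your side remark about the non-effectful case (where the shared component is unchanged and a direct stuttering argument works) is a point the paper leaves implicit.
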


\begin{proof}[Proof of Lemma~\ref{Lemma:LocalGlobalComput}]
Consider $\thePredicates$ and $\theInterference$ with $\isInterferenceFreeOf[\theInterference]{\thePredicates}$.
Consider $\acpredp$.
We proceed by induction on $k$.\\
$k=0$: Done.\\
$k+1$: The induction hypothesis is
\begin{align*}
\forall \asharedseq.\forall \apc.~~&
\left(
\begin{aligned}
	\forall i.\forall\alocalseq.\forall \astmt.~~ \apc(i)=(\alocalseq, \astmt)
	~\Rightarrow~
	\exists \acpredppp\in\thePredicates.\; (\asharedseq, \alocalseq)\in\acpredppp\wedge 
	\locsafek{\astmt}{\acpredppp}{\acpredp}
\end{aligned}
\right)
\\\Rightarrow~& \cfsafek{(\asharedseq, \apc)}{\acpredp}. 
\end{align*}
Consider $\aconfig = (\asharedseq, \apc)$ so that for all $i, \alocalseq, \astmt$ we have
\begin{align*}
\apc(i)=(\alocalseq, \astmt)\Rightarrow
\exists \acpredppp\in\thePredicates.\; (\asharedseq, \alocalseq)\in\acpredppp\wedge 
\locsafedef{k+1}{\astmt}{\acpredppp}{\acpredp}.
\end{align*}
We show $\cfsafedef{k+1}{\aconfig}{\acpredp}$.\\
(1) To show $\aconfig\in\acceptset{\acpredp}$, let thread $i$ be with $\apc(i)=(\alocalseq, \cskip)$.
By assumption, there is a predicate $\acpredppp\in\thePredicates$ so that $(\asharedseq, \alocalseq)\in\acpredppp$ and $\locsafedef{k+1}{\cskip}{\acpredppp}{\acpredp}$.
By definition of $\locsafedef{k+1}{\cskip}{\acpredppp}{\acpredp}$, we have 
$\acpredppp\subseteq\acpredp$.
Hence, $(\asharedseq, \alocalseq)\in\acpredp$, as required.\\
(2) Consider a configuration $\aconfig'$ with $\aconfig\rightarrow\aconfig'=(\asharedseq', \apc')$.
We establish $\cfsafek{\aconfig'}{\acpredp}$. 
Consider a thread $i$ with $\apc(i)=(\alocalseq, \astmt)$ and $\apc'(i)=(\alocalseq', \astmt')$. 
We show that there is a predicate $\acpredpppp\in\thePredicates$ with $(\asharedseq', \alocalseq')\in\acpredpppp$ and $\locsafek{\astmt'}{\acpredpppp}{\acpredp}$.
The induction hypothesis then yields $\cfsafek{\aconfig'}{\acpredp}$ and concludes the proof.
There are two cases.\\
\textit{Case 1:} Thread~$i$ executes the command $\astmt\xrightarrow{\acom}\astmt'$ that leads to the transition $\aconfig\rightarrow\aconfig'$.
Then  $(\asharedseq', \alocalseq')\in\csemOf{\acom}(\asharedseq, \alocalseq)$.
By assumption, there is a predicate $\acpredppp\in\thePredicates$ with $(\asharedseq, \alocalseq)\in\acpredppp$ and $\locsafedef{k+1}{\astmt}{\acpredppp}{\acpredp}$.
The definition of safety gives a predicate $\acpredpppp\in\thePredicates$ with $\csemOf{\acom}(\acpredppp)\subseteq\acpredpppp$.
Then $(\asharedseq', \alocalseq')\in\csemOf{\acom}(\asharedseq, \alocalseq)$, $(\asharedseq, \alocalseq)\in\acpredppp$, and $\csemOf{\acom}(\acpredppp)\subseteq\acpredpppp$ together entail $(\asharedseq', \alocalseq')\in\acpredpppp$.
Moreover, the predicate $\acpredpppp$ satisfies $\locsafedef{k}{\astmt'}{\acpredpppp}{\acpredp}$, as required.
In the second case, we will use that safety of the executing thread also yields $\inter{\acpredppp}{\acom}\subseteq \theInterference$.
Hence, if $\effectful{\acpredppp}{\acom}$, then there is $(\acpredpp, \acom)\in\theInterference$ with $\acpredppp\subseteq\acpredpp$. \\
\textit{Case 2:} Thread~$i$ experiences the command as an interference.
Then we have $\apc(i)=(\alocalseq, \astmt)$ and $\apc'(i)=(\alocalseq\cc\lastOf{\alocalseq}, \astmt)$.
By assumption, there is a predicate $\acpredppp\in\thePredicates$ so that $(\asharedseq, \alocalseq)\in\acpredppp$ and $\locsafedef{k+1}{\astmt}{\acpredppp}{\acpredp}$.
We show that $\acpredpppp=\acpredppp$ is the right choice. 
We already argued that the interference is covered by some $(\acpredpp, \acom)\in\theInterference$.
Hence, we have $(\asharedseq', \alocalseq\cc\lastOf{\alocalseq})\in\csemOf{(\acpredpp, \acom)}(\asharedseq, \alocalseq)$.
Moreover $\csemOf{(\acpredpp, \acom)}(\asharedseq, \alocalseq)\subseteq \csemOf{(\acpredpp, \acom)}(\acpredppp)$. 
By interference freedom, we have $\csemOf{(\acpredpp, \acom)}(\acpredppp)\subseteq \acpredppp$. 
Hence, $(\asharedseq', \alocalseq\cc\lastOf{\alocalseq})\in\acpredppp$. 
Moreover, $\locsafedef{k+1}{\astmt}{\acpredppp}{\acpredp}$ entails $\locsafek{\astmt}{\acpredppp}{\acpredp}$ by Lemma~\ref{Lemma:Monotonicity}. 
\end{proof}

%

The proof of Theorem~\ref{Theorem:SoundnessComput} is the same as the one for Theorem~\ref{Theorem:Soundness}, except that we replace Lemma~\ref{Lemma:LocalGlobal} by the above Lemma~\ref{Lemma:LocalGlobalComput}.

\begin{proof}[Proof of Theorem~\ref{Theorem:SoundnessComput}]
Assume that $\thePredicates, \theInterference\semCalc\hoareOf{\acpred}{\astmt}{\acpredp}$ and $\isInterferenceFreeOf[\theInterference]{\thePredicates}$ and $\acpred\in\thePredicates$ hold.
To show $\subModels\hoareOf{\acpred}{\astmt}{\acpredp}$, consider a configuration $(\asharedseq, \apc)\in\initset{\acpred}{\astmt}$.
By definition of $\initset{\acpred}{\astmt}$, every thread $i$ satisfies $\apc(i)=(\alocalseq, \astmt)$ with $(\asharedseq, \alocalseq)\in\acpred$.
By $\thePredicates, \theInterference\semCalc\hoareOf{\acpred}{\astmt}{\acpredp}$ together with Proposition~\ref{Proposition:LocalSound}, we have $\locsafek{\astmt}{\acpred}{\acpredp}$ for every $k$.
With $\isInterferenceFreeOf[\theInterference]{\thePredicates}$, $\acpred\in\thePredicates$, and Lemma~\ref{Lemma:LocalGlobalComput}, we get $\cfsafek{(\asharedseq, \apc)}{\acpredp}$ for all $k$.
This shows that every reachable configuration is accepting for $\acpredp$.
\end{proof}
\begin{proof}[Proof of Lemma~\ref{Lemma:FrameablePredicates}]
(1) Let $\astateseq\cc\astate\in\nowOf{\apred}$, which means $\astate\in\apred$.
Then also for $\astateseq\cc\astate\cc\astate$ the last state belongs to $\apred$.
Thus, $\astateseq\cc\astate\cc\astate\in\nowOf{\apred}$.\\
Consider $\astateseq\cc\astate\in\pastOf{\apred}$.
This means there is a state from $\apred$ in $\astateseq\cc\astate$.
That state is still present in $\astateseq\cc\astate\cc\astate$.
Hence $\astateseq\cc\astate\cc\astate\in\pastOf{\apred}$, as required.\\
(2) Consider $\astateseq\cc\astate\in\acpred\mstar\acpredp$.
This means $\astateseq\cc\astate =\astateseq_1\cc\astate_1\mstar\astateseq_2\cc\astate_2$ with 
$\astateseq_1\cc\astate_1\in\acpred$ and $\astateseq_2\cc\astate_2\in\acpredp$.
As $\acpred$ and $\acpredp$ are frameable, we have $\astateseq_1\cc\astate_1\cc\astate_1\in\acpred$ and $\astateseq_2\cc\astate_2\cc\astate_2\in\acpredp$.
Hence, we obtain \[\astateseq\cc\astate\cc\astate=\astateseq_1\cc\astate_1\cc\astate_1\mstar \astateseq_2\cc\astate_2\cc\astate_2\in\acpred\mstar\acpredp \ . \]
Consider $\astateseq\cc\astate\in\acpred\cap\acpredp$.
Since $\acpred$ is frameable, we have $\astateseq\cc\astate\cc\astate\in\acpred$.
With the same argument, $\astateseq\cc\astate\cc\astate\in\acpredp$.
Hence, $\astateseq\cc\astate\cc\astate\in\acpred\cap\acpredp$.\\
The proof for union is the same.
\end{proof}
\begin{proof}[Proof of Lemma~\ref{Lemma:NowSLOperators}]
We proceed by case analysis.\\[0.2cm]
\textit{Case $\sepimp$:}
``$\subseteq$''\quad
Let $\astateseq\cc\astate\in \nowOf{(\apred\sepimp\apredp)}$.
To show $\astateseq\cc\astate\in \nowOf{\apred}\sepimp\nowOf{\apredp}$, consider a computation $\astateseqp\cc\astatep\in\nowOf{\apred}$ with $\astateseq\cc\astate\statemultdef\astateseqp\cc\astatep$.
Note that definedness of the multiplication guarantees this shape.
We have to show $\astateseq\cc\astate\mstar \astateseqp\cc\astatep = (\astateseq\mstar\astateseqp)\cc(\astate\mstar\astatep)\in \nowOf{\apredp}$.
Since $\astateseq\cc\astate\in \nowOf{(\apred\sepimp\apredp)}$, we have $\astate\in \apred\sepimp\apredp$.
Since $\astateseqp\cc\astatep\in\nowOf{\apred}$, we have $\astatep\in\apred$.
Hence, $\astate\mstar\astatep\in \apredp$.
This yields $(\astateseq\mstar\astateseqp)\cc(\astate\mstar\astatep)\in \nowOf{\apredp}$.\\[0.2cm]
``$\supseteq$''\quad 
Let $\astateseq\cc\astate\in\nowOf{\apred}\sepimp\nowOf{\apredp}$.
To show $\astateseq\cc\astate\in \nowOf{(\apred\sepimp\apredp)}$, we have to show $\astate\in\apred\sepimp\apredp$.
Consider $\astatep\in\apred$ with $\astatep\statemultdef\astate$.
Let $\stateunitseq$ be the unique computation in $\emp^{\cardof{\astateseq}}$ such that $\stateunitseq \statemultdef \astateseq$.
The computation $\mathbf{\stateunit}_\sigma\cc\astatep$ is in $\nowOf{\apred}$. 
Moreover, since $\stateunitseq \statemultdef \astateseq$ and $\astatep\statemultdef\astate$, we have $\stateunitseq\cc\astatep\statemultdef\astateseq\cc\astate$. 
Hence, by the assumption, $\stateunitseq\cc\astatep\mstar\astateseq\cc\astate = \astateseq\cc(\astatep\mstar\astate)\in\nowOf{\apredp}$.
This shows $\astatep\mstar\astate\in\apredp$ as required.\\[0.2cm]
\textit{Case $\mstar$:}
``$\subseteq$''\quad
Consider $\astateseq\cc\astate\in\nowOf{(\apred\mstar\apredp)}$.
Then $\astate\in\apred\mstar\apredp$.
This means $\astate=\astate_1\mstar\astate_2$ with $\astate_1\in\apred$ and $\astate_2\in\apredp$.
Let $\stateunitseq$ be the unique computation in $\emp^{\cardof{\astateseq}}$ such that $\stateunitseq \statemultdef \astateseq$.
Then $\astateseq\cc\astate_1\in\nowOf{\apred}$ and $\stateunitseq\cc\astate_2\in\nowOf{\apredp}$. 
Moreover, $\astateseq\cc\astate_1\statemultdef\stateunitseq\cc\astate_2$ with $\astateseq\cc\astate_1\mstar\stateunitseq\cc\astate_2=\astateseq\cc\astate\in\nowOf{\apred}\mstar\nowOf{\apredp}$.\\
``$\supseteq$''\quad Consider $\astateseq\cc\astate\in\nowOf{\apred}\mstar\nowOf{\apredp}$.
By definition, there are $\astateseq_1\cc\astate_1\in\nowOf{\apred}$ as well as $\astateseq_2\cc\astate_2\in\nowOf{\apredp}$ such that $\astateseq_1\cc\astate_1\mstar\astateseq_2\cc\astate_2=\astateseq\cc(\astate_1\mstar\astate_2)=\astateseq\cc\astate$.
We have $\astate_1\in\apred$ and $\astate_2\in\apredp$.
Hence, $\astate\in\apred\mstar\apredp$.
Hence, $\astateseq\cc\astate\in\nowOf{(\apred\mstar\apredp)}$.\\[0.2cm]
\textit{Case $\cap$:}
``$\subseteq$''\quad Consider $\astateseq\cc\astate\in\nowOf{(\apred\cap\apredp)}$.
We get $\astate\in \apred\cap\apredp$, which means $\astate\in\apred$ and $\astate\in\apredp$.
Then $\astateseq\cc\astate\in\nowOf{\apred}$ and $\astateseq\cc\astate\in \nowOf{\apredp}$.
This means $\astateseq\cc\astate\in\nowOf{\apred}\cap \nowOf{\apredp}$. \\[0.2cm]
``$\supseteq$''\quad Consider $\astateseq\cc\astate\in\nowOf{\apred}\cap \nowOf{\apredp}$. 
Then $\astateseq\cc\astate\in\nowOf{\apred}$ and $\astateseq\cc\astate\in \nowOf{\apredp}$.
Then $\astate\in\apred$ and $\astate\in\apredp$, hence $\astate\in\apred\cap\apredp$.
Hence, $\astateseq\cc\astate\in\nowOf{(\apred\cap\apredp)}$.\\[0.2cm]
\textit{Case $\overline{\bullet}$:}
Consider $\astateseq\cc\astate$. 
We have $\astateseq\cc\astate\in\nowOf{(\overline{\apred})}$ iff $\astate\in\overline{\apred}$ iff $\astate\notin\apred$ iff $\astateseq\cc\astate\notin\nowOf{\apred}$ iff $\astateseq\cc\astate\in\overline{\nowOf{\apred}}$.\\[0.2cm]
\textit{Case $\cup$:} Follows from $\cap$ and $\overline{\bullet}$. \\[0.2cm]
\textit{Case $\false$:}\quad We have $\false = \emptyset = \Sigma^*\cc\emptyset = \nowOf{\false}$.\\[0.2cm]
\textit{Case $\true$:}\quad We have $\true = \Sigma^+ = \Sigma^*\cc\Sigma = \nowOf{\true}$.\\[0.2cm]
\textit{Case inclusion:}
``$\Rightarrow$''\quad
Assume $\nowOf{\apred}\subseteq \nowOf{\apredp}$.
Let $\astate\in\apred$ be some state.
The state is also a computation in $\nowOf{\apred}$.
As $\nowOf{\apred}\subseteq\nowOf{\apredp}$, we have $\astate\in\nowOf{\apredp}$, which means $\astate\in\apredp$.\\[0.2cm]
``$\Leftarrow$''\quad Assume $\apred\subseteq\apredp$.
Consider a computation $\astateseq\cc\astate\in\nowOf{\apred}$.
Then we have $\astate\in\apred$, and hence $\astate\in\apredp$.
Hence, $\astateseq\cc\astate\in\nowOf{\apredp}$. 
\end{proof}
\begin{proof}[Proof of Lemma~\ref{Lemma:PastSLOperators}]
We proceed by case analysis. \\[0.2cm]
\textit{Case $\nowOf{\apred}\subseteq \pastOf{\apred}$:}\quad We have $\nowOf{\apred}=\Sigma^*\cc\apred\subseteq\Sigma^*\cc\apred\cc\Sigma^* = \pastOf{\apred}$.\\[0.2cm]
\textit{Case $\pastOf{(\apred\mstar\apredp)}\subseteq \pastOf{\apred}\mstar\pastOf{\apredp}$:}\quad
Consider $\astateseq\in\pastOf{(\apred\mstar\apredp)}$.
Then there is a decomposition $\astateseq_1\cc\astate\cc\astateseq_2$ with $\astate\in\apred\mstar\apredp$.
Then we have $\astate=\astate_1\mstar\astate_2$ with $\astate_1\in\apred$ and $\astate_2\in\apredp$.
We define $\astateseqp_1 = \astateseq_1\cc\astate_1\cc\astateseq_2\in\pastOf{\apred}$.
Let $\stateunitseq[\astateseq_1]$ be the unique computation in $\emp^{\cardof{\astateseq_1}}$ such that $\stateunitseq[\astateseq_1] \statemultdef \astateseq_1$ and define $\stateunitseq[\astateseq_2]$ similarly for $\astateseq_2$.
We set $\astateseqp_2=\stateunitseq[\astateseq_1]\cc\astate_2\cc\stateunitseq[\astateseq_2]\in\pastOf{\apredp}$.
Then $\astateseq=\astateseqp_1\mstar\astateseqp_2\in\pastOf{\apred}\mstar\pastOf{\apredp}$.\\[0.2cm]
\textit{Case $\pastOf{\apred}\sepimp\pastOf{\apredp}\subseteq\pastOf{(\apred\sepimp\apredp)}$:}\quad
Consider $\astateseq\in \pastOf{\apred}\sepimp\pastOf{\apredp}$.
Towards a contradiction, assume $\astateseq\notin \pastOf{(\apred\sepimp\apredp)}$.
Then for every decomposition $\astateseq = \astateseq_1\cc\astate\cc\astateseq_2$ we have $\astate\notin \apred\sepimp\apredp$.
This means there is $\astatep_{\astate}\in \apred$ with $\astatep_{\astate}\statemultdef\astate$ but $\astate\mstar\astatep_{\astate}\notin \apredp$.
Let $\astateseqp$ be the computation consisting of all such $\astatep_{\astate}$ (in the right order).
Then $\astateseqp\in\pastOf{\apred}$ and $\astateseqp\statemultdef\astateseq$.
Hence, $\astateseqp\statemult\astateseq \in\pastOf{\apredp}$, because $\astateseq\in \pastOf{\apred}\sepimp\pastOf{\apredp}$.
This, however, contradicts the construction of $\astateseqp$.\\[0.2cm]
\textit{Case $\pastOf{(\apred\cap\apredp)}\subseteq \pastOf{\apred}\cap\pastOf{\apredp}$:}\quad We have $\apred\cap\apredp\subseteq \apred$ and $\apred\cap\apredp\subseteq\apredp$.
Hence, $\pastOf{(\apred\cap\apredp)}\subseteq \pastOf{\apred}$ and $\pastOf{(\apred\cap\apredp)}\subseteq \pastOf{\apredp}$ by the equivalence for inclusion.
Hence, $\pastOf{(\apred\cap\apredp)}\subseteq \pastOf{\apred}\cap\pastOf{\apredp}$.\\[0.2cm]
\textit{Case $\pastOf{(\apred\cup\apredp)}= \pastOf{\apred}\cup\pastOf{\apredp}$:}
``$\subseteq$''\quad Consider $\astateseq\in\pastOf{(\apred\cup\apredp)}$.
Then $\astateseq=\astateseq_1\cc\astate\cc\astateseq_2$ with $\astate\in\apred\cup\apredp$, say $\astate\in\apred$.
Then $\astateseq=\astateseq_1\cc\astate\cc\astateseq_2\in\pastOf{\apred}$.\\
``$\supseteq$''\quad We get $\pastOf{(\apred\cup\apredp)}\supseteq \pastOf{\apred}$ from $\apred\subseteq\apred\cup\apredp$ and the equivalence for inclusion.\\[0.2cm]
\textit{Case $\true\mstar\pastOf{\apred}= \pastOf{(\apred\mstar\true)}$:}
``$\subseteq$''\quad Consider $\astateseqp\mstar\astateseq\in\true\mstar\pastOf{\apred}$.
Then, $\astateseqp = \astateseqp_1\cc\astatep\cc\astateseqp_2$ and $\astateseq = \astateseq_1\cc\astate\cc\astateseq_2$ with $\cardof{\astateseqp_1}=\cardof{\astateseq_1}$, $\cardof{\astateseqp_2}=\cardof{\astateseq_2}$, and $\astate\in\apred$.
Consequently, we have $\astatep\mstar\astate\in\true\mstar\apred$.
Hence, we arrive at $\astateseqp\mstar\astateseq = (\astateseqp_1\mstar\astateseq_1)\cc(\astatep\mstar\astate)\cc(\astateseqp_2\mstar\astateseq_2)\in\pastOf{(\true\mstar\apred)}$.\\
``$\supseteq$''\quad Consider $\astateseq\in\pastOf{(\apred\mstar\true)}$.
Then, $\astateseq = \astateseq_1\cc\astate\cc\astateseq_2$ with $\astate\in\apred\mstar\true$.
Then, we get $\astate = \astate_1\mstar\astate_2$ with $\astate_1\in\apred$. 
Then, $\astateseq_1\cc\astate_1\cc\astateseq_2\in\pastOf{\apred}$. 
Let $\stateunitseq[\astateseq_1]$ be the unique computation in $\emp^{\cardof{\astateseq_1}}$ such that $\stateunitseq[\astateseq_1] \statemultdef \astateseq_1$ and define $\stateunitseq[\astateseq_2]$ similarly for $\astateseq_2$.
Then $\stateunitseq[\astateseq_1]\cc\astate_2\cc\stateunitseq[\astateseq_2]\in\true$ is such that $\astateseq_1\cc\astate_1\cc\astateseq_2\statemultdef \stateunitseq[\astateseq_1]\cc\astate_2\cc\stateunitseq[\astateseq_2]$. 
We have $\astateseq_1\cc\astate_1\cc\astateseq_2\mstar \stateunitseq[\astateseq_1]\cc\astate_2\cc\stateunitseq[\astateseq_2]=\astateseq\in\pastOf{\apred}\mstar\true$. \\[0.2cm]
The remaining cases are similar to the proof of Lemma~\ref{Lemma:NowSLOperators}.
\end{proof}
\begin{proof}[Proof of Lemma~\ref{Lemma:WPNowPast}]
\textit{Now:}
``$\subseteq$''\quad Consider some $\astateseq\cc\astate_1\in\wpreOf{\acom}{\nowOf{\apred}}$, which means that we have $\csemOf{\acom}(\astateseq\cc\astate_1)\subseteq\nowOf{\apred}$.
Hence, for all $\astateseq\cc\astate_1\cc\astate_2\in \csemOf{\acom}(\astateseq\cc\astate_1)$ we have $\astate_2\in\apred$.
By definition, $\astate\in\semOf{\acom}(\astate_1)$ implies $\astateseq\cc\astate_1\cc\astate\in \csemOf{\acom}(\astateseq\cc\astate_1)$.
Hence, $\semOf{\acom}(\astate_1)\subseteq\apred$, which means $\astate_1\in\wpreOf{\acom}{\apred}$.
Hence, $\astateseq\cc\astate_1\in\nowOf{\wpreOf{\acom}{\apred}}$.\\[0.2cm]
``$\supseteq$''\quad For the reverse inclusion, consider $\astateseq\cc\astate_1\in\nowOf{\wpreOf{\acom}{\apred}}$.
Then $\astate_1\in\wpreOf{\acom}{\apred}$, which means $\semOf{\acom}(\astate_1)\subseteq \apred$.
Consider a computation $\astateseq\cc\astate_1\cc\astate_2\in \csemOf{\acom}(\astateseq\cc\astate_1)$.
By definition, $\astate_2\in \semOf{\acom}(\astate_1)$.
Since $\semOf{\acom}(\astate_1)\subseteq \apred$, we get $\astateseq\cc\astate_1\cc\astate_2\in \nowOf{\apred}$.
This shows $\csemOf{\acom}(\astateseq\cc\astate_1)\subseteq \nowOf{\apred}$, which means $\astateseq\cc\astate_1\in \wpreOf{\acom}{\nowOf{\apred}}$.\\[0.2cm]
\textit{Past:}
``$\subseteq$''\quad Consider $\astateseq\cc\astate_1\in\wpreOf{\acom}{\pastOf{\apred}}$, which means that we have $\csemOf{\acom}(\astateseq\cc\astate_1)\subseteq\pastOf{\apred}$.
Hence, for all $\astateseq\cc\astate_1\cc\astate_2\in \csemOf{\acom}(\astateseq\cc\astate_1)$ we have $\astateseq\cc\astate_1\cc\astate_2\in\pastOf{\apred}$.
There are two cases.\\
\textit{Case 1}: A state from $\astateseq\cc\astate_1$ belongs to $\apred$.
Then $\astateseq\cc\astate_1\in\pastOf{\apred}$.\\
\textit{Case 2}: No state from $\astateseq\cc\astate_1$ belongs to $\apred$.
Then for all $\astateseq\cc\astate_1\cc\astate_2\in \csemOf{\acom}(\astateseq\cc\astate_1)$ we have $\astate_2\in\apred$.
This means $\csemOf{\acom}(\astateseq\cc\astate_1)\subseteq\nowOf{\apred}$.
Hence, $\astateseq\cc\astate_1\in\wpreOf{\acom}{\nowOf{\apred}}$.\\[0.2cm]
``$\supseteq$''\quad For the reverse inclusion, consider $\astateseq\cc\astate_1\in\pastOf{\apred}$.
Then for all $\astateseq\cc\astate_1\cc\astate_2\in \csemOf{\acom}(\astateseq\cc\astate_1)$ we have $\astateseq\cc\astate_1\cc\astate_2\in\pastOf{\apred}$.
Hence, $\astateseq\cc\astate_1\in\wpreOf{\acom}{\pastOf{\apred}}$. \\
Consider $\astateseq\cc\astate_1\in\wpreOf{\acom}{\nowOf{\apred}}$.
Then for all $\astateseq\cc\astate_1\cc\astate_2\in \csemOf{\acom}(\astateseq\cc\astate_1)$ we have $\astateseq\cc\astate_1\cc\astate_2\in\nowOf{\apred}$.
Since $\nowOf{\apred}\subseteq\pastOf{\apred}$, we get $\astateseq\cc\astate_1\in\wpreOf{\acom}{\pastOf{\apred}}$. 
\end{proof}

\begin{proof}[Proof of Lemma~\ref{thm:soundness-hindsight-rules}]
\label{proof:proof-hindsight-rules}
%
Let $\astateseq \in \nowOf{\apred} \mstar \pastOf{\apredp}$. 
This means there are $\astateseq_1 \in \nowOf{\apred}$ and $\astateseq_2 \in \pastOf{\apredp}$ so that $\astateseq=\astateseq_1\mstar\astateseq_2$. By the definition of $\pastOf{q}$ we must have $\astateseq_1=\astateseqp_1\cdot(\astate_1,\anindex)\cdot\astateseqp'_1$ and $\astateseq_2=\astateseqp_2\cdot(\astate_2,\anindex)\cdot\astateseqp'_2$ such that $\astateseqp_1 \# \astateseqp_2$, $\astate_1 \# \astate_2$, $\astateseqp'_1 \# \astateseqp'_2$, and $(\astate_2,\anindex) \in \apredp$. Likewise, by definition of $\nowOf{p}$ we must also have $\astateseqp_1\cdot(\astate_1,\anindex)=(\astate_1',\anindex)\cdot\astateseqp_1'$ for some $(\astate_1',\anindex) \in p$.
Since $\apred$ is pure, it follows that $(\astate_1,\anindex) \in \apred$.
Hence, we obtain $(\astate_1,\anindex) \mstar (\astate_2,\anindex) \in (\apred \mstar \apredp)$ and therefore $\astateseq \in \pastOf{(\apred \mstar \apredp)}$.

For proving the reverse inclusion, let $\astateseq \in \pastOf{(\apred \mstar \apredp)}$. By the definition of $\pastOf{(p \mstar q)}$ we must have $\astateseq = \astateseq_1 \cdot (\astate,\anindex) \cdot \astateseq_2$ with $(\astate, \anindex) \in p \mstar q$ and $\astateseq_1,\astateseq_2 \in (\setstates \times \set{\anindex})^*$. Moreover, $\astate = \astate_\apred \mstar \astate_\apredp$ for some $(\astate_\apred, \anindex) \in \apred$ and $(\astate_\apredp, \anindex) \in \apredp$. Now choose $\astateseqp_1,\astateseqp_2 \in (\emp \times \{\anindex\})^*$ such that $\astateseqp_1 \# \astateseq_1$ and $\astateseqp_2 \# \astateseq_2$. 
Define $\astateseq_\apred \defeq \astateseqp_1 \cdot (\astate_\apredp,\anindex) \cdot \astateseqp_2$ and $\astateseq_\apredp \defeq \astateseq_1 \cdot (\astate_\apredp,\anindex) \cdot \astateseq_2$. Then we have $\astateseq_\apred \# \astateseq_\apredp$. We also have $\astateseq_\apredp \in \pastOf{\apredp}$ by definition of $\pastOf{\apredp}$. Next, we know that we must have $\astateseqp_1 \cdot (\astate_\apredp, \anindex) = (\astate_\apred', \anindex) \cdot \astateseqp_1'$ for some $\astate_\apred' \in \setstates$ and $\astateseqp_1' \in (\setstates \times \set{\anindex})^*$. Since $\apred$ is pure, we know that $(\astate_\apred', \anindex) \in \apred$. Hence, $\astateseq_\apred \in \nowOf{\apred}$ and therefore $(\astateseq_\apred \mstar \astateseq_\apredp) \in \nowOf{\apred} \mstar \pastOf{\apredp}$. To conclude the proof, observe that by construction we have $\astateseq_\apred \mstar \astateseq_\apredp=\astateseq $.

%
%
%
%
\end{proof}
Recall that a predicate $\anipred$ is \emph{intuitionistic}, if $\anipred\mstar\true=\anipred$.  
Intuitionism propagates to the computation predicates, where $\true$ is defined to be the set of all computations.

\begin{lemma}\label{Lemma:Intuitionism}
	If $\anipred$ is an intuitionistic state predicate, then $\nowOf{\anipred}$ and $\pastOf{\anipred}$ are intuitionistic computation predicates. 
\end{lemma}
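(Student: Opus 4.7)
The plan is to prove the two parts separately, each being a short calculation that invokes one of the equalities already established for the now and past operators.

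For $\nowOf{\anipred}$, I would chain equalities using Lemma~\ref{Lemma:NowSLOperators}. Concretely, $\nowOf{\anipred} \mstar \true = \nowOf{\anipred} \mstar \nowOf{\true} = \nowOf{(\anipred \mstar \true)} = \nowOf{\anipred}$, where the first step uses $\true = \nowOf{\true}$, the second uses the homomorphism property of $\nowOf{-}$ with respect to $\mstar$, and the third uses the assumption that $\anipred$ is intuitionistic.

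For $\pastOf{\anipred}$, I would use commutativity of $\mstar$ together with the identity $\true \mstar \pastOf{\apred} = \pastOf{(\apred \mstar \true)}$ from Lemma~\ref{Lemma:PastSLOperators}: $\pastOf{\anipred} \mstar \true = \true \mstar \pastOf{\anipred} = \pastOf{(\anipred \mstar \true)} = \pastOf{\anipred}$, again using intuitionism of $\anipred$ in the final step.

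There is no real obstacle here; both parts follow by a direct one-line rewrite using already-proved lemmas, so the only thing to be mildly careful about is citing the correct clause of Lemmas~\ref{Lemma:NowSLOperators} and~\ref{Lemma:PastSLOperators} in each step and noting that $\mstar$ on computation predicates is commutative (which is inherited from the underlying separation algebra on $\setstates^+$).
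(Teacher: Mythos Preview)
Your proposal is correct and matches the paper's proof essentially line for line: the paper gives exactly your chain of equalities for $\nowOf{\anipred}$ via Lemma~\ref{Lemma:NowSLOperators}, and for $\pastOf{\anipred}$ it merely says ``similar,'' which unwinds to precisely the argument you give using the clause $\true\mstar\pastOf{\apred}=\pastOf{(\apred\mstar\true)}$ of Lemma~\ref{Lemma:PastSLOperators}.
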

\begin{proof}[Proof of Lemma~\ref{Lemma:Intuitionism}]
Consider $\nowOf{\anipred}$.
We have $\nowOf{\anipred}\mstar\true=\nowOf{\anipred}\mstar\nowOf{\true}=\nowOf{(\anipred\mstar\true)}=\nowOf{\anipred}$.
The first two equalities are by Lemma~\ref{Lemma:NowSLOperators}, the last additionally uses the fact that $\anipred$ is intuitionistic.
The case of $\pastOf{\anipred}$ is similar.
\end{proof}



\section{Proofs of Section~\ref{Section:Flows}}
\label{Section:FlowsProofs}
\begin{proof}[Proof of Lemma~\ref{Lemma:FlowAlgebra}]
The laws of commutativity and units follow immediately from the definition of composition.
For associativity we need to show that $\aflowconstraint_2\statemultdef \aflowconstraint_3$ and $\aflowconstraint_1\statemultdef(\aflowconstraint_2\mstar\aflowconstraint_3)$ if and only if $\aflowconstraint_1\statemultdef\aflowconstraint_2$ and $(\aflowconstraint_1\mstar\aflowconstraint_2)\statemultdef\aflowconstraint_3$.
We prove the direction from left to right.
The reverse direction holds by symmetry of definedness.

Since $\aflowconstraint_2\statemultdef\aflowconstraint_3$, we have $(\aflowconstraint_2\discup\aflowconstraint_3).\fval\geq\aflowconstraint_2.\fval\discup\aflowconstraint_3.\fval$.
Furthermore, by $\aflowconstraint_1\statemultdef(\aflowconstraint_2\mstar\aflowconstraint_3)$ we have \[(\aflowconstraint_1\discup(\aflowconstraint_2\mstar\aflowconstraint_3)).\fval\geq\aflowconstraint_1.\fval\discup(\aflowconstraint_2\mstar\aflowconstraint_3).\fval \ .\]
Note that $(\aflowconstraint_1\discup(\aflowconstraint_2\mstar\aflowconstraint_3)).\fval=(\aflowconstraint_1\discup\aflowconstraint_2\discup\aflowconstraint_3).\fval$.

To show $\aflowconstraint_1\statemultdef\aflowconstraint_2$, we have to argue that \[(\aflowconstraint_1\discup\aflowconstraint_2).\fval\geq \aflowconstraint_1.\fval\discup\aflowconstraint_2.\fval\ .\]
To see this, note that
\begin{align*}
  (\aflowconstraint_1\discup\aflowconstraint_2).\fval \geq (\aflowconstraint_1\discup \aflowconstraint_2\discup\aflowconstraint_3).\fval|_{(\aflowconstraint_1\discup\aflowconstraint_2).\setnodes}\geq\aflowconstraint_1.\fval\discup\aflowconstraint_2.\fval\ .
\end{align*}
The latter inequality is by the assumptions.
For the former inequality, we note that the fixed point iteration for $(\aflowconstraint_1\discup\aflowconstraint_2).\fval$ starts with a contribution from $\aflowconstraint_3$ (given as inflow) that the iteration for $(\aflowconstraint_1\discup \aflowconstraint_2\discup\aflowconstraint_3).\fval$ only receives when reaching the fixed point.
By monotonicity, every fixed point approximant to the left is then larger than the corresponding approximant to the right, and so is the fixed point.
For  $(\aflowconstraint_1\mstar\aflowconstraint_2)\statemultdef\aflowconstraint_3$, we note that
\begin{align*}
  ((\aflowconstraint_1\mstar\aflowconstraint_2)\discup\aflowconstraint_3).\fval
  =&~(\aflowconstraint_1\discup\aflowconstraint_2\discup\aflowconstraint_3).\fval\geq \aflowconstraint_1.\fval\discup\aflowconstraint_2.\fval\discup\aflowconstraint_3.\fval
  \\\geq&~ (\aflowconstraint_1\discup\aflowconstraint_2).\fval\discup\aflowconstraint_3.\fval \ .
\end{align*}
The first inequality is by the above assumptions.
The second always holds, as remarked above.
\end{proof}
\begin{proof}[Proof of Lemma~\ref{Lemma:Transformers}]
(i): For $\aflowconstraint_2.\setnodes\cap\aflowconstraint.\setnodes=\emptyset$, we use the fact that $\aflowconstraint_1.\setnodes\cap\aflowconstraint.\setnodes=\emptyset$ follows from $\aflowconstraint_1\statemultdef\aflowconstraint$ and $\aflowconstraint_1.\setnodes=\aflowconstraint_2.\setnodes$.

To see that the outflow of $\aflowconstraint_2$ matches the inflow of $\aflowconstraint$, we consider $\anode\in\aflowconstraint_2.\setnodes$ and $\anodep\in\aflowconstraint.\setnodes$ and reason as follows:
\begin{align*}
\aflowconstraint_2.\outflowof{\anode, \anodep}=\aflowconstraint_1.\outflowof{\anode, \anodep}=\aflowconstraint.\inflowof{\anode, \anodep}. 
\end{align*}
The former equality follows from $\aflowconstraint_2.\inflow=\aflowconstraint_1.\inflow$ together with $\transformerof{\aflowconstraint_1}=_{\aflowconstraint_1.\inflow}\transformerof{\aflowconstraint_2}$.
The second equality is by $\aflowconstraint_1\statemultdef\aflowconstraint$.

The inflow is preserved by the assumption, hence we have the following: \[\aflowconstraint.\outflowof{\anodep, \anode}=\aflowconstraint_1.\inflowof{\anodep, \anode}=\aflowconstraint_2.\inflowof{\anodep, \anode}\ . \]

It remains to show $(\aflowconstraint_2\discup\aflowconstraint).\fval=\aflowconstraint_2.\fval\discup\aflowconstraint.\fval$.
We use Bekic's lemma.
Define the target pairing of two functions $f:A\rightarrow B$ and $g:A\rightarrow C$ over the same domain $A$ as the function $\pairingof{f}{g}:A\rightarrow B\times C$ with $\pairingof{f}{g}(a) \defeq (f(a), g(a))$.
We compute the flow of $\aflowconstraint_2\discup \aflowconstraint$ as the least fixed point of a target pairing $\pairingof{f}{g}$ with
\begin{align*}
f:&\; ((\aflowconstraint_2.\setnodes\discup\aflowconstraint.\setnodes) \rightarrow \amonoid)\rightarrow \aflowconstraint_2.\setnodes\rightarrow \amonoid\\
g:&\; ((\aflowconstraint_2.\setnodes\discup\aflowconstraint.\setnodes) \rightarrow \amonoid)\rightarrow \aflowconstraint.\setnodes\rightarrow \amonoid\; .  
\end{align*}
Function $f$ updates the flow of the nodes in $\aflowconstraint_2$ depending on the flow in/inflow from $\aflowconstraint$.
Function $g$ is responsable for the flow of the nodes in $\aflowconstraint$.
The inflow from the nodes outside $\aflowconstraint_2\discup\aflowconstraint$ is constant.
The definition guarantees $(\aflowconstraint_2\discup\aflowconstraint).\fval=\lfpof{\pairingof{f}{g}}$.

We curry the former function, 
\begin{align*}
f:\; (\aflowconstraint.\setnodes\rightarrow \amonoid)\rightarrow (\aflowconstraint_2.\setnodes\rightarrow \amonoid)\rightarrow \aflowconstraint_2.\setnodes\rightarrow \amonoid\; ,
\end{align*}
and obtain, for every $\cval:\aflowconstraint.\setnodes\rightarrow \amonoid$, the function
\begin{align*}
f(\cval):\; (\aflowconstraint_2.\setnodes\rightarrow \amonoid)\rightarrow \aflowconstraint_2.\setnodes\rightarrow \amonoid\; .
\end{align*} 
This function is still monotonic and therefore has a least fixed point.
Hence, the function 
\begin{align*}
f^{\dagger}: (\aflowconstraint.\setnodes\rightarrow \amonoid)\rightarrow \aflowconstraint_2.\setnodes\rightarrow \amonoid\; 
\end{align*}
mapping valuation $\cval:\aflowconstraint.\setnodes\rightarrow \amonoid$ to the least fixed point $\lfpof{f(\cval)}$ is well-defined. 

Beki\'c's lemma~\cite{DBLP:conf/ibm/Bekic84} tells us how to compute least fixed points of target pairings like $\pairingof{f}{g}$ above by successive elimination of the variables.
We first determine $f^{\dagger}$, which is a function in $\aflowconstraint.\setnodes\rightarrow\amonoid$.
We plug this function into $g$ to obtain a function solely in $\aflowconstraint.\setnodes\rightarrow\amonoid$.
To be precise, since $g$ expects a function from $(\aflowconstraint_2.\setnodes\discup\aflowconstraint.\setnodes)\rightarrow\amonoid$, we pair $f^{\dagger}$ with $\myid=\myidof{\aflowconstraint.\setnodes\rightarrow\amonoid}$ and obtain
\begin{align*}
\pairingof{f^{\dagger}}{\myid}:(\aflowconstraint.\setnodes\rightarrow\amonoid)\rightarrow (\aflowconstraint_2.\setnodes\discup\aflowconstraint.\setnodes)\rightarrow\amonoid\; . 
\end{align*}
We compose this function with $g$ and get 
\begin{align*}
g\circ \pairingof{f^{\dagger}}{\myid}:(\aflowconstraint.\setnodes\rightarrow\amonoid)\rightarrow \aflowconstraint.\setnodes\rightarrow\amonoid\; .
\end{align*}
We compute the least fixed point of this composition to obtain the values of the least fixed point of interest on $\aflowconstraint.\setnodes$. 
For the values on $\aflowconstraint_2.\setnodes$, we reinsert the $\aflowconstraint.\setnodes$-values into $f^{\dagger}$. 
Beki\'c's lemma guarantees the correctness of this successive elimination procedure: 
\begin{align*}
  &\lfpof{\pairingof{f}{g}} ~=~ (f^{\dagger}(\cval), \cval)
  \\\text{with}\qquad& \cval ~=~ \lfpof{g\circ \pairingof{f^{\dagger}}{\myid}}
  \ .
\end{align*}

To conclude the proof, we recall that $\transformerof{\aflowconstraint_2}=_{\aflowconstraint_1.\inflow}\transformerof{\aflowconstraint_1}$.
Moreover, for all $\anodep \in \aflowconstraint.\setnodes$ and $\anode \in \aflowconstraint_1.\setnodes$ we have \[
  \aflowconstraint_1.\inflow(\anodep,\anode) = c.\outflow(\anodep,\anode) = \aflowconstraint.\edges_{(\anodep, \anode)}(\aflowconstraint.\fvalof{\anode})\ .
\]
Together with monotonicity of the edge functions, this implies $g\circ \pairingof{f^{\dagger}}{\myid}=_{\aflowconstraint.\fval} g\circ \pairingof{h^{\dagger}}{\myid}$.
Here, $h \prall{:} ((\aflowconstraint_2.\setnodes\discup\aflowconstraint.\setnodes) \prall{\rightarrow} \amonoid)\prall{\rightarrow} \aflowconstraint_2.\setnodes\prall{\rightarrow} \amonoid$ is the transformer derived from $\aflowconstraint_1$ in the same way $f$ was derived from~$\aflowconstraint_2$.
We thus have for all $\anode\in\aflowconstraint.\setnodes$: 
\begin{align*}
&\quad(\aflowconstraint_2\discup\aflowconstraint).\fvalof{\anode}\\
\commentl{\small{Definition $\fval$, $f$, $g$}} =&\quad \lfpof{\pairingof{f}{g}}(\anode)\\
\commentl{\small{Beki\'c's lemma, $\anode\in\aflowconstraint.\setnodes$}} =&\quad \lfpof{g\circ \pairingof{f^{\dagger}}{\myid}}(\anode)\\
\commentl{\small{$\transformerof{\aflowconstraint_2}=_{\aflowconstraint_1.\inflow}\transformerof{\aflowconstraint_1}$, see above}}=&\quad \lfpof{g\circ \pairingof{h^{\dagger}}{\myid}}(\anode)\\
\commentl{\small{Beki\'c's lemma, $\anode\in\aflowconstraint.\setnodes$}} =&\quad \lfpof{\pairingof{h}{g}}(\anode)\\
\commentl{\small{Definition flow, $f$, $g$}} =&\quad (\aflowconstraint_1\discup\aflowconstraint).\fvalof{\anode}\\
\commentl{\small{$\aflowconstraint_1\statemultdef\aflowconstraint$, $\anode\in\aflowconstraint.\setnodes$}}=&\quad\aflowconstraint.\fvalof{\anode}.
\end{align*}
We argue that also for nodes $\anode\in\aflowconstraint_2.\setnodes$ that we have the equality $(\aflowconstraint_2\discup\aflowconstraint).\fvalof{\anode}=\aflowconstraint_2.\fvalof{\anode}$, as follows:
\begin{align*}
&\quad(\aflowconstraint_2\discup\aflowconstraint).\fvalof{\anode}\\
\commentl{\small{Definition $\fval$, $f$, $g$}} =&\quad \lfpof{\pairingof{f}{g}}(\anode)\\
\commentl{\small{Beki\'c's lemma, $\anode\in\aflowconstraint_2.\setnodes$}} =&\quad [f^{\dagger}(\cval)](\anode)\\
\commentl{\small{See above}} =&\quad [f^{\dagger}(\aflowconstraint.\fval)](\anode)\\
\commentl{\small{See below}}=&\quad \aflowconstraint_2.\fvalof{\anode}.
\end{align*}
To see  the last equality, note that we have $\aflowconstraint_2.\inflowof{\anodep, \anode}=\aflowconstraint.\outflowof{\anodep, \anode}=\aflowconstraint.\edges_{(\anodep, \anode)}(\aflowconstraint.\fvalof{\anode})$, for all $\anodep\in\aflowconstraint.\setnodes$.
Hence, $\aflowconstraint_2.\fval$, which we compute from the inflow, is 
the least fixed point of $f$ computed with $\aflowconstraint.\fval$ fixed.

(ii). Follows with a similar but simpler application of Beki\'c's lemma.
\end{proof}
\begin{proof}[Proof of Lemma~\ref{Lemma:UpdatesConjunction}]
We show the individual cases.

(i):\quad We apply Lemma~\ref{Lemma:Transformers} to derive that $\semComOf{\updfun}{\aflowconstraint}\statemultdef\aflowconstraint'$ holds.
To see that the preconditions in the lemma are met, note that we have $\aflowconstraint\statemultdef\aflowconstraint'$ by assumption, $\semComOf{\updfun}{\aflowconstraint}.\setnodes=\aflowconstraint.\setnodes$ and $\semComOf{\updfun}{\aflowconstraint}.\inflow=\aflowconstraint.\inflow$ because the semantics does not abort, and the application of updates neither changes the set of nodes nor the inflow, and $\transformerof{\aflowconstraint}=\transformerof{\semComOf{\updfun}{\aflowconstraint}}$ because the update does not abort.

(ii):\quad Since the separating conjunctions on both sides of the equality are defined by (i) resp. the assumption, the sets of nodes and the inflows coincide.
The sets of edges coincide, because $\domof{\updfun}\subseteq\aflowconstraint.\setnodes\times\nat$ as the update does not abort on~$\aflowconstraint$.

(iii):\quad With the previous argument, we arrive at: $\domof{\updfun}\subseteq(\aflowconstraint.\setnodes\discup\aflowconstraint'.\setnodes)\times\nat=(\aflowconstraint\mstar\aflowconstraint').\setnodes\times\nat$.
Here, disjointness and the separating conjunction rely on $\aflowconstraint\statemultdef\aflowconstraint'$.
For the transformer, we use \[\transformerof{(\aflowconstraint\mstar\aflowconstraint')\applyupdfun}=\transformerof{\semComOf{\updfun}{\aflowconstraint}\mstar\aflowconstraint'}=\transformerof{\aflowconstraint\mstar\aflowconstraint'} \ .\]
The former equality is by (ii).
The latter is Lemma~\ref{Lemma:Transformers} and we already argued that the conditions are met.

(iv)\quad: By (iii), we have $\semComOf{\updfun}{\aflowconstraint\mstar\aflowconstraint'}=(\aflowconstraint\mstar\aflowconstraint')\applyupdfun$.
The desired equality is by (ii).
\end{proof}
\begin{proof}[Proof of Lemma~\ref{Lemma:FlowGraphs}]
Since separation algebras are closed under Cartesian products, the product operation on flow graphs is associative and commutative and $(\aheapgraph_{\emptyset}, \aflowconstraint_{\emptyset})$ is the neutral element.

We show $(\aheapgraph_{\emptyset}, \aflowconstraint_{\emptyset})\in\setflowgraphs$.
For the nodes, we have \[\aheapgraph_{\emptyset}.\setnodes=\emptyset=\aflowconstraint_{\emptyset}.\setnodes \ .\]
The empty edge function $\aflowconstraint.\edges=\emptyset$ is trivially induced (the universal quantifier is over an empty set).

We show closedness: if $\aflowgraph_1, \aflowgraph_2\in\setflowgraphs$ and $\aflowgraph_1\statemultdef\aflowgraph_2$, then we also have $\aflowgraph_1\mstar\aflowgraph_2\in\setflowgraphs$.
Due to the Cartesian product construction, we have $(\aflowgraph_1.\aheapgraph\mstar\aflowgraph_2.\aheapgraph, \aflowgraph_1.\aflowconstraint\mstar\aflowgraph_2.\aflowconstraint)\in\setheapgraphs\times\setflowconstraints$.
Note that this uses the fact that $\aflowgraph_1\statemultdef\aflowgraph_2$ implies $\aflowgraph_1.\aheapgraph\statemultdef\aflowgraph_2.\aheapgraph$ and $\aflowgraph_1.\aflowconstraint\statemultdef\aflowgraph_2.\aflowconstraint$.

It remains to show that the constraints on nodes and edges hold.
For the nodes, we have
\begin{align*}
  (\aflowgraph_1.\aheapgraph\mstar\aflowgraph_2.\aheapgraph).\setnodes
  = \aflowgraph_1.\aheapgraph.\setnodes\discup\aflowgraph_2.\aheapgraph.\setnodes
  = \aflowgraph_1.\aflowconstraint.\setnodes\discup\aflowgraph_2.\aflowconstraint.\setnodes = (\aflowgraph_1.\aflowconstraint\mstar\aflowgraph_2.\aflowconstraint).\setnodes
  \ .
\end{align*}
The first equality is by the definition of products on heap graphs, the second is the assumption $\aflowgraph_1, \aflowgraph_2\in\setflowgraphs$, and the last is the definition of products on flow constraints.
For the edge function, we consider $\anode\in\aflowgraph_1.\aheapgraph.\setnodes$ and $\anodep\in\nat$. 
For $\anode\in\aflowgraph_2.\aheapgraph.\setnodes$, the reasoning is similar:
\begin{align*}
  &\sum_{\substack{\apsel\in\setpsel\\\anodep=(\aflowgraph_1.\aheapgraph\mstar\aflowgraph_2.\aheapgraph).\pvalof{\anode, \apsel}}} \hspace{-.8cm} \genrhsof{\apsel, (\aflowgraph_1.\aheapgraph\mstar\aflowgraph_2.\aheapgraph).\dvalof{\anode}}
  \\
  \stackrel{(1)}{=}&~\sum_{\substack{\apsel\in\setpsel\\\anodep=(\aflowgraph_1.\aheapgraph.\pval\discup\aflowgraph_2.\aheapgraph.\pval)(\anode, \apsel)}} \hspace{-1cm} \genrhsof{\apsel, (\aflowgraph_1.\aheapgraph.\dval\discup\aflowgraph_2.\aheapgraph.\dval)(\anode)}
  \\
  \stackrel{(2)}{=}&~\sum_{\substack{\apsel\in\setpsel\\\anodep=\aflowgraph_1.\aheapgraph.\pvalof{\anode, \apsel}}} \hspace{-3mm} \genrhsof{\apsel, \aflowgraph_1.\aheapgraph.\dvalof{\anode}}
  \\
  \stackrel{(3)}{=}&~~ \aflowgraph_1.\aflowconstraint.\edges(\anode, \anodep)
  \\
  \stackrel{(4)}{=}&~~ (\aflowgraph_1.\aflowconstraint.\edges \discup \aflowgraph_2.\aflowconstraint.\edges)(\anode, \anodep)
  \\
  \stackrel{(5)}{=}&~~ (\aflowgraph_1.\aflowconstraint \mstar \aflowgraph_2.\aflowconstraint).\edges(\anode, \anodep)
  \ .
\end{align*}
where the equalties are due to:
(1) the definition of products of heap graphs,
(2) $\anode\in\aflowgraph_1.\aheapgraph.\setnodes$ together with disjointness,
(3) the fact that $\aflowgraph_1, \aflowgraph_2$ are flow graphs,
(4) $\anode\in\aflowgraph_1.\aheapgraph.\setnodes$ together with union of functions, and
(5) the definition of products of flow constraints.
\end{proof}

}
\end{document}